\documentclass[11pt]{article}
\usepackage{amssymb}
\usepackage{colortbl}
\usepackage{amsfonts,amsmath, longtable}
\usepackage{ulem}
\usepackage{comment}
\usepackage{enumerate}

\usepackage{hyperref}

        \def\theequation{\thesection.\arabic{equation}}
        \numberwithin{equation}{section}

\newcommand{\tr}{{\rm tr}}
\newcommand{\ti}[1]{\tilde{#1}}

\newcommand{\mL}{{\mathcal L}}
\newcommand{\mM}{{\mathcal M}}

\newcommand{\mH}{{\mathcal H}}
\newcommand{\mU}{{\mathcal U}}
\newcommand{\mO}{{\mathcal O}}
\newcommand{\mV}{{\mathcal V}}

\newcommand{\la}{\lambda}
\newcommand{\vf}{\varphi}
\newcommand{\al}{\alpha}
\newcommand{\be}{\beta}
\newcommand{\ga}{\gamma}
\newcommand{\om}{\omega}
\newcommand{\vth}{\vartheta}
\newcommand{\de}{\delta}
\newcommand{\si}{\sigma}

\newcommand{\Mat}{ {\rm Mat}(N,\mathbb C) }

\newcommand{\mC}{\mathbb C}
\newcommand{\mZ}{\mathbb Z}

\newcommand{\mS}{\mathcal S}
\newcommand{\mA}{\mathcal A}
\newcommand{\mB}{\mathcal B}
\newcommand{\mT}{\mathcal T}
\newcommand{\mD}{\mathcal D}

\newcommand{\ox}{\otimes}

\renewcommand{\b}{\mathbf}
\newcommand{\rank}{\mathop{\rm rank}}

\newcommand{\ka}{\kappa}

\newcommand{\Om}{\Omega}

\newtheorem{corollary}{Corollary}[section]

\newtheorem{predl}{Proposition}[section]

\newenvironment{proof}{\par\noindent{\bf Proof.}}{\hfill$\scriptstyle\blacksquare$}

\def\beq{\begin{equation}}
\def\eq{\end{equation}}
\def\p{\partial}

\newtheorem{theor}{Theorem}

\newcommand{\mat}[4]{\left(\begin{array}{cc}{#1}&{#2}\\ \ \\{#3}&{#4}
\end{array}\right)}

\newcommand{\mats}[4]{\left(\begin{array}{cc}{#1}&{#2}\\ {#3}&{#4}
\end{array}\right)}

\def\res{\mathop{\hbox{Res}}\limits}

\begin{document}

\begin{center}

\setcounter{page}{1}

\vspace{0mm}

{\Large{\bf Spin generalizations of}}

\vspace{3mm}

{\Large{\bf the classical elliptic Calogero--Inozemtsev model}}

\vspace{3mm}

{\Large{\bf and integrable models of interacting tops}}

 \vspace{3mm}

{\Large{\bf associated with the $\rm{BC}_n$ root system}}

 \vspace{12mm}

 {\Large {D. Domanevsky}}$\,^{\bullet\,\diamond}$
\quad\quad
 {\Large {A. Zotov}}$\,^{\diamond}$

 \vspace{8mm}


$\bullet$ --
{\it Institute for Theoretical and Mathematical Physics,\\
Lomonosov Moscow State University, Moscow, 119991, Russia}

$\diamond$ --
 {\it Steklov Mathematical Institute of the Russian
Academy of Sciences,\\ 8 Gubkina St., Moscow 119991, Russia}


 {\small\rm {e-mails: danildom09@gmail.com, zotov@mi-ras.ru}}

\end{center}

\vspace{0mm}

\begin{abstract}
We construct Lax pairs with a spectral parameter for a family of models of
interacting tops associated with the $\rm{BC}_n$ root system. The construction
uses the parametric $\rm{BC}_n$ associative Yang--Baxter equations, which involve
both quantum $R$-matrices and boundary $K$-matrices satisfying the reflection
equation. Special cases yield spin generalizations of the elliptic
Calogero--Inozemtsev model. We also show how quantum Hamiltonians of interacting
tops, combined with the Polychronakos freezing prescription, reproduce recently
introduced anisotropic quantum long-range spin chains of
Haldane--Shastry--Inozemtsev type.
\end{abstract}

\newpage
{\small{ \tableofcontents }}



\subsection*{\ \ \ \ \ \ \underline{Main notation:}}

\ \vspace{-2mm}

{\small{

QYBE -- quantum Yang--Baxter equation (\ref{t2011});

AYBE -- associative Yang--Baxter equation (\ref{t208});

${\rm BC}_n$ AYBE -- equations (\ref{t206})-(\ref{t212});

$n$ -- number of particles in many-body systems or spins in spin chains;

$N$ -- dimension of the matrix representation of the $R$- and $K$-matrices
($R\in\Mat^{\otimes 2}$, $K\in\Mat$ );

$z$ -- spectral parameter in the Lax equation; it plays the role of the
Planck constant in the $R$-matrix;

$\mL(z),\mM(z)$ (or $L(z),M(z)$ for $N=1$) -- Lax pair;

$\mS$ -- $2Nn\times 2Nn$ (or $Nn\times Nn$) matrix of classical spin variables;

$S$ -- $2n\times 2n$ (or $n\times n$)  matrix of classical spin variables: $S=\mS|_{N=1}$;

$\mA$ and $\mB$ -- $Nn\times Nn$ blocks of $\mS$ (\ref{t4011});

$A$ and $B$ -- $n\times n$ blocks of $S$ (\ref{t5011});

$\mO^{\hbox{\tiny{$N$}}}$ -- a coadjoint orbit of the Lie group ${\rm GL}_{N}$;

${\mT}^\pm$ -- spin variables $(\mA\pm\mB)/2$ (\ref{t127}) on the orbits $\mO^{\hbox{\tiny{$nN\!,\!\pm$}}}$;

${T}^\pm$ -- spin variables $(A\pm B)/2$ (\ref{t5151}) on the orbits $\mO^{\hbox{\tiny{$n\!,\!\pm$}}}$;

$R_{12}^z(q)$ -- quantum $R$-matrix in the fundamental representation of
the Lie group ${\rm GL}_N$, satisfying the QYBE (\ref{t2011}) and the
parametric ${\rm BC}_n$-type AYBEs (\ref{t206})--(\ref{t212});

$P_{12}$ -- permutation matrix (\ref{t216});

${\ti R}_{12}^z(q)$ -- solution of the parametric ${\rm BC}_n$-type AYBEs
(\ref{t206})--(\ref{t212}) and the QYBE-type equation (\ref{t2012}).
In the elliptic case, it is related to $R_{12}^\hbar(z)$ by (\ref{t248});

${\ti K}^z(q)$ -- $K$-matrix satisfying the reflection equation (\ref{t201})
and the parametric ${\rm BC}_n$-type AYBEs (\ref{t206})--(\ref{t212});

$k_N(z,q|\nu)$, $u_N(q|\nu)$, $w_N(z|\nu)$ -- scalar functions in the
unitarity relation (\ref{t220}) for ${\ti K}^z(q)$;
see also (\ref{t256}) and (\ref{t2533});

$\varsigma_N$ -- constant in the expansion (\ref{t217}) of the $K$-matrix;

$w_i$ -- the functions (\ref{t5155});

$J$ -- the involutive matrix (\ref{t246});

$F_{ij}^z(q)$ and ${\ti F}_{ij}^z(q)$ -- derivatives of $R_{ij}^z(q)$ and ${\ti R}_{ij}^z(q)$ with respect to $q$;

${\ti Y}_j^z(q)$ -- derivative of ${\ti K}_j^z(q)$  with respect to $q$;

$r_{12}(q)$ -- classical $r$-matrix appearing in the quasiclassical limit (\ref{t202});

$m_{12}(q)$ -- second nontrivial coefficient in the quasiclassical expansion
(\ref{t202});

${\ti r}_{12}(q)$, ${\ti m}_{12}(q)$ -- coefficients of the expansion (\ref{t203}) of ${\ti R}_{12}^z(q)$;

$R^{z,(0)}_{12}$, $R^{z,(1)}_{12}$,
${\ti R}^{(-1)}_{12}$, ${\ti R}^{z,(0)}_{12}$, ${\ti R}^{z,(1)}_{12}$,
$r^{(0)}_{12}$, $r^{(1)}_{12}$,
${\ti r}^{(-1)}_{12}$, ${\ti r}^{(0)}_{12}$, ${\ti r}^{(1)}_{12}$ --
coefficients in the expansions (\ref{t213})--(\ref{t215});

${\mathbb J}$ -- linear operator (\ref{t112}) defining the inverse inertia
tensor in the Euler equation (\ref{t107});

${\mathbb{FP}}_x$ and ${\mathbb{{\ti F}P}}_x$ -- the linear operators (\ref{t4214});

${\mathbb{F}}_x$ and ${\mathbb{{\ti F}}}_x$ -- the linear operators (\ref{t6051});

$E_{ij}$ -- matrix units in ${\rm Mat}(n,\mC)$ or ${\rm Mat}(2n,\mC)$;

$e_{ij}$ -- matrix units in ${\rm Mat}(N,\mC)$;

$T_a$ -- basis matrices in $\Mat$ (\ref{t231});

$Q$ and $\Lambda$ -- matrices entering the definition of $T_a$ (\ref{t232})-(\ref{t233});

$\sigma_k$ -- Pauli matrices (\ref{t259});

$\tau$ -- modulus of the elliptic curve (${\rm Im}(\tau)>0$);

$\Om_a$ -- half-periods of the elliptic curve (\ref{a18});

$\om_a$ -- arguments of the $\phi$-function in (\ref{a32});
for $a=(a_1,a_2)\in\mZ_N\times\mZ_N$, $\om_a=\frac{a_1+a_2\tau}{N}$;

$z_i$ -- equilibrium position of the $i$-th particle.

\bigskip

\underline{Notation for coupling constants:}

$\nu_a$ ($a=0,1,2,3$) -- parameters of ${\ti K}^z(q)$ (\ref{t252});
the physical boundary couplings in (\ref{t118}) are $\bar\nu_a$;

$\nu_a'$ -- the same parameters with the possible sign change specified
in (\ref{t255});

$\bar\nu_a$ or $\bar\nu_a'$ ($a=0,1,2,3$) -- linear combinations of $\nu_a$
or $\nu_a'$ defined by the transformation (\ref{a48});

$\la_a$ -- constants defined in (\ref{t2641});

$\mu_a$ -- constants defined in (\ref{t266});

$\rho_a$ -- parameters in the ${\rm GL}_2$ expression (\ref{t261})
for the $K$-matrix.

%
%
%
%
%
%
%
%

}}

\newpage

\section{Introduction}\label{sec1}
\setcounter{equation}{0}
In recent years, there has been renewed interest in spin generalizations
of integrable particle systems at both the classical and quantum levels
\cite{Arut,Ch3,ChM,Feher3,GrZ18,GSZ,KLOZ,Lam,Lam2,MZ,MMZ,MMZ2,Resh,SeZ18,TrZ,ZZ,Z18}.
This interest is motivated by the deep connections of these systems with
the representation theory of quantum algebras, the geometry of spectral curves,
gauge theories, and other areas. Recent works \cite{MMZ,MMZ2} introduced
a parametric generalization of the type-$B$ associative Yang--Baxter equation
algebra originally proposed in \cite{Kir}. This made it possible to construct
$R$-matrix-valued Lax representations for the elliptic ${\rm BC}_n$ Calogero--Inozemtsev
system and to derive the Hamiltonian of an elliptic anisotropic long-range
quantum chain of Haldane--Shastry--Inozemtsev type associated with the
${\rm BC}_n$ root system. In this paper, we use the ${\rm BC}_n$ associative
Yang--Baxter equation algebra and its matrix representations to construct
classical integrable\footnote{Throughout this paper, integrability refers to
the existence of a Lax representation with a spectral parameter.
Liouville integrability also requires the Hamiltonians obtained from the
Lax matrix to be in involution. This property can be established through
a classical $r$-matrix structure. We assert that such a structure exists
for the models considered here and will present it in a subsequent paper.}
systems that extend the elliptic ${\rm BC}_n$ Calogero--Inozemtsev system
to various spin models.
\subsection{Models associated with the \texorpdfstring{$A$}{A}-type root system}
\paragraph{Calogero--Moser model.} We consider integrable many-body systems
of Calogero--Moser type, their spin generalizations, and models of interacting
tops. We begin with the spinless elliptic Calogero--Moser model \cite{Ca,OP}.
Its Hamiltonian describes pairwise interactions between particles
in the complex plane:
  \beq\label{t102}
  \begin{array}{c}
  \displaystyle{
H^{\hbox{\tiny{CM}}}=\sum\limits_{i=1}^n\frac{p_i^2}{2}-g^2\sum\limits_{i>j}^n\wp(q_i-q_j)\,,
 }
 \end{array}
 \eq
where $p_i,q_i$ are canonically conjugate particle momenta and positions,
$g$ is a coupling constant, and $\wp$ is the Weierstrass elliptic function.
The equations of motion admit a Lax representation with a spectral
parameter $z$ \cite{Krich1}:
  \beq\label{t101}
  \begin{array}{c}
  \displaystyle{
 \dot{L}(z)=[L(z),M(z)]\,,\qquad L(z),M(z)\in{\rm Mat}(n,\mC)\,.
 }
 \end{array}
 \eq
The Lax matrices $L(z),M(z)$ have size $n\times n$ for the $n$-body system.

\paragraph{Spin Calogero--Moser model.}
The Calogero--Moser model admits a spin generalization \cite{BAB,GH}.
The phase space acquires an additional component described by the entries
$S_{ij}$ of a matrix $S\in{\rm Mat}(n,\mC)$, which represent classical
spin variables. The Hamiltonian takes the form
  \beq\label{t103}
  \begin{array}{c}
  \displaystyle{
H^{\hbox{\tiny{spin}}}=\sum\limits_{i=1}^n\frac{p_i^2}{2}-\sum\limits_{i>j}^n S_{ij}S_{ji}\wp(q_i-q_j)\,.
 }
 \end{array}
 \eq
This model is integrable on the constraint surface
$S_{ii}=S_{jj}={\rm const}$. When $S$ has rank one, the spin model
(\ref{t103}) reduces to the spinless model (\ref{t102}).

\paragraph{Generalized model of interacting tops.}
The spin Calogero--Moser model can in turn be extended to a generalized model
of interacting tops \cite{ZL}; see also the reviews \cite{ZS,TrZ}.
Both the spin matrix and the Lax matrix are enlarged to matrices of size $Nn$,
with $\mS\in{\rm Mat}(Nn,\mC)$. The Lax matrix has the block structure
 \beq\label{t104}
  \displaystyle{
 \mL^{\hbox{\tiny{Gtops}}}(z)=\sum\limits_{i,j=1}^n\sum\limits_{a,b=1}^N \mL^{ij}_{ab}(z) E_{ij}\otimes e_{ab}=
 \underbrace{
\left(
 \begin{array}{cccc}
\mL^{11}(z) &\mL^{12}(z)   & \dots &\mL^{1n}(z)
 \\
\mL^{21}(z) &\mL^{22}(z)  & \dots &\mL^{2n}(z)
 \\
\vdots & \vdots & \dots  & \vdots
 \\
\mL^{n1}(z) &\mL^{n2}(z) &  \dots &\mL^{nn}(z)
 \end{array}
 \right)
 }_{n\ {\rm blocks\ in\ a\ row\ (or\ column)\ of\ }N\times N{\rm \ size\ each}}
 \in{\rm Mat}_{nN}\,,
 }
 \eq
where $E_{ij}$ and $e_{ij}$ are matrix units in ${\rm Mat}(n,\mC)$
and ${\rm Mat}(N,\mC)$, respectively. The superscript {\it Gtops} in
(\ref{t104}) denotes the generalized model of interacting tops.
Each block has size $N\times N$, so $\mL^{ij}(z)\in{\rm Mat}_N$.
The Hamiltonian can be written as
 \beq\label{t105}
 \begin{array}{c}
  \displaystyle{
 \mH^{\hbox{\tiny{Gtops}}}=\sum\limits_{i=1}^n \frac{p_i^2}{2}+\sum\limits_{i=1}^n
 H^{\hbox{\tiny{top}}}(\mS^{ii})+\sum\limits_{i> j}^n
 \mU(\mS^{ij},\mS^{ji},q_i-q_j)\,,
 }
 \end{array}
 \eq
 where $\mU(\mS^{ij},\mS^{ji},q_i-q_j)$ is a potential.
For $N=1$, the generalized model of interacting tops reduces to
the spin Calogero--Moser model (\ref{t103}):
$\mS^{ij}|_{N=1}=S_{ij}$ and
$\mU(\mS^{ij},\mS^{ji},q_i-q_j)|_{N=1}=-S_{ij}S_{ji}\wp(q_i-q_j)$.

\paragraph{Integrable Euler--Arnold tops.}
For $n=1$, we recover the elliptic integrable Euler--Arnold top
described in \cite{LOZ}, with Hamiltonian
  \beq\label{t106}
  \begin{array}{c}
     \displaystyle{
 H^{\hbox{\tiny{top}}}(S)=\frac{1}{2}\,\tr\Big(S\,{\mathbb J}(S)\Big)\,,
 \quad S\in\Mat\,,
 }
  \end{array}
  \eq
where ${\mathbb J}(S)$ is a linear map whose explicit expression is given
in (\ref{t340}). The equations of motion take the Euler form
 \beq\label{t107}
 \begin{array}{c}
  \displaystyle{
\dot {S}=\{H^{\hbox{\tiny{top}}}(S),S\}=[{S}, {\mathbb J}({S})]\,.
 }
 \end{array}
 \eq
For $N=2$, this model reproduces the complexified Euler top in
three-dimensional space, with $J$ playing the role of the inverse inertia tensor.

\paragraph{Integrable tops from AYBE.}
The construction of \cite{LOZ161} yields integrable models of the form
(\ref{t106})--(\ref{t107}) from a class of quantum $R$-matrices in the
fundamental representation of the Lie group ${\rm GL}_N$.
By definition, a quantum $R$-matrix satisfies the quantum Yang--Baxter
equation (QYBE):\footnote{In general, an $R$-matrix may depend on two spectral
parameters, $R_{12}^\hbar(q_1,q_2)$. Here we consider $R$-matrices that depend
only on their difference, $R_{12}^\hbar(q_1,q_2)=R_{12}^\hbar(q_1-q_2)$.}
  \beq\label{t108}
  \begin{array}{l}
  \displaystyle{
 R^{\hbar}_{12}(q_1-q_2)  R^{\hbar}_{13}(q_1-q_3) R^{\hbar}_{23}(q_2-q_3) =
 R^{\hbar}_{23}(q_2-q_3)  R^{\hbar}_{13}(q_1-q_3)  R^{\hbar}_{12}(q_1-q_2)\,,
 }
 \end{array}
 \eq
where $q_i$ are spectral parameters and $\hbar$ is the anisotropy parameter,
or Planck constant.\footnote{In some of the constructions below, the spectral
parameters of the $R$-matrix are identified with particle positions, while
the Planck constant $\hbar$ serves as the spectral parameter $z$ in the
Lax equation.}
Suppose that the $R$-matrix also satisfies the associative Yang--Baxter
equation \cite{FK}, in the parametric form introduced in \cite{Pol}:
  \beq\label{t109}
  R^\hbar_{12}(q_{1}-q_2)
 R^{\eta}_{23}(q_{2}-q_3)=R^{\eta}_{13}(q_{1}-q_3)R_{12}^{\hbar-\eta}(q_{1}-q_2)+
 R^{\eta-\hbar}_{23}(q_{2}-q_3)R^\hbar_{13}(q_{1}-q_3)
  \eq
together with the additional properties specified in the next section.
Consider its quasiclassical expansion
  \beq\label{t110}
  \begin{array}{c}
      \displaystyle{
R^\hbar_{12}(z)=\frac{1}{\hbar}\,1_N\otimes 1_N+r_{12}(z)+\hbar\,
m_{12}(z)+O(\hbar^2) }\,.
  \end{array}
  \eq
Then the $N\times N$ Lax pair
 \beq\label{t111}
  \begin{array}{c}
  \displaystyle{
  L(S,z)=\tr_2\Big(r_{12}(z){S}_2\Big)\,,\qquad M(S,z)=\tr_2\Big(m_{12}(z){S}_2\Big)\,,\qquad
  {S}_2=1_N\otimes S\,,
 }
 \end{array}
 \eq
satisfies the Lax equation identically in $z$ and yields the Euler equation
(\ref{t107}), with the inverse inertia tensor ${\mathbb J}$ given by
  \beq\label{t112}
  \begin{array}{c}
      \displaystyle{
 {\mathbb J}(S)=\tr_2(m_{12}(0)S_2)\,,\qquad S_2=1_N\otimes S\,.
  }
  \end{array}
  \eq
In (\ref{t111})--(\ref{t112}), $\tr_2$ denotes the trace over the second
tensor component. Polishchuk \cite{Pol} showed that, with the appropriate
normalization, the elliptic Baxter--Belavin quantum $R$-matrix \cite{BB}
satisfies the AYBE (\ref{t109}). For this $R$-matrix, the construction
reproduces the elliptic top of \cite{LOZ}.

\paragraph{Interacting tops.}
The generalized model (\ref{t105}) has a useful interpretation in the
rank-one case. If $\mS$ has rank one, its Hamiltonian can be expressed
solely in terms of the diagonal blocks $\mS^{ii}$:
 \beq\label{t113}
 \begin{array}{c}
  \displaystyle{
 \mH^{\hbox{\tiny{tops}}}=\sum\limits_{i=1}^n \frac{p_i^2}{2}+\sum\limits_{i=1}^n
 H^{\hbox{\tiny{top}}}(\mS^{ii})+\sum\limits_{i> j}^n
 \mV(\mS^{ii},\mS^{jj},q_i-q_j)\,,
 }
 \end{array}
 \eq
since in this case
 \beq\label{t114}
 \begin{array}{c}
  \displaystyle{
 \mU(\mS^{ij},\mS^{ji},q_i-q_j)= \mV(\mS^{ii},\mS^{jj},q_i-q_j)\ \forall i\neq j\,.
 }
 \end{array}
 \eq
This gives a direct physical interpretation of the Hamiltonian:
it describes $n$ integrable ${\rm gl}_N$ tops attached to $n$ particles,
with interactions depending on the positions $q_i$ and the corresponding
spin variables. The model was introduced in \cite{Polych}.
Its Lax representation was subsequently obtained in \cite{ZL} as the
special case rank$(\mS)=1$ of the generalized model of interacting tops.

\paragraph{\texorpdfstring{$R$}{R}-matrix-valued Lax pairs and long-range spin chains.}
Another application of the AYBE is the construction of $R$-matrix-valued
Lax pairs \cite{LOZ14,GrZ18,Z18}. These are half-quantum: the particle
positions and momenta remain classical, whereas the spin variables are
quantized in the fundamental representation of the Lie algebra ${\rm gl}_N$.
This construction yields anisotropic long-range spin chains of
Haldane--Shastry--Inozemtsev type \cite{SeZ18}, with quantum Hamiltonian
  \beq\label{t117}
  \begin{array}{c}
  \displaystyle{
 {\hat\mH}^{\hbox{\tiny{long-range}}}=\sum\limits_{i>j}^n
 F^0_{ij}(z_i-z_j)\,,\qquad F^0_{ij}(x)=\p_xR^\hbar_{ij}(x)\Big|_{\hbar=0}
 \stackrel{(\ref{t110})}{=}
 \p_{x}r_{ij}(x)\,.
 }
  \end{array}
  \eq
The idea is to restrict the classical Lax equations to an equilibrium
configuration $z_j$ of the underlying spinless model (\ref{t102}).
This produces a quantum Lax equation with Hamiltonian (\ref{t117}).
As observed in \cite{GrZ18,GSZ}, this Hamiltonian can also be obtained by
directly quantizing the interaction potential.

A fully quantum $R$-matrix-valued Lax pair, in which the particle momenta
are also quantized, can be viewed as a quantum Lax representation of the
model of interacting tops. The existence of such a pair does not, however,
guarantee integrability; in the present setting, it also does not provide
a method for computing higher Hamiltonians. A resolution of this problem
was obtained through the proof of integrability of a relativistic
generalization, or $q$-deformation, of (\ref{t117}) \cite{MZ}.

\subsection{\texorpdfstring{${\rm BC}_n$}{BCn} AYBEs, related models, and aim of the paper}
\paragraph{Calogero--Inozemtsev model.}
The spinless Calogero--Moser model (\ref{t102}) admits generalizations
associated with root systems other than type $A$ \cite{OP,Phong}.
The most general of these\footnote{Here the degree of generality refers to the
number of independent coupling constants.} is the Calogero--Inozemtsev
model, introduced in \cite{OP} and subsequently extended in \cite{Inoz89}
to five independent coupling constants. We consider this five-parameter
model, with Hamiltonian
\begin{equation}
 \label{t118}
   \displaystyle{
 H^{\hbox{\tiny{CI}}} = \frac{1}{2} \sum_{k=1}^{n} p_k^2 - g^2
\sum_{i<j}^{n} \Big(\wp(q_i-q_j) + \wp(q_i+q_j)\Big) -
\frac{1}{2}\sum\limits_{a=0}^3\sum_{k=1}^{n} \bar\nu_a^2\wp(q_k+\Om_a)\,,
}
 \end{equation}
where $\Om_\ga$ are the half-periods (\ref{a18}), and
$g,\bar\nu_0,\bar\nu_1,\bar\nu_2,\bar\nu_3\in\mC$ are arbitrary
constants.\footnote{In our construction, the boundary potential in
(\ref{t118}) arises from the unitarity relation for the $K$-matrix.
Different linear combinations of its parameters therefore occur
throughout the paper.}
A $3n\times 3n$ Lax representation was given in \cite{Inoz89}.
Here we use the $2n\times 2n$ Lax pair introduced by Takasaki \cite{Ta}.
Lax pairs of size $2n\times 2n$ for the trigonometric and rational cases
were also obtained in \cite{Feher1,Marshall}.

\paragraph{\texorpdfstring{${\bf BC}_n$}{BCn}-type AYBEs and long-range spin chains.}
The parametric ${\rm BC}_n$-type AYBEs introduced in \cite{MMZ} extend
the nonparametric $B$-type AYBE of An.~Kirillov \cite{Kir}.
Besides $R_{12}^\hbar(z)$, they involve two additional generators,
${\tilde R}_{12}^\hbar(z)$ and ${\ti K}^\hbar(z)$.
The matrix ${\tilde R}_{12}^\hbar(z)$ is obtained by applying an
involution to the second tensor component of ${R}_{12}^\hbar(z)$.
Such objects arise in the reflection equation algebra \cite{KuSkl,FM}.
Boundary $K$-matrices were introduced in the study of quantum integrable
systems with boundaries \cite{Cherednik,Skl-refl}; they satisfy the reflection equation
 \beq\label{t119}
 \begin{array}{c}
 \displaystyle{
R_{12}^{\hbar}(u-v)\tilde{K}_1^\hbar(u)\tilde{R}_{12}^{\hbar}(u+v)
\tilde{K}_2^\hbar(v)=\tilde{K}_2^\hbar(v)\tilde{R}_{12}^{\hbar}(u+v)
\tilde{K}_1^\hbar(u)R_{12}^{\hbar}(u-v)\,.
}
\end{array}
 \eq
It was shown in \cite{MMZ2} that a slightly modified version of the
elliptic Komori--Hikami $K$-matrix \cite{KH2} satisfies both (\ref{t119})
and the parametric ${\rm BC}_n$-type AYBEs, including the four-term relation
 \beq\label{t120}
 \begin{array}{c}
\displaystyle{
 R_{ij}^{w+z}(q_i\!-\!q_j)\tilde{K}^w_j(q_j)=\tilde{K}^w_i(q_i)R_{ij}^{z-w}(q_i\!-\!q_j)\!+\!
 \tilde{R}_{ij}^{w-z}(q_i\!+\!q_j)\tilde{K}^z_i(q_i)\!+\!
 \tilde{K}_j^{-z}(q_j)\tilde{R}_{ij}^{w+z}(q_i\!+\!q_j)\,.
}
\end{array}
 \eq
An $R$-matrix-valued Lax pair for the Calogero--Inozemtsev model
(\ref{t118}) was constructed in \cite{MMZ} as a generalization of
Takasaki's $2n\times 2n$ Lax pair. The construction includes all five
independent coupling constants; certain special cases had previously
been obtained in \cite{GrZ18}. Combining matrix representations of
(\ref{t120}) and the other ${\rm BC}_n$ AYBEs with this Lax pair gives
the following analogue of the long-range spin-chain Hamiltonian
(\ref{t117}):
  \beq\label{t121}
  \begin{array}{c}
  \displaystyle{
 {\hat\mH}^{\hbox{\tiny{long-range}}}=\sum\limits_{i>j}^n \Big(F^0_{ij}(z_i-z_j)+{\ti F}^0_{ij}(z_i+z_j)\Big)+
 \frac12\sum\limits_{k=1}^n \tilde{Y}_k^0(z_k)\,,
 }
 \\ \ \\
  \displaystyle{
F^0_{ij}(x)=\p_xR^\hbar_{ij}(x)\Big|_{\hbar=0}\,,\qquad \tilde{Y}^0(x)=\p_x {\ti K}^\hbar(x)\Big|_{\hbar=0}\,,
 }
  \end{array}
  \eq
where $z_i$ are equilibrium positions of the classical spinless model (\ref{t118}).
The four parameters $\nu_a$ of ${\ti K}^\hbar(x)$, and hence of
$\tilde{Y}^0(x)$, enter the last term of (\ref{t118}) through the
physical couplings $\bar\nu_a$ defined by (\ref{a48}). Trigonometric isotropic
long-range spin chains with boundaries were introduced in \cite{BPS} and then generalized in \cite{EFGR},
extending the famous $A$-type Haldane-Shastry chain \cite{HS1}.

\paragraph{Purpose of the paper.}
The main contribution of this paper is a classical Hamiltonian
construction of generalized interacting tops associated with the
${\rm BC}_n$ root system. It extends the $A$-type construction of
\cite{GSZ} using the $R$- and $K$-matrix data developed in
\cite{MMZ,MMZ2}. The latter works establish the relevant
${\rm BC}_n$ associative Yang--Baxter identities and their matrix
representations, and use them to construct $R$-matrix-valued Lax
pairs and long-range quantum spin chains. Here these data are used
to define a system with dynamical classical spin variables.
For arbitrary $n$ and $N$, we construct an explicit
$2nN\times 2nN$ Lax pair and a compatible Hamiltonian structure,
as stated in Theorems \ref{th1} and \ref{th2}.

The spin degrees of freedom are described by two coadjoint orbits
of ${\rm GL}_{nN}$, with the diagonal reduction specified below.
The construction combines interactions depending on $q_i-q_j$
and $q_i+q_j$ with local terms determined by the boundary
$K$-matrix. Its four parameters enter the classical Hamiltonian
and determine the boundary couplings of the reduced models.
For elliptic $R$- and $K$-matrices, the specialization $N=1$
gives a two-spin extension of the Calogero--Inozemtsev system.
The restriction $\mA=\mB$ and the subsequent rank-one reduction
relate the generalized model to interacting ${\rm gl}_N$ tops
and, at $N=1$, to the spinless Calogero--Inozemtsev system.

Two-spin Sutherland systems were previously studied in
\cite{KLOZ,Feher3}, while related elliptic systems were constructed
from Higgs bundles in \cite{LOSZ}. The construction presented here
is formulated directly in terms of the ${\rm BC}_n$ associative
Yang--Baxter identities and boundary $K$-matrices.
We also quantize the interacting-top Hamiltonians in the fundamental
representation and identify their relation, under the freezing
prescription, to the chains of \cite{MMZ,MMZ2}, including the
required shifts of the boundary parameters.

The Lax matrix has size
$2nN\times 2nN$ and the following block structure:
\beq\label{t122}
\begin{array}{c}
  \displaystyle{
    \mL^{\hbox{\tiny{Gtops}}}(z)=
    }
    \\[3mm]
  \displaystyle{
    =\sum^{n}_{i,j=1}\sum^{N}_{a,b=1}\Big( E_{ij}\otimes e_{ab} \mL^{ij}_{ab}(z)+E_{\bar{i}j}\otimes e_{ab}
    \mL^{\bar{i}j}_{ab}(z)+E_{i\bar{j}}\otimes e_{ab}\mL^{i\bar{j}}_{ab}(z)
    +E_{\bar{i}\bar{j}}\otimes e_{ab} \mL^{\bar{i}\bar{j}}_{ab}(z)\Big)\,,
  }
\end{array}
\eq
where $\bar i=i+n$. Equivalently,
  \beq\label{t123}
  \begin{array}{c}
  \displaystyle{
\mL^{\hbox{\tiny{Gtops}}}(z)=\mat{||\mL^{ij}(z)||}{||\mL^{i{\bar j}}(z)||}{||\mL^{{\bar i}j}(z)||}{||\mL^{{\bar i}\bar{j}}(z)||}\,,
 }
  \end{array}
  \eq
Each of the four blocks has size $nN\times nN$ and the same structure
as $\mL^{\hbox{\tiny{Gtops}}}(z)$ for the $A$-type model (\ref{t104}).
The Lax representation is expressed in terms of $R$- and $K$-matrices
satisfying the ${\rm BC}_n$ AYBEs and thus defines a family of models.
We also describe their Hamiltonian structure. The spin matrix
$\mS\in{\rm Mat}_{2nN}$ has the form
  \beq\label{t124}
  \begin{array}{c}
  \displaystyle{
\mS=\mats{\mA}{\mB}{\mB}{\mA}\,,\qquad \mA\,,\mB\in{\rm Mat}_{nN}\,.
 }
  \end{array}
  \eq
%
The phase space of the model is
  \beq\label{t125}
  \begin{array}{c}
  \displaystyle{
T^*\mC^n\times\Big((\mO^{\hbox{\tiny{$nN$}}}_+\times \mO^{\hbox{\tiny{$nN$}}}_-)//{\mathcal D}\Big)\,,
 }
  \end{array}
  \eq
%
where ${\mathcal D}$ is the following subgroup of the Cartan subgroup
of ${\rm GL}_{nN}$:
  \beq\label{t126}
  \begin{array}{c}
  \displaystyle{
{\mathcal D}={\rm diag}(d_1,...,d_n)\otimes 1_N\in{\rm Mat}_{nN}\,,
 }
  \end{array}
  \eq
and $\mO^{\hbox{\tiny{$nN$}}}_+$, $\mO^{\hbox{\tiny{$nN$}}}_-$ are
coadjoint orbits of the Lie group ${\rm GL}_{nN}$.
The corresponding matrix variables are
  \beq\label{t127}
  \begin{array}{c}
  \displaystyle{
\mO^{\hbox{\tiny{$nN$}}}_\pm:\ \ {\mathcal T}^{\pm}=\frac{\mA\pm \mB}{2}\in{\rm Mat}_{nN}\,.
 }
  \end{array}
  \eq

We next consider several special cases. For $N=1$, the $R$- and
$K$-matrices become scalar functions, yielding a spin generalization of
the elliptic Calogero--Inozemtsev model with a $2n\times 2n$ Lax pair.

There are two sets of spin variables,
$A=\mA|_{N=1}$ and $B=\mB|_{N=1}$ (or $T^\pm=\mT^\pm|_{N=1}$),
associated with the two coadjoint orbits (\ref{t127}).
Trigonometric Calogero--Inozemtsev systems with two sets of spin variables
were previously studied by different methods in \cite{KLOZ,Feher3}.
The spin part of the phase space (\ref{t125}),
$(\mO^{\hbox{\tiny{$n$}}}_+\times \mO^{\hbox{\tiny{$n$}}}_-)//{\mathcal D}$,
includes reduction by $\mathcal D$. For $N=1$, $\mathcal D$ is the full
Cartan subgroup of ${\rm GL}_n$.


Another class of special cases is obtained by identifying $\mA$ and $\mB$
in (\ref{t124}):
  \beq\label{t128}
  \begin{array}{c}
  \displaystyle{
\mA=\mB\,.
 }
  \end{array}
  \eq
This constraint is compatible with the equations of motion and amounts
to setting $\mT^-=0$. Taking $N=1$ then gives a spin generalization of
the elliptic Calogero--Inozemtsev model with a single set of spin variables
on one coadjoint orbit. Its Hamiltonian is
  \beq\label{t129}
  \begin{array}{c}
  \displaystyle{
H^{\hbox{\tiny{spin}}}=\frac12\,\sum_{i=1}^n p_i^2
-\frac12\,\sum_{i\neq j}^n A_{ij}A_{ji}
\Big(\wp(q_i-q_j)+\wp(q_i+q_j)\Big)-
  }
  \\[5mm]
  \displaystyle{
-\frac12\,\sum_{i=1}^n A_{ii}^2\wp(2q_i)
-\frac12\,\sum_{i=1}^n\sum_{a=0}^3
\la_a(\la_a+A_{ii})\wp(q_i+\Om_a)\,,
 }
  \end{array}
  \eq
where $\la_a$ are constants and $A=\mA|_{N=1}\in{\rm Mat}_n$
is the spin matrix. This model is integrable on the constraint surface
$A_{ii}=A_{jj}={\rm const}$. Quantum models of this type were proposed
in \cite{Finkel} using Dunkl operators.
For $N=1$, $\mA=\mB$, and rank$(\mA)=1$, the spin model (\ref{t129})
reduces to the spinless Calogero--Inozemtsev model (\ref{t118}),
as in the $A$-type spin Calogero--Moser model. This reduction reproduces
Takasaki's $2n\times 2n$ Lax representation of (\ref{t118}).

For $\mA=\mB$ and rank$(\mA)=1$, we obtain a model of $n$ interacting
${\rm gl}_N$ tops with interactions associated with the ${\rm BC}_n$ root system.

Finally, we discuss applications to quantum long-range spin chains of
Haldane--Shastry--Inozemtsev type \cite{HS1,Inoz90}. We obtain quantum
Hamiltonians of interacting tops by direct quantization in the fundamental
representation of the Lie algebra ${\rm gl}_N$. Applying the Polychronakos
freezing prescription \cite{Poly} then yields a family of quantum
Hamiltonians, including (\ref{t121}), which was derived in \cite{MMZ,MMZ2}
using $R$-matrix-valued Lax pairs. These results provide a construction
of a new family of (presumably) integrable quantum elliptic spin chains, although
a rigorous proof of integrability remains a separate problem.
Their relation to interacting tops may also facilitate the computation
of higher Hamiltonians.


\paragraph{Outline of the paper.}
Section \ref{sec2} recalls the $R$- and $K$-matrix identities used
in the construction. Section \ref{sec3} reviews the corresponding
$A$-type models. Section \ref{sec4} presents the generalized
${\rm BC}_n$ model, its Lax pair and Hamiltonian structure.
Section \ref{sec5} studies the $N=1$ spin models and their spinless
reduction. Section \ref{sec6} describes interacting tops and the
explicit XYZ and XXZ cases. Section \ref{sec7} gives quantum Hamiltonians
in the fundamental representation and relates them to long-range spin chains.
Appendix A collects elliptic identities; Appendices B and C prove
Theorems \ref{th1} and \ref{th2}, respectively.

\section{\texorpdfstring{$R$}{R}-matrices, \texorpdfstring{$K$}{K}-matrices and Yang--Baxter equations}\label{sec2}
\setcounter{equation}{0}
Following \cite{MMZ,MMZ2}, we recall the parametric ${\rm BC}_n$ AYBEs.
They involve three sets of generators, taken here in matrix representations:
$R^z_{ij}(x)$, ${\ti R}^z_{ij}(x)$, and $\ti{K}^{z}_{j}(x)$.
The quantum Yang--Baxter equation is
\beq\label{t2011}
  \displaystyle{
R^z_{ij}(q_i-q_j)R^z_{ik}(q_i-q_k)R^z_{jk}(q_j-q_k)
=R^z_{jk}(q_j-q_k)R^z_{ik}(q_i-q_k)R^z_{ij}(q_i-q_j)\,.
}
\eq
For ${\ti R}^z_{ij}(x)$ it reads as follows:
\beq\label{t2012}
  \displaystyle{
R^z_{ij}(q_i-q_j)\tilde{R}^z_{ik}(q_i-q_k)\tilde{R}^z_{jk}(q_j-q_k)=\tilde{R}^z_{jk}(q_j-q_k)
\tilde{R}^z_{ik}(q_i-q_k)R^z_{ij}(q_i-q_j)\,.
}
\eq
The generator $\ti{K}^{z}_{j}(x)$ is a solution of the
\underline{reflection equation} \cite{Skl-refl}, which encodes boundary
conditions for quantum integrable systems on a finite lattice.\footnote{Following
\cite{Skl-refl}, the reflection equation is usually written as
$R_{ij}^{z}(q_i-q_j)K_i^z(q_i)R_{ji}^{z}(q_i+q_j)K_j^z(q_j)=
K_j^z(q_j)R_{ij}^{z}(q_i+q_j)K_i^z(q_i)R_{ji}^{z}(q_i-q_j)$.
The $K$-matrix in this form is related to ${\ti K}^z(q)$ used here by
multiplication by a constant matrix; see \cite{MMZ2} for details.}
In our notation, it reads
\beq\label{t201}
\begin{array}{c}
  \displaystyle{
    R^{z}_{ij}(q_{ij})\ti{K}^{z}_{i}(q_{i})\ti{R}^{z}_{ij}(q_{ij}^{+})
    \ti{K}^{z}_{j}(q_{j})=\ti{K}^{z}_{j}(q_{j})\ti{R}^{z}_{ij}(q_{ij}^{+})
    \ti{K}^{z}_{i}(q_{i})R^{z}_{ij}(q_{ij})\,.
  }
\end{array}
\eq
It was shown in \cite{MMZ,MMZ2} that suitably normalized solutions
$\ti{K}^{z}_{j}(x)$ of the reflection equation also satisfy the
${\rm BC}_n$ AYBEs. A general proof of this correspondence is not yet known.
The proofs in \cite{MMZ,MMZ2} use the explicit elliptic solutions of
(\ref{t201}) given in \cite{IK,Vega,KH2}.

\subsection{Parametric \texorpdfstring{$\rm BC_{n}$}{BCn} associative Yang--Baxter equations}
We assume that the $\tilde{K},\tilde{R},R$-matrices satisfy the following
relations.

\paragraph{\texorpdfstring{$\bf BC_{n}$}{BCn} AYBEs.}\

\indent \underline{Commutation relations for distinct $i,j,k,l$:}
\beq\label{t206}
\begin{array}{c}
  \displaystyle{
    [\ti{K}^{z}_{i}(q_{i}),\ti{K}^{w}_{j}(q_{j})]= 0,
    \qquad[\ti{K}^{z}_{i}(q_{i}),R^{w}_{kl}(q_{kl})]= 0,\qquad[\ti{K}^{z}_{i}(q_{i}),
    \ti{R}^{w}_{kl}(q_{kl})] = 0\,,
  }
\end{array}
\eq
\beq\label{t207}
\begin{array}{c}
  \displaystyle{
    [R^{z}_{ij}(q_{ij}),R^{w}_{kl}(q_{kl})]= 0,\qquad [R^{z}_{ij}(q_{ij}),\ti{R}^{w}_{kl}(q_{kl})]
    =0,\qquad [\ti{R}^{z}_{ij}(q_{ij}),\ti{R}^{w}_{kl}(q_{kl})] =0\,,
  }
\end{array}
\eq
\indent \underline{Three-term relations:}
\beq\label{t208}
\begin{array}{c}
  \displaystyle{
    R_{ij}^{z}(q_{ij})R^{w}_{jk}(q_{jk})=R^{w}_{ik}(q_{ik})R^{z-w}_{ij}(q_{ij})
    + R^{w-z}_{jk}(q_{jk})R^{z}_{ik}(q_{ik})\,,
  }
\end{array}
\eq
\beq\label{t209}
\begin{array}{c}
  \displaystyle{
    R^{z}_{ij}(q_{ij})\ti{R}^{w}_{jk}(q_{jk}^{+})=\ti{R}^{w}_{ik}(q_{ik}^{+})R^{z-w}_{ij}(q_{ij})
    + \ti{R}^{w-z}_{jk}(q_{jk}^{+})\tilde{R}^{z}_{ik}(q_{ik}^{+})\,,
  }
\end{array}
\eq
\beq\label{t210}
\begin{array}{c}
  \displaystyle{
    R^{z}_{jk}(q_{jk})\ti{R}^{w}_{ik}(q_{ik}^{+})=\tilde{R}^{w}_{ij}(q_{ij}^{+})R^{z-w}_{jk}(q_{jk})
    +\ti{R}^{w-z}_{ik}(q_{ik}^{+})\ti{R}^{z}_{ij}(q_{ij}^{+})\,,
  }
\end{array}
\eq
\beq\label{t211}
\begin{array}{c}
  \displaystyle{
    \ti{R}^{z}_{jk}(q_{jk}^{+})R^{w}_{ik}(q_{ik})=\tilde{R}^{w}_{ij}(q_{ij}^{+})
    \ti{R}^{z-w}_{jk}(q_{jk}^{+})+R^{w-z}_{ik}(q_{ik}){\ti R}^{z}_{ij}(q_{ij}^+)\,,
  }
\end{array}
\eq
\indent \underline{The four-term relation:}
\beq\label{t212}
\begin{array}{c}
  \displaystyle{
    R^{w+z}_{ij}(q_{ij})\ti{K}_{j}^{w}(q_{j})=\tilde{K}^{w}_{i}(q_{i})R^{z-w}_{ij}(q_{ij})
    +\ti{R}^{w-z}_{ij}(q^{+}_{ij})\ti{K}_{i}^{z}(q_{i})+\ti{K}^{-z}_{j}(q_{j})
    \ti{R}^{w+z}_{ij}(q_{ij}^{+})\,.
  }
\end{array}
\eq

\paragraph{Skew-symmetry and unitarity.}\

\indent \underline{Symmetry and skew-symmetry properties:}
\beq\label{t204}
\begin{array}{c}
  \displaystyle{
    \ti{R}^{z}_{ij}(q)=\ti{R}^{z}_{ji}(q)\,,
  }
\end{array}
\eq
\beq\label{t205}
\begin{array}{c}
  \displaystyle{
    R^{z}_{ij}(q)=-R^{-z}_{ji}(-q)\,.
  }
\end{array}
\eq

\indent \underline{Unitarity properties:}
\beq\label{t218}
\begin{array}{c}
  \displaystyle{
    R^{z}_{12}(q)R^{z}_{21}(-q)=\Big(\wp(z)-\wp(q)\Big)1_N\otimes 1_N\,,
  }
\end{array}
\eq
\beq\label{t219}
\begin{array}{c}
  \displaystyle{
    \ti{R}^{z}_{12}(q)\ti{R}^{-z}_{21}(q)=\Big(\wp(q)-\wp(z)\Big)1_N\otimes 1_N\,,
  }
\end{array}
\eq
\beq\label{t220}
\begin{array}{c}
  \displaystyle{
    \tilde{K}^{z}(q)\ti{K}^{-z}(q)=k_N(z,q|\nu)1_N\,,\quad k_N(z,q|\nu)=u_N(q|\nu)-w_N(z|\nu)\,,
  }
\end{array}
\eq
where $k_N(z,q|\nu)$ and $u_N(q|\nu),w_N(z|\nu)$ are scalar functions.
They also depend on the parameters
$\nu=\{\nu_0,\nu_1,\nu_2,\nu_3\}$ in the definition of
$\tilde{K}^{z}_{i}(q)$.\footnote{In the most general elliptic case considered
here, the $K$-matrix depends on four constants, as in \cite{KH2}.}
Explicit elliptic expressions for $u_N(q|\nu)$ and $w_N(z|\nu)$ are
given in (\ref{t256}).

The Weierstrass $\wp$-function in the unitarity relations
(\ref{t218})--(\ref{t219}) corresponds to the elliptic case.
For trigonometric or rational $R$- and $K$-matrices, it is replaced
by its corresponding degeneration.

\paragraph{Local expansions.}
We use the following local expansions of the $R$- and $K$-matrices.

\indent \underline{Classical limit (local expansion near $z=0$):}
  \beq\label{t202}
  \begin{array}{c}
      \displaystyle{
R^z_{12}(q)=\frac{1}{z}\,1_N\otimes 1_N+r_{12}(q)+z\,m_{12}(q)+O(z^2) }\,,
  \end{array}
  \eq
  \beq\label{t203}
  \begin{array}{c}
      \displaystyle{
{\ti R}^z_{12}(q)=\frac{1}{z}\,1_N\otimes 1_N+{\ti r}_{12}(q)+z\,{\ti m}_{12}(q)+O(z^2) }\,,
  \end{array}
  \eq
where $1_N$ is the $N\times N$ identity matrix.

\underline{Expansions near $q=0$:}
\beq\label{t213}
\begin{array}{c}
  \displaystyle{
    R^{z}_{12}(q)=\frac{1}{q}\,P_{12}+R^{z,(0)}_{12}+qR^{z,(1)}_{12}+O(q^{2}),
    \qquad R^{z,(0)}_{12}=\frac{1}{z}\,1_N\ox1_N+r^{(0)}_{12}+O(z) \,,
  }
\end{array}
\eq
\beq\label{t214}
\begin{array}{c}
  \displaystyle{
    \ti{R}^{z}_{12}(q)=\frac{1}{q}\,{\ti R}_{12}^{(-1)}+\ti{R}^{z,(0)}_{12}+q\ti{R}^{z,(1)}_{12}
    +O(q^{2}), \qquad\ti{R}^{z,(0)}_{12}=\frac{1}{z}\,1_N\ox1_N+\ti{r}^{(0)}_{12}+O(z)\,,
  }
\end{array}
\eq
\beq\label{t215}
\begin{array}{c}
  \displaystyle{
    r_{12}(q)=\frac{1}{q}\,P_{12}+r^{(0)}_{12}+q\,r^{(1)}_{12}+O(q^{2}),
    \qquad\ti{r}_{12}(q)=\frac{1}{q}\,{\ti r}_{12}^{(-1)}+\ti{r}^{(0)}_{12}
    +q\,\ti{r}^{(1)}_{12}+O(q^{2})\,,
  }
\end{array}
\eq
where ${\ti R}_{12}^{(-1)}$ and ${\ti r}_{12}^{(-1)}$ are constant
matrices given in (\ref{t249}), and $P_{12}$ is the permutation operator
\beq\label{t216}
\begin{array}{c}
  \displaystyle{
    P_{12}= \sum_{i,j=1}^{N}e_{ij}\ox e_{ji}\in\Mat^{\otimes 2}\,,
  }
\end{array}
\eq
with $e_{ij}$ denoting the matrix units in $\Mat$.

The symmetry and skew-symmetry properties of these expansion coefficients
follow from (\ref{t204})--(\ref{t205}). For example,
\beq\label{t2161}
\begin{array}{c}
  \displaystyle{
    r_{12}(x)=-r_{21}(-x)\,,\quad m_{12}(x)=m_{21}(-x)\,,
 }
 \\ \ \\
   \displaystyle{
 {\ti r}_{12}(x)={\ti r}_{21}(x)\,,\quad {\ti r}_{12}(-x)=-J_1J_2{\ti r}_{12}(x)J_1J_2\,,
    }
\end{array}
\eq
where $J_1J_2=J\otimes J$, and the matrix $J$ is a certain involutive matrix, see (\ref{t246}) below for elliptic case.

\indent \underline{Expansion of the $\ti{K}$-matrix near $z=0$:}
\beq\label{t217}
\begin{array}{c}
  \displaystyle{
    \ti{K}^{z}(q) = \frac{1}{z}\,\ti{K}^{(-1)}+\ti{K}^{(0)}(q)+O(z)\,,
    \qquad\ti{K}^{(-1)}=\varsigma_N\cdot 1_N
\,,
  }
\end{array}
\eq
where $\varsigma_N$ is the constant given in (\ref{t2533}), and
$\ti{K}^{(0)}(q)$ satisfies the classical reflection equation
corresponding to (\ref{t201}):
\beq\label{t2171}
  \displaystyle{
[\ti{K}^{(0)}_1(q_1)+\ti{K}^{(0)}_2(q_2),r_{12}(q_1-q_2)]-[\ti{K}^{(0)}_1(q_1)
-\ti{K}^{(0)}_2(q_2),{\ti r}_{12}(q_1+q_2)]=0\,.
}
 \eq

\subsection{Further properties and identities}

A class of $R$-matrices also satisfies the following relation,
called \underline{Fourier symmetry}:
\beq\label{t221}
\begin{array}{c}
  \displaystyle{
    R^{z}_{12}(q)P_{12}=R^{q}_{12}(z)\,.
  }
\end{array}
\eq
The local expansions above then imply
\begin{equation}
\label{t222}
\begin{array}{cc}
    R^{z, (0)}_{12} = r_{12}(z) P_{12}\,, & r_{12}^{(0)} = r_{12}^{(0)} P_{12}\,,
     \\ \ \\
    R^{z, (1)}_{12} = m_{12}(z) P_{12}\,, & r_{12}^{(1)} = m_{12}(0) P_{12}\,.
\end{array}
\end{equation}

Introduce the following notation:
\beq\label{t223}
F_{ij}^z(q):=\partial_{q}R_{ij}^z(q)\,, \qquad\tilde{F}_{ij}^z(q):=\partial_{q}
\tilde{R}_{ij}^z(q)\,,\qquad\tilde{Y}^z_i(q):=\partial_q \tilde{K}^z_i(q|\nu)\,,
\eq
and
\beq\label{t224}
F_{ij}^0(q):=\partial_{q}R_{ij}^z(q)\Big|_{z=0}\,, \quad\tilde{F}_{ij}^0(q):=\partial_{q}
\tilde{R}_{ij}^z(q)\Big|_{z=0}\,,\quad\tilde{Y}^0_i(q):=\partial_q
\tilde{K}^z_i(q|\nu)\Big|_{z=0}\,.
\eq
The local expansions (\ref{t202}), (\ref{t203}), and (\ref{t217})
imply that these derivatives are regular at $z=0$. They also give
\beq\label{t2241}
F_{ij}^0(q)=\p_qr_{ij}(q)\,, \quad\tilde{F}_{ij}^0(q)=\partial_{q}
\tilde{r}_{ij}(q)\,,\quad\tilde{Y}^0_i(q)=\partial_q
\tilde{K}^{(0)}_i(q|\nu)\,.
\eq
and from (\ref{t2161})
\beq\label{t2242}
F_{ij}^0(q)=F_{ji}^0(-q)\,, \quad\tilde{F}_{ij}^0(q)=\tilde{F}_{ji}^0(q)\,.
\eq

We next list identities used to prove the Lax equation in Section \ref{sec4}.
\begin{corollary}
The following identities hold:
\beq\label{t225}
\begin{array}{c}
  \displaystyle{
    R^{z}_{ij}F^{z}_{jk}-F_{ij}^{z}R^{z}_{jk}=F^{0}_{jk}R^{z}_{ik}-R^{z}_{ik}F^{0}_{ij}\,,
  }
\end{array}
\eq
\beq\label{t226}
\begin{array}{c}
  \displaystyle{
    R^{z}_{ij}\ti{F}^{z}_{jk}-F^{z}_{ij}\ti{R}^{z}_{jk}=\ti{F}^{0}_{jk}\ti{R}^{z}_{ik}
    -\ti{R}^{z}_{ik}F^{0}_{ij}\,,
  }
\end{array}
\eq
\beq\label{t227}
\begin{array}{c}
  \displaystyle{
    \ti{R}^{z}_{ij}F^{-z}_{jk}+\ti{F}^{z}_{ij}R^{-z}_{jk}=F^{0}_{jk}\ti{R}^{z}_{ik}
    -\ti{R}^{z}_{ik}\ti{F}^{0}_{ij}\,,
  }
\end{array}
\eq
\beq\label{t228}
\begin{array}{c}
  \displaystyle{
    \ti{R}^{z}_{ij}\ti{F}^{-z}_{jk}+\ti{F}_{ij}^{z}\ti{R}^{-z}_{jk}=\ti{F}^{0}_{jk}R^{z}_{ik}
    -R^{z}_{ik}\ti{F}^{0}_{ij}\,,
  }
\end{array}
\eq
 and
\beq\label{t229}
\begin{array}{c}
  \displaystyle{
    \ti{Y}^{0}_{j}\ti{R}^{z}_{ij}-\ti{R}^{z}_{ij}\ti{Y}^{0}_{i}=-2F^{z}_{ij}\ti{K}^{z}_{j}
    +2\ti{K}^{z}_{i}(q_{i})F^{-z}_{ij}+R^{z}_{ij}\ti{Y}^{z}_{j}+\ti{Y}^{z}_{i}R^{-z}_{ij}\,,
  }
\end{array}
\eq
\beq\label{t230}
\begin{array}{c}
  \displaystyle{
    \ti{Y}^{0}_{j}R^{z}_{ij}-R^{z}_{ij}\ti{Y}^{0}_{i}=2\ti{F}^{z}_{ij}\ti{K}^{-z}_{j}
    +2\ti{K}^{z}_{i}\ti{F}^{-z}_{ij}+\ti{R}^{z}_{ij}\ti{Y}^{-z}_{j}
    +\ti{Y}^{z}_{i}\ti{R}^{-z}_{ij}\,,
  }
\end{array}
\eq
where we use the abbreviations $R^{w}_{ij} := R^{w}_{ij}(q_{i}-q_{j}),~
 \ti{R}^{w}_{ij}:=\ti{R}^{w}_{ij}(q_{i}+q_{j}),~\ti{K}^{w}_{i}:=\ti{K}^{w}_{i}(q_{i}),~F^{w}_{ij}:
 =F^{w}_{ij}(q_{i}-q_{j})$ and $\ti{F}^{w}_{ij}:=
 \ti{F}^{w}_{ij}(q_{i}+q_{j}),~\ti{Y}^{w}_{i}:=\ti{Y}^{w}_{i}(q_{i}).$
\end{corollary}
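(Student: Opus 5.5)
All six identities will be derived from the three-term relations (\ref{t208})--(\ref{t211}) and the four-term relation (\ref{t212}) by the standard procedure used for (\ref{t117}) in the $A$-type case. One specializes the two spectral parameters $z,w$ so that the relation degenerates, and then differentiates with respect to the particle coordinates. Concretely, for (\ref{t225}) I start from (\ref{t208}) with $w=z$. The term $R^{z-w}_{ij}$ then has a simple pole coming from the expansion (\ref{t202}), and $R^{w-z}_{jk}$ has a pole of opposite sign. To avoid this, I set $w=z+\epsilon$ and expand in $\epsilon$. The singular parts $\mp\frac1\epsilon R^{z}_{ik}$ cancel. At order $\epsilon^0$ one obtains a relation containing $r_{ij}$, $r_{jk}$ and the $z$-derivatives $\p_z R$.

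The $z$-derivatives are unwanted. I therefore prefer to differentiate the exact relation (\ref{t208}) in $q$-variables rather than in $z$. I apply $\p_{q_j}$ to (\ref{t208}). Since $q_{ij}=q_i-q_j$ and $q_{jk}=q_j-q_k$, this gives
\[
-F^z_{ij}R^w_{jk}+R^z_{ij}F^w_{jk}=-R^w_{ik}F^{z-w}_{ij}+F^{w-z}_{jk}R^z_{ik}.
\]
Setting $w=z$ directly is legitimate here, because the $F$'s are regular at zero spectral parameter, as noted after (\ref{t224}). This yields exactly (\ref{t225}):
\[
R^z_{ij}F^z_{jk}-F^z_{ij}R^z_{jk}=F^0_{jk}R^z_{ik}-R^z_{ik}F^0_{ij}.
\]
The same manipulation applied to (\ref{t209}) gives (\ref{t226}). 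Since $q^+_{jk}=q_j+q_k$ and $q^+_{ik}$ does not depend on $q_j$, the derivative $\p_{q_j}$ again produces $F$ on $R_{ij}$ and $\tilde F$ on $\tilde R_{jk}$. For (\ref{t227}) and (\ref{t228}) I use (\ref{t211}) and (\ref{t210}), possibly after relabelling indices and using the symmetry (\ref{t204}) and the skew-symmetry (\ref{t205}) to bring the spectral parameters to the form $\pm z$. I then choose the differentiation variable, $q_j$ or $q_i$, so that exactly two factors on the left get differentiated while on the right only the factor carrying spectral parameter $w-z$ or $z-w$ is hit before setting $w=z$. I expect some sign bookkeeping at this point. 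For instance, $F^{-z}_{jk}$ in (\ref{t227}) arises from rewriting $R^{w}_{ik}$ via (\ref{t205}) as $-R^{-w}_{ki}(-q_{ik})$, so the check is that $\p_q$ of (\ref{t205}) gives $F^z_{ij}(q)=F^{-z}_{ji}(-q)$.

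For (\ref{t229}) and (\ref{t230}) I use the four-term relation (\ref{t212}), possibly conjugated or relabelled to get the $\tilde R$ versions. To obtain (\ref{t229}), I first rename the spectral parameters so that the left side reads $R^{z}_{ij}\tilde K^{w}_j$ etc.; the cleanest choice is $w\to w$, $z\to z-w$. I then apply $\p_{q_i}-\p_{q_j}$ or $\p_{q_i}+\p_{q_j}$. These derivatives differentiate $R_{ij}$ with factor $2$ or $0$ and $\tilde R_{ij}$ with factor $0$ or $2$, and each hits exactly one $\tilde K$, producing the $\tilde Y$ terms. This accounts for the coefficients $2$ in front of $F$ and $\tilde F$ in (\ref{t229})--(\ref{t230}). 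After that I set the auxiliary parameter to zero, so that one term produces $\tilde Y^0$.

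The main obstacle is this last limit. The $\tilde K^{0}$ factor is singular, since by (\ref{t217}) it equals $\varsigma_N/z+\dots$. The $O(1/\epsilon)$ terms must be shown to cancel using (\ref{t217}) and the fact that $\tilde K^{(-1)}$ is scalar: the pole is a scalar multiple of the identity, so after differentiation in $q$ it disappears. What survives is precisely $\tilde Y^0$, by (\ref{t2241}). I would verify this cancellation first and then match the remaining terms against (\ref{t229})--(\ref{t230}).
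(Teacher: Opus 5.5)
Your approach is the paper's: it simply asserts that these identities follow by differentiating the ${\rm BC}_n$ AYBEs (\ref{t208})--(\ref{t212}) in the particle coordinates, and your computations fill this in correctly. This covers $\partial_{q_j}$ of (\ref{t208}) at $w=z$ for (\ref{t225}) and $\partial_{q_i}-\partial_{q_j}$ of (\ref{t212}) for (\ref{t229}); there the singular $\tilde K$ factors only ever appear differentiated, because the operator annihilates the $\tilde R(q_i+q_j)$ they multiply.

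One bookkeeping point: $\partial_{q_i}+\partial_{q_j}$ applied to (\ref{t212}) as written yields (\ref{t230}) with every product in reversed order. To get (\ref{t230}) you must first swap $i\leftrightarrow j$ in (\ref{t212}) and use (\ref{t204})--(\ref{t205}), then apply $\partial_{q_i}+\partial_{q_j}$, set $w=0$ and replace $z\to-z$. This is a relabelling, not a conjugation by $J$, which is not available from the axioms.
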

These are matrix analogues of the elliptic identities (\ref{a15})
and (\ref{a53}). They follow by differentiating the
${\rm BC}_n$-type AYBEs (\ref{t208})--(\ref{t212});
see \cite{MMZ,MMZ2} for details.

We rewrite (\ref{t225})--(\ref{t228}) in the form used below:
\beq\label{b23}
\begin{array}{c}
  \displaystyle{
    R^{z}_{12}(q_{ij})F^{z}_{23}(q_{jk})-F_{12}^{z}(q_{ij})R^{z}_{23}(q_{jk})
    =F^{0}_{23}(q_{jk})R^{z}_{13}(q_{ik})-R^{z}_{13}(q_{ik})F^{0}_{12}(q_{ij})\,,
  }
\end{array}
\eq
\beq\label{b03}
\begin{array}{c}
  \displaystyle{
    R^{z}_{12}(q_{ij})\ti{F}^{z}_{23}(q_{jk}^{+})-F^{z}_{12}(q_{ij})\ti{R}^{z}_{23}(q_{jk}^{+})
    =\ti{F}^{0}_{23}(q_{jk}^{+})\ti{R}^{z}_{13}(q_{ik}^{+})
    -\ti{R}^{z}_{13}(q_{ik}^{+})F^{0}_{12}(q_{ij})\,,
  }
\end{array}
\eq
\beq\label{b04}
\begin{array}{c}
  \displaystyle{
    \ti{R}^{z}_{12}(q_{ij}^{+})F^{-z}_{23}(q_{jk})+\ti{F}^{z}_{12}(q_{ij}^{+})R^{-z}_{23}(q_{jk})
    =F^{0}_{23}(q_{jk})\ti{R}^{z}_{13}(q_{ik}^{+})-\ti{R}^{z}_{13}(q_{ik}^{+})
    \ti{F}^{0}_{12}(q_{ij}^{+})\,,
  }
\end{array}
\eq
\beq\label{b24}
\begin{array}{c}
  \displaystyle{
    \ti{R}^{z}_{12}(q_{ij}^{+})\ti{F}^{-z}_{23}(q_{jk}^{+})+\ti{F}_{12}^{z}(q_{ij}^{+})
    \ti{R}^{-z}_{23}(q_{jk}^{+})=\ti{F}^{0}_{23}(q_{jk}^{+})R^{z}_{13}(q_{ik})
    -R^{z}_{13}(q_{ik})\ti{F}^{0}_{12}(q_{ij}^{+})\,.
  }
\end{array}
\eq
%

\paragraph{Degenerate identities.}
We first consider degenerations of (\ref{b03}) and (\ref{b04}).
Substituting $q_{k}=q_{j}-\epsilon$ into (\ref{b04}) and taking
$\epsilon\rightarrow 0$ gives
\beq\label{b07}
\begin{array}{c}
  \displaystyle{
    \ti{F}_{12}^{z}(q_{ij}^{+})R^{-z,(0)}_{23} =r^{(1)}_{23}\ti{R}^{z}_{13}(q_{ij}^{+})
    -\ti{R}^{z}_{13}(q_{ij}^{+})\ti{F}^{0}_{12}(q_{ij}^{+})
    -\ti{R}^{z}_{12}(q_{ij}^{+})R^{-z,(1)}_{23}-\frac{1}{2}\,\partial_{x}^{2}
    \ti{R}^{z}_{12}(x)\Big|_{x=q_{ij}^{+}}P_{23}\,.
  }
\end{array}
\eq
Similarly, setting $q_{j}=q_{i}-\epsilon$ in (\ref{b03}) and taking the
limit as $\epsilon$ tends to zero gives
\beq\label{b08}
\begin{array}{c}
  \displaystyle{
    R^{z,(0)}_{12}\ti{F}^{z}_{23}(q_{ik}^{+})=R^{z,(1)}_{12}\ti{R}^{z}_{23}(q_{ik}^{+})
    +\ti{F}^{0}_{23}(q_{ik}^{+})\ti{R}_{13}^{z}(q_{ik}^{+})
    - \ti{R}^{z}_{13}(q_{ik}^{+})r^{(1)}_{12}+\frac{1}{2}\,\partial^{2}_{x}
    \ti{R}^{z}_{13}(x)\Big|_{x=q^{+}_{ik}}P_{12}\,.
  }
\end{array}
\eq
This is a matrix analogue of the elliptic identity (\ref{a173}).
Analogous degenerations of (\ref{b23}) and (\ref{b24}) give
\beq\label{b27}
\begin{array}{c}
  \displaystyle{
    R_{12}^{z,(0)}F_{23}^{z}(q_{ij}) =R^{(1),z}_{12}R_{23}^{z}(q_{ij})
    +F^{0}_{23}(q_{ij})R^{z}_{13}(q_{ij}) -R^{z}_{13}(q_{ij})r^{(1)}_{12}+\frac{1}{2}\,
    \partial_{x}^{2}R^{z}_{13}(x)\Big|_{x=q_{ij}}P_{12}\,,
  }
\end{array}
\eq
\beq\label{b28}
\begin{array}{c}
  \displaystyle{
    F^{z}_{12}(q_{ij})R_{23}^{(0),z}=R^{z}_{12}(q_{ij})R^{(1),z}_{23}-r^{(1)}_{23}R^{z}_{13}(q_{ij})
    +R^{z}_{13}(q_{ij})F^{0}_{12}(q_{ij})+\frac{1}{2}\,P_{23}
    \partial_{x}^{2}R^{z}_{13}(x)\Big|_{x=q_{ij}}\,.
  }
\end{array}
\eq
Further degenerations of (\ref{b23}) and (\ref{b24}) yield
\beq\label{b38}
\begin{array}{c}
  \displaystyle{
    R_{12}^{z}(q)F_{23}^{z}(-q)-F_{12}^{z}(q)R_{23}^{z}(-q) =[F_{32}^{0}(q),r_{13}(z)]P_{13}
    -\partial_{q}F_{32}^{0}(q)P_{13}\,,
  }
\end{array}
\eq
\beq\label{b39}
\begin{array}{c}
  \displaystyle{
    \ti{R}_{12}^{z}(q_{ij}^{+})\tilde{F}_{23}^{-z}(q_{ji}^{+})+\ti{F}_{12}^{z}(q_{ij}^{+})
    \ti{R}_{23}^{-z}(q_{ji}^{+}) =
\ti{F}^{0}_{23}(q_{ji}^{+})R_{13}^{(0),z}-\partial_{q_{ji}^{+}}
    \ti{F}^{0}_{23}(q_{ji}^{+})P_{13}-R_{13}^{(0),z}\ti{F}^{0}_{12}(q_{ij}^{+})\,.
  }
\end{array}
\eq
Relation (\ref{b39}) follows by substituting $q_{k}=q_{i}-\epsilon$
into (\ref{b24}) and taking $\epsilon\rightarrow 0$.

We also use the following degeneration of (\ref{t229}), obtained by
setting $q_{j}=q_{i}-\epsilon$ and taking $\epsilon\rightarrow 0$:
\beq\label{b20}
\begin{array}{c}
  \displaystyle{
    \ti{Y}^{0}_{2}(q_{i})\ti{R}^{z}_{12}(2q_{i})-\ti{R}^{z}_{12}(2q_{i})\ti{Y}^{0}_{1}(q_{i}) =
  }
  \\ \ \\
  \displaystyle{
    = R^{z,(0)}_{12}\ti{Y}^{z}_{2}(q_{i})+\ti{Y}_{1}^{z}(q_{i})R^{-z,(0)}_{12}
    +2\ti{K}^{z}_{1}(q_{i})R^{-z,(1)}_{12}-2R^{z,(1)}_{12}\ti{K}^{z}_{2}(q_{i})\,.
  }
\end{array}
\eq
Setting $q_{j}=q_{i}-\epsilon$ in (\ref{t230}) and taking
$\epsilon\rightarrow 0$ gives
\beq\label{b40}
\begin{array}{c}
  \displaystyle{
    \ti{Y}^{0}_{2}(q_{i})R^{z,(0)}_{12}-\partial_{q_{i}}\ti{Y}^{0}_{2}(q_{i})P_{12}
    -R^{z,(0)}_{12}\ti{Y}^{0}_{1}(q_{i}) =
  }
  \\ \ \\
  \displaystyle{
    =2\ti{F}^{z}_{12}(2q_{i})\ti{K}^{-z}_{2}(q_{i})+2\tilde{K}^{z}_{1}(q_{i})
    \ti{F}^{-z}_{12}(2q_{i})+\ti{R}^{z}_{12}(2q_{i})\ti{Y}^{-z}_{2}(q_{i})+\ti{Y}^{z}_{1}(q_{i})
    \ti{R}^{-z}_{12}(2q_{i})\,.
  }
\end{array}
\eq

We also use the degeneration of (\ref{t208}),
\beq\label{c07}
\begin{array}{c}
  \displaystyle{
    R_{12}^{z}(q)R^{z}_{23}(-q)=\Big(r_{13}(z)r_{32}(q)-r_{32}(q)r_{13}(z)\Big)P_{13}
    -F^{0}_{13}(z)P_{13}+F^{0}_{32}(q)P_{13}\,,
  }
\end{array}
\eq
the degeneration of (\ref{t210}),
\beq\label{c08}
\begin{array}{c}
  \displaystyle{
    \ti R_{12}^{z}(q)\ti R_{23}^{-z}(q) =\Big(\ti r_{23}(q)r_{13}(z)
    -r_{13}(z)\ti r_{32}(q)\Big)P_{13} -\ti F_{32}^{0}(q)P_{13} +F_{13}^{0}(z)P_{13}
  }
\end{array}
\eq
and the degeneration of the four-term relation (\ref{t212}):
\beq\label{c10}
\begin{array}{c}
  \displaystyle{
    \ti R_{12}^{z}(2q)\ti K_{2}^{-z}(q) +\ti K_{1}^{z}(q)
    \ti R_{12}^{-z}(2q)=R_{21}^{-z,(0)}\ti K_{1}^{(0)}(q) -\ti K_{2}^{(0)}(q)R_{21}^{-z,(0)}-
  }
  \\ \ \\
  \displaystyle{
    -P_{21}\partial_q\ti K_{1}^{(0)}(q)+
\partial_{x}R_{21}^{x,(0)}\Big|_{x=-z}\ti K_{1}^{(-1)}(q) +\ti K_{2}^{(-1)}(q)
    \partial_{x}R_{21}^{x,(0)}\Big|_{x=-z}\,.
  }
\end{array}
\eq
Equivalently,
\beq\label{c101}
\begin{array}{c}
  \displaystyle{
    \ti R_{12}^{z}(2q)\ti K_{2}^{-z}(q) +\ti K_{1}^{z}(q)
    \ti R_{12}^{-z}(2q)=\Big( [{\ti K}_2^{(0)}(q),r_{12}(z)]-{\ti Y}^0_2(q)+2\varsigma_N F_{12}^0(z) \Big)P_{12}\,,
  }
\end{array}
\eq
where $\varsigma_N$ is the coefficient in (\ref{t217}).

Finally, we use the matrix analogue of (\ref{a171})
\beq\label{c02}
\begin{array}{c}
  \displaystyle{
    r_{12}(z)r_{13}(z+w)-r_{23}(w)r_{12}(z) +r_{13}(z+w)r_{23}(w) =
  }
  \\ \ \\
  \displaystyle{
    =m_{12}(z)+m_{23}(w)+m_{13}(z+w)\,.
  }
\end{array}
\eq
At $w=0$, this becomes
\beq\label{c03}
\begin{array}{c}
  \displaystyle{
    r_{12}(z)r_{13}(z)={}r_{23}^{(0)}r_{12}(z)-r_{13}(z)r_{23}^{(0)}-F_{13}^{0}(z)P_{23}
    +m_{12}(z)+m_{23}(0)+m_{13}(z)\,.
  }
\end{array}
\eq

\subsection{Elliptic formulae}
\paragraph{Matrix basis \texorpdfstring{$T_a$}{T(a)}.}
In the elliptic case, we use the following basis of $\Mat$.\footnote{We
frequently use identities from \cite{MMZ2}, where a slightly different
definition of $T_\al$ was adopted. Most relations retain their form,
but some identities change, including the Fourier (argument) symmetry
of the $K$-matrix in (\ref{t2531})--(\ref{t2532}).
See the remark at the end of Section 3 of \cite{MMZ2}.}
\beq\label{t231}\begin{array}{c} \displaystyle{
        T_\al = \exp \left( \al_1 \al_2 \frac{\pi i}{N} \right) Q^{\al_1} \Lambda^{\al_2},
        \quad \al = (\al_1, \al_2)\in \mZ_N \times \mZ_N, \quad T_0=T_{(0,0)} = 1_N\,,
}\end{array}\eq
Here $\mZ_N=\mZ/N\mZ$, and $Q\in{\rm Mat}_N$ and
$\Lambda\in{\rm Mat}_N$ are defined by
 \beq\label{t232}
 \begin{array}{c}
  \displaystyle{
(Q)_{kl}=\delta_{kl}\exp\Big(\frac{2\pi\imath}{{ N}}\,k\Big)\,,
\qquad(\Lambda)_{kl}=\delta_{k-l+1=0\,{\hbox{\tiny{mod}}}\,{ N}}\,,
 }
 \end{array}
 \eq
Thus $Q$ is diagonal, with the $N$th roots of unity as its entries,
and $\Lambda$ acts by cyclic permutation:
 \beq\label{t233}
 \Lambda=\left(
 \begin{array}{cccccc}
 0 & 1 & 0 & 0 &...& 0
 \\
 0 & 0 & 1 & 0 &...& 0
 \\
 \vdots & \vdots & \vdots & \dots & \vdots & \vdots
  \\
 0 & 0 & ... & 0 & 1 & 0
 \\
 0 & 0 & ... & 0 & 0 & 1
 \\
  1 & 0 & ... & 0 & 0 & 0
 \end{array}
 \right)\,,\qquad \Lambda^{-1}=\Lambda^{t}\,,
 \eq
where $t$ denotes matrix transposition.
These matrices define a finite-dimensional representation of the
Heisenberg group:
 \beq\label{t234}
 \begin{array}{c}
  \displaystyle{
    \exp \left( \al_1 \al_2 \frac{2\pi i}{N} \right) Q^{\al_1} \Lambda^{\al_2}
    =\Lambda^{\al_2} Q^{\al_1}\,,
}
\end{array}
\eq
\beq\label{t235}
\begin{array}{c}
\displaystyle{
    T_\al T_\be = \ka_{\al, \be} T_{\al + \be}, \quad \ka_{\al, \be}
    =\exp \left( \frac{\pi i}{N}\,(\al_2 \be_1 - \al_1 \be_2) \right),
    \quad \ka_{\al, \al + \be} = \ka_{\al, \be}, \quad \ka_{- \al, \be} = \ka_{\be,\al}\,,
}
\end{array}
\eq
\beq\label{t236}
\begin{array}{c}
 \displaystyle{
    \tr \Big(T_\al T_{-\be}\Big) = N \de_{\al,\be}\,,
}
\end{array}
\eq
\beq\label{t237}
\begin{array}{c}
 \displaystyle{
    [T_\al, T_\be] = (\ka_{\al, \be} - \ka_{\be, \al}) T_{\al+ \be} =2\imath
    \sin \left( \frac{\pi}{N}\,(\al_2 \be_1 - \al_1 \be_2) \right) T_{\al +\be}\,.
}
\end{array}
\eq
The matrix permutation operator $P_{12}$ (\ref{t216}) takes the form
\begin{equation}
\begin{array}{c}
\label{t238}
     \displaystyle{
    P_{12}=\sum\limits_{k,l=1}^N e_{kl}\otimes e_{lk}=\frac{1}{N}\,\sum
    \limits_{\al\in\,\mZ_N\times\mZ_N}T_\al\otimes T_{-\al}\in\Mat^{\otimes2}\,.
    }
\end{array}
 \end{equation}
The change of basis between $T_a$ and the standard matrix units $e_{ij}$
is described, for example, in the appendix of \cite{ZZ}.

By (\ref{t236}), the coefficient $X_\al$ of a matrix $X\in\Mat$
in the basis $T_a$ is
\beq\label{t2381}
\begin{array}{c}
 \displaystyle{
    X_\al=\frac{1}{N}\,\tr\Big(X T_{-\al}\Big)\,.
}
\end{array}
\eq
The basis $T_\al$ in (\ref{t231}) is not periodic on
$\mZ_N\times\mZ_N$, since
$T_{(\al_1+N,\al_2)}=(-1)^{\al_2}T_{(\al_1,\al_2)}$ and $T_{(\al_1,\al_2+N)}=(-1)^{\al_1}T_{(\al_1,\al_2)}$.

\paragraph{\texorpdfstring{$R$}{R}-matrices.} The elliptic Baxter--Belavin $R$-matrix \cite{BB,Belavin} has the
following form:
\begin{equation}\label{t239}
\begin{array}{c}
    \displaystyle{
    R^{z}_{12} (x) =\frac{1}{N}\, \sum\limits_{\al\in\mZ_N\times\mZ_N}
    \varphi_\al (x, \frac{z}{N} +\om_\al) T_\al \otimes T_{-\al}
    }\,,
\end{array}
\end{equation}
where $\vf_\al$ are the functions defined in (\ref{a32}).

The expansion near $z = 0$ is the quasiclassical expansion,
since $z$ plays the role of the Planck constant:
\begin{equation}\label{t240}
\begin{array}{c}
\displaystyle{
    R^{z}_{12}(x) = \frac{1}{z}\, 1_N \otimes 1_N + r_{12}(x) + z \: m_{12} (x) +O(z^2)\,,
    }
\end{array}
\end{equation}
where $r_{12}$ is the Belavin--Drinfeld classical $r$-matrix \cite{BD},
which satisfies the classical Yang--Baxter equation:
\begin{equation}\label{t241}
\begin{array}{c}
    \displaystyle{
    [r_{12}, r_{13}] + [r_{12}, r_{23}] + [r_{13}, r_{23}] = 0, \quad r_{ij} =r_{ij}(q_i-q_j)\,.
    }
\end{array}\end{equation}
The classical $r$-matrix and the next coefficient, the $m$-matrix, are
\begin{equation}\label{t242}
\begin{array}{c}
    \displaystyle{
    r_{12} (x) = \frac{1}{N}\, E_1(x)\, 1_N \otimes 1_N +\frac{1}{N}\,
    \sum_{\substack{\al\neq(0,0)\\ \al\in\,\mZ_N\times\mZ_N}}
    \varphi_\al (x, \om_\al) T_\al \otimes T_{-\al}
    }\,,
\end{array}
\end{equation}
\beq\label{t243}
\begin{array}{c}
    \displaystyle{
     m_{12} (x)= \frac{1}{N^2}\, \frac{E_1^2(x)-\wp(x)}{2}\, 1_N \otimes 1_N
     +\frac{1}{N^2}\, \sum_{\substack{\al\neq(0,0)\\ \al\in\,\mZ_N\times\mZ_N}}f_\al
     (x, \om_\al) T_\al
     \otimes T_{-\al}\,,
     }
\end{array}
\eq
\beq\label{t2431}
\begin{array}{c}
    \displaystyle{
     m_{12} (0)= \frac{\vth'''(0)}{3N\vth'(0)}\, P_{12}
     -\frac{1}{N^2}\, \sum_{\substack{\al\neq(0,0)\\ \al\in\,\mZ_N\times\mZ_N}} \wp(\om_\al) T_\al\otimes T_{-\al}\,.
     }
\end{array}
\eq
The local expansion (\ref{t215}) of the classical $r$-matrix gives
\begin{equation}\label{t244}
\begin{array}{c}
    \displaystyle{ r^{(0)}_{12} = \frac{1}{N}\,
    \sum_{\substack{\al\neq(0,0)\\ \al\in\,\mZ_N\times\mZ_N}}(E_1 (\om_\al) + 2 \pi i \partial_\tau \om_\al )
    T_\al
    \otimes T_{-\al}}\,.
\end{array}
\end{equation}
The coefficients in (\ref{t203}) and (\ref{t213})--(\ref{t215}) can be
computed directly from the expansion (\ref{a10}) of $\phi$.
Alternatively, if the quantum $R$-matrix satisfies (\ref{t221}), one can
use (\ref{t222}). For simplicity, we assume (\ref{t221})--(\ref{t222})
throughout the following construction. This assumption is not essential:
the constructions can also be carried out without Fourier symmetry
(\ref{t221}).

Using (\ref{a07}), the derivative of (\ref{t239}) is
\begin{equation}\label{t2391}
\begin{array}{c}
    \displaystyle{
    F^{z}_{12} (x) =\p_x R^{z}_{12} (x) =
    }
    \\[4mm]
     \displaystyle{
    =\frac{1}{N}\, \sum\limits_{\al\in\mZ_N\times\mZ_N}
    \varphi_\al (x, \frac{z}{N} +\om_\al)\Big(E_1(x+\frac{z}{N} + \om_\al)-E_1(x)
    +2\pi\imath\p_\tau\om_\al\Big)T_\al \otimes T_{-\al}
    }\,
\end{array}
\end{equation}
and
\begin{equation}\label{t2392}
\begin{array}{c}
    \displaystyle{
    F^{0}_{12} (x) =\p_x r_{12} (x) =
    }
    \\[4mm]
     \displaystyle{
    =-\frac{1}{N}\,E_2(x)1_N\otimes 1_N+\frac{1}{N}\,
    \sum_{\substack{\al\neq(0,0)\\\al\in\,\mZ_N\times\mZ_N}}
    \varphi_\al (x, \om_\al)\Big(E_1(x+ \om_\al)-E_1(x)+2\pi\imath\p_\tau
    \om_\al\Big)T_\al \otimes T_{-\al}
    }\,.
\end{array}
\end{equation}

Introduce the matrix
\beq\label{t245}
J=e_{1,N-1}+e_{2,N-2}+\dots+e_{N-1,1}+e_{N,N}=\frac{1}{2N}\,\sum
\limits_{a\in\,\mZ_{2N}\times\mZ_{2N}} T_a\in\Mat\,,
\eq
or
 \beq\label{t246}
 J=\left(
 \begin{array}{cccccc}
 0 & ... & 0 & 0 & 1 & 0
 \\
 0 & ... & 0 & 1 & 0& 0
 \\
 \vdots & \vdots & \vdots & \dots & \vdots & \vdots
  \\
 0 & 1 & 0 & ... & 0 & 0
 \\
 1 & 0 & 0 & ... & 0 & 0
 \\
 0 & 0 & 0 & ... & 0 & 1
 \end{array}
 \right)\,,\qquad J^2=1_N\,,
 \eq
Conjugation by $J$ realizes the involutive automorphism
$\mathrm{j}$ of ${\rm End}(\mC^N)$,
\begin{equation}\label{t247}
\mathrm{j}(A)= J\, A\, J^{-1}\,,\qquad\mathrm{j}(T_\alpha)=T_{-\alpha}\,.
\end{equation}
The matrix $\tilde{R}_{12}^\hbar(\xi)$ is then defined as the operator
$(\mathrm{id}\otimes\mathrm{j})R_{12}^\hbar(\xi)$:
  \begin{equation}\label{t248}
\begin{array}{c}
    \displaystyle{
    \tilde{R}^{z}_{12} (x):=J_2\,     R^{z}_{12} (x)\,J_2 = \frac{1}{N}\,
    \sum_{\al \in \mZ_N \times \mZ_N} \varphi_\al (x, \frac{z}{N} + \om_\al)\,T_\al
    \otimes T_{\al}}\,\in\Mat^{\otimes 2}\,,
\end{array}
\end{equation}
where $J_2=1_N\otimes J$. In the notation of the expansions
(\ref{t214})--(\ref{t215}),
 \beq\label{t249}
 \begin{array}{c}
  \displaystyle{
\res\limits_{x=0}\tilde{R}^{z}_{12}(x)=\tilde{R}^{(-1)}_{12}=P_{12} J_1 J_2\,,
\qquad\res\limits_{x=0}\tilde{r}_{12}(x)=\tilde{r}^{(-1)}_{12}=P_{12} J_1 J_2\,,
  }
 \end{array}
 \eq
 and
 \beq\label{t250}
 \begin{array}{c}
  \displaystyle{
\res\limits_{z=0} \tilde{R}^{z}_{12}(x)=1_N\otimes 1_N\,.
}
 \end{array}
 \eq
The expansion coefficients of $\tilde{R}^{z}_{12}(x)$ are computed
in the same way as those of $R^{z}_{12}(x)$:
by (\ref{t248}), one replaces $T_{-\al}$ with $T_\al$ in the
second tensor component. For example,
\begin{equation}\label{t2501}
\begin{array}{c}
    \displaystyle{
    {\ti r}_{12} (x) = \frac{1}{N}\, E_1(x)\, 1_N \otimes 1_N +\frac{1}{N}\,
    \sum_{\substack{\al\neq(0,0)\\ \al\in\,\mZ_N\times\mZ_N}} \varphi_\al (x,\om_\al) T_\al
    \otimes T_{\al}\,.
    }\,
\end{array}
\end{equation}

The Baxter--Belavin $R$-matrix satisfies the QYBE (\ref{t2011}) and
the AYBE (\ref{t208}). Further properties of this matrix and of the
coefficients in (\ref{t202}) and (\ref{t213})--(\ref{t215}) can be found
in \cite{Pol,LOZ14,LOZ15}.

\paragraph{\texorpdfstring{$K$}{K}-matrix.}
We use the elliptic $K$-matrix introduced in \cite{MMZ2}:
\begin{equation}\label{t251}
    \tilde{K}^{z}(x) = \frac{1}{2N}\,\sum_{\be \in \mZ_{2N} \times \mZ_{2N}}
    \exp(\frac{2 \pi \imath \be_2 x}{N}) v (x, \frac{z}{N} + \frac{\om_\be}{2})T_\be\in\Mat \,,
\end{equation}
where $v$ is defined in (\ref{a45}). Equivalently,
\beq\label{t252}
\begin{array}{c}
\displaystyle{
\tilde{K}^{z}(x)=\sum_{a=0}^3\nu_a\,\tilde{K}_a^{z}(x)\,,\quad\tilde{K}_a^{z}(x)
=\frac{1}{2N}\,\sum_{\be\in\mZ_{2N}\times\mZ_{2N}}\exp\!\Big(\frac{2\pi\imath\,\be_2 x}{N}\Big)
\vf_a\!\Big(2x,\,\frac{z}{N}+\frac{\om_\be}{2}+\Om_a\Big)T_\be\,.
}
\end{array}
\eq
This expression is related to the Komori--Hikami $K$-matrix \cite{KH2}
by a simple transformation; see \cite{MMZ2} for details.
It satisfies the following Fourier symmetry:
\beq\label{t2531}
\begin{array}{c}
\displaystyle{
\tilde{K}^{z}(x)=\tilde{K}^{x}(z)J\qquad\text{(even $N$)}\,,
}
\end{array}
\eq
\beq\label{t2532}
\begin{array}{c}
\displaystyle{
\tilde{K}^{z}(x|\nu_0,\nu_1,\nu_2,\nu_3)=\tilde{K}^{x}(z|\nu_0,\nu_1,-\nu_2,\nu_3)J
\qquad\text{(odd $N$)}\,.
}
\end{array}
\eq

The unitarity relation (\ref{t220}) takes the form
 \beq\label{t254}
   \displaystyle{
 \tilde{K}^z_i(x) \tilde{K}^{-z}_i(x)=1_N\sum_{a=0}^3\Big(\bar\nu_a^2\wp(x+\Omega_a)
 -\bar{\nu'}_a^{\,2}\wp(z+\Omega_a)\Big)\,,
 }
 \eq
where the parameters $\nu'$ depend on the parity of $N$:
\beq\label{t255}
\begin{array}{c}
\displaystyle{
\begin{aligned}
N\text{ even:}&\qquad\nu'_k=\nu_k\ \hbox{for all}\ k=0,1,2,3\,,
\\ \ \\
N\text{ odd:}&\qquad\nu'_2=-\nu_2,\;\nu'_k=\nu_k\text{ for }k= 0,1,3\,.
\end{aligned}
}
\end{array}
\eq
In (\ref{t254}), $\bar\nu$ and $\bar{\nu'}$ are obtained from $\nu$
and $\nu'$, respectively, by the linear transformation (\ref{a48}).
Comparing with the general unitarity relation (\ref{t220}), we obtain
 \beq\label{t256}
   \displaystyle{
 u_N(q|\nu)=\sum_{a=0}^3\bar\nu_a^2\wp(q+\Omega_a)\,,\qquad w_N(z|\nu)=\sum_{a=0}^3
 \bar{\nu'}_a^{\,2}\wp(z+\Omega_a)\,.
 }
 \eq

The residue at $z=0$ in the expansion (\ref{t217}) of the $K$-matrix is
\beq\label{t2533}
\begin{array}{c}
\displaystyle{
\res\limits_{z=0}\tilde{K}^{z}(x)=\bar\nu'_0 1_N\,,\qquad \varsigma_N=\bar\nu'_0\,,
}
\end{array}
\eq
where $\nu'$ is defined in (\ref{t255}), and $\bar\nu'$ is obtained
from (\ref{a48}); explicitly,
$\bar\nu'_0=(\nu_0+\nu_1+(-1)^N\nu_2+\nu_3)/2$.

\paragraph{Case \texorpdfstring{$N=2$}{N = 2}.}
The matrices (\ref{t232})--(\ref{t233}) are $Q=-\si_3$ and
$\Lambda=\sigma_1$, where the Pauli matrices are
\beq\label{t259}
 \begin{array}{c}
  \displaystyle{
 \sigma_0=\mats{1}{0}{0}{1}\,,\quad
  \sigma_1=\mats{0}{1}{1}{0}\,,\quad
  \sigma_2=\mats{0}{-\imath}{\imath}{0}\,,\quad
   \sigma_3=\mats{1}{0}{0}{-1}
  }
 \end{array}
\eq
and $\sigma_4=\sigma_0=1_{2\times 2}$. Therefore, $T_{(1,0)}=-\si_3$, $T_{(1,1)}=\si_2$, $T_{(0,1)}=\si_1$.
In this case, (\ref{t239}) is Baxter's eight-vertex $R$-matrix:
\beq\label{t260}
 \begin{array}{c}
  \displaystyle{
 R_{12}^z(x)
 =\frac12\sum\limits_{k=0}^3\vf_{k}\Big(x,\Om_{k}+\frac{z}{2}\Big)\sigma_{4-k}\otimes\sigma_{4-k}\,,
  }
 \end{array}
\eq
The corresponding reflection $K$-matrix was obtained in \cite{IK,Vega}.
We use the form given in \cite{MMZ2}:
  \beq\label{t261}
 \begin{array}{c}
  \displaystyle{
 {\tilde K}^z(x)=\sum\limits_{k=0}^3\rho_{k}
 e^{-2\pi\imath z\p_\tau\Om_{k}}\vf_{k}(x-\Om_{k},z+\Om_{k})\sigma_{4-k}=
}
\\ \ \\\
  \displaystyle{
 =
 \sum\limits_{k=0}^3\rho_{k}
 e^{2\pi\imath (x+z+\Om_k)\p_\tau\Om_{k}}\phi(x+\Om_{k},z+\Om_{k})\sigma_{4-k}\,,
  }
 \end{array}
\eq
 where $\rho_1=-\bar\nu_1$, $\rho_2=\bar\nu_2$, $\rho_3=\bar\nu_3$,
 $\rho_4=\rho_0=\bar\nu_0$ is assumed, and $\p_\tau\Om_{k}$ is equal to either $0$ (for $\Om_0$, $\Om_1$) or
 $1/2$ (for $\Om_2$, $\Om_3$).

When $N=2$, $\tilde{R}^\hbar_{12}(z)=R^\hbar_{12}(z)$ due to $J=\sigma_0$.

\paragraph{Case \texorpdfstring{$N=1$}{N = 1}.}
All $R$- and $K$-matrices become scalar functions.
In particular, the $R$-matrix reduces to the elliptic Kronecker function:
\beq\label{t263}
 \begin{array}{c}
  \displaystyle{
 R_{12}^z(x)\Big|_{N=1}=\phi(z,x)\,,\qquad r_{12}(x)\Big|_{N=1}=E_1(x)\,,
   }
    \\ \ \\
    \displaystyle{
  F_{12}^0(x)\Big|_{N=1}=f(0,x)=-E_2(x)=-\wp(x)+\frac{\vth'''(0)}{3\vth'(0)}\,,
  }
    \\ \ \\
    \displaystyle{
m_{12}(x)\Big|_{N=1}=\frac{E_1^2(x)-\wp(x)}{2}\,,\qquad m_{12}(0)\Big|_{N=1}=m(0)=\frac{\vth'''(0)}{3\vth'(0)}\,,
  }
  \\ \ \\
    \displaystyle{
 {\ti R}_{12}^z(x)\Big|_{N=1}=\phi(z,x)\,,\qquad {\ti r}_{12}(x)\Big|_{N=1}=E_1(x)\,.
  }
 \end{array}
\eq

Put $c_\tau=\vth'''(0)/(3\vth'(0))$. The scalar limits are
$m(0)=c_\tau$, $f(0,x)=-E_2(x)=c_\tau-\wp(x)$ and
$u_1(q|\nu)=\sum_{a=0}^3\la_a^2\wp(q+\Om_a)$, since
$\la_a^2=\bar\nu_a^2$.

The remaining coefficients in (\ref{t202})--(\ref{t215}) follow from
the expansions (\ref{a10})--(\ref{a11}). For the $K$-matrix, we obtain
\beq\label{t264}
 \begin{array}{c}
  \displaystyle{
 {\tilde K}^z(x|\nu)\Big|_{N=1}=v(z,x|\la)\,,
  }
 \end{array}
\eq
where $\la=\la(\nu)$ is the set of constants
\beq\label{t2641}
 \begin{array}{c}
  \displaystyle{
\la_0=\bar\nu_0=\frac12(\nu_0+\nu_1+\nu_2+\nu_3)\,,\qquad
\la_1=\bar\nu_1=\frac12(\nu_0+\nu_1-\nu_2-\nu_3)\,,
 }
 \\ \ \\
  \displaystyle{
\la_2=-\bar\nu_2=-\frac12(\nu_0-\nu_1+\nu_2-\nu_3)\,,\qquad
 \la_3=\bar\nu_3=\frac12(\nu_0-\nu_1-\nu_2+\nu_3)\,.
  }
 \end{array}
\eq
where the constants $\bar\nu$ are defined by (\ref{a48}). Alternatively,
\beq\label{t265}
 \begin{array}{c}
  \displaystyle{
 {\tilde K}^z(x|\nu)\Big|_{N=1}=v(x,z|\mu)\,,
  }
 \end{array}
\eq
where $\mu=\mu(\nu)$ is the set of constants
\beq\label{t266}
 \begin{array}{c}
  \displaystyle{
\mu_0=\frac12(\nu_0+\nu_1-\nu_2+\nu_3)\,,\qquad \mu_1=\frac12(\nu_0+\nu_1+\nu_2-\nu_3)\,,
  }
  \\ \ \\
   \displaystyle{
\mu_2=\frac12(-\nu_0+\nu_1+\nu_2+\nu_3)\,,\qquad \mu_3=\frac12(\nu_0-\nu_1+\nu_2+\nu_3)\,.
  }
 \end{array}
\eq
Finally,
\beq\label{t267}
 \begin{array}{c}
  \displaystyle{
 {\tilde Y}^z(x|\nu)\Big|_{N=1}=\p_x v(z,x|\la)\,,
  }
 \end{array}
\eq
\beq\label{t268}
 \begin{array}{c}
  \displaystyle{
 {\tilde Y}^0(x|\nu)\Big|_{N=1}=-\sum\limits_{k=0}^3\la_k E_2(x+\Om_k)\stackrel{(\ref{a08})}{=}
 -\sum\limits_{k=0}^3\la_k \wp(x+\Om_k)+\frac{\vth'''(0)}{3\vth'(0)}\sum\limits_{k=0}^3\la_k\,.
  }
 \end{array}
\eq

\section{Models associated with the type \texorpdfstring{$A$}{A} root system}\label{sec3}
\setcounter{equation}{0}

\subsection{Calogero--Moser model and its spin generalization}
\paragraph{Spinless Calogero--Moser model.}
The spinless model describes $n$ particles in the complex plane,
with dynamics generated by the Hamiltonian \cite{Ca,OP}
  \beq\label{t301}
  \begin{array}{c}
  \displaystyle{
H^{\hbox{\tiny{CM}}}=\sum\limits_{i=1}^n\frac{p_i^2}{2}-g^2\sum\limits_{i>j}^n\wp(q_i-q_j)\,,
 }
 \end{array}
 \eq
where $g\in\mC$ is a coupling constant and $\wp$ is the Weierstrass $\wp$-function (see Appendix A).
The particle positions $q_i\in\mC$ and momenta $p_i\in\mC$
are canonically conjugate:
  \beq\label{t302}
  \begin{array}{c}
  \displaystyle{
 \{p_i,q_j\}=\delta_{ij}\,,\quad \{p_i,p_j\}=\{q_i,q_j\}=0\,.
 }
 \end{array}
 \eq
The Hamiltonian equations ${\dot f}=\{H,f\}$ admit a Lax representation
with a spectral parameter $z$. The Lax pair (\ref{t101}) for the
spinless elliptic Calogero--Moser model consists of the following
$n\times n$ matrices \cite{Krich1}:
  \beq\label{t303}
  \begin{array}{c}
  \displaystyle{
L^{\hbox{\tiny{CM}}}_{ij}(z)=\delta_{ij}\Big(p_i+{g} E_1(z)\Big)+{g}(1-\delta_{ij})\phi(z,q_{ij})\,,\quad q_{ij}=q_i-q_j\,,
 }
 \end{array}
 \eq
  \beq\label{t304}
  \begin{array}{c}
  \displaystyle{
M^{\hbox{\tiny{CM}}}_{ij}(z)={g}\delta_{ij} d_i
+{g}(1-\delta_{ij})f(z,q_{ij})\,,\quad d_i=\sum\limits_{k:k\neq i}^n\wp(q_{ik})\,,
 }
 \end{array}
 \eq
where $\phi(z,q)$ is the elliptic Kronecker function (\ref{a01}),
and $f(z,q)$ and $E_1(z)$ are defined in (\ref{a07})--(\ref{a08}).
The equations of motion
  \beq\label{t305}
  \begin{array}{c}
  \displaystyle{
 {\dot q}_i=p_i\,,\qquad  {\dot p}_i={g}^2\sum\limits_{k:k\neq
 i}^n\wp'(q_{i}-q_k)\,,\quad i=1,...,n
 }
 \end{array}
 \eq
obtained from (\ref{t301})--(\ref{t302}) follow from the Lax equation
by the identities (\ref{a15})--(\ref{a17}).

\paragraph{Spin Calogero--Moser model.}
In the spin generalization \cite{BAB,GH}, the phase space is extended
by the classical spin component
$\mO^{\hbox{\tiny{$n$}}}//{\mathcal D}$, where
$\mO^{\hbox{\tiny{$n$}}}$ is a coadjoint orbit of the Lie group
${\rm GL}_n$ and ${\mathcal D}\subset{\rm GL}_n$ is its Cartan subgroup.
The dynamical spin variables are the entries of $S\in{\rm Mat}(n,\mC)$;
see \cite{TrZ} for a review.

The model is usually described first on the unreduced phase space,
whose spin component is $\mO^{\hbox{\tiny{$n$}}}$ with the linear
Lie--Poisson structure
  \beq\label{t306}
\begin{array}{c}
  \displaystyle{
\{S_{ij},S_{ml}\}=-S_{il}\delta_{mj}+S_{mj}\delta_{il}\,,\quad i,j,m,l=1,...,n\,,
 }
 \end{array}
 \eq
 or, equivalently,
  \beq\label{t307}
  \begin{array}{c}
  \displaystyle{
\{S_{1},S_{2}\}=[P_{12},S_1]\,.
 }
 \end{array}
 \eq
The Casimir functions are fixed on a coadjoint orbit
$\mO^{\hbox{\tiny{$n$}}}$; in the semisimple case, this amounts to
fixing the eigenvalues of $S$.

The Poisson reduction by the Cartan subgroup $\mathcal D$,
$\mO^{\hbox{\tiny{$n$}}}\rightarrow
\mO^{\hbox{\tiny{$n$}}}//{\mathcal D}$, involves the first-class
constraints $\chi_i={\rm const}$, $i=1,...,n$, where
  \beq\label{t308}
  \begin{array}{c}
  \displaystyle{
 \chi_i=S_{ii}\,,\quad i=1,...,n\,.
 }
 \end{array}
 \eq
These constraints generate the action
  \beq\label{t309}
  \begin{array}{c}
  \displaystyle{
S\to \mD S\mD^{-1}\,,\quad \mD={\rm diag}(d_1,...,d_n)\,.
 }
 \end{array}
 \eq
Together with suitable gauge-fixing conditions, (\ref{t308}) forms a
set of second-class constraints. The reduced Poisson structure is then
obtained from the Dirac bracket. This reduces the dimension of the
unreduced phase space by $2n-2$:
  \beq\label{t310}
  \begin{array}{c}
  \displaystyle{
T^*\mC^n\times\mO^{\hbox{\tiny{$n$}}}\ \stackrel{\hbox{reduction}}{\longrightarrow}\ T^*\mC^n\times\Big(\mO^{\hbox{\tiny{$n$}}}//{\mathcal D}\Big)\,.
 }
 \end{array}
 \eq
Here $T^*\mC^n$ is the particle component of the phase space, with
brackets (\ref{t302}). In the center-of-mass frame, it is replaced by
$T^*\mC^{n-1}$, and the reduced phase space has the same dimension
as the original coadjoint orbit $\mO^{\hbox{\tiny{$n$}}}$.

The matrices
  \beq\label{t311}
  \begin{array}{c}
  \displaystyle{
L^{\hbox{\tiny{spin}}}_{ij}(z)=\delta_{ij}(p_i+S_{ii}E_1(z))+(1-\delta_{ij})S_{ij}\phi(z,q_{ij})\,,
 }
 \end{array}
 \eq
  \beq\label{t312}
  \begin{array}{c}
  \displaystyle{
  M^{\hbox{\tiny{spin}}}_{ij}(z)=(1-\delta_{ij})S_{ij}f(z,q_i-q_j)
 }
 \end{array}
 \eq
satisfy a Lax equation with an additional unwanted term on the
right-hand side:
  \beq\label{t313}
  \begin{array}{c}
  \displaystyle{
 \dot{L}^{\hbox{\tiny{spin}}}(z)=[L^{\hbox{\tiny{spin}}}(z),M^{\hbox{\tiny{spin}}}(z)]+\hbox{``unwanted term''}\,,
 }
 \\ \ \\
  \displaystyle{
 \hbox{``unwanted term''}=-\sum\limits_{i\neq j}^n E_{ij}\,\Big(S_{ii}-S_{jj}\Big)S_{ij}\Big(\frac12\,\p^2_x\phi(z,x)\Big|_{x=q_{ij}}\!+m(z)\phi(z,q_{ij})\Big)\,,
 }
 \end{array}
 \eq
where $E_{ij}$ are matrix units in ${\rm Mat}(n,\mC)$ and $m(z)$
is defined in (\ref{a172}). This equation yields the equations of motion
  \beq\label{t314}
  \begin{array}{c}
  \displaystyle{
{\dot q}_i=p_i\,,\quad {\dot p}_i=\sum\limits_{j:j\neq
i}^n S_{ij}S_{ji}\wp'(q_i-q_j)\,,\quad i=1,...,n
 }
 \end{array}
 \eq
and
  \beq\label{t315}
  \begin{array}{c}
  \displaystyle{
{\dot S}_{ii}=0\,,
 }
 \\ \ \\
   \displaystyle{
{\dot S}_{ij}=-S_{ij}(S_{ii}-S_{jj})\wp(q_i-q_j)+\sum\limits_{k:k\neq i,j}^n
S_{ik}S_{kj}\Big(\wp(q_i-q_k)-\wp(q_j-q_k)\Big)\,,\ i\neq j\,.
 }
 \end{array}
 \eq
These are Hamiltonian equations with Hamiltonian
  \beq\label{t316}
  \begin{array}{c}
  \displaystyle{
H^{\hbox{\tiny{spin}}}=\sum\limits_{i=1}^n\frac{p_i^2}{2}-\sum\limits_{i>j}^n S_{ij}S_{ji}\wp(q_i-q_j)\,.
 }
 \end{array}
 \eq
The first equation in (\ref{t315}) is compatible with the constraints
(\ref{t308}):
  \beq\label{t317}
  \begin{array}{c}
  \displaystyle{
S_{ii}=S_{jj}={\rm const}\ \forall i,j\,,
 }
 \end{array}
 \eq
The unwanted term in (\ref{t313}) vanishes on this constraint surface.

Consider a coadjoint orbit of minimal dimension
$\dim\mO^{\hbox{\tiny{min,$n$}}}=2n-2$. In the semisimple case,
the matrix $S$ has $n-1$ coincident eigenvalues. This orbit is known
to be associated with the spinless Calogero--Moser model \cite{OP}.
Indeed, it can be described by a rank-one matrix $S_{ij}=\xi_i\eta_j$
and embedded in $T^*\mC^n$ with the canonical Poisson structure
  \beq\label{t318}
  \begin{array}{c}
  \displaystyle{
\{\xi_i,\eta_j\}=\delta_{ij}\,.
 }
 \end{array}
 \eq
By (\ref{t317}), $\xi_i\eta_i=g={\rm const}$, so the Lax matrix becomes
  \beq\label{t319}
  \begin{array}{c}
  \displaystyle{
L^{\hbox{\tiny{spin}}}_{ij}(z)\Big|_{\hbox{on-shell}}=\delta_{ij}(p_i+gE_1(z))+(1-\delta_{ij})g\frac{\xi_i}{\xi_j}\phi(z,q_{ij})\,,
 }
 \end{array}
 \eq
The diagonal gauge transformation with matrix
${\rm diag}(\xi_1^{-1},...,\xi_n^{-1})$ gives the spinless
Calogero--Moser Lax matrix (\ref{t303}), while the Hamiltonian
(\ref{t316}) reduces to (\ref{t301}).

\subsection{A class of integrable Euler--Arnold tops}
\paragraph{Construction based on the AYBE.}
The Euler top is defined by the equation
 \beq\label{t321}
 \begin{array}{c}
   \displaystyle{
 \dot{S}=[S,{\mathbb J}(S)]\,,\quad
 \quad S=\sum\limits_{k=1}^3S_k\sigma_k\,,\quad {\mathbb J}(S)=\sum\limits_{k=1}^3S_k J_k\sigma_k\,,
 }
 \end{array}
  \eq
where $\sigma_\al$ are the Pauli matrices. All variables are
complex-valued. The vector $(S_1,S_2,S_3)$ is the complexification
of the angular momentum of a rigid body, and $J_\al$ are the components
of its inverse inertia tensor in the principal-axis frame.

The Euler top (\ref{t321}) admits integrable higher-rank generalizations.
The matrix $S\in\Mat$ may then have arbitrary size, and
${\mathbb J}(S)$ is a fixed linear map. The equation
 \beq\label{t322}
 \begin{array}{c}
   \displaystyle{
 \dot{S}=[S,{\mathbb J}(S)]
 }
 \end{array}
  \eq
for $S\in\Mat$ defines an Euler--Arnold top on the Lie group
${\rm GL}_N$; see \cite{Arnold1}.

We use the construction of integrable Euler--Arnold tops in \cite{LOZ161},
based on a class of quantum $R$-matrices satisfying the AYBE (\ref{t208}).
For a matrix of dynamical variables $S\in\Mat$, define the Lax pair
 \beq\label{t323}
  \begin{array}{c}
  \displaystyle{
  L^{\hbox{\tiny{top}}}(S,z)=\tr_2\Big(r_{12}(z){S}_2\Big)\,,
  \qquad M^{\hbox{\tiny{top}}}(S,z)=\tr_2\Big(m_{12}(z){S}_2\Big)\,,\qquad
  {S}_2=1_N\otimes S\,,
 }
 \end{array}
 \eq
where $\tr_2$ is the trace over the second tensor component, i.e. for
 \beq\label{t324}
  \begin{array}{c}
  \displaystyle{
  r_{12}(z)=\sum\limits_{ijkl=1}^N r_{ij,kl}(z)e_{ij}\otimes e_{kl}\,,
   }
 \end{array}
 \eq
we have
 \beq\label{t325}
  \begin{array}{c}
  \displaystyle{
  L^{\hbox{\tiny{top}}}(S,z)=\tr_2\Big(r_{12}(z){S}_2\Big)=\sum\limits_{ijkl=1}^N r_{ij,kl}(z)e_{ij}S_{lk}\,.
 }
 \end{array}
 \eq
The Lax equation then holds identically in the spectral parameter $z$
on solutions of (\ref{t322}), with
 \beq\label{t326}
  \begin{array}{c}
  \displaystyle{
  {\mathbb J}(S)=\tr_2\Big(m_{12}(0){S}_2\Big)=M(S,0)\,.
 }
 \end{array}
 \eq
A proof is given in \cite{LOZ161}.

The inverse inertia tensor $J(S)$ in (\ref{t326}) is determined by
$m_{12}(0)$. Unlike that of the original Euler top (\ref{t321}),
it is not arbitrary for $N>2$. The construction therefore yields
a particular class of integrable tops.

The Hamiltonian formulation uses the Lie--Poisson brackets on the dual
Lie algebra ${\rm gl}_N^*$,
 \beq\label{t327}
  \begin{array}{c}
  \displaystyle{
  \{S_{ij},S_{kl}\}=\delta_{il}S_{kj}-\delta_{kj}S_{il} \quad\hbox{or}\quad \{S_1,S_2\}=[P_{12},S_1]
 }
 \end{array}
 \eq
and the Hamiltonian
 \beq\label{t328}
  \begin{array}{c}
  \displaystyle{
  H^{\hbox{\tiny{top}}}=\frac12\,\tr\Big(S\,{\mathbb J}(S)\Big)=\frac12\,\tr_{1,2}\Big(m_{12}(0)S_1S_2\Big)
 }
 \end{array}
 \eq
which together generate (\ref{t322}). In particular,
${\dot L}^{\hbox{\tiny{top}}}(z)=
\{H^{\hbox{\tiny{top}}},L^{\hbox{\tiny{top}}}(z)\}$.
The phase space is a coadjoint orbit of the Lie group ${\rm GL}_N$,
described by a matrix $S$ with fixed eigenvalues, which are Casimir
functions of (\ref{t327}).

\paragraph{Elliptic case.}

The inverse inertia tensor $J(S)$ in (\ref{t322}) is
 \beq\label{t340}
 \begin{array}{c}
  \displaystyle{
 {\mathbb J}(S)=\frac{1}{N}\Big(\frac{\vth'''(0)}{3\vth'(0)}1_N S_{0,0}-\sum\limits_{a\neq(0,0)} T_a S_a
 E_2(\frac{a_1+a_2\tau}{N})\Big)=
 }
 \\ \ \\
   \displaystyle{
 =\frac{1}{N}\Big(\frac{\vth'''(0)}{3\vth'(0)}\,S-\sum\limits_{a\neq(0,0)} T_a S_a
 \wp(\frac{a_1+a_2\tau}{N})\Big)\,.
 }
 \end{array}
 \eq
where $E_2(x)$ is defined in (\ref{a08}).
 The Poisson structure (\ref{t327}) in the basis $T_a$ takes the form
\beq\label{t341}
\begin{array}{c}
 \displaystyle{
    \{S_\al, S_\be\} = \frac{\ka_{\al, \be} - \ka_{\be, \al}}{N} S_{\al+ \be} =\frac{2\imath}{N}\,
    \sin \left( \frac{\pi}{N}\,(\al_2 \be_1 - \al_1 \be_2) \right) S_{\al +\be}\,.
}
\end{array}
\eq

For $N=2$, the model is described by the Baxter's eight-vertex $R$-matrix (\ref{t260}).

\subsection{Generalized interacting tops}
The spin matrix has size $nN\times nN$:
 \beq\label{t350}
 \begin{array}{c}
  \displaystyle{
\mS=\sum\limits_{i,j=1}^n\sum\limits_{a,b=1}^N E_{ij}\otimes e_{ab} \mS^{ij}_{ab}=\sum\limits_{i,j=1}^n
E_{ij}\otimes \mS^{ij}\in{\rm Mat}_{nN}\,,
 }
 \end{array}
 \eq
where $\mS^{ij}$ are $N\times N$ blocks:
 \beq\label{t351}
  \displaystyle{
 \mS=
\left(
 \begin{array}{cccc}
\mS^{11} &\mS^{12}   & \dots &\mS^{1n}
 \\
\mS^{21} &\mS^{22}  & \dots &\mS^{2n}
 \\
\vdots & \vdots & \dots  & \vdots
 \\
\mS^{n1} &\mS^{n2} &  \dots &\mS^{nn}
 \end{array}
 \right)
 \in{\rm Mat}_{nN}\,,
 }
 \eq
The Poisson structure is
 \beq\label{t352}
 \begin{array}{c}
  \displaystyle{
  \{\mS^{ij}_{ab},\mS^{kl}_{cd}\}=\mS^{kj}_{cb}\,\delta^{il}\,\delta_{ad}-\mS^{il}_{ad}\,\delta^{kj}\,\delta_{bc}
 }
 \end{array}
 \eq
for $i,j,k,l=1...n$ and
$a,b,c,d=1...N$. Equivalently,
 \beq\label{t353}
 \begin{array}{c}
  \displaystyle{
  \{\mS_1^{ij},\mS_2^{kl}\}=P_{12}\,\mS_1^{kj}\,\delta^{il}-\mS_{1}^{il}\,P_{12}\,\delta^{kj}\,,
 }
 \end{array}
 \eq
For the diagonal blocks we have
 \beq\label{t354}
 \begin{array}{c}
  \displaystyle{
  \{\mS_1^{ii},\mS_2^{kk}\}=[P_{12},\mS_1^{ii}]\,\delta^{ik}\,.
 }
 \end{array}
 \eq
These brackets form a direct sum of copies of (\ref{t327}).
The particle variables carry the canonical Poisson structure (\ref{t302}).

The Hamiltonian is \cite{GSZ}; see also \cite{ZL}:
 \beq\label{t355}
 \begin{array}{c}
  \displaystyle{
 \mH^{\hbox{\tiny{Gtops}}}=\sum\limits_{i=1}^n \frac{p_i^2}{2}+\frac{1}{2}\sum\limits_{i=1}^n
 \tr_{12}\Big(m_{12}(0)\mS^{ii}_1\mS^{ii}_2\Big)+\sum\limits_{i>j}^n
 \tr_{12}\Big(
 F^0_{21}(q_{ij}) P_{12}\,\mS_1^{ij}\mS_2^{ji}\Big)\,.
 }
 \end{array}
 \eq
It generates the following equations for the off-diagonal
$N\times N$ blocks of $\mS$:
  \beq\label{t356}
  \begin{array}{c}
  \displaystyle{
 {\dot
 \mS}^{ij}=\sum\limits_{k:k\neq i,j}^n
 \Big(\mS^{ik}\tr_{2}(\mS^{kj}_2F_{12}^0(q_{kj})P_{12})-\tr_2(\mS_2^{ik}F_{12}^0(q_{ik})P_{12})\mS^{kj}\Big)+
 }
 \\[6mm]
 \displaystyle{
+\mS^{ij}\tr_2\Big(m_{12}(0)\mS_2^{jj}\Big)-\tr_2\Big(m_{12}(0)\mS^{ii}_2\Big)\mS^{ij}+
 }
  \\ \ \\
 \displaystyle{
+\mS^{ii}\tr_2\Big(\mS_2^{ij}F^0_{12}(q_{ij})P_{12}\Big)  -\tr_2\Big(\mS_2^{ij}{ F}^0_{12}(q_{ij})P_{12}\Big)\mS^{jj}\,.
 }
  \end{array}
 \eq
For the diagonal $N\times N$ blocks of $\mS$, we obtain
  \beq\label{t357}
  \begin{array}{c}
  \displaystyle{
 {\dot \mS}^{ii}=[\mS^{ii},\tr_2\Big(m_{12}(0)\mS^{ii}_2\Big)]+\sum\limits_{k:k\neq
 i}^n\Big( \mS^{ik}\tr_2\Big(\mS_2^{ki}F_{12}^0(q_{ki})P_{12}\Big)-\tr_2\Big(\mS^{ik}_2F_{12}^0(q_{ik})P_{12}\Big)\mS^{ki}
 \Big)\,.
 }
  \end{array}
 \eq
For the momenta,
  \beq\label{t358}
  \begin{array}{c}
  \displaystyle{
 {\dot p}_i=-\sum\limits_{k:k\neq
 i}^n\tr_{23}\Big(\p_{q_i}F^0_{32}(q_{ik})P_{23}\,\mS^{ik}_2\mS^{ki}_3\Big)\,,
 }
  \end{array}
 \eq
and ${\dot q}_i=p_i$.

It follows from (\ref{t357}) that
  \beq\label{t359}
  \begin{array}{c}
  \displaystyle{
 \tr\Big(\dot\mS^{ii}\Big)=0\,,\ \forall i=1,...,n\,.
 }
  \end{array}
 \eq
This conservation law is associated with the subgroup
$\mD={\rm diag}(d_1,...,d_n)\otimes 1_N$ of the Cartan subgroup
of ${\rm GL}_{nN}$. Reduction by its coadjoint action imposes the constraints
  \beq\label{t360}
  \begin{array}{c}
  \displaystyle{
 \chi_i=\tr\Big(\mS^{ii}\Big)={\rm const}\,,\ \forall i=1,...,n
 }
  \end{array}
 \eq
as in the spin Calogero--Moser model (\ref{t308}).
Thus (\ref{t356})--(\ref{t358}) are written on the unreduced phase space
  \beq\label{t361}
  \begin{array}{c}
  \displaystyle{
 T^*\mC^n\times \mO^{\hbox{\tiny{$nN$}}}\,,
 }
  \end{array}
 \eq
and reduction by $\mD$, acting as $\mS\to \mD \mS \mD^{-1}$,
gives the analogue of (\ref{t310}):
  \beq\label{t362}
  \begin{array}{c}
  \displaystyle{
T^*\mC^n\times \mO^{\hbox{\tiny{$nN$}}}\ \stackrel{\hbox{reduction}}{\longrightarrow}
\ T^*\mC^n\times\Big(\mO^{\hbox{\tiny{$nN$}}}//{\mathcal D}\Big)\,.
 }
 \end{array}
 \eq

On the unreduced phase space (\ref{t361}), the Lax equation again
contains unwanted terms, which vanish under the constraints
  \beq\label{t363}
  \begin{array}{c}
  \displaystyle{
 \tr\Big(\mS^{ii}\Big)=\tr\Big(\mS^{jj}\Big)={\rm const}\,,\ \forall i,j=1,...,n\,.
 }
  \end{array}
 \eq
The Lax pair has the block structure (\ref{t104}):
  \beq\label{t364}
  \begin{array}{c}
  \displaystyle{
 \mL^{\hbox{\tiny{Gtops}}}(z)=\sum\limits_{i,j=1}^n E_{ij}\otimes \mL^{ij}(z)\,,\quad\quad
 \mL^{ij}(z)\in{\rm Mat}_{N}\quad\quad \mL(z)\in{\rm Mat}_{nN}\,,
 }
  \end{array}
 \eq
  \beq\label{t365}
  \begin{array}{c}
  \displaystyle{
 \mL^{ij}(z)=
 \delta_{ij}\Big(p_i1_N+\tr_2\Big(\mS^{\,ii}_2\,R_{12}^{z,(0)}P_{12}\Big)\Big)
  +(1-\delta_{ij})\,\tr_2\Big(\mS^{\,ij}_2\,R^z_{12}(q_{ij})P_{12}\Big)\,.
 }
  \end{array}
 \eq
The blocks $\mM^{ij}(z)\in{\rm Mat}_{N}$ are given similarly by
  \beq\label{t366}
  \begin{array}{c}
  \displaystyle{
 \mM^{ij}(z)=
 \delta_{ij}\,\tr_2\Big(\mS^{\,ii}_2\,R_{12}^{z,(1)}P_{12}\Big)
  +(1-\delta_{ij})\,\tr_2\Big(\mS^{\,ij}_2\,F^z_{12}(q_{ij})P_{12}\Big)\,.
 }
  \end{array}
 \eq
Using Fourier symmetry (\ref{t222}), these expressions can be rewritten as
  \beq\label{t367}
  \begin{array}{c}
  \displaystyle{
 \mL^{ij}(z)=
 \delta_{ij}\Big(p_i1_N+\tr_2\Big(\mS^{\,ii}_2\,r_{12}(z)\Big)\Big)
  +(1-\delta_{ij})\,\tr_2\Big(\mS^{\,ij}_2\,R^z_{12}(q_{ij})P_{12}\Big)
 }
  \end{array}
 \eq
and
  \beq\label{t368}
  \begin{array}{c}
  \displaystyle{
 \mM^{ij}(z)=
 \delta_{ij}\,\tr_2\Big(\mS^{\,ii}_2\,m_{12}(z)\Big)
  +(1-\delta_{ij})\,\tr_2\Big(\mS^{\,ij}_2\,F^z_{12}(q_{ij})P_{12}\Big)\,.
 }
  \end{array}
 \eq
For $n=1$, we recover the top (\ref{t322})--(\ref{t328}).
For $N=1$, we recover the spin Calogero--Moser model
(\ref{t311})--(\ref{t316}). Indeed, in the elliptic case, (\ref{a08}) gives
  \beq\label{t3681}
  \begin{array}{c}
  \displaystyle{
H^{\hbox{\tiny{Gtops}}}\Big|_{N=1}=H^{\hbox{\tiny{spin}}}+\frac{\vth'''(0)}{6\vth'(0)}\,\tr(S^2)\,,
 }
  \end{array}
 \eq
and
  \beq\label{t3682}
  \begin{array}{c}
  \displaystyle{
M^{\hbox{\tiny{Gtops}}}\Big|_{N=1}=M^{\hbox{\tiny{spin}}}(z)+m(z){\rm diag}(S_{11},...,S_{nn})\,,
 }
  \end{array}
 \eq
where $m(z)$ is defined in (\ref{a172}), and
${\rm diag}(S_{11},...,S_{nn})$ is proportional to $1_n$ on the
constraint surface (\ref{t363}).

The Lax equation on the unreduced phase space (\ref{t361}) reads
(cf.~(\ref{t313}))
%
  \beq\label{t369}
  \begin{array}{c}
  \displaystyle{
 \dot{\mL}^{\hbox{\tiny{Gtops}}}(z)=[\mL^{\hbox{\tiny{Gtops}}}(z),\mM^{\hbox{\tiny{Gtops}}}(z)]+\hbox{``unwanted term''}\,,
 }
 \\ \ \\
  \displaystyle{
 \hbox{``unwanted term''}=
 -\frac{1}{2} \sum\limits_{i\neq j}^n  \Big(\tr(\mS^{ii})-\tr(\mS^{jj})\Big) E_{ij} \otimes
 \tr_{2} \Big( \mS^{ij}_2  \partial_{q_{i}} F_{12}^{z} (q_{ij}) P_{12} \Big)\,.
 }
 \end{array}
 \eq

\subsection{Interacting tops}
Suppose that $\mS$ has rank one, so that
  \beq\label{t370}
  \begin{array}{c}
  \displaystyle{
\mS^{ij}_{ab}=\xi^i_a\eta^j_b\,,\quad i,j=1,...,n;\ a,b=1,...,N\,.
 }
 \end{array}
 \eq
 Consider the potential from (\ref{t355})
 \beq\label{t371}
  \begin{array}{c}
  \displaystyle{
 \tr_{12}\Big(
 F^0_{21}(q_{ij})
 P_{12}\,\mS_1^{ij}\mS_2^{ji}\Big)=\sum\limits_{a,b,c,d=1}^N (F^0_{12}(q_{ji})
 P_{12})_{ab,cd}\mS^{ij}_{ba}\mS^{ji}_{dc}\,.
 }
 \end{array}
 \eq
Right multiplication of
$T_{12}=\sum\limits_{i,j,k,l=1}^N
T_{ijkl}e_{ij}\otimes e_{kl}\in\Mat^{\otimes 2}$
by the permutation operator $P_{12}$ sends
$T_{ijkl}\rightarrow T_{ilkj}$. Hence
 \beq\label{t372}
  \begin{array}{c}
  \displaystyle{
 \tr_{12}\Big(
 F^0_{21}(q_{ij})
 P_{12}\,\mS_1^{ij}\mS_2^{ji}\Big)=\sum\limits_{a,b,c,d=1}^N (F^0_{12}(q_{ji})
 )_{ad,cb}\mS^{ij}_{ba}\mS^{ji}_{dc}\,,
 }
 \end{array}
 \eq
where we used (\ref{t2242}).
In the rank-one case,
 \beq\label{t373}
  \begin{array}{c}
  \displaystyle{
\mS^{ij}_{ba}\mS^{ji}_{dc}=\xi^i_b\eta^j_a\xi^j_d\eta^i_c=\mS^{ii}_{bc}\mS^{jj}_{da}\,.
 }
 \end{array}
 \eq
Thus
 \beq\label{t374}
  \begin{array}{c}
  \displaystyle{
 \tr_{23}\Big(
 F^0_{32}(q_{ij})
 P_{23}\,\mS_2^{ij}\mS_3^{ji}\Big)=\tr_{12}\Big(F^0_{12}(q_{ji})\mS^{jj}_1\mS^{ii}_2\Big)=
 \tr_{12}\Big(F^0_{12}(q_{ij})\mS^{ii}_1\mS^{jj}_2\Big)\,.
 }
 \end{array}
 \eq
The Hamiltonian (\ref{t355}) therefore becomes
 \beq\label{t375}
  \begin{array}{c}
  \displaystyle{
\mH^{\hbox{\tiny{tops}}}=\sum\limits_{i=1}^n
\frac{p_i^2}{2}+\frac{1}{2}\sum\limits_{i=1}^n
 \tr_{12}\Big(m_{12}(0)\mS^{ii}_1\mS^{ii}_2\Big)+\sum\limits_{i>j}^n
 \tr_{12}\Big(F^0_{12}(q_{ij})\mS^{ii}_1\mS^{jj}_2\Big)\,.
 }
 \end{array}
 \eq
In this form, it describes $n$ interacting ${\rm gl}_N$ tops attached
to $n$ particles. The equations of motion follow from the spin brackets
(\ref{t354}) and the canonical brackets (\ref{t302}):
  \beq\label{t376}
  \begin{array}{c}
  \displaystyle{
 {\dot \mS}^{ii}=[\mS^{ii},\tr_2(m_{12}(0)\mS^{ii}_2)]+\sum\limits_{k:k\neq
 i}^n\, [\mS^{ii},\tr_2(F^0_{12}(q_{ik})\mS^{kk}_2)]\,,
 }
  \end{array}
 \eq
  \beq\label{t377}
  \begin{array}{c}
  \displaystyle{
 {\dot p}_i=-\sum\limits_{k:k\neq
 i}^n\tr_{12}\Big(\p_{q_i}F^0_{12}(q_{ik})\mS^{ii}_1\mS^{kk}_2\Big)\,.
 }
  \end{array}
 \eq

The condition rank$(\mS)=1$ corresponds to a coadjoint orbit
of minimal dimension,
  \beq\label{t378}
  \begin{array}{c}
  \displaystyle{
 \dim\mO^{\hbox{\tiny{min,$nN$}}}=2(nN-1)\,.
 }
  \end{array}
 \eq
with all but one eigenvalue coinciding. Reduction (\ref{t362}) decreases
the dimension by $2n-2$:
  \beq\label{t379}
  \begin{array}{c}
  \displaystyle{
 \dim \Big(\mO^{\hbox{\tiny{min,$nN$}}}//{\mathcal D}\Big)=2(nN-1)-(2n-2)=2n(N-1)\,.
 }
  \end{array}
 \eq
This calculation reflects the fact that
  \beq\label{t380}
  \begin{array}{c}
  \displaystyle{
 \Big(\mO^{\hbox{\tiny{min,$nN$}}}//{\mathcal D}\Big)\cong
  \underbrace{ \mO^{\hbox{\tiny{min,$N$}}}\times ...\times \mO^{\hbox{\tiny{min,$N$}}} }_{\hbox{\tiny{$n$ times}}}\,,\qquad
  \dim\mO^{\hbox{\tiny{min,$N$}}}=2(N-1)\,.
 }
  \end{array}
 \eq
For $N=1$, this model reduces to the spinless Calogero--Moser model
(\ref{t301}), as explained in (\ref{t318})--(\ref{t319}).

\section{Generalized model of interacting tops for the \texorpdfstring{${\rm BC}_n$}{BCn} root system}\label{sec4}
\setcounter{equation}{0}
We construct a $2Nn\times2Nn$ Lax pair $\mL(z),\mM(z)$ satisfying
\beq\label{t401}
\begin{array}{c}
  \displaystyle{
    \dot{\mL}^{\hbox{\tiny{Gtops}}}(z)=[\mL^{\hbox{\tiny{Gtops}}}(z),\mM^{\hbox{\tiny{Gtops}}}(z)]\,.
  }
\end{array}
\eq
The matrix of classical spin variables has the form
\beq\label{t4011}
\begin{array}{c}
  \displaystyle{
    \mS=\mats{\mA}{\mB}{\mB}{\mA}\,,\quad \mA,\mB\in{\rm Mat}_{nN}\,.
  }
\end{array}
\eq
Equivalently,
\beq\label{t414}
\begin{array}{c}
  \displaystyle{
    \mS=\sum^{n}_{i,j=1}E_{ij}\otimes \mS^{ij}+\sum^{n}_{i,j=1}E_{\bar{i}j}\otimes
    \mS^{\bar{i}j}+\sum^{n}_{i,j=1}E_{i\bar{j}}\otimes\mS^{i\bar{j}}
    +\sum^{n}_{i,j=1}E_{\bar{i}\bar{j}}\otimes \mS^{\bar{i}\bar{j}}\in \text{Mat}_{2nN}\,,
  }
\end{array}
\eq
where
\beq\label{tt403}
\begin{array}{c}
  \displaystyle{
    \bar{i}=i+n,~ \qquad\bar{\bar{i}}=i\,,
     \qquad \mS^{ij}\,,\mS^{\bar{i}j}\,,\mS^{i\bar{j}}\,,\mS^{\bar{i}\bar{j}}\in \text{Mat}_N
  }
\end{array}
\eq
and
\beq\label{t4012}
\begin{array}{c}
  \displaystyle{
    \mS^{ij}=\mS^{\bar{i}\bar{j}}:=\mA^{ij}\in{\rm Mat}_{N}\,,\qquad
    \mS^{i\bar{j}}=\mS^{\bar{i}j}:
    =\mB^{ij}\in{\rm Mat}_{N}\,,~~ \forall i,j=1,...,n\,.
  }
\end{array}
\eq

 \subsection{Lax pair and equations of motion}
The Lax matrix $\mL^{\hbox{\tiny{Gtops}}}(z)\in \text{Mat}_{2nN}$
and its companion $\mM^{\hbox{\tiny{Gtops}}}(z)$ have the block
structure (\ref{t122})--(\ref{t123}). We write
\beq\label{tt402}
\begin{array}{c}
  \displaystyle{
    \mL^{\hbox{\tiny{Gtops}}}(z) =\sum^{n}_{i,j=1}E_{ij}\otimes \mL^{ij}(z)+\sum^{n}_{i,j=1}E_{\bar{i}j}\otimes
    \mL^{\bar{i}j}(z)+\sum^{n}_{i,j=1}E_{i\bar{j}}\otimes\mL^{i\bar{j}}(z)
    +\sum^{n}_{i,j=1}E_{\bar{i}\bar{j}}\otimes \mL^{\bar{i}\bar{j}}(z)\,.
  }
\end{array}
\eq
Using the abbreviations
\beq\label{tt4021}
\begin{array}{c}
  \displaystyle{
    q_{ij}=q_i-q_j\,,\quad q^+_{ij}=q_i+q_j
  }
\end{array}
\eq
the $N\times N$ blocks in (\ref{tt402}) are
\beq\label{tt404}
\begin{array}{c}
  \displaystyle{
    \mL^{ij}(z)=\delta_{ij}\left(p_{i}1_N+\tr_{2}\Big(\mA^{ii}_{2}R^{z,(0)}_{12}P_{12}\Big)\right)
    +(1-\de_{ij})\tr_{2}\Big(\mA^{ij}_{2}R^{z}_{12}(q_{ij})P_{12}\Big)=
  }
  \\ \ \\
    \displaystyle{
  =\delta_{ij}\left(p_{i}1_N+\tr_{2}\Big(\mA^{ii}_{2}r_{12}(z)\Big)\right)
    +(1-\de_{ij})\tr_{2}\Big(\mA^{ij}_{2}R^{z}_{12}(q_{ij})P_{12}\Big)\,,
    }
\end{array}
\eq
\beq\label{tt405}
\begin{array}{c}
  \displaystyle{
    \mL^{i\bar{j}}(z)=\de_{ij}\left( \ti{K}^{z}_{1}(q_{i})
    +\tr_{2}\Big(\mB^{ii}_{2}\ti{R}^{z}_{12}(2q_{i})P_{12}\Big)\right) +(1-\de_{ij})
    \tr_{2}\Big(\mB^{ij}_2\ti{R}^{z}_{12}(q_{ij}^{+})P_{12}\Big)\,,
  }
\end{array}
\eq
\beq\label{tt407}
\begin{array}{c}
  \displaystyle{
    \mL^{\bar{i}j}(z)=-\de_{ij}\left( \ti{K}^{-z}_{1}(q_{i})
    +\tr_{2}\Big(\mB^{ii}_{2}\ti{R}^{-z}_{12}(2q_{i})P_{12}\Big)\right)-(1-\de_{ij})
    \tr_{2}\Big(\mB^{ij}_{2}\ti{R}^{-z}_{12}(q_{ij}^{+})P_{12}\Big)\,,
  }
\end{array}
\eq
\beq\label{tt406}
\begin{array}{c}
  \displaystyle{
    \mL^{\bar{i}\bar{j}}(z)=-\de_{ij}\left(p_{i}1_{N}
    +\tr_{2}\Big(\mA^{ii}_{2}R^{-z,(0)}_{12}P_{12}\Big)\right)-(1-\de_{ij})
    \tr_{2}\Big(\mA^{ij}_{2}R^{-z}_{12}(q_{ij})P_{12}\Big)=
  }
  \\ \ \\
  \displaystyle{
    =-\de_{ij}\left(p_{i}1_{N}
    +\tr_{2}\Big(\mA^{ii}_{2}r_{12}(-z)\Big)\right)-(1-\de_{ij})
    \tr_{2}\Big(\mA^{ij}_{2}R^{-z}_{12}(q_{ij})P_{12}\Big)\,,
  }
\end{array}
\eq
Similarly, the blocks of $\mM^{\hbox{\tiny{Gtops}}}(z)\in \text{Mat}_{2nN}$ are
\beq\label{tt408}
\begin{array}{c}
  \displaystyle{
    \mM^{ij}(z)=\de_{ij}\tr_{2}\Big(\mA^{ii}_{2}R^{z,(1)}_{12}P_{12}\Big)+(1-\de_{ij})
    \tr_{2}\Big(\mA^{ij}_{2}F^{z}_{12}(q_{ij})P_{12}\Big)=
  }
  \\ \ \\
  \displaystyle{
    =\de_{ij}\tr_{2}\Big(\mA^{ii}_{2}m_{12}(z)\Big)+(1-\de_{ij})
    \tr_{2}\Big(\mA^{ij}_{2}F^{z}_{12}(q_{ij})P_{12}\Big)\,,
  }
\end{array}
\eq
\beq\label{tt409}
\begin{array}{c}
  \displaystyle{
    \mM^{i\bar{j}}(z)=\de_{ij}\left(\frac{1}{2}\,\ti{Y}^{z}_{1}(q_{i})
    +\tr_{2}\Big(\mB^{ii}_2\ti{F}^{z}_{12}(2q_{i})P_{12}\Big)\right) +(1-\de_{ij})
    \tr_{2}\Big(\mB^{ij}_{2}\ti{F}^{z}_{12}(q_{ij}^{+})P_{12}\Big)\,,
  }
\end{array}
\eq
\beq\label{tt411}
\begin{array}{c}
  \displaystyle{
    \mM^{\bar{i}j}(z)=\de_{ij}\left(\frac{1}{2}\,\ti{Y}^{-z}_{1}(q_{i})
    +\tr_{2}\Big(\mB^{ii}_{2}\ti{F}^{-z}_{12}(2q_{i})P_{12}\Big)\right)+(1-\de_{ij})
    \tr_{2}\Big(\mB^{ij}_2\ti{F}^{-z}_{12}(q_{ij}^{+})P_{12}\Big)\,,
  }
\end{array}
\eq
\beq\label{tt410}
\begin{array}{c}
  \displaystyle{
    \mM^{\bar{i}\bar{j}}(z)=\de_{ij}\tr_{2}\Big(\mA^{ii}_2R_{12}^{-z,(1)}P_{12}\Big)
    +(1-\de_{ij})\tr_{2}\Big(\mA^{ij}_{2}F^{-z}_{12}(q_{ij})P_{12}\Big)=
  }
  \\ \ \\
    \displaystyle{
    =\de_{ij}\tr_{2}\Big(\mA^{ii}_2 m_{12}(-z)\Big)
    +(1-\de_{ij})\tr_{2}\Big(\mA^{ij}_{2}F^{-z}_{12}(q_{ij})P_{12}\Big)\,,
  }
\end{array}
\eq
Here we use the notation of (\ref{t202})--(\ref{t215}) and
(\ref{t223})--(\ref{t2242}). In (\ref{tt404}), (\ref{tt406}),
(\ref{tt408}), and (\ref{tt410}), the two expressions are related by
Fourier symmetry (\ref{t222}), which we assume throughout.\footnote{The
Lax equation can be proved independently in either form. Fourier symmetry
is therefore convenient but not essential for the proof.}

\begin{theor}\label{th1}
Let the meromorphic $R,\tilde R,\tilde K$-matrices satisfy the ${\rm BC}_n$ AYBEs
(\ref{t206})--(\ref{t212}) and all the properties stated in Section 2.2.
In addition, assume the symmetry and skew-symmetry relations
(\ref{t204})--(\ref{t205}), the unitarity relations
(\ref{t218})--(\ref{t220}), and the normalized local expansions
(\ref{t202})--(\ref{t215}) and (\ref{t217}) from Section 2.1.
In particular, the residue of $R^z_{12}(q)$ at $q=0$ is the permutation
operator $P_{12}$ in (\ref{t216}), the coefficient $m_{12}(q)$ is regular
at $q=0$, and the residue
$\ti K^{(-1)}=\varsigma_N1_N$ is scalar and independent of $q$.
The coupling constants are fixed, and the particle coordinates are
restricted to the domain where the Lax matrices are defined.

Then the pair (\ref{tt404})--(\ref{tt410}) satisfies the Lax equation
(\ref{t401}) identically in $z$ on the constraint surface
\beq\label{t416}
\begin{array}{c}
  \displaystyle{
    \tr\Big(\mS^{ii}\Big)=\tr\Big(\mS^{jj}\Big)={\rm const}, ~~\forall i,j\,,
    }
\end{array}
\eq
and yields the following equations of motion for the off-diagonal
$N\times N$ blocks of
$\mA$:
\beq\label{tt417}
\begin{array}{c}
  \displaystyle{
    \dot{\mA}^{ij} =
     \sum_{k:k\neq i,j}^{n}\Big( \mA^{ik}\tr_2\Big(
    \mA_2^{kj}F_{12}^{0}(q_{kj})P_{12}\Big) -
    \tr_2\Big(F_{12}^{0}(q_{ik})P_{12}\mA_2^{ik}\Big)\mA^{kj} \Big) +
  }
  \\ \ \\
    \displaystyle{
    + \sum_{k:k\neq i,j}^{n}\Big(\mB^{ik}\tr_{2}\Big(\mB_{2}^{kj}
    \ti{F}_{12}^{0}(q_{kj}^{+})P_{12}\Big) -
    \tr_2\Big(\ti{F}_{12}^{0}(q_{ik}^{+})P_{12}\mB_{2}^{ik}\Big)\mB^{kj}\Big) +
  }
\end{array}
\eq
$$
\begin{array}{c}
  \displaystyle{
    + \mA^{ii}\tr_2\Big(\mA_{2}^{ij}F_{12}^{0}(q_{ij})P_{12}\Big)
    -\tr_2\Big(F_{12}^{0}(q_{ij})P_{12}\mA_{2}^{ij}\Big)\mA^{jj}+
  }
  \\ \ \\
  \displaystyle{
    + \mA^{ij}\tr_2\Big(\mA_2^{jj}m_{12}(0)\Big)
    -\tr_2\Big(m_{12}(0)\mA_2^{ii}\Big)\mA^{ij}
     +\frac{1}{2}\,\mB^{ij}\ti{Y}^{0}(q_{j})-\frac{1}{2}\,\ti{Y}^{0}(q_{i})\mB^{ij}+
  }
  \\ \ \\
  \displaystyle{
     + \mB^{ij}\tr_{2}\Big(\mB_{2}^{jj}\ti{F}^{0}_{12}(2q_{j})P_{12}\Big)
      -\tr_{2}\Big(\ti{F}^{0}_{12}(2q_{i})P_{12}\mB_{2}^{ii}\Big)\mB^{ij}+
  }
  \\ \ \\
  \displaystyle{
+\mB^{ii}\tr_{2}\Big(\mB_{2}^{ij}\ti{F}^{0}_{12}(q_{ij}^{+})P_{12}\Big)
    -\tr_{2}\Big(\ti{F}^{0}_{12}(q_{ij}^{+})P_{12}\mB_{2}^{ij}\Big)\mB^{jj}\,,
  }
\end{array}
$$
for the off-diagonal $N\times N$ blocks of
$\mB$:
\beq\label{tt418}
\begin{array}{c}
  \displaystyle{
    \dot{\mB}^{ij} = \sum_{k:k\neq i,j}
    \Big(\mA^{ik}\tr_2\Big(\mB_2^{kj}
    \ti{F}_{12}^{0}(q_{kj}^{+})P_{12}\Big) -
    \tr_2\Big(\ti{F}_{12}^{0}(q_{ik}^{+})P_{12}\mB_2^{ik}\Big)\mA^{kj} \Big)
  }
  \\ \ \\
  \displaystyle{
    + \sum_{k:k\neq i,j} \Big( \mB^{ik}\tr_{2}\Big(\mA_{2}^{kj}F_{12}^{0}(q_{kj})P_{12}\Big)
    -\tr_2\Big(F_{12}^{0}(q_{ik})P_{12}\mA_2^{ik}\Big)\mB^{kj}\Big) +
  }
\end{array}
\eq
$$
\begin{array}{c}
  \displaystyle{
    + \mA^{ii}\tr_2\Big(\mB_2^{ij}\ti{F}_{12}^{0}(q_{ij}^{+})P_{12}\Big)
    -\tr_2\Big(\ti{F}_{12}^{0}(q_{ij}^{+})P_{12}\mB_2^{ij}\Big)\mA^{jj}+
  }
  \\ \ \\
  \displaystyle{
    + \mB^{ij}\tr_2\Big(\mA_2^{jj}m_{12}(0)\Big)
    -\tr_2\Big(m_{12}(0)\mA_2^{ii}\Big)\mB^{ij}
    +\frac{1}{2}\,\mA^{ij}\ti{Y}^{0}(q_{j}) -\frac{1}{2}\,\ti{Y}^{0}(q_{i})\mA^{ij} +
  }
  \\ \ \\
  \displaystyle{
      + \mB^{ii}\tr_{2}\Big(\mA_{2}^{ij}F_{12}^{0}(q_{ij})P_{12}\Big)
    -\tr_2\Big(F_{12}^{0}(q_{ij})P_{12}\mA_2^{ij}\Big)\mB^{jj} +
  }
  \\ \ \\
  \displaystyle{
    + \mA^{ij}\tr_2\Big(\mB_2^{jj}\ti{F}_{12}^{0}(2q_{j})P_{12}\Big)
    -\tr_2\Big(\ti{F}_{12}^{0}(2q_{i})P_{12}\mB_2^{ii}\Big)\mA^{ij}\,,
  }
\end{array}
$$
for the diagonal $N\times N$ blocks of $\mA$:
\beq\label{tt419}
\begin{array}{c}
  \displaystyle{
    \dot{\mA}^{ii} = [\mA^{ii},\tr_2\Big(m_{12}(0)\mA_2^{ii}\Big)] +\frac{1}{2}\,[\mB^{ii},
    \ti{Y}^{0}(q_{i})] +
  }
  \\ \ \\
  \displaystyle{
    + \sum_{k:k\neq i}^n\left(\mA^{ik}\tr_2\Big(\mA_2^{ki}F_{12}^{0}(q_{ki})P_{12}\Big)
    -\tr_2\Big(F_{12}^{0}(q_{ik})P_{12}\mA_2^{ik}\Big)\mA^{ki}\right) +
  }
  \end{array}
\eq
$$
\begin{array}{c}
  \displaystyle{
    + \sum_{k:k\neq i}^n\left(\mB^{ik}\tr_2\Big(\mB_2^{ki}\ti{F}_{12}^{0}(q_{ki}^{+})P_{12}\Big)
    -\tr_2\Big(\ti{F}_{12}^{0}(q_{ik}^{+})P_{12}\mB_2^{ik}\Big)\mB^{ki}\right) +
  }
  \\ \ \\
  \displaystyle{
    + \mB^{ii}\tr_{2}\Big(\mB_{2}^{ii}\ti{F}^{0}_{12}(2q_{i})P_{12}\Big)
    -\tr_{2}\Big(\ti{F}^{0}_{12}(2q_{i})P_{12}\mB_{2}^{ii}\Big)\mB^{ii}\,,
  }
\end{array}
$$
for the diagonal $N\times N$ blocks of $\mB$:
\beq\label{tt420}
\begin{array}{c}
  \displaystyle{
    \dot{\mB}^{ii} =
     [\mB^{ii},\tr_2\Big(m_{12}(0)\mA_2^{ii}\Big)] +\frac{1}{2}\,[\mA^{ii},
    \ti{Y}^{0}(q_{i})] +
  }
  \\ \ \\
    \displaystyle{
     +\sum_{k:k\neq i}^n \Big(  \mA^{ik}\tr_2\Big(\mB_2^{ki}
    \ti{F}_{12}^{0}(q_{ik}^{+})P_{12}\Big)
    -\tr_2\Big(\ti{F}_{12}^{0}(q_{ik}^{+})P_{12}\mB_2^{ik}\Big)\mA^{ki} \Big)+
  }
  \end{array}
\eq
$$
\begin{array}{c}
  \displaystyle{
    + \sum_{k:k\neq i}^n \Big(  \mB^{ik}\tr_2\Big(\mA_2^{ki}F_{12}^{0}(q_{ki})P_{12}\Big)
    -\tr_2\Big(F_{12}^{0}(q_{ik})P_{12}\mA_2^{ik}\Big)\mB^{ki} \Big)+
  }
  \\ \ \\
  \displaystyle{
    + \mA^{ii}\tr_2\Big(\mB_2^{ii}\ti{F}_{12}^{0}(2q_i)P_{12}\Big)
    -\tr_2\Big(\ti{F}_{12}^{0}(2q_i)P_{12}\mB_2^{ii}\Big)\mA^{ii}\,,
  }
\end{array}
$$
and for the momenta:
\beq\label{tt421}
\begin{array}{c}
  \displaystyle{
    \dot{p}_{i} = \frac{1}{2}\, \p_{q_i}u_N(q_i|\nu) -\frac{1}{2}\,
    \tr_{2}\Big(\partial\ti{Y}^{0}_{2}(q_{i})\mB^{ii}_{2}\Big)
-\tr_{23}\Big(\partial_{2q_{i}}\ti{F}^{0}_{32}(2q_{i})P_{23}
    \mB_{2}^{ii}\mB_{3}^{ii}\Big)-
  }
  \\ \ \\
  \displaystyle{
    - \sum_{k:k \neq i} \Big( \tr_{23}\Big(\partial_{q_{ik}^{+}} \ti{F}_{32}^{0} (q_{ik}^{+})P_{23}
    \mB_2^{ik} \mB_3^{ki}\Big) + \tr_{23} \Big(\partial F_{32}^{0}(q_{ik}) P_{23}
    \mA_2^{ik}\mA_3^{ki}\Big)  \Big)\,,
  }
\end{array}
\eq
where $u_N(q_i|\nu)$ is defined by the unitarity relation
(\ref{t220})\footnote{In the elliptic case (\ref{t254})--(\ref{t256}),
$\p_{q_i}k_N(z,q_i|\nu)=\p_{q_i}u_N(q_i|\nu)=
\sum_{b=0}^{3}\bar\nu_{b}^{2}\wp'(q_{i}+\Omega_{b})$,
where $\bar\nu_b$ are defined in (\ref{a48}).} and
\beq\label{tt4215}
\begin{array}{c}
  \displaystyle{
  \partial\ti{Y}^{0}(x)=\p_x\ti{Y}^{0}(x) \,,\quad \partial F_{32}^{0}(x)=\partial_x F_{32}^{0}(x)\,,
  \quad \partial {\ti F}_{32}^{0}(x)=\partial_x {\ti F}_{32}^{0}(x)\,.
  }
\end{array}
\eq
We assume $p_i={\dot q}_i$, $i=1,...,n$.
\end{theor}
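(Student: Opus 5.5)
The plan is a direct block-by-block verification of the Lax equation, organized so that every identity needed is one of the degenerate ${\rm BC}_n$ AYBE consequences listed in Section 2.2. I would write $\mL=\mL^{\hbox{\tiny{Gtops}}}$ and $\mM=\mM^{\hbox{\tiny{Gtops}}}$ in the $2\times 2$ block form (\ref{t123}), with $nN\times nN$ blocks $\mL^{(\al\be)}$, $\al,\be\in\{+,-\}$. First I would note the $z\to -z$ symmetry between the blocks. The barred blocks are obtained from the unbarred ones by $z\to-z$ together with an overall sign in $\mL$ (and no sign in $\mM$). Hence the $(\bar i\bar j)$ and $(\bar i j)$ components of $[\mL,\mM]$ follow from the $(ij)$ and $(i\bar j)$ components by this substitution, and only two families of blocks need to be checked: $(ij)$ and $(i\bar j)$, each with $i\neq j$ and $i=j$. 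For each of them I expand
\[
[\mL,\mM]^{ij}=\sum_k\Big(\mL^{ik}\mM^{kj}-\mM^{ik}\mL^{kj}+\mL^{i\bar k}\mM^{\bar kj}-\mM^{i\bar k}\mL^{\bar kj}\Big),
\]
and similarly for $(i\bar j)$. Each product of traces is rewritten as a single $\tr_{23}$ of a product of $R$-type matrices in spaces $1,2,3$ acting on $\mS_2\mS_3$, exactly as in the $A$-type proof of (\ref{t369}).

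The terms are then grouped by the index configuration. For $k\neq i,j$, the $RF$, $R\ti F$, $\ti R F$ and $\ti R\ti F$ products are reduced with (\ref{b23})--(\ref{b24}). Their right-hand sides have the form $F^0_{23}X_{13}-X_{13}F^0_{12}$, where $X_{13}$ carries the $z$-dependence and the $F^0$ factors do not. After taking $\tr_{23}$, these reproduce $\mL$-type blocks multiplied by the $z$-independent spin terms appearing in (\ref{tt417})--(\ref{tt418}). Terms with $k=i$ or $k=j$ (for $i\neq j$) involve $R^{z,(0)}$, $R^{z,(1)}$ and $\ti K$. I would handle them with (\ref{b07}), (\ref{b08}), (\ref{b27}), (\ref{b28}) and with the $K$-identities (\ref{t229}), (\ref{t230}). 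The $\ti K\cdot\ti F$ and $\ti Y\cdot R$ cross terms assemble into the $\frac12\mB^{ij}\ti Y^0(q_j)-\frac12\ti Y^0(q_i)\mB^{ij}$ contributions. The second-derivative terms $\frac12\p^2_x R\,P$ in these degenerations do not cancel pairwise; they combine into $\frac12(\tr\mS^{ii}-\tr\mS^{jj})\p F$ pieces analogous to the unwanted term in (\ref{t369}), and these vanish on (\ref{t416}). Comparing with $\dot\mL^{ij}=\tr_2(\dot\mA^{ij}_2R^z_{12}(q_{ij})P_{12})+\tr_2(\mA^{ij}_2F^z_{12}P_{12})(p_i-p_j)$, I would read off (\ref{tt417}), (\ref{tt418}) as the coefficients of $R^z$ and $\ti R^z$. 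The velocity terms are absorbed by the $\mL^{ii}$--$\mM^{ij}$ products through $p_i1_N$.

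For the diagonal blocks, the sum over $k\neq i$ uses (\ref{b38}) and (\ref{b39}), together with the degenerations (\ref{c07}), (\ref{c08}). These produce the commutator terms in (\ref{tt419}), (\ref{tt420}) and scalar (in $z$) multiples of the identity. The self-interaction $i=k$ terms use (\ref{c02}), (\ref{c03}) for the $r\,m$ products (the top part), and (\ref{b20}), (\ref{b40}) for the $\ti K$--$\mB^{ii}$ couplings.

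The momentum equation comes from the part of the $(ii)$ block proportional to $1_N$. The $\ti K^z\ti K^{-z}$ products give $k_N(z,q_i|\nu)$ by (\ref{t220}); differentiating in $q_i$ and using that $w_N$ is $q$-independent yields $\frac12\p u_N$. The $\ti K$--$\ti R$ products are handled by (\ref{c10})/(\ref{c101}) and give the $\p\ti Y^0\mB^{ii}$ term. The $\ti R\ti R$ and $RR$ products, via (\ref{c07}), (\ref{c08}), give the $\p F^0$ and $\p\ti F^0$ terms of (\ref{tt421}). In (\ref{c101}), the $z$-dependent pieces $2\varsigma_NF^0_{12}(z)$ and $[\ti K^{(0)}_2,r_{12}(z)]$ must cancel against the $\mB^{ii}$ commutator terms; this uses that $\varsigma_N$ is scalar.

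I expect the main obstacle to be the diagonal $(i\bar i)$ and $(ii)$ blocks. There $\ti K$, $\mB^{ii}$ and $2q_i$ degenerations mix, and one must check that all residual $z$-dependent terms (containing $r_{12}(z)$, $F^0(z)$ or $m(z)$) cancel, leaving only $\dot p_i$ and $\dot\mB^{ii}$. I would first carry out this computation for $N=1$ with the elliptic scalar identities of Appendix A to fix the signs and factors of $\frac12$. I would then lift it to the matrix case, term by term, with (\ref{b20}), (\ref{b40}), (\ref{c101}).
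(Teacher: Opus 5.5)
Your proposal is correct and is essentially the paper's own proof in Appendix~B. It is the same block-by-block verification: (\ref{b23})--(\ref{b24}) for $k\neq i,j$, the degenerations (\ref{b07})--(\ref{b28}) and (\ref{b38})--(\ref{b40}) together with (\ref{t229})--(\ref{t230}) for the coincident-index and $\ti K$ terms, and the leftover second-derivative terms giving exactly the $(\chi_i-\chi_j)$ unwanted terms (\ref{t4212}). Your block-swap symmetry $\Sigma\mL(z)\Sigma=-\mL(-z)$, $\Sigma\mM(z)\Sigma=\mM(-z)$ is a clean justification of the paper's remark that the $\bar i\bar j$ and $\bar i j$ blocks are ``derived in the same way.''

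One slip in the $(ii)$ block. There the $\ti K$-dependent cross terms are of $\ti F\ti K$ and $\ti R\ti Y$ type, not $\ti R\ti K$. They are governed by (\ref{b40}), which is the $q$-derivative of (\ref{c101}), just as (\ref{b38}) and (\ref{b39}) follow from (\ref{c07}) and (\ref{c08}) by differentiation. Consequently the $2\varsigma_NF^0_{12}(z)$ piece never appears, and $[\ti Y^0_2(q_i),r_{12}(z)]P_{12}$ is not cancelled: it supplies the $\frac12[\mB^{ii},\ti Y^0(q_i)]$ term of (\ref{tt419}), while $-\p\ti Y^0_2(q_i)P_{12}$ supplies the $\p\ti Y^0$ term of (\ref{tt421}).
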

More precisely, the following relation holds:
\beq\label{t4211}
\begin{array}{c}
  \displaystyle{
    \dot{\mL}^{\hbox{\tiny{Gtops}}}(z)=[\mL^{\hbox{\tiny{Gtops}}}(z),\mM^{\hbox{\tiny{Gtops}}}(z)]
    +\hbox{``unwanted terms''}\,,
  }
\end{array}
\eq
where
\beq\label{t4212}
\begin{array}{c}
  \displaystyle{
\hbox{``unwanted terms''}=
 }
 \\ \ \\
  \displaystyle{
=-\frac12\,\sum\limits_{i\neq j}^n (\chi_i-\chi_j) E_{ij}\otimes\tr_2\Big( \mA_2^{ij}
\p^2_x R_{12}^z(x)\Big|_{x=q_i-q_j}P_{12} \Big)-
 }
 \\ \ \\
  \displaystyle{
-\frac12\,\sum\limits_{i\neq j}^n (\chi_i-\chi_j) E_{i{\bar j}}\otimes\tr_2\Big( \mB_2^{ij}
\p^2_x {\ti R}_{12}^z(x)\Big|_{x=q_i+q_j}P_{12} \Big)+
  }
   \\ \ \\
  \displaystyle{
+\frac12\,\sum\limits_{i\neq j}^n (\chi_i-\chi_j) E_{{\bar i}{\bar j}}\otimes\tr_2\Big( \mA_2^{ij}
\p^2_x R_{12}^{-z}(x)\Big|_{x=q_i-q_j}P_{12} \Big)+
 }
 \\ \ \\
   \displaystyle{
+\frac12\,\sum\limits_{i\neq j}^n (\chi_i-\chi_j) E_{{\bar i} j}\otimes\tr_2\Big( \mB_2^{ij}
\p^2_x {\ti R}_{12}^{-z}(x)\Big|_{x=q_i+q_j}P_{12} \Big)\,.
  }
\end{array}
\eq
with the notation
\beq\label{t4213}
\begin{array}{c}
  \displaystyle{
\chi_i=\tr\Big(\mS^{ii}\Big)=\tr\Big(\mS^{{\bar i}{\bar i}}\Big)=\tr\Big(\mA^{ii}\Big)\,.
  }
\end{array}
\eq
The unwanted term vanishes under (\ref{t416}), since all $\chi_i$
are then equal to the same constant.
The theorem is proved in Appendix B.

The existence of a Lax pair with a spectral parameter is not sufficient
for Liouville integrability. One must also establish the Poisson
commutativity of the Hamiltonians obtained from
$\tr(\mL^{\hbox{\tiny{Gtops}}}(z)^k)$ and determine the number
of functionally independent Hamiltonians. These questions will be
addressed elsewhere.

\paragraph{Compact form of the equations of motion.}
The spin equations (\ref{tt417})--(\ref{tt420}) for $\mA,\mB$
can be written in a compact form.
For $T\in\Mat$, define the linear operators
\beq\label{t4214}
\begin{array}{c}
  \displaystyle{
{\mathbb J}(T)=\tr_2\Big(m_{12}(0)T_2\Big)\,,
 }
 \\ \ \\
   \displaystyle{
{\mathbb{FP}}_x(T)=\tr_2\Big(F_{12}^0(x)P_{12}T_2\Big)\,,
\qquad
{\mathbb {\ti FP}}_x(T)=\tr_2\Big({\ti F}_{12}^0(x)P_{12}T_2\Big)\,.
  }
\end{array}
\eq
The operator ${\mathbb J}$ was introduced in (\ref{t112}).

Define ${\mathcal X},{\mathcal Y}\in{\rm Mat}_{nN}$ by their
$N\times N$ blocks:
\beq\label{t4215}
\begin{array}{c}
  \displaystyle{
{\mathcal X}^{ij}=\delta_{ij}\,{\mathbb J}(\mA^{ii})+(1-\delta_{ij}){\mathbb{FP}}_{q_i-q_j}(\mA^{ij})\in{\rm Mat}_{N}\,,
\quad {\mathcal X}=\sum\limits_{i,j=1}^n E_{ij}\otimes {\mathcal X}^{ij}\,,
 }
 \\ \ \\
   \displaystyle{
{\mathcal Y}^{ij}=\frac12\,\delta_{ij}\,{\ti Y}^0(q_i)+{\mathbb {\ti FP}}_{q_i+q_j}(\mB^{ij})\in{\rm Mat}_{N}\,,
\quad {\mathcal Y}=\sum\limits_{i,j=1}^n E_{ij}\otimes {\mathcal Y}^{ij}\,,
  }
\end{array}
\eq
where $E_{ij}$ are matrix units in ${\rm Mat}_n$.

\begin{predl}
The equations of motion (\ref{tt417})--(\ref{tt420}) take the form
\beq\label{t4216}
\begin{array}{c}
  \displaystyle{
\dot\mA=[\mA,{\mathcal X}]+[\mB,{\mathcal Y}]\,,
 }
 \\ \ \\
   \displaystyle{
\dot\mB=[\mA,{\mathcal Y}]+[\mB,{\mathcal X}]\,.
  }
\end{array}
\eq
\end{predl}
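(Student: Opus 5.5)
We verify the claim blockwise: expand the commutators in (\ref{t4216}) in $N\times N$ blocks and match them term by term with the right-hand sides of (\ref{tt417})--(\ref{tt420}). Since $\mA=\sum E_{ij}\otimes\mA^{ij}$ and similarly for $\mB,{\mathcal X},{\mathcal Y}$, the $(i,j)$ block of $[\mA,{\mathcal X}]$ is
\[
\sum_{k=1}^n\Big(\mA^{ik}{\mathcal X}^{kj}-{\mathcal X}^{ik}\mA^{kj}\Big),
\]
and analogously for $[\mB,{\mathcal Y}]$, $[\mA,{\mathcal Y}]$ and $[\mB,{\mathcal X}]$. We split each sum over $k$ into the cases $k\neq i,j$, $k=i$ and $k=j$ (for off-diagonal $i\ne j$), or into $k\neq i$ and $k=i$ (for diagonal blocks). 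We then insert the definitions (\ref{t4215}): diagonal blocks of ${\mathcal X}$ give ${\mathbb J}$, off-diagonal ones give ${\mathbb{FP}}$, and ${\mathcal Y}$ contributes ${\mathbb{\ti FP}}_{q_i+q_j}(\mB^{ij})$ for all $i,j$ plus $\frac12\ti Y^0(q_i)$ on the diagonal.

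The key technical step is rewriting the traces in (\ref{tt417})--(\ref{tt420}) in the form ${\mathbb{FP}}_x(T)=\tr_2(F^0_{12}(x)P_{12}T_2)$. Terms of the form $\tr_2(\mA_2^{kj}F^0_{12}(q_{kj})P_{12})$ are already of this form by cyclicity of the partial trace over space 2, since $\mA^{kj}_2$ acts only in space 2 and can be moved cyclically within $\tr_2$. For terms written as $\tr_2(F^0_{12}(q_{ik})P_{12}\mA^{ik}_2)$, this is literally ${\mathbb{FP}}_{q_{ik}}(\mA^{ik})$. Likewise $\tr_2(m_{12}(0)\mA^{ii}_2)=\tr_2(\mA_2^{ii}m_{12}(0))={\mathbb J}(\mA^{ii})$. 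The same reasoning applies to $\ti F^0$. One subtlety is in (\ref{tt420}), where the argument appears as $q^+_{ik}$ inside $\mA^{ik}\tr_2(\mB^{ki}_2\ti F^0_{12}(q^+_{ik})P_{12})$; since $q^+_{ik}=q^+_{ki}$, this equals ${\mathbb{\ti FP}}_{q_k+q_i}(\mB^{ki})={\mathcal Y}^{ki}$, as required.

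With these identifications, collect terms. For $\dot\mA^{ij}$ with $i\ne j$: the $k\ne i,j$ parts of $[\mA,{\mathcal X}]$ and $[\mB,{\mathcal Y}]$ give the two sums in (\ref{tt417}). The $k=i$ and $k=j$ parts of $[\mA,{\mathcal X}]$ give $\mA^{ii}{\mathbb{FP}}(\mA^{ij})-{\mathbb{FP}}(\mA^{ij})\mA^{jj}$ together with $\mA^{ij}{\mathbb J}(\mA^{jj})-{\mathbb J}(\mA^{ii})\mA^{ij}$. The $k=i,j$ parts of $[\mB,{\mathcal Y}]$ give the $\frac12\ti Y^0$ terms, the $\ti F^0(2q_i)$ and $\ti F^0(2q_j)$ terms, and the $\mB^{ii},\mB^{jj}$ terms. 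The equations (\ref{tt418}) for $\dot\mB$ follow identically with the roles of $\mA$ and $\mB$ exchanged in the left factors. For the diagonal equations (\ref{tt419})--(\ref{tt420}), the $k=i$ terms yield $[\mA^{ii},{\mathbb J}(\mA^{ii})]$ together with $[\mB^{ii},\frac12\ti Y^0(q_i)+{\mathbb{\ti FP}}_{2q_i}(\mB^{ii})]$, which matches after expanding the commutator. The mixed case $[\mA^{ii},{\mathcal Y}^{ii}]$ in $\dot\mB^{ii}$ works the same way.

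The main obstacle is purely bookkeeping: matching the sign and ordering of each product and the argument of each $F^0$ (for instance $q_{ki}$ versus $q_{ik}$). No identities beyond cyclicity of $\tr_2$ and the symmetry $q^+_{ik}=q^+_{ki}$ should be needed. If some term appears with $F^0_{12}(q_{ki})$ where ${\mathcal X}^{ki}$ is expected, it already matches by definition. Only if a transposed ordering such as $P_{12}F^0_{12}$ arises would we invoke (\ref{t2242}) together with $P_{12}F^0_{12}(x)P_{12}=F^0_{21}(x)$.
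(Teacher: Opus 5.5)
Your proof is correct and is the same direct blockwise calculation that the paper only asserts ("This follows by direct calculation"). Splitting each sum over $k$ into $k\neq i,j$, $k=i$, $k=j$ and using two facts reproduces every term of (\ref{tt417})--(\ref{tt420}) exactly: cyclicity of $\tr_2$ for operators acting only in the second tensor space, and $q^+_{ik}=q^+_{ki}$. No constraints and no further $R$-matrix identities are needed, so the fallback you mention involving (\ref{t2242}) is never used.
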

This follows by direct calculation.

\subsection{Hamiltonian description}

\paragraph{The Hamiltonian function.}
We obtain the Hamiltonian of the model (\ref{tt404})--(\ref{tt421})
from the generating function $\tr \mL^2(z)$.

\begin{theor}\label{th2}
Assume the matrix hypotheses of Theorem \ref{th1} and the block
identifications (\ref{t4012}).
Assume, in addition, that the partial traces of $r_{12}(z)$ and
$m_{12}(z)$ are scalar and satisfy (\ref{t424})--(\ref{t4241})
with scalar functions $\widetilde E_1(z)$ and $\widetilde\wp(z)$.
Then the following statements hold.

1. The Hamiltonian obtained from the Lax matrix
$\mL^{\hbox{\tiny{Gtops}}}(z)$ in (\ref{tt402})--(\ref{tt406}) is
\beq
\label{tt422}
\begin{array}{c}
  \displaystyle{
    H^{\hbox{\tiny{Gtops}}}=\frac{1}{2}\,\sum_{i=1}^{n}p_i^2-\frac{1}{2}\,\sum_{i=1}^{n}u_N(q_i|\nu)
    +\frac{1}{2}\,\sum_{i=1}^{n}\tr_{23}\Big(m_{23}(0)\mA_2^{ii}\mA_3^{ii}\Big)+
  }
\end{array}
\eq
$$
\begin{array}{c}
  \displaystyle{
    +\frac{1}{2}\,\sum_{i\neq j}^{n}\tr_{23}\Big(F_{32}^{0}(q_{ij})P_{23}\mA_2^{ij}\mA_3^{ji}\Big)
    +\frac{1}{2}\,\sum_{i\neq j}^{n}\tr_{23}\Big(\ti{F}_{32}^{0}(q_{ij}^{+})P_{23}
    \mB_2^{ij}\mB_3^{ji}\Big)+
  }
  \\ \ \\
  \displaystyle{
    +\frac{1}{2}\,\sum_{i=1}^{n}\tr_{23}\Big(\ti{F}_{32}^{0}(2q_i)P_{23}\mB_2^{ii}\mB_3^{ii}\Big)
    +\frac{1}{2}\,\sum_{i=1}^{n}
    \tr_2\Big(\mB_2^{ii}\ti{Y}_2^{(0)}(q_i)\Big)\,,
  }
\end{array}
$$
where $u_N(q_i|\nu)$ is defined by the unitarity relation
(\ref{t220}). This Hamiltonian appears in the expansion
\beq\label{tt423}
\begin{array}{c}
   \displaystyle{
    \frac{1}{4N}\,\tr\Big(\mL^{2}(z)\Big)=H^{\hbox{\tiny{Gtops}}}-\frac{\partial_{z}\widetilde{E}_{1}(z)}{2N}\,\Big[2
    \varsigma_N\tr\Big(\mB\Big)+\tr\Big(\mA^2\Big)+\tr\Big(\mB^2\Big)\Big]+
  }
  \\ \ \\
   \displaystyle{
   +\frac{{\widetilde E}_1^2(z)-\widetilde\wp(z)}{2N^2}\,\sum
   \limits_{i=1}^n\Big(\tr\Big(\mA^{ii}\Big)\Big)^2+\frac{n}{2}\,w_N(z|\nu)\,,
   }
\end{array}
\eq
\beq\label{t4242}
\begin{array}{c}
  \displaystyle{
\tr\Big(\mA^2\Big)=\tr\Big(\sum\limits_{i,j=1}^n \mS^{ij}\mS^{ji}\Big)\,,
\qquad\tr\Big(\mB^2\Big)=\tr\Big(\sum\limits_{i,{j}=1}^n \mS^{i{\bar j}}\mS^{{\bar j}i}\Big)\,,
  }
\end{array}
\eq
where $\varsigma_N$ and $w_N(z|\nu)$ are defined in
(\ref{t217}) and (\ref{t220}), respectively.
The scalar partial-trace conditions assumed above are\footnote{For the
elliptic Baxter--Belavin matrix (\ref{t239}), these conditions hold with
$\widetilde E_1(z)=E_1(z)$ and $\widetilde\wp(z)=\wp(z)$;
see (\ref{t242})--(\ref{t243}) and (\ref{a172}).}
\beq\label{t424}
\begin{array}{c}
  \displaystyle{
\tr_{1}\Big(r_{12}(z)\Big)=\widetilde E_{1}(z)1_N\,,
  }
\end{array}
\eq
and
\beq\label{t4241}
\begin{array}{c}
  \displaystyle{
\tr_{1}\Big(m_{12}(z)\Big)=\frac{{\widetilde E}_1^2(z)-\widetilde\wp(z)}{2N}\,1_N\,.
  }
\end{array}
\eq
Together with (\ref{t2161}), these conditions imply
$\widetilde E_1(-z)=-\widetilde E_1(z)$ and
$\widetilde\wp(-z)=\widetilde\wp(z)$. The same scalar expressions are
obtained by tracing over the second tensor component.
In particular, differentiating (\ref{t424}) gives
$\tr_1 F^0_{12}(z)=\partial_z\widetilde E_1(z)1_N$.

2. With the following Poisson brackets, the Hamiltonian
$H^{\hbox{\tiny{Gtops}}}$ in (\ref{tt422}) generates precisely the
equations of motion (\ref{tt417})--(\ref{tt421}) obtained from the Lax equation:
\beq\label{t425}
\begin{array}{c}
  \displaystyle{
    \{p_i,q_j\}=\delta_{ij}, \qquad \{p_i,p_j\}=\{q_i,q_j\}=0, \qquad i,j=1,\ldots,n\,,
  }
\end{array}
\eq
and\footnote{Recall the barred-index convention (\ref{tt403}):
for $1\leq i\leq n$ and $I=i+n$,
${\bar I}={\overline{i+n}}=\bar{\bar{i}}=i$.}
\beq\label{t426}
\begin{array}{c}
  \displaystyle{
    \{\mS^{IJ}_{1},\mS^{KL}_{2}\} = \delta^{IL} P_{12}\mS^{KJ}_{1} -\delta^{KJ}\mS^{IL}_{1}P_{12}+
  }
  \\ \ \\
  \displaystyle{
    +\delta^{I\bar L} P_{12}\mS^{\bar K J}_{1} -\delta^{\bar K J} \mS^{I\bar L}_{1}P_{12}\,,
  }
\end{array}
\eq
where $I,J,K,L=1,\ldots,2n$ and $P_{12}$, as before, is the
permutation operator in $\text{Mat}(N,\mathbb{C})^{\otimes2}$.
All mixed Poisson brackets between $p_i,q_i$ and the spin variables
$\mS^{IJ}_{ab}$ vanish.
\end{theor}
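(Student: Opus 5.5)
The proof has two parts. Part 1 is a direct expansion of $\tr\mL^2(z)$ near $z=0$. Part 2 checks that the brackets (\ref{t426}), applied to (\ref{tt422}), reproduce (\ref{tt417})--(\ref{tt421}).

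\emph{Part 1.} I would write
\[
\tr\mL^2=\sum_{i,j}\tr\Big(\mL^{ij}\mL^{ji}+\mL^{i\bar j}\mL^{\bar j i}+\mL^{\bar i j}\mL^{j\bar i}+\mL^{\bar i\bar j}\mL^{\bar j\bar i}\Big)
\]
and treat the four types of contribution separately.

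\begin{itemize}
\item Off-diagonal $R$ terms. Each product $\tr(\mL^{ij}\mL^{ji})$ becomes $\tr_{123}(\mA^{ij}_2R^z_{12}(q_{ij})P_{12}\mA^{ji}_3R^z_{13}(q_{ji})P_{13})$. Moving the permutations through turns this into $\tr_{23}$ of $R_{12}^z(q)R_{23}^z(-q)$-type products. I would evaluate these with the degenerate identity (\ref{c07}). The $r_{13}(z)$ commutator terms drop out under the trace. What remains is $F^0_{32}(q_{ij})P_{23}\mA^{ij}_2\mA^{ji}_3$ plus the $z$-dependent piece $-F^0_{13}(z)P_{13}$. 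The latter, after a partial trace, gives $-\partial_z\widetilde E_1(z)\,\tr(\mA^{ij}\mA^{ji})$ by (\ref{t424}).
\item Barred $\ti R$ terms. The products $\mL^{i\bar j}\mL^{\bar j i}$ carry $\ti R^z\ti R^{-z}$. These are handled the same way with (\ref{c08}), giving the $\ti F^0_{32}$ terms and $-\partial_z\widetilde E_1\,\tr(\mB^{ij}\mB^{ji})$.
\item Diagonal blocks. For $\mL^{ii}$ and $\mL^{\bar i\bar i}$, squaring $p_i+\tr_2(\mA_2^{ii}r_{12}(\pm z))$ gives $p_i^2$, cross terms odd in $z$ that cancel between the barred and unbarred copies, and $\tr(r_{12}(z)r_{13}(z)\mA^{ii}_2\mA^{ii}_3)$. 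I would reduce the last one by (\ref{c03}) and (\ref{t4241}); this produces the $m_{23}(0)$ term, the $(\widetilde E_1^2-\widetilde\wp)(\tr\mA^{ii})^2$ term and another $\partial_z\widetilde E_1$ contribution.
\item $K$-matrix terms. The diagonal $\mL^{i\bar i}\mL^{\bar i i}$ contains $\ti K^z\ti K^{-z}$, which by the unitarity relation (\ref{t220}) gives $-N(u_N-w_N)$. It also contains mixed terms $\ti R^z(2q_i)\ti K^{-z}+\ti K^z\ti R^{-z}(2q_i)$, which I would evaluate with (\ref{c101}). These yield $\ti Y^0$, the $2\varsigma_N F^0(z)\to 2\varsigma_N\partial_z\widetilde E_1\tr\mB^{ii}$ piece, and a commutator that traces to zero. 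The $\mB^{ii}\mB^{ii}$ term is treated like the off-diagonal case but with argument $2q_i$.
\end{itemize}

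Collecting everything and dividing by $4N$ gives (\ref{tt423}). The main bookkeeping risk is the signs from $\mL^{\bar i j}$ and the factor $1/2$ for the $i\neq j$ double counting.

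\emph{Part 2.} First I would check that (\ref{t426}) is consistent with the block identifications (\ref{t4012}). Rewritten in terms of $\mT^\pm=(\mA\pm\mB)/2$, the bracket should split into two commuting copies of the Lie--Poisson bracket (\ref{t353}) for ${\rm gl}_{nN}$. This also explains the phase space (\ref{t125}).

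Then I would compute $\{H,\mA^{ij}\}$ and $\{H,\mB^{ij}\}$ term by term. Each quadratic term $\tr_{23}(X_{32}P_{23}\mS_2\mS_3)$ yields commutator-type expressions of the form $\mS\,\tr_2(\mS_2XP)-\tr_2(XP\mS_2)\mS$, exactly as in the $A$-type computation behind (\ref{t356}). The two extra terms in (\ref{t426}) with $\bar L,\bar K$ are what produce the $\mA\leftrightarrow\mB$ mixing. The linear term $\frac12\tr(\mB^{ii}\ti Y^0)$ gives the $\frac12[\cdot,\ti Y^0]$ pieces.

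It is cleanest to verify the compact form (\ref{t4216}). With $\mathcal X$ and $\mathcal Y$ from (\ref{t4215}), $H$ restricted to the spin part should satisfy $\partial H/\partial\mA\sim\mathcal X$ and $\partial H/\partial\mB\sim\mathcal Y$; in $\mT^\pm$ variables this reads $\dot\mT^\pm=[\mT^\pm,\mathcal X\pm\mathcal Y]$. Proving this needs the symmetries (\ref{t2242}), $F^0_{ij}(q)=F^0_{ji}(-q)$ and $\ti F^0_{ij}=\ti F^0_{ji}$, so that both factors contribute equally; this cancels the $1/2$. The momentum equation (\ref{tt421}) follows from $\dot p_i=-\partial_{q_i}H$. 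The $q^+$-dependence is counted once per unordered pair, and the $2q_i$ argument gives a factor $2$ that cancels the $1/2$.

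I expect the main obstacle to be the $K$-matrix diagonal terms in Part 1, namely getting (\ref{c101}) to produce exactly $\tr_2(\mB^{ii}_2\ti Y^{(0)}_2)$ with coefficient $1/2$ and the $\varsigma_N\tr\mB$ term. I would first test these coefficients in the scalar case $N=1$ against (\ref{t129}).
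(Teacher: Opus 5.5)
Your proposal follows the paper's proof. Part 1 reproduces Appendix C block by block. It uses the same identities: (\ref{c07}), (\ref{c08}), (\ref{c03}) together with the scalar partial-trace conditions, the unitarity relation (\ref{t220}), and the degenerate four-term relation (\ref{c101}). Your bookkeeping also agrees with the paper: the weights $1/(4N)$ and $1/(2N)$, the $\widetilde E_2(z)$ terms assembling into $\tr(\mA^2+\mB^2)$, and the $\varsigma_N\tr\mB$ term.

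For Part 2, the paper only calls it a direct calculation. Your route via $\nabla_{\mA}H=\mathcal X$, $\nabla_{\mB}H=\mathcal Y$ and $\dot\mT^\pm=[\mT^\pm,\mathcal X\pm\mathcal Y]$ is a correct way to organize it, consistent with (\ref{t434}) and (\ref{t4216}).
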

The second statement follows by direct calculation; see also the
description of (\ref{t426}) below. The first statement, including
(\ref{tt423}), is proved in Appendix C using $R$- and $K$-matrix identities.

\paragraph{Poisson brackets.}
We describe the spin part of the phase space in more detail.
Consider the dynamical matrices $\mA\in{\rm Mat}_{nN}$ and
$\mB\in{\rm Mat}_{nN}$ entering $\mS\in{\rm Mat}_{2nN}$ in
(\ref{t4011}). While $P_{12}$ denotes the permutation operator
(\ref{t216}) in $\Mat^{\otimes 2}$, let
$\Pi_{12}\in{\rm Mat}_{nN}^{\otimes 2}$ denote the corresponding
operator on $\mC^{nN}\otimes\mC^{nN}$.
The brackets (\ref{t426}) then become
\beq\label{t432}
\begin{array}{c}
  \displaystyle{
    \{\mA_1,\mA_2\}=[\Pi_{12},\mA_1],\qquad\{\mA_1,\mB_2\}=[\Pi_{12},\mB_1]\,,
  }
  \\ \ \\
  \displaystyle{
    \{\mB_1,\mA_2\}=[\Pi_{12},\mB_1],\qquad\{\mB_1,\mB_2\}=[\Pi_{12},\mA_1]\,,
  }
\end{array}
\eq
where $1,2$ label the tensor components in ${\rm Mat}_{nN}^{\otimes 2}$.
Equivalently, cf.~(\ref{t353})--(\ref{t354}),
\beq\label{t4321}
\begin{array}{c}
  \displaystyle{
    \{\mA^{ij}_1,\mA^{kl}_2\}=P_{12}\mA^{kj}_1\delta^{il}-\mA^{il}_1P_{12}\delta^{kj}\,,
    \qquad\{\mA^{ij}_1,\mB^{kl}_2\}=P_{12}\mB^{kj}_1\delta^{il}-\mB^{il}_1P_{12}\delta^{kj}\,,
  }
  \\ \ \\
  \displaystyle{
   \{\mB^{ij}_1,\mA^{kl}_2\}=P_{12}\mB^{kj}_1\delta^{il}-\mB^{il}_1P_{12}\delta^{kj}\,,
    \qquad
    \{\mB^{ij}_1,\mB^{kl}_2\}=P_{12}\mA^{kj}_1\delta^{il}-\mA^{il}_1P_{12}\delta^{kj}\,,
  }
\end{array}
\eq
where $\mA^{ij},\mB^{ij}\in {\rm Mat}_{N}$, $i,j,k,l=1,...,n$,
and $1,2$ now label the components in ${\rm Mat}_{N}^{\otimes 2}$.
In components, as in (\ref{t352}),
\beq\label{t4322}
\begin{array}{c}
  \displaystyle{
    \{\mA^{ij}_{ab},\mA^{kl}_{cd}\}=\mA^{kj}_{cb}\,\delta^{il}\,\delta_{ad}-\mA^{il}_{ad}\,\delta^{kj}\,\delta_{bc}\,,
    \qquad
    \{\mA^{ij}_{ab},\mB^{kl}_{cd}\}=\mB^{kj}_{cb}\,\delta^{il}\,\delta_{ad}-\mB^{il}_{ad}\,\delta^{kj}\,\delta_{bc}\,,
  }
  \\ \ \\
  \displaystyle{
   \{\mB^{ij}_{ab},\mA^{kl}_{cd}\}=\mB^{kj}_{cb}\,\delta^{il}\,\delta_{ad}-\mB^{il}_{ad}\,\delta^{kj}\,\delta_{bc}\,,
    \qquad
    \{\mB^{ij}_{ab},\mB^{kl}_{cd}\}=\mA^{kj}_{cb}\,\delta^{il}\,\delta_{ad}-\mA^{il}_{ad}\,\delta^{kj}\,\delta_{bc}\,,
  }
\end{array}
\eq
where $i,j,k,l=1,...,n$ and $a,b,c,d=1,...,N$.

Introduce the matrices $\mT^\pm\in{\rm Mat}_{nN}$:
\beq\label{t433}
\begin{array}{c}
  \displaystyle{
    \mT^+=\frac{\mA+\mB}{2}\,,\qquad \mT^-=\frac{\mA-\mB}{2}\,.
  }
\end{array}
\eq
A direct calculation transforms (\ref{t432}) into
\beq\label{t434}
\begin{array}{c}
  \displaystyle{
    \{\mT^+_1,\mT^+_2\}=[\Pi_{12},\mT^+_1]\,,
      }
  \\ \ \\
  \displaystyle{
    \{\mT^-_1,\mT^-_2\}=[\Pi_{12},\mT^-_1]\,,
      }
  \\ \ \\
  \displaystyle{
    \{\mT^+_1,\mT^-_2\}=0\,.
  }
\end{array}
\eq
These are two mutually commuting copies of the standard Lie--Poisson
bracket on ${\rm gl}(nN,\mC)^*$, giving the Lie--Poisson structure on
\beq\label{t435}
\begin{array}{c}
  \displaystyle{
    ({\rm gl}(nN,\mC)\oplus{\rm gl}(nN,\mC))^*\,.
  }
\end{array}
\eq
We may fix the Casimir functions
\beq\label{t439}
\begin{array}{c}
  \displaystyle{
    C_k^+=\operatorname{tr}\bigl((\mT^+)^k\bigr),\qquad C_k^-=\operatorname{tr}\bigl((\mT^-)^k\bigr)\,,
    \qquad k=1,\ldots,nN\,.
  }
\end{array}
\eq
The phase space of the generalized ${\rm BC}_n$ model is therefore
\beq\label{t427}
\begin{array}{c}
  \displaystyle{
   T^*\mC^{n}\times(\mO^{\hbox{\tiny{$nN\!,\!+$}}}\times \mO^{\hbox{\tiny{$nN\!,\!-$}}})//\mD\,,
  }
\end{array}
\eq
where $\mO^{\hbox{\tiny{$nN\!,\!\pm$}}}$ are coadjoint orbits
of the Lie group ${\rm GL}_{nN}$, and
$\mD={\rm diag}(d_1,...,d_n)\otimes 1_N$ is the same subgroup
of the Cartan subgroup of ${\rm GL}_{nN}$ as in (\ref{t362}).
Reduction by $\mD$ imposes the first-class
constraints (\ref{t416}),
\beq\label{t428}
\begin{array}{c}
  \displaystyle{
    \tr\Big(\mS^{ii}\Big)=\tr\Big(\mA^{ii}\Big)=\tr\Big(\mT^{+,ii}+\mT^{-,ii}\Big)={\rm const}
  }
\end{array}
\eq
together with suitable gauge-fixing conditions. For generic regular
orbits, $\dim\mO^{\hbox{\tiny{$nN\!,\!\pm$}}}=nN(nN-1)$.
The reduced phase space therefore has generic dimension
\beq\label{t4281}
\begin{array}{c}
  \displaystyle{
   \dim\Big(T^*\mC^{n}\times(\mO^{\hbox{\tiny{$nN\!,\!+$}}}\times \mO^{\hbox{\tiny{$nN\!,\!-$}}})//\mD\Big)=
   2n+2nN(nN-1)-2(n-1)=2(n^2N^2-nN+1)\,.
  }
\end{array}
\eq

Notice also that
\beq\label{t438}
\begin{array}{c}
  \displaystyle{
    \tr\Big(\mA\Big)=\tr\Big(\mT^++\mT^-\Big)\,,\qquad \tr\Big(\mB\Big)=\tr\Big(\mT^+-\mT^-\Big)
  }
\end{array}
\eq
and
\beq\label{t429}
\begin{array}{c}
  \displaystyle{
    \tr\Big(\mA^2+\mB^2\Big)=2\tr\Big((\mT^+)^2\Big)+2\tr\Big((\mT^-)^2\Big)\,.
  }
\end{array}
\eq
These functions occur in (\ref{tt423}) and are constant on the chosen orbits.

\subsection{Reduction to \texorpdfstring{$\mA=\mB$}{A = B}: one-orbit model}
Impose the additional condition
\beq\label{t451}
\begin{array}{c}
  \displaystyle{
    \mA=\mB\quad{\rm or}\quad \mT^-=0\,.
  }
\end{array}
\eq
This amounts to choosing the zero orbit for $\mT^-$. It is therefore
an admissible restriction rather than a further Poisson reduction.
The brackets (\ref{t432}) are compatible with this restriction,
leaving only
\beq\label{t452}
\begin{array}{c}
  \displaystyle{
    \{\mA^{ij}_1,\mA^{kl}_2\}=P_{12}\mA^{kj}_1\delta^{il}-\mA^{il}_1P_{12}\delta^{kj}\,,
  }
\end{array}
\eq
and
\beq\label{t453}
\begin{array}{c}
  \displaystyle{
    \mS=\mats{\mA}{\mA}{\mA}{\mA}\,,\quad \mA\in{\rm Mat}_{nN}\,.
  }
\end{array}
\eq
The equations of motion (\ref{tt417})--(\ref{tt421}) are also compatible
with (\ref{t451}).

The phase space in this case is
\beq\label{t454}
\begin{array}{c}
  \displaystyle{
   T^*\mC^{n}\times\mO^{\hbox{\tiny{$nN$}}}//\mD\,,\quad
   \dim\Big(T^*\mC^{n}\times\mO^{\hbox{\tiny{$nN$}}}//\mD\Big)=nN(nN-1)+2\,,
  }
\end{array}
\eq
where the dimension is given for a generic orbit.
The Hamiltonian follows by setting $\mB=\mA$ in (\ref{tt422}):
\beq
\label{tt455}
\begin{array}{c}
  \displaystyle{
    H^{\hbox{\tiny{Gtops}}}=\frac{1}{2}\,\sum_{i=1}^{n}p_i^2-\frac{1}{2}\,\sum_{i=1}^{n}u_N(q_i|\nu)
    +\frac{1}{2}\,\sum_{i=1}^{n}\tr_{23}\Big(m_{23}(0)\mA_2^{ii}\mA_3^{ii}\Big)+
  }
  \\ \ \\
  \displaystyle{
    +\frac{1}{2}\,\sum_{i\neq j}^{n}\tr_{23}\Big(F_{32}^{0}(q_{ij})P_{23}\mA_2^{ij}\mA_3^{ji}\Big)
    +\frac{1}{2}\,\sum_{i\neq j}^{n}\tr_{23}\Big(\ti{F}_{32}^{0}(q_{ij}^{+})P_{23}
    \mA_2^{ij}\mA_3^{ji}\Big)+
  }
  \\ \ \\
  \displaystyle{
    +\frac{1}{2}\,\sum_{i=1}^{n}\tr_{23}\Big(\ti{F}_{32}^{0}(2q_i)P_{23}\mA_2^{ii}\mA_3^{ii}\Big)
    +\frac{1}{2}\,\sum_{i=1}^{n}
    \tr_2\Big(\mA_2^{ii}\ti{Y}_2^{(0)}(q_i)\Big)\,.
  }
\end{array}
\eq
The Lax pair is obtained from (\ref{tt402})--(\ref{tt410}) by setting
$\mB=\mA$. The equations of motion similarly follow from
(\ref{tt417})--(\ref{tt421}) by substituting $\mB\to\mA$.
For the off-diagonal blocks of $\mA$, this gives
\beq\label{tt456}
\begin{array}{c}
  \displaystyle{
    \dot{\mA}^{ij} =
     \sum_{k:k\neq i,j}^{n}\Big( \mA^{ik}\tr_2\Big(
    \mA_2^{kj}F_{12}^{0}(q_{kj})P_{12}\Big) -
    \tr_2\Big(F_{12}^{0}(q_{ik})P_{12}\mA_2^{ik}\Big)\mA^{kj} \Big) +
  }
  \\ \ \\
    \displaystyle{
    + \sum_{k=1}^{n}\Big(\mA^{ik}\tr_{2}\Big(\mA_{2}^{kj}
    \ti{F}_{12}^{0}(q_{kj}^{+})P_{12}\Big) -
    \tr_2\Big(\ti{F}_{12}^{0}(q_{ik}^{+})P_{12}\mA_{2}^{ik}\Big)\mA^{kj}\Big) +
  }
\end{array}
\eq
$$
\begin{array}{c}
  \displaystyle{
    + \mA^{ii}\tr_2\Big(\mA_{2}^{ij}F_{12}^{0}(q_{ij})P_{12}\Big)
    -\tr_2\Big(F_{12}^{0}(q_{ij})P_{12}\mA_{2}^{ij}\Big)\mA^{jj}+
  }
  \\ \ \\
  \displaystyle{
    + \mA^{ij}\tr_2\Big(\mA_2^{jj}m_{12}(0)\Big)
    -\tr_2\Big(m_{12}(0)\mA_2^{ii}\Big)\mA^{ij}
     +\frac{1}{2}\,\mA^{ij}\ti{Y}^{0}(q_{j})-\frac{1}{2}\,\ti{Y}^{0}(q_{i})\mA^{ij}\,.
  }
  \end{array}
$$
For the diagonal $N\times N$ blocks of $\mA$,
\beq\label{tt457}
\begin{array}{c}
  \displaystyle{
    \dot{\mA}^{ii} = [\mA^{ii},\tr_2\Big(m_{12}(0)\mA_2^{ii}\Big)] +\frac{1}{2}\,[\mA^{ii},
    \ti{Y}^{0}(q_{i})] +
  }
  \\ \ \\
  \displaystyle{
    + \sum_{k:k\neq i}^n\left(\mA^{ik}\tr_2\Big(\mA_2^{ki}F_{12}^{0}(q_{ki})P_{12}\Big)
    -\tr_2\Big(F_{12}^{0}(q_{ik})P_{12}\mA_2^{ik}\Big)\mA^{ki}\right) +
  }
  \\ \ \\
  \displaystyle{
    + \sum_{k=1}^n\left(\mA^{ik}\tr_2\Big(\mA_2^{ki}\ti{F}_{12}^{0}(q_{ki}^{+})P_{12}\Big)
    -\tr_2\Big(\ti{F}_{12}^{0}(q_{ik}^{+})P_{12}\mA_2^{ik}\Big)\mA^{ki}\right)\,.
  }
\end{array}
\eq
and for the momenta,
\beq\label{tt458}
\begin{array}{c}
  \displaystyle{
    \dot{p}_{i} = \frac{1}{2}\, \p_{q_i}u_N(q_i|\nu) -\frac{1}{2}\,
    \tr_{2}\Big(\partial\ti{Y}^{0}_{2}(q_{i})\mA^{ii}_{2}\Big)
-\tr_{23}\Big(\partial_{2q_{i}}\ti{F}^{0}_{32}(2q_{i})P_{23}
    \mA_{2}^{ii}\mA_{3}^{ii}\Big)-
  }
  \\ \ \\
  \displaystyle{
    - \sum_{k:k \neq i} \Big( \tr_{23}\Big(\partial_{q_{ik}^{+}} \ti{F}_{32}^{0} (q_{ik}^{+})P_{23}
    \mA_2^{ik} \mA_3^{ki}\Big) + \tr_{23} \Big(\partial F_{32}^{0}(q_{ik}) P_{23}
    \mA_2^{ik}\mA_3^{ki}\Big)  \Big)\,.
  }
\end{array}
\eq

\section{Case \texorpdfstring{$N=1$}{N = 1}: spin generalizations of the \texorpdfstring{${\rm BC}_n$}{BCn} Calogero--Inozemtsev model}\label{sec5}
\setcounter{equation}{0}

We now specialize the elliptic models to $N=1$ and use the notation
 \beq\label{t501}
 \begin{array}{c}
  \displaystyle{
\mS^{ij}|_{N=1}:=S_{ij}\,,\quad \mA^{ij}|_{N=1}:=A_{ij}\,,\quad \mB^{ij}|_{N=1}:=B_{ij}\,,
  }
 \end{array}
\eq
so that the spin variables form a matrix $S\in{\rm Mat}_{2n}$:
 \beq\label{t5011}
 \begin{array}{c}
  \displaystyle{
S=\mats{A}{B}{B}{A}\,,\quad A,B\in{\rm Mat}_n\,.
  }
 \end{array}
\eq
We use the scalar expressions (\ref{t263})--(\ref{t268}) for the
$R$- and $K$-matrices at $N=1$.
Here $\la=\la(\nu)$ is given by (\ref{t2641}), $f(z,x)=\p_x\phi(z,x)$,
and $v'(z,x|\la)=\p_xv(z,x|\la)$, as in (\ref{a06}) and (\ref{a46}).

\subsection{Two-spin generalization}
We call the general case $A\neq B$ the two-spin model, since its
spin variables lie on two coadjoint orbits.
Setting $N=1$ in (\ref{tt402})--(\ref{tt410}) gives the $2n\times 2n$ Lax pair
\beq\label{t502}
\begin{array}{c}
  \displaystyle{
   L^{\hbox{\tiny{2-spin}}}(z) =\sum^{n}_{i,j=1}E_{ij} {L}_{ij}(z)+\sum^{n}_{i,j=1}E_{\bar{i}j}
    {L}_{\bar{i}j}(z)+\sum^{n}_{i,j=1}E_{i\bar{j}}{L}_{i\bar{j}}(z)
    +\sum^{n}_{i,j=1}E_{\bar{i}\bar{j}}{L}_{\bar{i}\bar{j}}(z)\,.
  }
\end{array}
\eq
with entries
\beq\label{t503}
\begin{array}{c}
  \displaystyle{
    {L}_{ij}(z)=\de_{ij}\left(p_i+A_{ii}E_1(z)\right)
    +(1-\de_{ij})A_{ij}\phi(z,q_{ij})\,,
  }
\end{array}
\eq
\beq\label{t504}
\begin{array}{c}
  \displaystyle{
    {L}_{i\bar{j}}(z)=\de_{ij}\left(v(z,q_i|\la)+B_{ii}\phi(z,2q_i)\right)
    +(1-\de_{ij})B_{ij}\phi(z,q_{ij}^{+})\,,
  }
\end{array}
\eq
\beq\label{t505}
\begin{array}{c}
  \displaystyle{
    {L}_{\bar{i}j}(z)=-\de_{ij}\left(v(-z,q_i|\la)+B_{ii}\phi(-z,2q_i)\right)
    -(1-\de_{ij})B_{ij}\phi(-z,q_{ij}^{+})\,,
  }
\end{array}
\eq
\beq\label{t506}
\begin{array}{c}
  \displaystyle{
    {L}_{\bar{i}\bar{j}}(z)=-\de_{ij}\left(p_i+A_{ii}E_1(-z)\right)
    -(1-\de_{ij})A_{ij}\phi(-z,q_{ij})\,,
  }
\end{array}
\eq
The entries of ${M}^{\hbox{\tiny{2-spin}}}(z)\in \text{Mat}_{2n}$ are
\beq\label{t507}
\begin{array}{c}
  \displaystyle{
    {M}_{ij}(z)=\de_{ij}A_{ii}\frac{E_1^2(z)-\wp(z)}{2}
    +(1-\de_{ij})A_{ij}f(z,q_{ij})\,,
  }
\end{array}
\eq
\beq\label{t508}
\begin{array}{c}
  \displaystyle{
    {M}_{i\bar{j}}(z)=\de_{ij}\left(\frac{1}{2}\,v'(z,q_i|\la)+B_{ii}f(z,2q_i)\right)
    +(1-\de_{ij})B_{ij}f(z,q_{ij}^{+})\,,
  }
\end{array}
\eq
\beq\label{t509}
\begin{array}{c}
  \displaystyle{
    {M}_{\bar{i}j}(z)=\de_{ij}\left(\frac{1}{2}\,v'(-z,q_i|\la)+B_{ii}f(-z,2q_i)\right)
    +(1-\de_{ij})B_{ij}f(-z,q_{ij}^{+})\,,
  }
\end{array}
\eq
\beq\label{t510}
\begin{array}{c}
  \displaystyle{
    {M}_{\bar{i}\bar{j}}(z)=\de_{ij}A_{ii}\frac{E_1^2(-z)-\wp(-z)}{2}
    +(1-\de_{ij})A_{ij}f(-z,q_{ij})\,,
  }
\end{array}
\eq
where $v'(z,x)=\p_x v(z,x)$, as in (\ref{a46}), and
$\la=\la(\nu)$ is defined in (\ref{t2641}).

\paragraph{Hamiltonian and Poisson brackets.}
Substituting (\ref{t268})
into (\ref{tt422}), we obtain
\beq\label{t511}
\begin{array}{c}
  \displaystyle{
H^{\hbox{\tiny{2-spin}}}=\frac12\,\sum_{i=1}^n p_i^2
-\frac12\,\sum_{i\neq j}^n A_{ij}A_{ji}\wp(q_{i}-q_j)
-\frac12\,\sum_{i,j=1}^n B_{ij}B_{ji}\wp(q_{i}+q_j)-
  }
  \\ \ \\
  \displaystyle{
-\frac12\,\sum_{i=1}^n\sum_{a=0}^3
\la_a(\la_a+B_{ii})\wp(q_i+\Om_a)
+\frac{m(0)}{2}\,{\rm const}\,,
  }
\end{array}
\eq
where
\beq\label{t512}
\begin{array}{c}
  \displaystyle{
m(0)=\frac{\vth'''(0)}{3\vth'(0)}\,,\qquad {\rm const}=\sum_{i,j=1}^n\Big(A_{ij}A_{ji}+B_{ij}B_{ji}\Big)
+\Big(\sum_{a=0}^3\la_a\Big)\sum_{i=1}^n B_{ii}\,.
  }
\end{array}
\eq
The latter expression is a Casimir of the spin Poisson brackets below.
Thus, the last term in (\ref{t511}) can be omitted without changing the
equations of motion; it is retained here to give the exact scalar limit
of (\ref{tt422}).

The scalar form of the Poisson brackets (\ref{t426}) is
\beq\label{t513}
\begin{array}{c}
  \displaystyle{
\{S_{ij},S_{kl}\}=\de_{il}S_{kj}-\de_{kj}S_{il}
+\de_{i\bar l}S_{\bar k j}-\de_{\bar k j}S_{i\bar l}\,,
\qquad i,j,k,l=1,\ldots,2n\,,
  }
\end{array}
\eq
where $S_{ij}=S_{\bar i\bar j}=A_{ij}$ and
$S_{i\bar j}=S_{\bar i j}=B_{ij}$. Equivalently,
\beq\label{t514}
\begin{array}{c}
  \displaystyle{
\{A_{ij},A_{kl}\}=\de_{il}A_{kj}-\de_{kj}A_{il}\,,\qquad
\{A_{ij},B_{kl}\}=\de_{il}B_{kj}-\de_{kj}B_{il}\,,
  }
  \\ \ \\
  \displaystyle{
\{B_{ij},A_{kl}\}=\de_{il}B_{kj}-\de_{kj}B_{il}\,,\qquad
\{B_{ij},B_{kl}\}=\de_{il}A_{kj}-\de_{kj}A_{il}\,,
  }
\end{array}
\eq
with $i,j,k,l=1,\ldots,n$. Together with the canonical brackets
\beq\label{t515}
\begin{array}{c}
  \displaystyle{
\{p_i,q_j\}=\de_{ij}\,,\qquad
\{p_i,p_j\}=\{q_i,q_j\}=0\,,
  }
\end{array}
\eq
and vanishing mixed brackets between $p_i,q_i$ and $A_{jk},B_{jk}$,
these brackets define the Hamiltonian dynamics
$\frac{d}{dt} *=\{H^{\hbox{\tiny{2-spin}}},*\}$.

As in (\ref{t433}), introduce $T^\pm\in{\rm Mat}_{n}$:
\beq\label{t5151}
\begin{array}{c}
  \displaystyle{
    T^+=\frac{A+B}{2}\,,\qquad T^-=\frac{A-B}{2}\,.
  }
\end{array}
\eq
The brackets then become
\beq\label{t5152}
\begin{array}{c}
  \displaystyle{
\{T^\pm_{ij},T^\pm_{kl}\}=\de_{il}T^\pm_{kj}-\de_{kj}T^\pm_{il}
      }
  \\ \ \\
  \displaystyle{
    \{T^-_1,T^+_2\}=0\,.
      }
\end{array}
\eq
The phase space of the two-spin model is
\beq\label{t5153}
\begin{array}{c}
  \displaystyle{
   T^*\mC^{n}\times(\mO^{\hbox{\tiny{$n\!,\!+$}}}\times \mO^{\hbox{\tiny{$n\!,\!-$}}})//\mD\,,
  }
\end{array}
\eq
where $\mO^{\hbox{\tiny{$n\!,\!\pm$}}}$ are coadjoint orbits
of ${\rm GL}_{n}$, and $\mD={\rm diag}(d_1,...,d_n)$ is its
Cartan subgroup, as in (\ref{t309}). Reduction by $\mD$ imposes
the first-class constraints (\ref{t416}),
\beq\label{t5154}
\begin{array}{c}
  \displaystyle{
    A_{ii}=A_{jj}={\rm const}
  }
\end{array}
\eq
together with gauge-fixing conditions.

\paragraph{Equations of motion.}
For brevity, denote
\beq\label{t5155}
\begin{array}{c}
  \displaystyle{
    w_i=\sum_{a=0}^3\la_a\wp(q_i+\Om_a)\,.
  }
\end{array}
\eq
The scalar counterparts of (\ref{tt417}) and (\ref{tt418}), for $i\neq j$,
are
\beq\label{t516}
\begin{array}{c}
  \displaystyle{
\dot A_{ij}=\sum_{k:k\neq i,j}^n A_{ik}A_{kj}
\Big(\wp(q_{ik})-\wp(q_{kj})\Big)+\sum_{k:k\neq i,j}^n B_{ik}B_{kj}
\Big(\wp(q_{ik}^{+})-\wp(q_{kj}^{+})\Big)+
  }
  \\ \ \\
  \displaystyle{
+B_{ij}\Big((B_{jj}-B_{ii})\wp(q_{ij}^{+})
+B_{ii}\wp(2q_i)-B_{jj}\wp(2q_j)+\frac12\,(w_i-w_j)\Big)-(A_{ii}-A_{jj})A_{ij}\wp(q_{ij})\,,
  }
\end{array}
\eq
\beq\label{t517}
\begin{array}{c}
  \displaystyle{
\dot B_{ij}=\sum_{k:k\neq i,j}^n A_{ik}B_{kj}
\Big(\wp(q_{ik})-\wp(q_{kj}^{+})\Big)+\sum_{k:k\neq i,j}^n B_{ik}A_{kj}
\Big(\wp(q_{ik}^{+})-\wp(q_{kj})\Big)+
  }
  \\ \ \\
  \displaystyle{
+A_{ij}\Big((B_{jj}-B_{ii})\wp(q_{ij})
+B_{ii}\wp(2q_i)-B_{jj}\wp(2q_j)+\frac12\,(w_i-w_j)\Big)-(A_{ii}-A_{jj})B_{ij}\wp(q_{ij}^{+})\,.
  }
\end{array}
\eq
For the diagonal entries, (\ref{tt419}) and (\ref{tt420}) reduce to
\beq\label{t518}
\dot A_{ii}=0\,,
\eq
\beq\label{t519}
\begin{array}{c}
  \displaystyle{
\dot B_{ii}=\sum_{k:k\neq i}^n
\Big(A_{ik}B_{ki}-B_{ik}A_{ki}\Big)
\Big(\wp(q_{ik})-\wp(q_{ik}^{+})\Big)\,.
  }
\end{array}
\eq
Finally, the particle equations, including the scalar form of (\ref{tt421}),
are\footnote{Here $\wp'(x)=\p_x\wp(x)$; in particular, the derivative in
$\wp'(2q_i)$ is with respect to the argument $2q_i$.}
\beq\label{t520}
\begin{array}{c}
  \displaystyle{
\dot q_i=p_i\,,\qquad i=1,\ldots,n\,,
  }
  \\ \ \\
  \displaystyle{
\dot p_i=\sum_{k:k\neq i}^n
\Big(A_{ik}A_{ki}\wp'(q_{ik})+B_{ik}B_{ki}\wp'(q_{ik}^{+})\Big)
+B_{ii}^2\wp'(2q_i)+\frac12\,\sum_{a=0}^3\la_a(\la_a+B_{ii})\wp'(q_i+\Om_a)\,.
  }
\end{array}
\eq

\paragraph{Lax equation.}
On the unreduced phase space the Lax pair (\ref{t502})-(\ref{t510})
satisfies
\beq\label{t521}
\begin{array}{c}
  \displaystyle{
\dot L^{\hbox{\tiny{2-spin}}}(z)
=[L^{\hbox{\tiny{2-spin}}}(z),M^{\hbox{\tiny{2-spin}}}(z)]
+\hbox{unwanted term}\,,
  }
\end{array}
\eq
where the scalar counterpart of (\ref{t4212}) is
\beq\label{t522}
\begin{array}{c}
  \displaystyle{
\hbox{unwanted term}=
  }
  \\ \ \\
  \displaystyle{
=-\frac12\,\sum_{i\neq j}^n
(A_{ii}-A_{jj})A_{ij}E_{ij}
\p_x^2\phi(z,x)\Big|_{x=q_{ij}}-\frac12\,\sum_{i\neq j}^n
(A_{ii}-A_{jj})B_{ij}E_{i\bar j}
\p_x^2\phi(z,x)\Big|_{x=q_{ij}^{+}}+
  }
  \\ \ \\
  \displaystyle{
+\frac12\,\sum_{i\neq j}^n
(A_{ii}-A_{jj})A_{ij}E_{\bar i\bar j}
\p_x^2\phi(-z,x)\Big|_{x=q_{ij}}+
\frac12\,\sum_{i\neq j}^n
(A_{ii}-A_{jj})B_{ij}E_{\bar i j}
\p_x^2\phi(-z,x)\Big|_{x=q_{ij}^{+}}\,.
  }
\end{array}
\eq
%
The unwanted term vanishes on the constraints
\beq\label{t523}
A_{11}=\cdots=A_{nn}=\chi\,,\qquad \chi={\rm const}\,,
\eq
which are preserved by (\ref{t518}). Hence (\ref{t521}) becomes the usual
Lax equation on this invariant constraint surface.

\subsection{One-spin generalization}

The condition $A=B$ is equivalent to choosing the zero orbit for $T^-$
in (\ref{t5151}). It is preserved by (\ref{t516})-(\ref{t519}), and
the surviving spin variables form a single matrix $A\in{\rm Mat}_n$:
\beq\label{t524}
A=B=T^+\,,\qquad T^-=0\,,\qquad S=\mats{A}{A}{A}{A}\,.
\eq
This is the scalar counterpart of the one-orbit restriction (\ref{t451}).

\paragraph{Lax pair.}
The matrices $L^{\hbox{\tiny{1-spin}}}(z),M^{\hbox{\tiny{1-spin}}}(z)
\in{\rm Mat}_{2n}$ have the same structure as (\ref{t502}).
Their entries are obtained by substituting $B_{ij}=A_{ij}$ into
(\ref{t503})-(\ref{t510}). Explicitly, the entries of the Lax matrix are
\beq\label{t525}
\begin{array}{c}
  \displaystyle{
L_{ij}(z)=\de_{ij}\Big(p_i+A_{ii}E_1(z)\Big)
+(1-\de_{ij})A_{ij}\phi(z,q_{ij})\,,
  }
\end{array}
\eq
\beq\label{t526}
\begin{array}{c}
  \displaystyle{
L_{i\bar j}(z)=\de_{ij}v(z,q_i|\la)+A_{ij}\phi(z,q_{ij}^{+})\,,
  }
\end{array}
\eq
\beq\label{t527}
\begin{array}{c}
  \displaystyle{
L_{\bar i j}(z)=-\de_{ij}v(-z,q_i|\la)-A_{ij}\phi(-z,q_{ij}^{+})\,,
  }
\end{array}
\eq
\beq\label{t528}
\begin{array}{c}
  \displaystyle{
L_{\bar i\bar j}(z)=-\de_{ij}\Big(p_i+A_{ii}E_1(-z)\Big)
-(1-\de_{ij})A_{ij}\phi(-z,q_{ij})\,.
  }
\end{array}
\eq
The entries of $M^{\hbox{\tiny{1-spin}}}(z)$ are
\beq\label{t529}
\begin{array}{c}
  \displaystyle{
M_{ij}(z)=\de_{ij}A_{ii}\frac{E_1^2(z)-\wp(z)}{2}
+(1-\de_{ij})A_{ij}f(z,q_{ij})\,,
  }
\end{array}
\eq
\beq\label{t530}
\begin{array}{c}
  \displaystyle{
M_{i\bar j}(z)=\frac12\,\de_{ij}v'(z,q_i|\la)
+A_{ij}f(z,q_{ij}^{+})\,,
  }
\end{array}
\eq
\beq\label{t531}
\begin{array}{c}
  \displaystyle{
M_{\bar i j}(z)=\frac12\,\de_{ij}v'(-z,q_i|\la)
+A_{ij}f(-z,q_{ij}^{+})\,,
  }
\end{array}
\eq
\beq\label{t532}
\begin{array}{c}
  \displaystyle{
M_{\bar i\bar j}(z)=\de_{ij}A_{ii}\frac{E_1^2(-z)-\wp(-z)}{2}
+(1-\de_{ij})A_{ij}f(-z,q_{ij})\,.
  }
\end{array}
\eq
The functions $f$, $v'$ and the constants $\la$ have the same meaning
as in the two-spin case.

\paragraph{Hamiltonian and Poisson brackets.}
The restriction of (\ref{t511}), equivalently the scalar limit of
(\ref{tt455}), is
\beq\label{t533}
\begin{array}{c}
  \displaystyle{
H^{\hbox{\tiny{1-spin}}}=\frac12\,\sum_{i=1}^n p_i^2
-\frac12\,\sum_{i\neq j}^n A_{ij}A_{ji}
\Big(\wp(q_i-q_j)+\wp(q_i+q_j)\Big)-
  }
  \\ \ \\
  \displaystyle{
-\frac12\,\sum_{i=1}^n A_{ii}^2\wp(2q_i)
-\frac12\,\sum_{i=1}^n\sum_{a=0}^3
\la_a(\la_a+A_{ii})\wp(q_i+\Om_a)
+\frac{m(0)}{2}\,{\rm const}\,,
  }
\end{array}
\eq
where the constant in (\ref{t512}) is
\beq\label{t534}
\begin{array}{c}
  \displaystyle{
{\rm const}=2\sum_{i,j=1}^n A_{ij}A_{ji}
+\Big(\sum_{a=0}^3\la_a\Big)\sum_{i=1}^n A_{ii}\,.
  }
\end{array}
\eq
It is a linear combination of the Casimirs $\tr(A^2)$ and $\tr(A)$.
Consequently, the last term in (\ref{t533}) does not affect the dynamics.
Since $T^+=A$ on (\ref{t524}), the surviving brackets in (\ref{t5152})
have the standard normalization:
\beq\label{t535}
\begin{array}{c}
  \displaystyle{
\{A_{ij},A_{kl}\}=\de_{il}A_{kj}-\de_{kj}A_{il}\,,
\qquad i,j,k,l=1,\ldots,n\,,
  }
  \\ \ \\
  \displaystyle{
\{p_i,q_j\}=\de_{ij}\,,\qquad
\{p_i,p_j\}=\{q_i,q_j\}=0\,,\qquad
\{p_i,A_{jk}\}=\{q_i,A_{jk}\}=0\,.
  }
\end{array}
\eq
Thus, before the reduction by $\mD$, the spin variables lie on a single
coadjoint orbit $\mO^{\hbox{\tiny{$n$}}}$ of ${\rm GL}_n$.
Imposing the first-class constraints
\beq\label{t536}
A_{11}=\cdots=A_{nn}=\chi\,,\qquad \chi={\rm const}\,,
\eq
and quotienting by the diagonal action of $\mD$ gives the phase space
\beq\label{t537}
T^*\mC^n\times\mO^{\hbox{\tiny{$n$}}}//\mD\,.
\eq
On a fixed orbit, $\chi=\tr(A)/n$.

\paragraph{Equations of motion.}
The equations below are written before imposing (\ref{t536}), with
the convention $\frac{d}{dt}*=\{H^{\hbox{\tiny{1-spin}}},*\}$.
Substitution of $B=A$ into (\ref{t518})-(\ref{t519}) gives
\beq\label{t538}
\dot A_{ii}=0\,,\qquad i=1,\ldots,n\,.
\eq
For $i\neq j$, both (\ref{t516}) and (\ref{t517}) give the same equation:
\beq\label{t539}
\begin{array}{c}
  \displaystyle{
\dot A_{ij}=\sum_{k:k\neq i,j}^n A_{ik}A_{kj}
\Big(\wp(q_{ik})-\wp(q_{kj})+\wp(q_{ik}^{+})
-\wp(q_{kj}^{+})\Big)-
  }
  \\ \ \\
  \displaystyle{
-(A_{ii}-A_{jj})A_{ij}\Big(\wp(q_{ij})+\wp(q_{ij}^{+})\Big)+A_{ij}\Big(A_{ii}\wp(2q_i)-A_{jj}\wp(2q_j)
+\frac12\,(w_i-w_j)\Big)\,,
  }
\end{array}
\eq
where $w_i$ is defined by (\ref{t5155}). The particle equations are
\beq\label{t540}
\begin{array}{c}
  \displaystyle{
\dot q_i=p_i\,,\qquad i=1,\ldots,n\,,
  }
  \\ \ \\
  \displaystyle{
\dot p_i=\sum_{k:k\neq i}^n A_{ik}A_{ki}
\Big(\wp'(q_{ik})+\wp'(q_{ik}^{+})\Big)
+A_{ii}^2\wp'(2q_i)
+\frac12\,\sum_{a=0}^3\la_a(\la_a+A_{ii})\wp'(q_i+\Om_a)\,.
  }
\end{array}
\eq
%

\paragraph{Lax equation.}
For arbitrary diagonal entries $A_{ii}$ the equations of motion imply
\beq\label{t541}
\begin{array}{c}
  \displaystyle{
\dot L^{\hbox{\tiny{1-spin}}}(z)
=[L^{\hbox{\tiny{1-spin}}}(z),M^{\hbox{\tiny{1-spin}}}(z)]
+\hbox{unwanted term}\,,
  }
\end{array}
\eq
where
\beq\label{t542}
\begin{array}{c}
  \displaystyle{
\hbox{unwanted term}=-\frac12\,\sum_{i\neq j}^n
(A_{ii}-A_{jj})A_{ij}\Big(E_{ij}
\p_x^2\phi(z,x)\Big|_{x=q_{ij}}+E_{i\bar j}
\p_x^2\phi(z,x)\Big|_{x=q_{ij}^{+}}\Big)+
  }
  \\ \ \\
  \displaystyle{
+\frac12\,\sum_{i\neq j}^n
(A_{ii}-A_{jj})A_{ij}\Big(E_{\bar i\bar j}
\p_x^2\phi(-z,x)\Big|_{x=q_{ij}}+E_{\bar i j}
\p_x^2\phi(-z,x)\Big|_{x=q_{ij}^{+}}\Big)\,.
  }
\end{array}
\eq
%
The constraints (\ref{t536}) are preserved by (\ref{t538}) and make
the unwanted term vanish. On this invariant constraint surface,
(\ref{t541}) is the usual Lax equation.

\subsection{Reduction to the spinless \texorpdfstring{${\rm BC}_n$}{BCn} Calogero--Inozemtsev model}
We now recover the spinless Calogero--Inozemtsev model (\ref{t118}).
We retain $\nu_a$ as the parameters of the original $K$-matrix
and $\la_a=\la_a(\nu)$ as defined in (\ref{t2641}). The reduction
shifts the physical boundary couplings to $\la_a+g/2$.
In the Lax pair below we write
$v(z,q)=v(z,q|\la+g/2)$ and
$v'(z,q)=\p_qv(z,q|\la+g/2)$; the shift of the four components
of $\la$ is understood componentwise. No redefinition of $\nu_a$
or $\bar\nu_a$ is made.
The Lax representation with a spectral parameter is due to
Takasaki \cite{Ta}; see also \cite{BoSa}:
  \beq\label{t560}
  \begin{array}{c}
  \displaystyle{
L_{ij}(z)=\delta_{ij}p_i+g(1-\delta_{ij})\phi(z,q_{ij})\,,
 }
 \end{array}
 \eq
  \beq\label{t561}
  \begin{array}{c}
  \displaystyle{
L_{i{\bar j}}(z)=\delta_{ij}v(z,q_i)+g(1-\delta_{ij})\phi(z,q^+_{ij})\,,
 }
 \end{array}
 \eq
  \beq\label{t562}
  \begin{array}{c}
  \displaystyle{
L_{{\bar i}j}(z)=-\delta_{ij}v(-z,q_i)-g(1-\delta_{ij})\phi(-z,q^+_{ij})\,,
 }
 \end{array}
 \eq
  \beq\label{t563}
  \begin{array}{c}
  \displaystyle{
L_{{\bar i}{\bar j}}(z)=-\delta_{ij}p_i-g(1-\delta_{ij})\phi(-z,q_{ij})
 }
 \end{array}
 \eq
and
  \beq\label{t564}
  \begin{array}{c}
  \displaystyle{
M_{ij}(z)=\delta_{ij}a_i+g(1-\delta_{ij})f(z,q_{ij})\,,
 }
 \end{array}
 \eq
  \beq\label{t565}
  \begin{array}{c}
  \displaystyle{
M_{i{\bar j}}(z)=\frac12\,\delta_{ij}v'(z,q_i)+g(1-\delta_{ij})f(z,q^+_{ij})\,,
 }
 \end{array}
 \eq
  \beq\label{t566}
  \begin{array}{c}
  \displaystyle{
M_{{\bar i}j}(z)=\frac12\,\delta_{ij}v'(-z,q_i)+g(1-\delta_{ij})f(-z,q^+_{ij})\,,
 }
 \end{array}
 \eq
  \beq\label{t567}
  \begin{array}{c}
  \displaystyle{
M_{{\bar i}{\bar j}}(z)=\delta_{ij}a_i+g(1-\delta_{ij})f(-z,q_{ij})\,,
 }
 \end{array}
 \eq
where
  \beq\label{t568}
  \begin{array}{c}
  \displaystyle{
a_i=g\sum\limits_{k:k\neq i}^n\Big(\wp(q_{i}-q_k)+\wp(q_i+q_k)\Big)
+\frac12\sum\limits_{a=0}^3(\la_a+g/2)\wp(q_i+\Omega_a)\,.
 }
 \end{array}
 \eq

\paragraph{Minimal orbit and reduction.}
For $g\neq0$, choose in (\ref{t537}) the rank-one semisimple
orbit\footnote{The factorization $A_{ij}=\xi_i\eta_j$ is defined up to the common rescaling
$(\xi,\eta)\mapsto(c\xi,c^{-1}\eta)$, $c\in\mC^\times$.}
\beq\label{t569}
\begin{array}{c}
  \displaystyle{
\mO_g=\Big\{A=\xi\eta^{\mathsf T}\,:\,
\eta^{\mathsf T}\xi=ng\Big\}\,,\qquad
A_{ij}=\xi_i\eta_j\,,\qquad \xi,\eta\in\mC^n\,.
  }
\end{array}
\eq
Its eigenvalues are $0,\ldots,0,ng$, and $\dim\mO_g=2n-2$.
The canonical brackets
\beq\label{t570}
\{\xi_i,\eta_j\}=\de_{ij}\,,\qquad
\{\xi_i,\xi_j\}=\{\eta_i,\eta_j\}=0
\eq
induce exactly the brackets for $A$ in (\ref{t535}). Fixing
$\sum_i\xi_i\eta_i=ng$ and quotienting by the common rescaling
gives the orbit (\ref{t569}). The variables $q_i,p_i$ commute with
$\xi_j,\eta_j$.

The Cartan subgroup acts by $A\mapsto DAD^{-1}$, or equivalently
by $\xi\mapsto D\xi$ and $\eta\mapsto D^{-1}\eta$, where $D$ is diagonal.
The moment-map constraints (\ref{t536}) become
\beq\label{t571}
A_{ii}=\xi_i\eta_i=g\,,\qquad
\eta_i=\frac{g}{\xi_i}\,,\qquad
A_{ij}=g\frac{\xi_i}{\xi_j}\,,\qquad i,j=1,\ldots,n\,.
\eq
%
Since $g\neq0$, all $\xi_i$ are nonzero on this level set.
The diagonal gauge transformation
\beq\label{t572}
\begin{array}{c}
  \displaystyle{
D={\rm diag}(\xi_1^{-1},\ldots,\xi_n^{-1})\,,\qquad
A\to DAD^{-1}=g\,e e^{\mathsf T}\,,\qquad
e=(1,\ldots,1)^{\mathsf T}
  }
\end{array}
\eq
therefore fixes all spin variables: $A_{ij}\to A_{ij}=g$.
The spin quotient is a point, and the reduced phase space is locally
\beq\label{t573}
\Big(T^*\mC^n\times\mO_g\Big)//\mD_{\rm eff}
\simeq T^*\mC^n\,,\qquad
\mD_{\rm eff}=\mD/\mC^*\,.
\eq
The dimension of the spin quotient (the spin part of the phase space) is
$(2n-2)-2(n-1)=0$. Since the constraints and the gauge conditions
involve only spin variables, the reduced brackets of $p_i,q_i$
remain canonical, as in (\ref{t515}).

\paragraph{The reduced Hamiltonian.}
The first component of (\ref{a21}), with its two arguments interchanged,
and the symmetry of the Kronecker function give
\beq\label{t574}
\phi(z,2q)=\frac12\,\sum_{a=0}^3\vf_a(2z,q+\Om_a)\,.
\eq
Consequently, the shifted boundary couplings satisfy
\beq\label{t575}
\Big(\la_a+\frac{g}{2}\Big)^2=\la_a(\la_a+g)+\frac{g^2}{4}\,,\qquad a=0,1,2,3\,.
\eq
Using the definition (\ref{a45}) and differentiating with respect to $q$
yields
\beq\label{t576}
\begin{array}{c}
  \displaystyle{
v(z,q|\la)+g\phi(z,2q)=v(z,q|\la+g/2)\,,
  }
  \\ \ \\
  \displaystyle{
\frac12\,v'(z,q|\la)+g f(z,2q)=\frac12\,v'(z,q|\la+g/2)\,.
  }
\end{array}
\eq
These identities also hold with $z$ replaced by $-z$.

On the constraint surface (\ref{t571}), $A_{ij}A_{ji}=g^2$,
$\tr(A)=ng$, and $\tr(A^2)=n^2g^2$.
Substituting into (\ref{t533}) and using (\ref{a24}) gives
\beq\label{t577}
H^{\hbox{\tiny{1-spin}}}\Big|_{(\ref{t571})}
=H^{\hbox{\tiny{CI}}}
+\frac{m(0)}{2}\,\Big(2n^2g^2+ng\sum_{a=0}^3\la_a(\nu)\Big)\,,
\eq
where
\beq\label{t578}
\begin{array}{c}
  \displaystyle{
H^{\hbox{\tiny{CI}}}=\frac12\,\sum_{i=1}^n p_i^2
-g^2\sum_{i<j}^n\Big(\wp(q_i-q_j)+\wp(q_i+q_j)\Big)-
\frac12\,\sum_{i=1}^n\sum_{a=0}^3\big(\la_a+g/2\big)^2\wp(q_i+\Om_a)\,.
  }
\end{array}
\eq
Indeed, the boundary terms combine because
$g^2\wp(2q)=\frac{g^2}{4}\sum_a\wp(q+\Om_a)$ and
$\la_a(\la_a+g)+g^2/4=\big(\la_a+g/2\big)^2$.
Thus, (\ref{t578}) has the form (\ref{t118}) with its physical
boundary couplings replaced by $\la_a+g/2$. Here $\la$ is still
determined by the original $K$-matrix parameters $\nu$ through
(\ref{t2641}); this replacement does not identify $\bar\nu_a$
with $\la_a+g/2$ for the same original parameters.
The second term on the right-hand side of (\ref{t577}) is a constant.
The parameters $g,\nu_0,\nu_1,\nu_2,\nu_3$ remain independent.
The reduced equations are
\beq\label{t579}
\begin{array}{c}
  \displaystyle{
\dot q_i=p_i\,,\qquad
\dot p_i=g^2\sum_{k:k\neq i}^n
\Big(\wp'(q_i-q_k)+\wp'(q_i+q_k)\Big)
+\frac12\,\sum_{a=0}^3\big(\la_a+g/2\big)^2\wp'(q_i+\Om_a)\,.
  }
\end{array}
\eq

\paragraph{Gauge transformation of the Lax pair.}
On the level set (\ref{t571}), the diagonal coefficients (\ref{t568})
can also be written in terms of the original scalar parameters as
\beq\label{t580}
\begin{array}{c}
  \displaystyle{
a_i=g\sum_{k:k\neq i}^n
\Big(\wp(q_i-q_k)+\wp(q_i+q_k)\Big)
+g\wp(2q_i)+\frac12\,w_i\,,
  }
\end{array}
\eq
where $w_i$ is given by (\ref{t5155}). The equality between
(\ref{t568}) and (\ref{t580}) follows from (\ref{t575}) and
(\ref{a24}). Since $A_{ik}A_{kj}=gA_{ij}$ on (\ref{t571}),
the spin equations (\ref{t538})-(\ref{t539}) reduce to
\beq\label{t581}
\dot A_{ij}=(a_i-a_j)A_{ij}\,,\qquad
\dot A=[{\rm diag}(a_1,\ldots,a_n),A]\,.
\eq
In this calculation the $k=i,j$ terms cancel in the difference of
the two sums in (\ref{t580}), by the evenness of $\wp$.
Equation (\ref{t581}) determines the time derivatives of the ratios
$\xi_i/\xi_j$. The common rescaling of $\xi$ can be chosen so that
$\dot\xi_i/\xi_i=a_i$. For the gauge matrix (\ref{t572}) and its
lift to the $2n\times2n$ Lax matrices, this gives
\beq\label{t582}
\begin{array}{c}
  \displaystyle{
G=\mats{D}{0}{0}{D}\,,\qquad
-\dot{G}G^{-1}
={\rm diag}(a_1,\ldots,a_n,a_1,\ldots,a_n)\,.
  }
\end{array}
\eq
The transformed Lax pair is
%
\beq\label{t583}
\begin{array}{c}
  \displaystyle{
L^{\hbox{\tiny{1-spin}}}(z)\to
G L^{\hbox{\tiny{1-spin}}}(z) G^{-1}\Big|_{\hbox{\tiny{on-shell (\ref{t571})}}}=L^{\hbox{\tiny{red}}}(z)\,,
  }
  \\ \ \\
  \displaystyle{
M^{\hbox{\tiny{1-spin}}}(z)\to \Big(G M^{\hbox{\tiny{1-spin}}}(z)G^{-1}
-\dot{G}G^{-1}\Big)\Big|_{\hbox{\tiny{on-shell (\ref{t571})}}}=M^{\hbox{\tiny{red}}}(z)\,.
  }
\end{array}
\eq
 By (\ref{t571}), the unwanted term (\ref{t542})
vanishes.
Finally, remove the scalar spectral terms:
\beq\label{t584}
\begin{array}{c}
  \displaystyle{
L^{\hbox{\tiny{red}}}(z)\to L^{\hbox{\tiny{red}}}(z)-gE_1(z)1_{2n}\,,
  }
  \\ \ \\
  \displaystyle{
M^{\hbox{\tiny{red}}}(z)\to  M^{\hbox{\tiny{red}}}(z)-g m(z)1_{2n}\,,\qquad
m(z)=\frac{E_1^2(z)-\wp(z)}{2}\,.
  }
\end{array}
\eq
Here $E_1(-z)=-E_1(z)$ and $m(-z)=m(z)$, so the subtracted terms
are scalar in the full $2n$-dimensional space.
The diagonal entries become
\beq\label{t585}
\begin{array}{c}
  \displaystyle{
L^{\hbox{\tiny{red}}}_{ii}(z)=p_i\,,\qquad
L^{\hbox{\tiny{red}}}_{\bar i\bar i}(z)=-p_i\,,\qquad
M^{\hbox{\tiny{red}}}_{ii}(z)
=M^{\hbox{\tiny{red}}}_{\bar i\bar i}(z)=a_i\,.
  }
\end{array}
\eq
Every off-diagonal spin coefficient becomes $g$ by (\ref{t572}),
while (\ref{t576}) transforms the entries with indices $i,\bar i$
and $\bar i,i$ into the corresponding $v$- and $v'$-terms.
Thus, all entries of (\ref{t584}) coincide exactly with
(\ref{t560})-(\ref{t567}), with $a_i$ given by (\ref{t568}), and
\beq\label{t586}
\dot L^{\hbox{\tiny{red}}}(z)
=[L^{\hbox{\tiny{red}}}(z),M^{\hbox{\tiny{red}}}(z)]\,.
\eq
In particular, the diagonal term $a_i$ in Takasaki's $M$-matrix
arises from the time derivative of the gauge transformation.

\section{Special case: \texorpdfstring{$n$}{n} interacting tops}\label{sec6}
\setcounter{equation}{0}

We consider the rank-one case of the one-orbit model
(\ref{t451})--(\ref{tt458}). Let $\mS$ in (\ref{t453}) have rank one,
so that $\mA$ also has rank one. The Lax pair is that of the
one-orbit model, obtained from (\ref{tt402})--(\ref{tt410})
by replacing $\mB$ with $\mA$.

\subsection{Hamiltonian description}
As for the $A$-type model (\ref{t370})--(\ref{t380}), parameterize $\mA$ as
  \beq\label{t601}
  \begin{array}{c}
  \displaystyle{
\mA^{ij}_{ab}=\xi^i_a\eta^j_b\,,\quad i,j=1,...,n;\ a,b=1,...,N\,.
 }
 \end{array}
 \eq
This corresponds to the minimal coadjoint orbit
$\mO^{\hbox{\tiny{min,$n\!N$}}}$, of dimension
  \beq\label{t602}
  \begin{array}{c}
  \displaystyle{
\dim\mO^{\hbox{\tiny{min,$n\!N$}}}=2nN-2\,.
 }
 \end{array}
 \eq
The Poisson brackets for the components of $\xi^i$ and $\eta^j$ are canonical:
  \beq\label{t603}
  \begin{array}{c}
  \displaystyle{
\{\xi^i_a,\eta^j_b\}=\delta^{ij}\delta_{ab}\,,\quad i,j=1,...,n;\ a,b=1,...,N\,.
 }
 \end{array}
 \eq
The subtraction of two dimensions in (\ref{t602}) comes from Hamiltonian
reduction by the common rescaling
$(\xi,\eta)\to(\epsilon\xi,\epsilon^{-1}\eta)$, $\epsilon\in\mC^*$.
This imposes $\tr(\mA)=ng$ and quotients by the $\mC^*$ action.

The subsequent reduction
$\mO^{\hbox{\tiny{min,$n\!N$}}}\to
\mO^{\hbox{\tiny{min,$n\!N$}}}//\mD$ decreases the dimension by
$2n-2$.\footnote{The common $\mC^*$ acts trivially by conjugation,
so the effective group is $\mD/\mC^*$, as in (\ref{t573}).}
The phase space is therefore
  \beq\label{t604}
  \begin{array}{c}
  \displaystyle{
T^*\mC^n\times\mO^{\hbox{\tiny{min,$n\!N$}}}//\mD\,,\qquad
\dim\Big(T^*\mC^n\times\mO^{\hbox{\tiny{min,$n\!N$}}}//\mD\Big)=2nN\,.
 }
 \end{array}
 \eq
The Poisson structure for the spin part of the phase space is
  \beq\label{t6041}
  \begin{array}{c}
  \displaystyle{
\{\mA^{ii}_1,\mA^{jj}_2\}
   =\delta^{ij}[P_{12},\mA^{ii}_1]\quad {\rm or}\quad
   \{\mA^{ii}_{ab},\mA^{jj}_{cd}\}=\delta^{ij}\Big(\delta_{ad}\mA^{ii}_{cb}-\delta_{bc}\mA^{ii}_{ad}\Big)\,.
 }
 \end{array}
 \eq

Applying the same rearrangement as in (\ref{t374}) to the one-orbit
Hamiltonian (\ref{tt455}) gives
\beq\label{t605}
\begin{array}{c}
  \displaystyle{
    H^{\hbox{\tiny{tops}}}=\frac{1}{2}\,\sum_{i=1}^{n}p_i^2-\frac{1}{2}\,\sum_{i=1}^{n}u_N(q_i|\nu)
    +
  }
  \\ \ \\
  \displaystyle{
  +\frac{1}{2}\,\sum_{i=1}^{n}\Big(
  \tr_{23}\Big(m_{23}(0)\mA_2^{ii}\mA_3^{ii}\Big)
   +\tr_{23}\Big(\ti{F}_{23}^{0}(2q_i)\mA_2^{ii}\mA_3^{ii}\Big)
    +
    \tr\Big(\mA^{ii}\ti{Y}^{(0)}(q_i)\Big)\Big)+
  }
  \\ \ \\
  \displaystyle{
    +\frac{1}{2}\,\sum_{i\neq j}^{n}\Big( \tr_{23}\Big(F_{23}^{0}(q_i-q_j)\mA_2^{ii}\mA_3^{jj}\Big)
    +\tr_{23}\Big(\ti{F}_{23}^{0}(q_i+q_j)
    \mA_2^{ii}\mA_3^{jj}\Big) \Big),
  }
\end{array}
\eq
where we used the identity valid for every rank-one matrix $\mA^{ii}$:
\beq\label{t6052}
\begin{array}{c}
  \displaystyle{
P_{23}\mA_2^{ii}\mA_3^{ii}=\mA_2^{ii}\mA_3^{ii}\,.
  }
\end{array}
\eq
This Hamiltonian describes $n$ interacting ${\rm gl}_N$ tops attached
to $n$ particles. Each top is described by
$\mO^{\hbox{\tiny{min,$N$}}}$, as in (\ref{t380}).
For the pair interactions, introduce the linear operators
\beq\label{t6051}
\begin{array}{c}
  \displaystyle{
 {\mathbb F}_x(T)=\tr_2\Big(F_{12}^{0}(x)T_2\Big)\,,\qquad
 \ti{\mathbb F}_x(T)=\tr_2\Big(\ti{F}_{12}^{0}(x)T_2\Big)\,.
  }
\end{array}
\eq
Unlike ${\mathbb{FP}}_x$ and ${\mathbb{{\ti F}P}}_x$ in (\ref{t4214}), these operators
do not contain $P_{12}$: the permutation has already been eliminated from the
pair interactions by the rank-one substitution (\ref{t601}). Keeping
${\mathbb J}$ as defined in (\ref{t4214}), the Hamiltonian
(\ref{t605}) takes the form
\beq\label{t606}
\begin{array}{c}
  \displaystyle{
    H^{\hbox{\tiny{tops}}}=\frac{1}{2}\,\sum_{i=1}^{n}p_i^2-\frac{1}{2}\,\sum_{i=1}^{n}u_N(q_i|\nu)
    +
  }
  \\ \ \\
  \displaystyle{
  +\frac{1}{2}\,\sum_{i=1}^{n}\Big(
  \tr\Big({\mathbb J}(\mA^{ii})\mA^{ii}\Big)
   +\tr\Big({\mathbb{{\ti F}}}_{2q_i}(\mA^{ii})\mA^{ii}\Big)
    +
    \tr\Big(\mA^{ii}\ti{Y}^{(0)}(q_i)\Big)\Big)+
  }
  \\ \ \\
  \displaystyle{
    +\frac{1}{2}\,\sum_{i\neq j}^{n}\Big(\tr\Big({{\mathbb F}}_{q_i-q_j}(\mA^{jj})\mA^{ii}\Big)+
     \tr\Big({\ti{\mathbb F}}_{q_i+q_j}(\mA^{jj})\mA^{ii}\Big)\Big)\,.
  }
\end{array}
\eq
The equations of motion are
\beq\label{t608}
\begin{array}{c}
 \displaystyle{
 \dot{\mA}^{ii}
 =\Big[\mA^{ii},\,{\mathbb J}(\mA^{ii})
       +{\mathbb{{\ti F}}}_{2q_i}(\mA^{ii})
       +\frac12\,\ti{Y}^{(0)}(q_i)\Big]
 +\sum_{k:k\neq i}^{n}\Big[\mA^{ii},\,
       {\mathbb F}_{q_i-q_k}(\mA^{kk})
       +\ti{\mathbb F}_{q_i+q_k}(\mA^{kk})\Big]\,,
 }
\end{array}
\eq
\beq\label{t609}
\begin{array}{c}
 \displaystyle{
 \dot p_i=\frac12\,\partial_{q_i}u_N(q_i|\nu)
    -\frac12\,\tr\Big(\mA^{ii}\partial_{q_i}\ti{Y}^{(0)}(q_i)\Big)
    -\tr_{12}\Big(\p\ti{F}_{12}^{0}(2q_i)
                      \mA^{ii}_1\mA^{ii}_2\Big)
 }
 \\ \ \\
 \displaystyle{
 \hphantom{\dot p_i=}
    -\sum_{k:k\neq i}^{n}\tr_{12}\Big(
       \big(\p F_{12}^{0}(q_i-q_k)
          +\p\ti{F}_{12}^{0}(q_i+q_k)\big)
        \mA^{ii}_1\mA^{kk}_2\Big)\,.
 }
\end{array}
\eq
and ${\dot q}_i=p_i$. Here $\p F_{12}^{0}(x)=\p_xF_{12}^{0}(x)$ and similarly for ${\ti F}_{12}^{0}(x)$.

\subsection{\texorpdfstring{$N=2$}{N = 2} XYZ case: interacting Euler tops}
For $N=2$, the Hamiltonian (\ref{t605}) describes interacting Euler tops
with position-dependent self-interaction and a boundary field. Below,
all its terms are evaluated in the Pauli basis (\ref{t259}) using the
$R$- and $K$-matrices (\ref{t260}) and (\ref{t261}).
Since $J=\sigma_0$, one has $\ti R=R$ and $\ti F^0=F^0$.

Write the diagonal blocks as\footnote{$\tr(\sigma_a\sigma_b)=2\delta_{ab}$, including $a,b=0$.}
\beq\label{t610}
\begin{array}{c}
 \displaystyle{
 \mA^{ii}=s_0^i\sigma_0+\sum_{a=1}^3s_a^i\sigma_a\,,\qquad
 s_a^i=\frac12\,\tr\Big(\mA^{ii}\sigma_a\Big)\,,\quad a=0,1,2,3\,.
 }
\end{array}
\eq
The Poisson brackets (\ref{t6041}) give
\beq\label{t611}
\begin{array}{c}
 \displaystyle{
 \{s_a^i,s_b^j\}=\imath\delta^{ij}\sum_{c=1}^3
 \epsilon_{abc}s_c^i\,,\qquad \{s_0^i,s_a^j\}=0\,,\qquad a,b=1,2,3\,.
 }
\end{array}
\eq
%
The rank-one condition implies\footnote{On the reduction level $\tr(\mA^{ii})=g$ one has $s_0^i=g/2$ for every
$i$.}
$
 \det\mA^{ii}=(s_0^i)^2-\sum_{a=1}^3(s_a^i)^2=0\,.
$

\paragraph{Coefficients of the \texorpdfstring{$R$}{R}-matrix.}
Recall that $\Om_0=0$, $\Om_1=1/2$, $\Om_2=(1+\tau)/2$ and
$\Om_3=\tau/2$. All coefficients below are expressed through the
functions $\vf_a$, $E_1$, $E_2$ and $\wp$ used in Section \ref{sec2}.
We use the following abbreviations and identities, which follow
from (\ref{a30}) and (\ref{a31}):
\beq\label{t612}
\begin{array}{c}
 \displaystyle{
 c_\tau=\frac{\vth'''(0)}{3\vth'(0)}\,,\qquad
 E_2(x)=\wp(x)-c_\tau\,,\qquad
 E_1(\Om_a)=-2\pi\imath\p_\tau\Om_a\,,
 }
 \\ \ \\
 \displaystyle{
 \vf_a(x):=\vf_a(x,\Om_a)\,,\qquad
 \vf'_a(x):=\p_x\vf_a(x)\,,\qquad a=1,2,3\,,
 }
 \\ \ \\
 \displaystyle{
 \vf'_a(x)
   =\vf_a(x)\Big(E_1(x+\Om_a)-E_1(x)-E_1(\Om_a)\Big)=-\frac{\vf_1(x)\vf_2(x)\vf_3(x)}{\vf_a(x)}\,.
 }
\end{array}
\eq
The prime denotes differentiation with respect to the argument $x$.
The functions $\vf_a(x)$ are odd, so $\vf'_a(x)$ are even.
Specializing (\ref{t2431}) and (\ref{t2392}) to $N=2$ gives
\beq\label{t613}
\begin{array}{c}
 \displaystyle{
 m_{12}(0)=\frac14\Big(c_\tau\sigma_0\otimes\sigma_0
             -\sum_{a=1}^3E_2(\Om_a)
                            \sigma_{4-a}\otimes\sigma_{4-a}\Big)=\frac{c_\tau}{2}\,P_{12}-\frac14\sum_{a=1}^3\wp(\Om_a)
                            \sigma_{4-a}\otimes\sigma_{4-a}\,,
 }
 \\ \ \\
 \displaystyle{
 F_{12}^{0}(x)=\ti{F}_{12}^{0}(x)
   =\frac12\Big(-E_2(x)\sigma_0\otimes\sigma_0
             +\sum_{a=1}^3\vf'_a(x)
                        \sigma_{4-a}\otimes\sigma_{4-a}\Big)\,.
 }
\end{array}
\eq
For a matrix $T=t_0\sigma_0+\sum_{a=1}^3t_a\sigma_a$, the operators
entering (\ref{t606}) are therefore
\beq\label{t614}
\begin{array}{c}
 \displaystyle{
 {\mathbb J}(T)=\frac{c_\tau}{2}\,t_0\sigma_0
       -\frac12\sum_{a=1}^3E_2(\Om_a)t_{4-a}\sigma_{4-a}\,,
 }
 \\ \ \\
 \displaystyle{
 {\mathbb F}_x(T)=\ti{\mathbb F}_x(T)
       =-E_2(x)t_0\sigma_0
        +\sum_{a=1}^3\vf'_a(x)
                         t_{4-a}\sigma_{4-a}\,.
 }
\end{array}
\eq
\paragraph{Quadratic spin terms.}
The Euler-top contribution is
\beq\label{t615}
\begin{array}{c}
 \displaystyle{
 \frac12\,\tr\Big({\mathbb J}(\mA^{ii})\mA^{ii}\Big)
   =\frac{c_\tau}{2}(s_0^i)^2
      -\frac12\sum_{a=1}^3E_2(\Om_a)(s_{4-a}^i)^2
   =-\frac12\sum_{a=1}^3\wp(\Om_a)(s_{4-a}^i)^2
       +c_\tau(s_0^i)^2\,.
 }
\end{array}
\eq
Up to a constant on the orbit,
this is the Euler-top Hamiltonian with quadratic coefficients
$-\wp(\Om_a)/2$ for the components $s_{4-a}^i$.

The self-interaction term follows directly from the second line of
(\ref{t614}):
\beq\label{t616}
\begin{array}{c}
 \displaystyle{
 \frac12\,\tr\Big(\ti{\mathbb F}_{2q_i}(\mA^{ii})\mA^{ii}\Big)
   =-E_2(2q_i)(s_0^i)^2
     +\sum_{a=1}^3\vf'_a(2q_i)
                       (s_{4-a}^i)^2\,.
 }
\end{array}
\eq
For the interaction between distinct tops, we obtain
\beq\label{t617}
\begin{array}{c}
 \displaystyle{
 \frac12\,\tr\Big({\mathbb F}_x(\mA^{jj})\mA^{ii}\Big)
  =\frac12\,\tr\Big(\ti{\mathbb F}_x(\mA^{jj})\mA^{ii}\Big)
  =-E_2(x)s_0^is_0^j
    +\sum_{a=1}^3\vf'_a(x)
                       s_{4-a}^is_{4-a}^j\,,
 }
\end{array}
\eq
with $x=q_i-q_j$ and $x=q_i+q_j$, respectively. Thus the ordered sum
in (\ref{t606}) can be replaced by twice the sum over $i<j$.

\paragraph{Boundary terms.}
The constants in (\ref{t261}) are explicitly
\beq\label{t618}
\begin{array}{c}
 \displaystyle{
 \rho_0=\bar\nu_0=\frac12(\nu_0+\nu_1+\nu_2+\nu_3)\,,\qquad
 \rho_1=-\bar\nu_1=\frac12(-\nu_0-\nu_1+\nu_2+\nu_3)\,,
 }
 \\ \ \\
 \displaystyle{
 \rho_2=\bar\nu_2=\frac12(\nu_0-\nu_1+\nu_2-\nu_3)\,,\qquad
 \rho_3=\bar\nu_3=\frac12(\nu_0-\nu_1-\nu_2+\nu_3)\,.
 }
\end{array}
\eq
Consequently, (\ref{t256}) becomes
\beq\label{t619}
 u_2(q|\nu)=\sum_{k=0}^3\rho_k^2\wp(q+\Om_k)=\sum_{k=0}^3{\bar\nu}_k^2\wp(q+\Om_k)\,.
\eq
Expanding the first expression in (\ref{t261}) at $z=0$ gives
$\ti K^z(q)=\rho_0\sigma_0/z+\ti K^{(0)}(q)+O(z)$, where
\beq\label{t621}
\begin{array}{c}
 \displaystyle{
 \ti K^{(0)}(q)=\rho_0E_1(q)\sigma_0
          +\sum_{a=1}^3\rho_a\vf_a(q-\Om_a)\sigma_{4-a}\,,
 }
 \\ \ \\
 \displaystyle{
 \ti Y^{(0)}(q)=\p_q\ti K^{(0)}(q)
       =-\rho_0E_2(q)\sigma_0
          +\sum_{a=1}^3\rho_a\vf'_a(q-\Om_a)\sigma_{4-a}\,.
 }
\end{array}
\eq
For comparison with the coefficients of (\ref{t261}), quasiperiodicity
and $E_1(\Om_a)=-2\pi\imath\p_\tau\Om_a$ give
\beq\label{t6211}
\begin{array}{c}
 \displaystyle{
 \vf'_a(q-\Om_a)
   =-\vf_a(q-\Om_a)
        \Big(E_1(q+\Om_a)-E_1(q)-E_1(\Om_a)\Big)\,.
        }
\end{array}
\eq
Thus the boundary contribution in (\ref{t606}) is
\beq\label{t622}
 \frac12\,\tr\Big(\mA^{ii}\ti Y^{(0)}(q_i)\Big)
   =-\rho_0E_2(q_i)s_0^i
      +\sum_{a=1}^3\rho_a\vf'_a(q_i-\Om_a)s_{4-a}^i\,.
\eq
The ordering of the components $s_{4-a}^i$ is determined by the Pauli
matrices $\sigma_{4-a}$ in (\ref{t261}); the coupling constants are
$\rho_a$, as defined in (\ref{t618}).

\paragraph{Explicit expression for the Hamiltonian.}
Collecting (\ref{t615})--(\ref{t622}) gives the following exact expression
on the rank-one orbits:
\beq\label{t623}
\begin{array}{c}
 \displaystyle{
 H^{\hbox{\tiny{Euler tops}}}
  =\frac12\sum_{i=1}^np_i^2
    -\frac12\sum_{i=1}^n\sum_{k=0}^3\rho_k^2\wp(q_i+\Om_k)-
 }
 \\ \ \\
 \displaystyle{
-\sum_{i=1}^n\Big(
       \wp(2q_i)(s_0^i)^2+\rho_0\wp(q_i)s_0^i\Big)-
       2\sum\limits_{i>j}^n s_0^is_0^j
       \Big(\wp(q_i-q_j)+\wp(q_i+q_j)\Big)
  }
 \\ \ \\
 \displaystyle{
        +\sum_{i=1}^n\sum_{a=1}^3\Big(
       \vf'_a(2q_i)-\frac12\,\wp(\Om_a)\Big)
                      (s_{4-a}^i)^2 +\sum_{i=1}^n\sum_{a=1}^3\rho_a\vf'_a(q_i-\Om_a)s_{4-a}^i
 }
 \\ \ \\
 \displaystyle{
 +2\sum_{1\leq i<j\leq n}\sum_{a=1}^3\Big(
       \vf'_a(q_i-q_j)+\vf'_a(q_i+q_j)\Big)
                       s_{4-a}^is_{4-a}^j+C_\tau\,.
 }
\end{array}
\eq
The resulting additive term, independent of the positions and the
noncentral spin components, is
\beq\label{t624}
\begin{array}{c}
 \displaystyle{
 C_\tau=c_\tau\Big(2\Big(\sum_{i=1}^ns_0^i\Big)^2
                         +\rho_0\sum_{i=1}^ns_0^i\Big)\,.
 }
\end{array}
\eq
It is constant on the chosen reduction level. The pure Euler term is
$-\frac12\sum_{i,a}\wp(\Om_a)(s_{4-a}^i)^2$; the other spin terms describe the
position-dependent self-interaction, the boundary field and the pair
interactions.

\subsection{\texorpdfstring{$N=2$}{N = 2} XXZ case: trigonometric model}
The trigonometric XXZ $R$-matrix associated with the quantum affine
algebra ${\hat{\mathcal U}}_q({\rm gl}_2)$ has the form \cite{Jimbo}:
\beq\label{t640}
\begin{array}{c}
 \displaystyle{
 R_{12}^{\hbar}(z)
 =\frac12\Big(\pi\cot(\pi\hbar)+\pi\cot(\pi z)\Big)
       \Big(\sigma_0\otimes\sigma_0+\sigma_3\otimes\sigma_3\Big)+
 }
 \\ \ \\
 \displaystyle{
 +\frac{\pi}{2\sin(\pi\hbar)}
       \Big(\sigma_0\otimes\sigma_0-\sigma_3\otimes\sigma_3\Big)
 +\frac{\pi}{2\sin(\pi z)}
       \Big(\sigma_1\otimes\sigma_1+\sigma_2\otimes\sigma_2\Big)\,.
 }
\end{array}
\eq
It is obtained from Baxter's eight-vertex $R$-matrix in the limit
${\rm Im}\,\tau\to\infty$. The corresponding $K$-matrix is
\beq\label{t641}
\begin{array}{c}
 \displaystyle{
 \ti K^{\hbar}(z)
 =\pi\nu_0\Big(\cot(\pi z)+\cot(\pi\hbar)\Big)\sigma_0
 +\pi\nu_3\Big(\tan(\pi z)+\tan(\pi\hbar)\Big)\sigma_3+
 }
 \\ \ \\
 \displaystyle{
 +\pi\sin\big(\pi(\hbar+z)\big)
       \Big(\nu_1\sigma_1+\nu_2\sigma_2\Big)\,.
 }
\end{array}
\eq
Up to multiplication by a scalar function, this is the trigonometric
$K$-matrix obtained by T. Inami and H. Konno in \cite{IK}.
Throughout this subsection, the four parameters $\nu_a$ are those
appearing directly in (\ref{t641}).
As in the XYZ case, $J=\sigma_0$, so $\ti R=R$ and $\ti F^0=F^0$.
We use the spin variables (\ref{t610}) with the Poisson brackets
(\ref{t6041}); in particular,
$\sum_{a=1}^3(s_a^i)^2=(s_0^i)^2$ on the rank-one orbits.

\paragraph{Coefficients of the \texorpdfstring{$R$}{R}-matrix.}
Expanding (\ref{t640}) as
$R^{\hbar}_{12}(x)=\hbar^{-1}\sigma_0\otimes\sigma_0
+r_{12}(x)+\hbar m_{12}(x)+O(\hbar^2)$ gives
\beq\label{t642}
\begin{array}{c}
 \displaystyle{
 r_{12}(x)=\frac{\pi}{2}\cot(\pi x)
       \Big(\sigma_0\otimes\sigma_0+\sigma_3\otimes\sigma_3\Big)
 +\frac{\pi}{2\sin(\pi x)}
       \Big(\sigma_1\otimes\sigma_1+\sigma_2\otimes\sigma_2\Big)\,,
 }
 \\ \ \\
 \displaystyle{
 m_{12}(x)=m_{12}(0)
 =-\frac{\pi^2}{12}\,\sigma_0\otimes\sigma_0
  -\frac{\pi^2}{4}\,\sigma_3\otimes\sigma_3\,.
 }
\end{array}
\eq
Thus $m_{12}$ is independent of $x$, and
\beq\label{t643}
\begin{array}{c}
 \displaystyle{
 F_{12}^0(x)=\ti F_{12}^0(x)
 =-\frac{\pi^2}{2\sin^2(\pi x)}\Big(
    \sigma_0\otimes\sigma_0+\cos(\pi x)
       \big(\sigma_1\otimes\sigma_1+\sigma_2\otimes\sigma_2\big)
       +\sigma_3\otimes\sigma_3
    \Big)\,.
 }
\end{array}
\eq
For $T=t_0\sigma_0+\sum_{a=1}^3t_a\sigma_a$, the corresponding
operators are
\beq\label{t644}
\begin{array}{c}
 \displaystyle{
 {\mathbb J}(T)=-\frac{\pi^2}{6}\,t_0\sigma_0
                -\frac{\pi^2}{2}\,t_3\sigma_3\,,
 }
 \\ \ \\
 \displaystyle{
 {\mathbb F}_x(T)=\ti{\mathbb F}_x(T)
 =-\frac{\pi^2}{\sin^2(\pi x)}\Big(
     t_0\sigma_0+\cos(\pi x)(t_1\sigma_1+t_2\sigma_2)+t_3\sigma_3
\Big)\,.
 }
\end{array}
\eq

\paragraph{Quadratic spin terms.}
The one-top contribution is
\beq\label{t645}
 \frac12\,\tr\Big({\mathbb J}(\mA^{ii})\mA^{ii}\Big)
 =-\frac{\pi^2}{6}\,(s_0^i)^2
  -\frac{\pi^2}{2}\,(s_3^i)^2\,.
\eq
Up to the central term, this is the Hamiltonian of an axially symmetric
Euler top. The self-interaction can be simplified using the rank-one
condition:
\beq\label{t646}
\begin{array}{c}
 \displaystyle{
 \frac12\,\tr\Big(\ti{\mathbb F}_{2q_i}(\mA^{ii})\mA^{ii}\Big)
 =-\frac{\pi^2}{\sin^2(2\pi q_i)}\Big(
     (s_0^i)^2+(s_3^i)^2
     +\cos(2\pi q_i)\big((s_1^i)^2+(s_2^i)^2\big)\Big)=
 }
 \\ \ \\
 \displaystyle{
 =-\frac{\pi^2}{2}\Big(
       \frac{(s_0^i)^2}{\sin^2(\pi q_i)}
       +\frac{(s_3^i)^2}{\cos^2(\pi q_i)}\Big)\,.
 }
\end{array}
\eq
For the interaction between distinct tops, we obtain
\beq\label{t647}
\begin{array}{c}
 \displaystyle{
 \frac12\,\tr\Big({\mathbb F}_x(\mA^{jj})\mA^{ii}\Big)
 =\frac12\,\tr\Big(\ti{\mathbb F}_x(\mA^{jj})\mA^{ii}\Big)=
 }
 \\ \ \\
 \displaystyle{
 =-\frac{\pi^2}{\sin^2(\pi x)}\Big(
       s_0^is_0^j
       +\cos(\pi x)(s_1^is_1^j+s_2^is_2^j)+s_3^is_3^j\Big)\,.
 }
\end{array}
\eq
Here $x=q_i-q_j$ or $x=q_i+q_j$.

\paragraph{Boundary terms.}
The expansion
$\ti K^{\hbar}(q)=\nu_0\sigma_0/\hbar+\ti K^{(0)}(q)+O(\hbar)$
of (\ref{t641}) gives
\beq\label{t648}
\begin{array}{c}
 \displaystyle{
 \ti K^{(0)}(q)
 =\pi\nu_0\cot(\pi q)\sigma_0+\pi\sin(\pi q)(\nu_1\sigma_1+\nu_2\sigma_2)+\pi\nu_3\tan(\pi q)\sigma_3
\,,
 }
 \\ \ \\
 \displaystyle{
 \ti Y^{(0)}(q)=\p_q\ti K^{(0)}(q)
 =\pi^2\Big(-\frac{\nu_0}{\sin^2(\pi q)}\,\sigma_0+\cos(\pi q)(\nu_1\sigma_1+\nu_2\sigma_2)
             +\frac{\nu_3}{\cos^2(\pi q)}\,\sigma_3
\Big)\,.
 }
\end{array}
\eq
To determine the scalar potential, factorize (\ref{t641}) as
\beq\label{t6481}
\begin{array}{c}
 \displaystyle{
 \ti K^{\hbar}(q)=\pi\sin\big(\pi(q+\hbar)\big)\Big(
     \frac{\nu_0\sigma_0}{\sin(\pi q)\sin(\pi\hbar)}+\nu_1\sigma_1+\nu_2\sigma_2
    +\frac{\nu_3\sigma_3}{\cos(\pi q)\cos(\pi\hbar)}
\Big)\,.
 }
\end{array}
\eq
The Pauli relations and
$\sin\big(\pi(q+\hbar)\big)\sin\big(\pi(q-\hbar)\big)
=\sin^2(\pi q)-\sin^2(\pi\hbar)$ then yield
\beq\label{t649}
 \ti K^{\hbar}(q)\ti K^{-\hbar}(q)
   =\Big(u_2(q|\nu)-w_2(\hbar|\nu)\Big)\sigma_0\,,
\eq
where we choose the common additive constant in $u_2$ and $w_2$ so that
\beq\label{t650}
 u_2(x|\nu)=w_2(x|\nu)
 =\pi^2\Big(\frac{\nu_0^2}{\sin^2(\pi x)}
           +\frac{\nu_3^2}{\cos^2(\pi x)}
           +(\nu_1^2+\nu_2^2)\sin^2(\pi x)\Big)\,.
\eq
Finally, the term linear in the spins is
\beq\label{t651}
\begin{array}{c}
 \displaystyle{
 \frac12\,\tr\Big(\mA^{ii}\ti Y^{(0)}(q_i)\Big)
 =\pi^2\Big(-\frac{\nu_0s_0^i}{\sin^2(\pi q_i)}+\cos(\pi q_i)(\nu_1s_1^i+\nu_2s_2^i)
            +\frac{\nu_3s_3^i}{\cos^2(\pi q_i)}\Big)\,.
 }
\end{array}
\eq

\paragraph{Explicit expression for the Hamiltonian.}
Substituting (\ref{t645})--(\ref{t651}) into (\ref{t606}) and combining
the scalar potential with the self-interaction and boundary terms gives
\beq\label{t652}
\begin{array}{c}
 \displaystyle{
 H^{\hbox{\tiny{XXZ tops}}}
 =\frac12\sum_{i=1}^np_i^2
  -\frac{\pi^2}{2}\sum_{i=1}^n\Big(
      \frac{(\nu_0+s_0^i)^2}{\sin^2(\pi q_i)}
     +\frac{(\nu_3-s_3^i)^2}{\cos^2(\pi q_i)}
     +(\nu_1^2+\nu_2^2)\sin^2(\pi q_i)\Big)+
 }
 \\ \ \\
 \displaystyle{
 +\pi^2\sum_{i=1}^n\cos(\pi q_i)
                    (\nu_1s_1^i+\nu_2s_2^i)
  -\frac{\pi^2}{2}\sum_{i=1}^n(s_3^i)^2-
 }
 \\ \ \\
 \displaystyle{
 -2\pi^2\sum\limits_{i>j}^n
 \frac{s_0^is_0^j+s_3^is_3^j
       +\cos\big(\pi(q_i-q_j)\big)(s_1^is_1^j+s_2^is_2^j)}
      {\sin^2\big(\pi(q_i-q_j)\big)}-
 }
 \\ \ \\
 \displaystyle{
 -2\pi^2\sum\limits_{i>j}^n
 \frac{s_0^is_0^j+s_3^is_3^j
       +\cos\big(\pi(q_i+q_j)\big)(s_1^is_1^j+s_2^is_2^j)}
      {\sin^2\big(\pi(q_i+q_j)\big)}+C_{\rm XXZ}\,.
 }
\end{array}
\eq
With the normalization (\ref{t650}), the additive term is exactly
\beq\label{t653}
 C_{\rm XXZ}=-\frac{\pi^2}{6}\sum_{i=1}^n(s_0^i)^2\,.
\eq
It is constant on the chosen orbits; on the reduction level
$s_0^i=g/2$ it equals $-n\pi^2g^2/24$.
The Hamiltonian (\ref{t652}) describes axially symmetric tops with
position-dependent XXZ pair interactions and a boundary field.

\section{Quantum interacting tops and long-range spin chains associated
with the \texorpdfstring{${\rm BC}_n$}{BCn} root system}\label{sec7}
\setcounter{equation}{0}

In this section, we quantize the Hamiltonians obtained above and discuss
their relation to long-range spin chains. We begin with the standard
construction for the spin Calogero--Moser model and then apply an analogous
quantization to the Hamiltonians of interacting tops.

\subsection{The spin Calogero--Moser model as an isotropic model of interacting tops}
%
%
%
\paragraph{Classical spin model revisited.}
Consider the spin Calogero--Moser model (\ref{t306})--(\ref{t316}).
As noted above, for rank$(S)=1$ it reduces to the spinless model
(\ref{t318})--(\ref{t319}). More generally, the unreduced spin phase space
$\mO^{\hbox{\tiny{$n$}}}$ can be parameterized using $T^*\mC^{nN}$ as follows:
  \beq\label{t701}
  \begin{array}{c}
  \displaystyle{
S_{ij}=\sum\limits_{a=1}^N\xi^i_a\eta^j_a\,,
  }
  \\ \ \\
  \displaystyle{
 \{\xi^i_a,\eta^j_b\}=\delta_{ab}\delta^{ij}\,,\quad
 i,j=1...n\,,\quad a,b=1...N\,.
 }
 \end{array}
 \eq
Here we assume rank$(S)\leq N$. This construction reproduces the Poisson
structure (\ref{t306}). We now rewrite $S_{ij}S_{ji}$ as
  \beq\label{t702}
  \begin{array}{c}
  \displaystyle{
 S_{ij}S_{ji}=\sum\limits_{a,b=1}^N\xi^i_a\eta^j_a\xi^j_b\eta^i_b=
 \sum\limits_{a,b=1}^N\mS^{ii}_{ab}\mS^{jj}_{ba}=\tr(\mS^{ii}\mS^{jj})=\tr_{12}\Big(\mS^{ii}_1\mS^{jj}_2P_{12}\Big)
 }
 \end{array}
 \eq
where the $n$ rank-one matrices $\mS^{ii}\in\Mat$ have entries
  \beq\label{t703}
  \begin{array}{c}
  \displaystyle{
 \mS^{ii}_{ab}=\xi^i_a\eta^i_b\,,\quad
 i=1...n\,,\quad a,b=1...N\,.
 }
 \end{array}
 \eq
Thus, the interaction potential in (\ref{t316}) takes the
form\footnote{In (\ref{t704}), the function $\wp$ in (\ref{t316}) is replaced
by $E_2$, defined in (\ref{a08}). On the constraint surface (\ref{t317}),
this amounts to shifting the Hamiltonian by a Casimir function.}
  \beq\label{t704}
  \begin{array}{c}
  \displaystyle{
-S_{ij}S_{ji}E_2(q_i-q_j)=-\tr(\mS^{ii}\mS^{jj})E_2(q_i-q_j)
=-\tr_{12}\Big(\mS^{ii}_1\mS^{jj}_2P_{12}\Big)E_2(q_i-q_j)\,.
 }
 \end{array}
 \eq
This term can also be viewed as the interaction term
$\tr_{12}(F_{12}^0(q_i-q_j)\mS^{ii}_1\mS^{jj}_2)
=\tr_{12}(\p r_{12}(q_i-q_j)\mS^{ii}_1\mS^{jj}_2)$
in the Hamiltonian of interacting tops (\ref{t375}), with
$r_{12}(x)=P_{12}E_1(x)$. This interpretation requires three qualifications.

First, $r_{12}(x)=P_{12}E_1(x)$ is not a classical $r$-matrix.
The matrix $R^z_{12}(x)=P_{12}\phi(z,x)$ satisfies the QYBE (\ref{t108})
and the AYBE (\ref{t109}) because the permutation operator satisfies the
QYBE and the Kronecker function $\phi$ satisfies the addition formula
(\ref{a12}). However, its expansion (\ref{a10}) near $z=0$,
$R^z_{12}(x)=P_{12}\phi(z,x)=P_{12}/z+P_{12}E_1(x)+O(z)$,
is not a quasiclassical expansion of the form (\ref{t110}), since its
leading coefficient is $P_{12}$ rather than $1_N\otimes1_N$.

Second, the spin Calogero--Moser model has no terms of the form
$\tr(\mS^{ii}{\mathbb J}(\mS^{ii}))$. Thus, ${\mathbb J}=0$
in this case\footnote{Equivalently, one may take
${\mathbb J}={\rm Id}:\ {\mathbb J}(\mS^{ii})=\mS^{ii}$, since
$\tr(\mS^{ii}{\mathbb J}(\mS^{ii}))=\tr(\mS^{ii}\mS^{ii})$
is a Casimir function.}, and there is no internal dynamics of
$\mS^{ii}$. This reflects the full isotropy of the model.

Third, the rational limit gives $P_{12}E_1(x)\to P_{12}/x$, which is
a classical $r$-matrix. In the model of interacting tops, this case
arises from Yang's rational $R$-matrix
$R^z_{12}(x)=1_{N^2}/z+P_{12}/x$; see \cite{GSZ}.
The rational spin Calogero--Moser model therefore admits two descriptions:
the rational version of (\ref{t306})--(\ref{t316}), with an $n\times n$
Lax representation, and the isotropic model of interacting tops
(\ref{t375}), with the $nN\times nN$ Lax pair
(\ref{t364})--(\ref{t366}) based on Yang's rational $R$-matrix.

\paragraph{Spin exchange operator.}
The classical Hamiltonian (\ref{t316}) can be quantized as follows.
The many-body variables are quantized in the standard coordinate representation:
  \beq\label{t705}
  \begin{array}{c}
  \displaystyle{
p_i\to {\hat p}_i=\hbar\p_{q_i}\,,\quad q_i\to {\hat q}_i=q_i:\qquad [{\hat p}_i,{\hat q}_j]=\hbar \delta_{ij}
 }
 \end{array}
 \eq
For the spin variables, we write the Hamiltonian in terms of $\mS^{ii}$
and replace the original phase space $\mO^{\hbox{\tiny{$n$}}}//\mD$ by
  \beq\label{t706}
  \begin{array}{c}
  \displaystyle{
\underbrace{ \mO^{\hbox{\tiny{min,$N$}}}\times ...\times \mO^{\hbox{\tiny{min,$N$}}} }_{\hbox{\tiny{$n$ times}}}
 }
 \end{array}
 \eq
as in (\ref{t380}). The Poisson structure then becomes
  \beq\label{t707}
  \begin{array}{c}
  \displaystyle{
\{\mS^{ii}_1,\mS_2^{jj}\}=\delta^{ij}[P_{12},\mS_{1}^{ii}]\,.
 }
 \end{array}
 \eq
Since the phase space has changed, this model is formally distinct from
the original spin Calogero--Moser model.

In the basis $T_a$ defined in (\ref{t231}), the quantization of
(\ref{t707}) is given by the following
prescription\footnote{In the standard basis, this prescription reads
$\mS^{ii}_{ab}\to\stackrel{i}{e}_{ba}$.}:
for each $a\in\mZ_N\times\mZ_N$,
  \beq\label{t708}
  \begin{array}{c}
  \displaystyle{
\mS^{ii}_a\to{\hat \mS}^{ii}_a=\frac{1}{N}\,\stackrel{i}{T}_{-a}=\frac{1}{N}\,
\underbrace{ 1_N\otimes ...\otimes 1_N\otimes T_{-a}\otimes 1_N ...\otimes 1_N}_{\hbox{\tiny{$T_{-a}$ on $i$-th place}}}\,.
 }
 \end{array}
 \eq
Thus, $\mS^{ii}_a$ is mapped to $(1/N)T_{-a}$ acting on the $i$th
tensor factor of the Hilbert space $(\mC^N)^{\otimes n}$.
This realizes the fundamental representation of each of the $n$ copies
of the Lie algebra ${\rm gl}_N$.

Then for $i\neq j$
  \beq\label{t709}
  \begin{array}{c}
  \displaystyle{
S_{ij}S_{ji}=\tr_{12}\Big(\mS^{ii}_1\mS^{jj}_2P_{12}\Big)\to \tr_{12}(P_{12}{\hat\mS}^{ii}_1{\hat\mS}^{jj}_2)
\stackrel{(\ref{t238})}{=}\frac{1}{N}
\sum\limits_{a\in\mZ_N\times \mZ_N}\tr\Big(T_a{\hat\mS}^{ii}\Big)\tr\Big(T_{-a}{\hat\mS}^{jj}\Big)
\stackrel{(\ref{t236})}{=}
 }
 \\ \ \\
  \displaystyle{
=N\sum\limits_{a\in\mZ_N\times \mZ_N}{\hat\mS}^{ii}_{-a}{\hat\mS}^{jj}_{a}\stackrel{(\ref{t708})}{=}
\frac{1}{N}\sum\limits_{a\in\mZ_N\times \mZ_N} \stackrel{i}{T}_{a} \stackrel{j}{T}_{-a}=P_{ij}\,.
 }
 \end{array}
 \eq
This gives the permutation operator, also known as the spin exchange
operator. This quantization already appears in the original papers \cite{GH}.

Also,
  \beq\label{t710}
  \begin{array}{c}
  \displaystyle{
S_{ii}=\tr(\mS^{ii})=N\mS^{ii}_{(0,0)}\to T_{(0,0)}=1_N\,.
 }
 \end{array}
 \eq
Similarly to (\ref{t709}), we obtain
  \beq\label{t711}
  \begin{array}{c}
  \displaystyle{
S_{ii}S_{ii}=\tr_{12}\Big(\mS^{ii}_1\mS^{ii}_2P_{12}\Big)\to \tr_{12}(P_{12}{\hat\mS}^{ii}_1{\hat\mS}^{ii}_2)
\stackrel{(\ref{t238})}{=}\frac{1}{N}
\sum\limits_{a\in\mZ_N\times \mZ_N}\tr\Big(T_a{\hat\mS}^{ii}\Big)\tr\Big(T_{-a}{\hat\mS}^{ii}\Big)
\stackrel{(\ref{t236})}{=}
 }
 \\ \ \\
  \displaystyle{
=N\sum\limits_{a\in\mZ_N\times \mZ_N}{\hat\mS}^{ii}_{-a}{\hat\mS}^{ii}_{a}\stackrel{(\ref{t708})}{=}
\frac{1}{N}\sum\limits_{a\in\mZ_N\times \mZ_N} \stackrel{i}{T}_{a} \stackrel{i}{T}_{-a}=P_{ii}=N1_N\,.
 }
 \end{array}
 \eq
Thus, $S_{ii}S_{ii}$ is quantized as $P_{ii}$. Under this prescription,
$({\hat S}_{ii})^2$ differs from $\widehat{S_{ii}S_{ii}}$.

\subsection{Quantum Hamiltonians of interacting tops}

\paragraph{Quantum spin Calogero--Inozemtsev model: isotropic potential.}

We apply the fundamental-representation prescription for the local spins
(\ref{t701})--(\ref{t709}) to the one-spin Hamiltonian (\ref{t533}).
Substitution into (\ref{t533}), including its Casimir term (\ref{t534}),
yields
\beq\label{t7234}
\begin{array}{c}
\displaystyle{
 \hat H^{\hbox{\tiny{iso}}}
 =\frac12\sum_{i=1}^n\hat p_i^2\,\mathbf 1
 -\sum_{1\leq i<j\leq n}P_{ij}
 \Big(\wp(q_i-q_j)+\wp(q_i+q_j)\Big)-
 }\\ \ \\
\displaystyle{
 -\frac N2\sum_{i=1}^n\wp(2q_i)\,\mathbf 1
 -\frac12\sum_{i=1}^n\sum_{a=0}^3
 \la_a(\la_a+1)\wp(q_i+\Om_a)\,\mathbf 1
 +\hat C_{\rm iso}\,,
 }
\end{array}
\eq
where
\beq\label{t7235}
\begin{array}{c}
\displaystyle{
 \hat C_{\rm iso}
 =\frac{c_\tau}{2}\Big(
 2nN\mathbf 1+4\sum_{1\leq i<j\leq n}P_{ij}
 +n\sum_{a=0}^3\la_a\,\mathbf 1\Big)=
n c_\tau\Big(N+\frac12\sum_{a=0}^3\la_a\Big)\mathbf 1
 +2c_\tau\sum_{1\leq i<j\leq n}P_{ij}\,.
 }
\end{array}
\eq
and $c_\tau=m(0)=\vth'''(0)/(3\vth'(0))$.
The sum of all transpositions is central in the permutation algebra.
Hence $\hat C_{\rm iso}$ is independent of the coordinates and commutes
with $\hat H^{\hbox{\tiny{iso}}}$.

The duplication identity (\ref{a24}) combines the one-body terms:
\beq\label{t7236}
\begin{array}{c}
\displaystyle{
 \frac N2\wp(2q)+\frac12\sum_{a=0}^3
 \la_a(\la_a+1)\wp(q+\Om_a)
 =\frac12\sum_{a=0}^3
 \Big(\big(\la_a+1/2\big)^2+\frac{N-1}{4}\Big)\wp(q+\Om_a)\,.
 }
\end{array}
\eq
Thus the Hamiltonian is
\beq\label{t7237}
\begin{array}{c}
\displaystyle{
 \hat H^{\hbox{\tiny{iso}}}
 =\frac12\sum_{i=1}^n\hat p_i^2\,\mathbf 1
 -\sum_{1\leq i<j\leq n}P_{ij}
 \Big(\wp(q_i-q_j)+\wp(q_i+q_j)\Big)-
 }\\ \ \\
\displaystyle{
 -\frac12\sum_{i=1}^n\sum_{a=0}^3
 \Big(\big(\la_a+1/2\big)^2+\frac{N-1}{4}\Big)
 \wp(q_i+\Om_a)\,\mathbf 1+\hat C_{\rm iso}\,.
 }
\end{array}
\eq
Equivalently, using $E_2(x)=\wp(x)-c_\tau$ from (\ref{a08}), the
transposition term in (\ref{t7235}) cancels the constant part of the
pair potential. The exact result can therefore be written with a
scalar additive constant:
\beq\label{t7238}
\begin{array}{c}
\displaystyle{
 \hat H^{\hbox{\tiny{iso}}}
 =\frac12\sum_{i=1}^n\hat p_i^2\,\mathbf 1
 -\sum_{1\leq i<j\leq n}P_{ij}
 \Big(E_2(q_i-q_j)+E_2(q_i+q_j)\Big)-
 }\\ \ \\
\displaystyle{
 -\frac12\sum_{i=1}^n\sum_{a=0}^3
 \Big(\big(\la_a+1/2\big)^2+\frac{N-1}{4}\Big)\wp(q_i+\Om_a)\,\mathbf 1
 +n c_\tau\Big(N+\frac12\sum_{a=0}^3\la_a\Big)\mathbf 1\,.
 }
\end{array}
\eq
Hamiltonians of this type were obtained and studied in \cite{Finkel}
using Dunkl operators.

\paragraph{Quantum interacting Euler tops: anisotropic potential.}

Consider the elliptic Hamiltonian (\ref{t623}) and apply the fundamental
representation prescription directly to this expression. The spin space
is $\mathcal H_{\rm spin}=(\mC^2)^{\otimes n}$. Write
$\mathbf 1=1_{\mathcal H_{\rm spin}}$ and set
\beq\label{t724}
\begin{array}{c}
 \displaystyle{
 \sigma_a^{(i)}=1_2^{\otimes(i-1)}\otimes\sigma_a
                        \otimes1_2^{\otimes(n-i)}\,,\qquad
 s_a^i\ \longmapsto\ \hat s_a^i=\frac12\,\sigma_a^{(i)}\,,
 \quad a=0,1,2,3\,,
 }
 \\ \ \\
 \displaystyle{
 \sigma_0=1_2\,,\qquad \hat s_0^i=\frac12\,\mathbf 1\,,\qquad
 \hat p_i=\hbar\p_{q_i}\,.
 }
\end{array}
\eq
The spin generators have the unit normalization of (\ref{t708}), while
the momenta are quantized as in (\ref{t705}). In particular,
$[\hat s_a^i,\hat s_b^j]
=\imath\delta^{ij}\sum_{c=1}^3\epsilon_{abc}\hat s_c^i$ for
$a,b=1,2,3$. The operators at different sites commute, and the required
products are
\beq\label{t725}
 (\hat s_a^i)^2=\frac14\,\mathbf 1\,,\qquad
 \hat s_a^i\hat s_a^j
   =\frac14\,\sigma_a^{(i)}\sigma_a^{(j)}\quad(i\neq j)\,,
 \qquad a=0,1,2,3\,.
\eq
The pure Euler contribution in (\ref{t623}) vanishes since
$\sum_{a=1}^3\wp(\Om_a)=0$. The self-interaction also becomes scalar
in spin space and admits a useful simplification. The degeneration of
(\ref{a21}) gives
\beq\label{t726}
 \sum_{a=1}^3\vf_a(2q)=2E_1(q)-E_1(2q)\,,\qquad
 \sum_{a=1}^3\vf'_a(2q)
   =E_2(2q)-E_2(q)=\wp(2q)-\wp(q)\,.
\eq
Here the prime denotes differentiation with respect to the argument of
$\vf_a$, as in (\ref{t612}). Hence, the quadratic terms involving $2q_i$ give
\[
 \frac14\Big(\sum_{a=1}^3\vf'_a(2q_i)-\wp(2q_i)\Big)\mathbf 1
 =-\frac14\,\wp(q_i)\mathbf 1\,.
\]
Substitution of (\ref{t724})--(\ref{t726}) into (\ref{t623}) yields
\beq\label{t727}
\begin{array}{c}
 \displaystyle{
 \hat H^{\hbox{\tiny{XYZ tops}}}
 =\frac12\sum_{i=1}^n\hat p_i^2\,\mathbf 1
  -\frac12\sum_{i=1}^n\sum_{k=0}^3
                  \rho_k^2\wp(q_i+\Om_k)\,\mathbf 1  -\frac{2\rho_0+1}{4}\sum_{i=1}^n\wp(q_i)\,\mathbf 1-
 }
 \\ \ \\
 \displaystyle{
  -\frac12\sum\limits_{i>j}^n
       \Big(\wp(q_i-q_j)+\wp(q_i+q_j)\Big)\mathbf 1
 +\frac12\sum_{i=1}^n\sum_{a=1}^3
       \rho_a\vf'_a(q_i-\Om_a)\sigma_{4-a}^{(i)}+
 }
 \\ \ \\
 \displaystyle{
 +\frac12\sum\limits_{i>j}^n\sum_{a=1}^3
       \Big(\vf'_a(q_i-q_j)+\vf'_a(q_i+q_j)\Big)
       \sigma_{4-a}^{(i)}\sigma_{4-a}^{(j)}
       +C_\tau^{\rm fund}\mathbf 1\,,
 }
\end{array}
\eq
where the couplings $\rho_a$ are given in (\ref{t618}), and
\beq\label{t728}
 C_\tau^{\rm fund}
 =\frac{n(n+\rho_0)}{2}\,c_\tau\,,\qquad
 c_\tau=\frac{\vth'''(0)}{3\vth'(0)}\,.
\eq
The terms linear in the Pauli matrices describe the boundary field;
the bilinear terms describe the pair interactions with XYZ anisotropy.
The classical rank-one relation is used in deriving (\ref{t623}), but
is not imposed again on operator products: in the representation
(\ref{t724}), $\sum_{a=1}^3(\hat s_a^i)^2=(3/4)\mathbf 1$, whereas
$(\hat s_0^i)^2=\mathbf 1/4$. In particular, applying the same substitution
to the first expression in (\ref{t615}), before the classical rank-one
simplification, adds $n c_\tau\mathbf 1/4$ to (\ref{t727}). This changes
only the additive constant.

\paragraph{Quantum interacting tops: XXZ case.}
We use the fundamental representation (\ref{t724}) for the general
trigonometric Hamiltonian (\ref{t606}). The parameters $\nu_a$ retain
the normalization of (\ref{t641}). For comparison, the squared
boundary factors in the simplified expression (\ref{t652}) become
\beq\label{t729}
\begin{array}{c}
 \displaystyle{
 (\nu_0\mathbf 1+\hat s_0^i)^2
   =\Big(\nu_0+\frac12\Big)^2\mathbf 1\,,\qquad
 (\nu_3\mathbf 1-\hat s_3^i)^2
   =\Big(\nu_3^2+\frac14\Big)\mathbf 1
      -\nu_3\sigma_3^{(i)}\,.
 }
\end{array}
\eq
Moreover, $C_{\rm XXZ}$ in (\ref{t653}) becomes
$-n\pi^2/24$, while the term
$-\frac{\pi^2}{2}\sum_i(\hat s_3^i)^2$ contributes
$-n\pi^2\mathbf 1/8$. However, direct substitution in (\ref{t652})
uses a classical rank-one identity that does not hold for the
operator products. Quantizing before that simplification adds
\[
 \frac{\pi^2}{8}\sum_i
 \left(\frac1{\cos^2(\pi q_i)}-\frac1{\sin^2(\pi q_i)}\right)\mathbf1.
\]
Including this correction, in agreement with the general prescription
(\ref{t744}), the complete quantum Hamiltonian is
\beq\label{t730}
\begin{array}{c}
 \displaystyle{
 \hat H^{\hbox{\tiny{XXZ tops}}}
 =\frac12\sum_{i=1}^n\hat p_i^2\,\mathbf 1
 -\frac{\pi^2}{2}\sum_{i=1}^n\Big(
      \frac{\nu_0^2+\nu_0+1/2}{\sin^2(\pi q_i)}
     +\frac{\nu_3^2}{\cos^2(\pi q_i)}
     +(\nu_1^2+\nu_2^2)\sin^2(\pi q_i)\Big)\mathbf 1+
 }
 \\ \ \\
 \displaystyle{
 +\frac{\pi^2}{2}\sum_{i=1}^n\Big(
      \frac{\nu_3}{\cos^2(\pi q_i)}\sigma_3^{(i)}
     +\cos(\pi q_i)
          \big(\nu_1\sigma_1^{(i)}+\nu_2\sigma_2^{(i)}\big)\Big)-
 }
 \\ \ \\
 \displaystyle{
 -\frac{\pi^2}{2}\sum\limits_{i>j}^n
 \frac{\mathbf 1+\sigma_3^{(i)}\sigma_3^{(j)}
       +\cos\big(\pi(q_i- q_j)\big)
        \big(\sigma_1^{(i)}\sigma_1^{(j)}
             +\sigma_2^{(i)}\sigma_2^{(j)}\big)}
      {\sin^2\big(\pi(q_i- q_j)\big)}
 -
 }
  \\ \ \\
 \displaystyle{
 -\frac{\pi^2}{2}\sum\limits_{i>j}^n
 \frac{\mathbf 1+\sigma_3^{(i)}\sigma_3^{(j)}
       +\cos\big(\pi(q_i+ q_j)\big)
        \big(\sigma_1^{(i)}\sigma_1^{(j)}
             +\sigma_2^{(i)}\sigma_2^{(j)}\big)}
      {\sin^2\big(\pi(q_i+ q_j)\big)}
 -\frac{n\pi^2}{6}\,\mathbf 1\,.
 }
\end{array}
\eq

\subsection{Relation to quantum long-range spin chains}

\paragraph{Quantization for arbitrary \texorpdfstring{$N$}{N} in the fundamental representation.}
Consider (\ref{t605}) on $\mathcal H_{\rm spin}=(\mC^N)^{\otimes n}$,
with $\mathbf 1=1_N^{\otimes n}$. We use the quantization formulae
(\ref{t707})--(\ref{t711}). The commutation relations in the standard
basis are
\beq\label{t740}
\begin{array}{c}
 \displaystyle{
 [\hat\mA^{ii}_{ab},\hat\mA^{jj}_{cd}]
 =\delta^{ij}\Big(\delta_{ad}\hat\mA^{ii}_{cb}
                 -\delta_{bc}\hat\mA^{ii}_{ad}\Big)\,.
 }
\end{array}
\eq
Substitution of (\ref{t707})-(\ref{t711}) into (\ref{t605}) yields the quantum Hamiltonian
\beq\label{t744}
\begin{array}{c}
 \displaystyle{
 \hat H_N^{\hbox{\tiny{tops}}}
 =\frac12\sum_{i=1}^n\hat p_i^2\,\mathbf 1
  -\frac12\sum_{i=1}^n u_N(q_i|\nu)\,\mathbf 1+
 }
 \\ \ \\
 \displaystyle{
 +\sum\limits_{i>j}^n
       \Big(F^0_{ij}(q_i-q_j)+\ti F^0_{ij}(q_i+q_j)\Big)
 +\frac12\sum_{i=1}^n\Big(
       m_{ii}(0)+\ti F^0_{ii}(2q_i)
       +\ti Y_i^{(0)}(q_i|\nu)\Big)\,.
 }
\end{array}
\eq

\paragraph{Elliptic Hamiltonian.}
For the Baxter--Belavin matrix (\ref{t239}), equation (\ref{t2431}) gives
\beq\label{t745}
  m_{ii} (0)= \frac{c_\tau}{N}\, P_{ii}
     -\frac{1}{N^2}\, \sum_{\substack{\al\neq(0,0)\\ \al\in\,\mZ_N\times\mZ_N}} \wp(\om_\al)
      \stackrel{i}{T}_\al \stackrel{i}{T}_{-\al}
 =c_\tau 1_N\,,
\eq
Here we have used $T_\al T_{-\al}=1_N$,
$E_2(x)=\wp(x)-c_\tau$ and $\sum_{\al\neq0}\wp(\om_\al)=0$.
The last identity follows by comparing the constant terms at $x=0$
in the multiplication identity
$\sum_\al\wp(x+\om_\al)=N^2\wp(Nx)$.

To evaluate $\ti F^0_{ii}(2q_i)$, consider the matrix acting in a single spin space,
\begin{equation}\label{t746}
 \tilde R^z_{11}(x):=\frac1N
 \sum_{\al\in\mZ_N\times\mZ_N}
       \vf_\al\Big(x,\frac zN+\om_\al\Big)T_{2\al}
 \in\Mat\,,
\end{equation}
where $T_{2\al}$ arises from the identity $T_\al^2=T_{2\al}$.
The matrix (\ref{t746}) is related to the $K$-matrix by
\begin{equation}\label{t747}
 \ti R^z_{11}(2x)=\frac12\,\ti K^z(x|\nu^*)\,,\qquad
 \nu^*=(1,1,(-1)^N,1)\,.
\end{equation}
To verify this identity, group the terms in (\ref{t252}) according
to the four half-period shifts. For $\nu=\nu^*$ the coefficient of
the summand indexed by $\be\in\mZ_{2N}^2$ is proportional to
$(1+(-1)^{\be_1})(1+(-1)^{\be_2})$.
Only the terms with $\be=2\al$ survive, yielding (\ref{t746}).
Differentiating with respect to $x$ and
taking the constant term at $z=0$ gives
\beq\label{t748}
\ti F^0_{ii}(2q_i)=\p_x\ti R^{(0)}_{ii}(x)\Big|_{x=2q_i}
 =\frac14\,\ti Y^{(0)}_i(q_i|\nu^*)\,.
\eq
Consequently, the elliptic version of (\ref{t744}) takes the following
form for every $N$:
\beq\label{t749}
\begin{array}{c}
 \displaystyle{
 \hat H^{\hbox{\tiny{tops}}}
 =\frac12\sum_{i=1}^n\hat p_i^2\,\mathbf 1
  -\frac12\sum_{i=1}^n\sum_{a=0}^3
             \bar\nu_a^2\wp(q_i+\Om_a)\,\mathbf 1+\frac{n c_\tau}{2}\,\mathbf 1+
 }
 \\ \ \\
 \displaystyle{
 +\sum\limits_{i>j}^n
       \Big(F^0_{ij}(q_i-q_j)+\ti F^0_{ij}(q_i+q_j)\Big)
 +\frac12\sum_{i=1}^n\ti Y_i^{(0)}(q_i|\nu)
  +\frac18\sum_{i=1}^n\ti Y_i^{(0)}(q_i|\nu^*)
\,.
 }
\end{array}
\eq
The last two terms combine as follows:
\beq\label{t7491}
\begin{array}{c}
 \displaystyle{
 \ti Y^{(0)}(q|\nu)+\frac14\ti Y^{(0)}(q|\nu^*)
   =\ti Y^{(0)}(q|\nu^{\rm eff})\,.
 }
 \\ \ \\
 \displaystyle{
 \nu^{\rm eff}=\nu+\frac14\nu^*
 =\Big(\nu_0+\frac14,\,\nu_1+\frac14,\,
        \nu_2+\frac{(-1)^N}{4},\,\nu_3+\frac14\Big)\,,
 }
\end{array}
\eq

\paragraph{Long-range spin chains.}
The prescription introduced in \cite{Poly}, known as the Polychronakos
freezing trick, produces long-range spin chains from quantum spin
Calogero--Moser systems; see also \cite{MZ}. It consists of removing
the differential part of the quantum Hamiltonian, arising from
${\hat p}_i$, and fixing the particle coordinates at an equilibrium
configuration of the corresponding classical spinless Hamiltonian.

As an example, consider the isotropic elliptic spin Calogero--Moser system:
  \beq\label{t750}
  \begin{array}{c}
    \displaystyle{
 \hat{H}^{\hbox{\tiny{spin}}}=\frac{1}{2}\sum\limits_{i=1}^n{\hat
 p}_i^2-\sum\limits_{i>j}^n {g}({g}+\hbar\, P_{ij})\wp(q_i-q_j)\,.
 }
 \end{array}
 \eq
The corresponding classical spinless model (\ref{t102}) has an equilibrium
configuration consisting of equally spaced points on the real-period
circle $\mathbb R/\mathbb Z$,
  \beq\label{t751}
  \begin{array}{c}
    \displaystyle{
z_j=\frac{j}{n}\,,\quad j=1,...,n\,,\quad z_n\equiv z_0\pmod{1}\,,
 }
 \end{array}
 \eq
which satisfy
  \beq\label{t752}
  \begin{array}{c}
    \displaystyle{
\sum\limits_{k:k\neq i}^n\wp'(z_i-z_k)=0\ \forall i=1,...,n\,.
 }
 \end{array}
 \eq
Applying this prescription to (\ref{t750}) gives, up to an additive
scalar constant and the overall factor $-g\hbar$, the Inozemtsev spin
chain \cite{Inoz90}:
  \beq\label{t753}
  \begin{array}{c}
    \displaystyle{
 \hat{H}^{\hbox{\tiny{Inoz}}}=\sum\limits_{i>j}^n P_{ij}\wp(z_i-z_j)\,.
 }
 \end{array}
 \eq
More precisely,
  \beq\label{t7531}
  \begin{array}{c}
    \displaystyle{
 \left.\hat H^{\hbox{\tiny{spin}}}\right|_{\hat p\mapsto0,\ q\mapsto z}
 =-g^2\sum_{i>j}^n\wp(z_i-z_j)\,\mathbf1
   -g\hbar\,\hat H^{\hbox{\tiny{Inoz}}}.
 }
 \end{array}
 \eq
 %
The freezing prescription does not by itself establish the integrability
of the resulting model. This requires a separate proof; see, for example,
\cite{MZ,Ch3,Lam,Lam2}.
Applying the same procedure to the trigonometric spin Calogero--Moser
model yields the Haldane--Shastry chain \cite{HS1}:
  \beq\label{t754}
  \begin{array}{c}
    \displaystyle{
 \hat{H}^{\hbox{\tiny{HS}}}=\sum\limits_{i>j}^n \frac{P_{ij}}{\sin^2(\pi(z_i-z_j))}
 }
 \end{array}
 \eq
with the same equilibrium positions (\ref{t751}), since (\ref{t752})
remains valid in the trigonometric limit:
  \beq\label{t7541}
  \begin{array}{c}
    \displaystyle{
 \hat H^{\hbox{\tiny{Inoz}}}\ \longrightarrow\
 \pi^2\hat H^{\hbox{\tiny{HS}}}
 -\frac{\pi^2}{3}\sum_{i>j}^n P_{ij},
 \qquad
 \wp(x)\ \longrightarrow\
 \frac{\pi^2}{\sin^2(\pi x)}-\frac{\pi^2}{3}.
 }
 \end{array}
 \eq
The sum $\sum_{i>j}P_{ij}$ is central in the permutation algebra,
but is not generally scalar on the full spin space. It can be absorbed
into an additive scalar constant on a fixed irreducible permutation
sector; the overall factor $\pi^2$ must also be taken into account.

Generalizations of the Haldane--Shastry model with ${\rm BC}_n$-type
dependence on $z_i$ were introduced in \cite{BPS} and then extended in \cite{EFGR} for isotropic trigonometric
and rational models using the Dunkl operator approach.

Anisotropic elliptic chains associated with the $A$-type models
(\ref{t117}) were introduced in \cite{SeZ18}:
  \beq\label{t755}
  \begin{array}{c}
    \displaystyle{
 \hat{H}^{\hbox{\tiny{long-range}}}=\sum\limits_{i>j}^n \p r_{ij}(z_i-z_j)\,,\qquad \p r_{ij}(x)=\p_x r_{ij}(x)=F^0_{ij}(x)\,.
 }
 \end{array}
 \eq
The construction uses half-quantum $R$-matrix-valued Lax pairs and
provides an alternative interpretation of the freezing trick.

Similarly, a ${\rm BC}_n$ version of the Hamiltonian (\ref{t755})
was proposed in \cite{MMZ,MMZ2} using the ${\rm BC}_n$ AYBEs and
$R$-matrix-valued Lax pairs. To match the quantum Hamiltonian
(\ref{t749}), its boundary $K$-matrix is evaluated at the effective
parameters $\nu^{\rm eff}$ defined in (\ref{t7491}):
  \beq\label{t756}
  \begin{array}{c}
    \displaystyle{
 \hat\mH^{\hbox{\tiny{long-range}}}(z|\nu^{\rm eff})
 =\sum_{i>j}^n\Big(\p r_{ij}(z_i-z_j)
                       +\p\ti r_{ij}(z_i+z_j)\Big)
   +\frac12\sum_{i=1}^n\ti Y^{(0)}_i(z_i|\nu^{\rm eff})\,,
 }
 \end{array}
 \eq
Here $\nu^{\rm eff}=\nu+\nu^*/4$, as in (\ref{t7491}).
Applying (\ref{a48}) to $\nu^{\rm eff}$ also gives
  \beq\label{t7561}
  \begin{array}{c}
    \displaystyle{
 \overline{\nu^{\rm eff}}=\bar\nu+\frac18
 \left(\begin{array}{c}
 3+(-1)^N\\
 1-(-1)^N\\
 (-1)^N-1\\
 1-(-1)^N
 \end{array}\right).
 }
 \end{array}
 \eq
Thus the sites are nonsingular stationary points of (\ref{t118})
with $\bar\nu_a$ replaced by $(\overline{\nu^{\rm eff}})_a$:
\begin{equation}
 \label{t757}
   \displaystyle{
\sum_{k:k\neq i}^{n} \Big(\wp'({z}_i-{z}_k) + \wp'({z}_i+{z}_k)\Big) +
\frac{1}{2}\sum\limits_{a=0}^3
\big((\overline{\nu^{\rm eff}})_a\big)^2
\wp'({z}_i+\Om_a)=0\,,\quad i=1,...,n\,.
}
 \end{equation}
The existence of real solutions to these equations was proved in
\cite{Tak} under suitable conditions on the coupling constants.
Explicit examples of the Hamiltonian (\ref{t756}) are given in \cite{MMZ2}.

Replacing $z_i$ by $q_i$ in (\ref{t756}) defines
$\hat\mH^{\hbox{\tiny{long-range}}}(q|\nu^{\rm eff})$.
Equations (\ref{t749})--(\ref{t7491}) give the exact operator identity
\begin{equation}
 \label{t7571}
   \displaystyle{
 \hat H^{\hbox{\tiny{tops}}}
 =\frac12\sum_i\hat p_i^2\,\mathbf1
 -\frac12\sum_i u_N(q_i|\nu)\,\mathbf1
 +\frac{nc_\tau}{2}\,\mathbf1
 +\hat\mH^{\hbox{\tiny{long-range}}}(q|\nu^{\rm eff}),
 \qquad \nu^{\rm eff}=\nu+\frac14\nu^*.
}
 \end{equation}
The scalar term $u_N(q_i|\nu)$ in this identity retains the original
$K$-matrix parameters.

Consider the XXZ example. Here $\nu_a$ denote the parameters of the trigonometric
$K$-matrix (\ref{t641}), in the normalization used in (\ref{t730}).
The single-site contributions are scalar:
\begin{equation}
 \label{t7572}
   \displaystyle{
 \frac12m_{ii}(0)=-\frac{\pi^2}{6}\,\mathbf1,\qquad
 \frac12\ti F^0_{ii}(2q_i)
 =-\frac{\pi^2}{4\sin^2(\pi q_i)}\,\mathbf1.
}
 \end{equation}
They therefore enter the scalar part that is subtracted when
extracting the chain. The resulting Hamiltonian of the chain is
\beq\label{t758}
\begin{array}{c}
 \displaystyle{
 \hat H^{\hbox{\tiny{XXZ long-range}}}
 =
\frac{\pi^{2}}{2}\sum_{i=1}^n\Big(
     \cos(\pi z_i)
          \big(\nu_1\sigma_1^{(i)}+\nu_2\sigma_2^{(i)}\big)+ \frac{\nu_3}{\cos^2(\pi z_i)}\sigma_3^{(i)}
     \Big)-
 }
 \\ \ \\
 \displaystyle{
 -\frac{\pi^{2}}{2}\sum\limits_{i>j}^n
 \frac{
       \cos\big(\pi(z_i- z_j)\big)
        \big(\sigma_1^{(i)}\sigma_1^{(j)}
             +\sigma_2^{(i)}\sigma_2^{(j)}\big)+\sigma_3^{(i)}\sigma_3^{(j)}}
      {\sin^2\big(\pi(z_i- z_j)\big)}
 -
 }
  \\ \ \\
 \displaystyle{
 -\frac{\pi^{2}}{2}\sum\limits_{i>j}^n
 \frac{\cos\big(\pi(z_i+ z_j)\big)
        \big(\sigma_1^{(i)}\sigma_1^{(j)}
             +\sigma_2^{(i)}\sigma_2^{(j)}\big)+\sigma_3^{(i)}\sigma_3^{(j)}}
      {\sin^2\big(\pi(z_i+ z_j)\big)}\,,
 }
\end{array}
\eq
For the XXZ chain, we choose the classical spinless background to be
the scalar part of (\ref{t730}), replacing the differential kinetic
term by its classical counterpart. Up to the additive constant
$-n\pi^2/6$, this Hamiltonian is
\beq\label{t7582}
 \begin{split}
 &\frac12\sum_i p_i^2-\frac{\pi^2}{2}\sum_{i>j}^n
 \left(\frac1{\sin^2(\pi(q_i-q_j))}
       +\frac1{\sin^2(\pi(q_i+q_j))}\right)\\
 &\qquad-\frac12\sum_i u_2(q_i|\nu)
 -\frac{\pi^2}{2}\Big(\nu_0+\frac12\Big)
       \sum_i\frac1{\sin^2(\pi q_i)},
 \end{split}
\eq
where $u_2$ is given by the unitarity of the trigonometric $K$-matrix
in (\ref{t649})--(\ref{t650}), with its original parameters $\nu_a$.
The scalar pair coefficient is $-\pi^2/2$, corresponding to
$g^2=1/2$ in the normalization of (\ref{t118}).
The sites $z_i$ are chosen to be nonsingular stationary points of
this scalar potential. Their equilibrium equations are
\beq\label{t759}
\begin{array}{c}
\displaystyle{
\sum_{j\neq i}^{n} \Big(\frac{\cos(\pi z_{ij})}{\sin^3(\pi z_{ij})} + \frac{\cos(\pi z^+_{ij})}{\sin^3(\pi z^+_{ij})}\Big) +
\Big(\nu_0^2+\nu_0+\frac12\Big)
\frac{\cos(\pi z_i)}{\sin^3(\pi z_i)}-
}
\\ \ \\
\displaystyle{
-\nu_3^2\frac{\sin(\pi z_i)}{\cos^3(\pi z_i)}-
\frac{1}{2}(\nu^2_1+\nu^2_2)\sin{(2\pi z_i)}=0\,,\qquad i=1,\ldots,n\,.
}
\end{array}
\eq
%

\section{Conclusion}\label{sec8}
\setcounter{equation}{0}

We have constructed ${\rm BC}_n$ generalizations of the model of
interacting tops using parametric associative Yang--Baxter identities
involving $R$- and $K$-matrices. The generalized model has two spin-orbit
components. We have derived its Hamiltonian structure and Lax equation.
Its reductions include one-orbit models, $N=1$ spin
Calogero--Inozemtsev systems and the spinless Calogero--Inozemtsev model.
We have also obtained explicit XYZ and XXZ interactions, quantized the
Hamiltonians in the fundamental representation, and described their
relation to long-range spin chains.

The main classical special cases are summarized in
Figure~\ref{fig1}. The two operations $N=1$ and
$\mA=\mB$ commute, as shown by the two paths to the one-spin model.
The rank-one branches include the moment constraints
$\tr(\mA^{ii})=g$ and the Cartan quotient described in
Sections~\ref{sec5} and \ref{sec6}.

\begin{figure}[!htbp]
\centering
\setlength{\unitlength}{1mm}
\setlength{\fboxsep}{2mm}
\setlength{\fboxrule}{0.4pt}
\newcommand{\BCnCase}[1]{\fbox{\parbox[c][14mm][c]{64mm}{\centering\small #1}}}
\begin{picture}(154,141)(-1,1)
\put(76,129){\makebox(0,0){\BCnCase{
  \textbf{Generalized model}\\[1pt]
  $\mA,\mB$, $N,n$\\[1pt]
  $H^{\hbox{\tiny{Gtops}}}$: \eqref{tt422}}}}

\put(34,101){\makebox(0,0){\BCnCase{
  \textbf{Two-spin Calogero--Inozemtsev}\\[1pt]
  $A,B\in{\rm Mat}_n$, $N=1$\\[1pt]
  $H^{\hbox{\tiny{2-spin}}}$: \eqref{t511}}}}
\put(118,101){\makebox(0,0){\BCnCase{
  \textbf{One-orbit model}\\[1pt]
  $\mA=\mB\in{\rm Mat}_{nN}$\\[1pt]
  $H^{\hbox{\tiny{Gtops}}}$: \eqref{tt455}}}}

\put(34,73){\makebox(0,0){\BCnCase{
  \textbf{One-spin Calogero--Inozemtsev}\\[1pt]
  $A=B\in{\rm Mat}_n$, $N=1$\\[1pt]
  $H^{\hbox{\tiny{1-spin}}}$: \eqref{t533}}}}
\put(118,73){\makebox(0,0){\BCnCase{
  \textbf{$n$ interacting ${\rm gl}_N$ tops}\\[1pt]
  $\mA^{ii}\in{\rm Mat}_N$, $\rank\mA^{ii}=1$\\[1pt]
  $H^{\hbox{\tiny{tops}}}$: \eqref{t605}, \eqref{t606}}}}

\put(34,46){\makebox(0,0){\BCnCase{
  \textbf{Spinless Calogero--Inozemtsev}\\[1pt]
  $N=1$, $A_{ij}=g$ after reduction\\[1pt]
  $H^{\hbox{\tiny{CI}}}$: \eqref{t578}}}}
\put(118,46){\makebox(0,0){\BCnCase{
  \textbf{Interacting Euler tops: XYZ}\\[1pt]
  $N=2$, elliptic $R$- and $K$-matrices\\[1pt]
  $H^{\hbox{\tiny{Euler tops}}}$: \eqref{t623}}}}

\put(118,19){\makebox(0,0){\BCnCase{
  \textbf{Interacting Euler tops: XXZ}\\[1pt]
  $N=2$, trigonometric $R$- and $K$-matrices\\[1pt]
  $H^{\hbox{\tiny{XXZ tops}}}$: \eqref{t652}}}}

\linethickness{0.5pt}
\put(60,120){\vector(-2,-1){20}}
\put(44,117){\makebox(0,0)[r]{\small $N=1$}}
\put(92,120){\vector(2,-1){20}}
\put(108,117){\makebox(0,0)[l]{\small $\mA=\mB$}}

\put(34,92){\vector(0,-1){10}}
\put(31,87){\makebox(0,0)[r]{\small $A=B$}}
\put(84,92){\vector(-2,-1){20}}
\put(75,90){\makebox(0,0)[b]{\small $N=1$}}
\put(118,92){\vector(0,-1){10}}
\put(121,87){\makebox(0,0)[l]{\shortstack[l]{\small $\rank\mA=1$\\\small Cartan reduction}}}

\put(34,64){\vector(0,-1){9}}
\put(31,59.5){\makebox(0,0)[r]{\shortstack[r]{\small $\rank A=1$\\\small Cartan reduction}}}
\put(84,64){\vector(-2,-1){18}}
\put(75,62){\makebox(0,0)[b]{\small $N=1$}}
\put(118,64){\vector(0,-1){9}}
\put(121,59.5){\makebox(0,0)[l]{\shortstack[l]{\small $N=2$\\\small elliptic case}}}

\put(118,37){\vector(0,-1){9}}
\put(115,32.5){\makebox(0,0)[r]{\shortstack[r]{\small trigonometric specialization\\
    \small \eqref{t640}, \eqref{t641}}}}
\end{picture}
\caption{\it Classical special cases of the generalized ${\rm BC}_n$
model. Each box gives the corresponding Hamiltonian and its equation
reference. Arrows indicate restrictions, rank-one reductions or
specializations of the $R$- and $K$-matrices.}
\label{fig1}
\end{figure}

These results apply to the examples considered in \cite{MMZ2}, including
trigonometric and rational $R$-matrices satisfying the AYBE.
Such $R$-matrices were constructed in \cite{Chered} for ${\rm GL}_2$
and in \cite{Pol3,KrZ,AASZ} for ${\rm GL}_N$.


\section{Appendix A: elliptic functions}\label{secA}
\def\theequation{A.\arabic{equation}}
\setcounter{equation}{0}

\paragraph{The Kronecker function \texorpdfstring{$\phi$}{phi}.}
Most calculations use the elliptic Kronecker function \cite{Weil};
see also the appendices of \cite{MMZ,MMZ2,ZZ,TrZ}:
 \beq\label{a01}
\begin{array}{l}
  \displaystyle{
 \phi(z,u)=\frac{\vth'(0)\vth(z+u)}{\vth(z)\vth(u)}=\phi(u,z)\,,\quad\res\limits_{z=0}\phi(z,u)=1\,,
 \quad \phi(-z, -u) = -\phi(z, u)\,,
 }
 \end{array}
 \eq
where $\vth(z)$ is the first Jacobi theta function:
 \beq\label{a02}
 %
 \eq
Here $\tau$ is the modulus of the elliptic curve
$\mC/(\mZ+\tau\mZ)$, with ${\rm Im}(\tau)>0$.
In Jacobi's notation, the theta functions are
\beq\label{a04}
\begin{array}{c}
\displaystyle{
\theta_1(u|\tau )=\vth(u,\tau)=-i\sum_{k\in  \mZ}(-1)^k q^{(k+\frac{1}{2})^2}e^{\pi i (2k+1)u},
\qquad\theta_2(u|\tau )=\sum_{k\in  \mZ}q^{(k+\frac{1}{2})^2}e^{\pi i (2k+1)u}\,,}
\\ \ \\
\displaystyle{
\theta_3(u|\tau )=\sum_{k\in  \mZ}q^{k^2}e^{2\pi i ku},\qquad\theta_4(u|\tau )
=\sum_{k\in  \mZ}(-1)^kq^{k^2}e^{2\pi i ku}\,,}
\end{array}
\eq
where
$q=e^{\pi i \tau}$ and
\beq\label{a05}
\begin{array}{c}
\theta_2(u|\tau)=\theta_1(u+\frac{1}{2}|\tau), \! \quad \! \!\theta_3(u|\tau)
=q^{\frac{1}{4}}e^{\pi iu}\theta_1(u+\frac{\tau+1}{2}|\tau), \! \quad \!\!\theta_4(u|\tau)
=-iq^{\frac{1}{4}}e^{\pi iu}\theta_1(u+\frac{\tau}{2}|\tau)\,.
\end{array}
\eq
The partial derivative
\beq\label{a06}
\begin{array}{l}
\displaystyle{
f(z,u) = \partial_u \phi(z,u)
}
\end{array}
\eq
equals
\beq\label{a07}
\begin{array}{c} \displaystyle{
    f(z, u) = \phi(z, u)(E_1(z + u) - E_1(u)), \qquad f(-z, -u) = f(z, u)\,,
}\end{array}\eq
where $E_1(z)$ is the first Eisenstein function. The first and second
Eisenstein functions are defined by
\beq\label{a08}
\begin{array}{c} \displaystyle{
    E_1(z)=\frac{\vth'(z)}{\vth(z)}=\zeta(z)+\frac{z}{3}\,\frac{\vth'''(0)}{\vth'(0)}\,,
    \quad E_2(z) = - \partial_z E_1(z) = \wp(z) - \frac{\vartheta'''(0) }{3\vartheta'(0)}\,,
}\end{array}\eq
\beq\label{a09}
\begin{array}{c}
 \displaystyle{
    E_1(- z) = -E_1(z)\,, \quad E_2(-z) = E_2(z)\,,
}\end{array}
\eq
where $\wp(z)$ and $\zeta(z)$ are the Weierstrass functions.
The local expansion of $\phi(z,u)$ near $z=0$ is
  \beq\label{a10}
  \begin{array}{c}
  \displaystyle{
\phi(z,u)=\frac{1}{z}+E_1(u)+z\,\frac{E_1^2(u)-\wp(u)}2+O(z^2)\,,
 }
 \\[4mm]
  \displaystyle{
E_1(u)=\frac{1}{u}+\frac{u}{3}\,\frac{\vth'''(0)}{\vth'(0)}+O(u^3)\,,
 }
 \end{array}
 \eq
Therefore,
  \beq\label{a11}
  \begin{array}{c}
  \displaystyle{
 f(0,q)=-E_2(q)\,.
 }
 \end{array}
 \eq
The calculations rely on the addition formula
\beq\label{a12}
  \begin{array}{c}
  \displaystyle{
  \phi(z_1, u_1) \phi(z_2, u_2) = \phi(z_1, u_1 + u_2) \phi(z_2 - z_1, u_2)
  + \phi(z_2,u_1 + u_2) \phi(z_1 - z_2, u_1)\,.
 }
 \end{array}
 \eq
This is the scalar ($N=1$) version of the AYBE (\ref{t208}).
Applying $\p_{u_2}-\p_{u_1}$ to (\ref{a12}) gives
\beq\label{a13}
  \begin{array}{c}
  \displaystyle{
  \phi(z_1, u_1) f(z_2, u_2)-f(z_1,u_1)\phi(z_2,u_2) =
  }
  \\ \ \\
  \displaystyle{
  =\phi(z_1, u_1 + u_2) f(z_2 - z_1, u_2) - \phi(z_2, u_1 + u_2) f(z_1 - z_2, u_1)\,.
 }
 \end{array}
 \eq
We also use the following identities:
\beq\label{a14}
  \begin{array}{c}
  \displaystyle{
 \phi(z,u_1)\phi(z,u_2)=\phi(z,u_1+u_2)\Big(E_1(z)+E_1(u_1)+E_1(u_2)-E_1(z+u_1+u_2)\Big)\,,
 }
 \end{array}
 \eq
\beq\label{a15}
  \begin{array}{c}
  \displaystyle{
  \phi(z, u_1) f(z,u_2)-\phi(z, u_2) f(z,u_1)=\phi(z,u_1+u_2)\Big(\wp(u_1)-\wp(u_2)\Big)=
 }
 \\ \ \\
   \displaystyle{
 =\phi(z,u_1+u_2)\Big(E_2(u_1)-E_2(u_2)\Big)=\phi(z,u_1+u_2)\Big(f(0,u_2)-f(0,u_1)\Big)\,.
 }
 \end{array}
 \eq
\beq\label{a16}
  \begin{array}{c}
  \displaystyle{
  \phi(z, u) \phi(z, -u) = \wp(z)-\wp(u)=E_2(z)-E_2(u)\,,
 }
 \end{array}
 \eq
\beq\label{a17}
  \begin{array}{c}
  \displaystyle{
  \phi(z, u) f(z, -u)-\phi(z, -u) f(z, u)=\wp'(u)\,,
 }
 \end{array}
 \eq
and
\beq\label{a171}
\begin{array}{c}
\displaystyle{
    \Big(E_1(u+v) - E_1(u) - E_1(v)\Big)^2 = \wp(u+v) +  \wp(u) + \wp(v)\,.
}
\end{array}
\eq
Define the function
\beq\label{a172}
\begin{array}{c}
\displaystyle{
    m(x)=\frac{E_1^2(x)-\wp(x)}{2}\,.
}
\end{array}
\eq
This function appears in the expansion (\ref{a10}) and is the scalar
analogue of $m_{12}(z)$ in the quasiclassical expansion (\ref{t202});
see also (\ref{t243}). It satisfies
\beq\label{a173}
\begin{array}{c}
 \displaystyle{
    \phi (z_1, u) m(z_2) - E_1 (z_2) f(z_1, u) +
    \phi(z_2, u) f(z_{12}, u) - \phi(z_1, u) m(z_{21}) = -\frac{1}{2}\, \partial^2_u \phi(z_1, u)\,,
}
\end{array}
\eq
\beq\label{a174}
\begin{array}{c}
\displaystyle{
    \phi (z, u) m(z) - E_1 (z) f(z, u) -
     \phi(z, u) \wp(u) =- \frac{1}{2}\, \partial^2_u \phi(z, u)\,,
}\end{array}\eq

Define the half-periods
 \beq\label{a18}
 \begin{array}{c}
  \displaystyle{
 \Om_0=0\,,\quad\Om_1=\frac{1}{2}\,,\quad\Om_2=\frac{1+\tau}{2}\,,\quad\Om_3=\frac{\tau}{2}\,,
  }
 \end{array}
\eq
 and
 \beq\label{a19}
 \begin{array}{c}
  \displaystyle{
 \vf_{0}(z,\hbar)=\phi(z,\hbar)\,,\qquad\vf_{1}(z,\hbar+\Om_1)=\phi(z,\Om_1+\hbar)\,,
 }
 \\ \ \\
  \displaystyle{
 \vf_{2}(z,\hbar+\Om_2)=e^{\pi\imath z}\phi(z,\Om_2+\hbar)\,,\qquad\vf_{3}(z,\hbar+\Om_3)
 =e^{\pi\imath z}\phi(z,\Om_3+\hbar)\,.
 }
  \end{array}
 \eq
Then
 \beq\label{a20}
 \begin{array}{c}
  \displaystyle{
 \vf_{k}(z,\hbar+\Om_k)=\frac{\theta_1'(0)\theta_{k+1}(z+\hbar)}{\theta_{1}(z)
 \theta_{k+1}(\hbar)}\,,\quad k=0,1,2,3\,.
  }
 \end{array}
\eq
The functions introduced above are related by the linear transformation
  \beq\label{a21}
 %
 \eq
The matrix of this linear transformation can be written as
  \beq\label{a22}
  \begin{array}{c}
  \displaystyle{
I_{km}=\frac12\,\exp\Big(4\pi\imath\Big(\Om_{m-1}\p_\tau\Om_{k-1}-\Om_{k-1}\p_\tau
\Om_{m-1}\Big)\Big)\,,\quad k,m=1,...,4\,.
 }
 \end{array}
 \eq
It satisfies
  \beq\label{a23}
  \begin{array}{c}
  \displaystyle{
I^{-1}=I\,.
 }
 \end{array}
 \eq
We also use the standard identity
  \beq\label{a24}
 \begin{array}{c}
  \displaystyle{
\sum\limits_{a=0}^3\wp(z+\Om_a)=4\wp(2z)\,.
  }
 \end{array}
 \eq
The quasiperiodicity relations are
  \beq\label{a25}
  \begin{array}{l}
  \displaystyle{
 \phi(z+1,u)=\phi(z,u)\,,\qquad  \phi(z+\tau,u)=e^{-2\pi\imath u}\phi(z,u)\,,
 }
 \end{array}
 \eq
or
  \beq\label{a26}
  \begin{array}{l}
  \displaystyle{
 \phi(z\pm 2\Om_a,u)=e^{\mp4\pi\imath\p_\tau\Om_a u}\phi(z,u)\,.
 }
 \end{array}
 \eq
For the theta function (\ref{a02}) we have
  \beq\label{a27}
  \begin{array}{l}
  \displaystyle{
 \vth(z+2\Om_a)=-e^{-4\pi\imath(z+\Om_a)\p_\tau\Om_a}\vth(z)\,,\quad a\neq 0\,,
 }
 \end{array}
 \eq
or
  \beq\label{a28}
  \begin{array}{l}
  \displaystyle{
 \vth(z+\Om_a)=-e^{-4\pi\imath z\p_\tau\Om_a}\vth(z-\Om_a)\,,\quad a\neq 0\,.
 }
 \end{array}
 \eq
This gives
  \beq\label{a29}
  \begin{array}{l}
  \displaystyle{
 E_2(z+2\Om_a)=E_2(z)\,,\qquad E_1(z+2\Om_a)=E_1(z)-4\pi\imath\p_\tau\Om_a\,,
 }
 \end{array}
 \eq
and
  \beq\label{a30}
  \begin{array}{l}
  \displaystyle{
E_1(\Om_a)=-2\pi\imath\p_\tau\Om_a\,,\quad a\neq 0\,.
 }
 \end{array}
 \eq
Equations (\ref{a30}), (\ref{a07}) and (\ref{a12}) imply that, for $a=1,2,3$,
  \beq\label{a31}
  \begin{array}{l}
  \displaystyle{
\p_z\vf_a(z,\Om_a)=\vf_a(z,\Om_a)(E_1(z+\Om_a)-E_1(z)-E_1(\Om_a))=-\frac{\vf_1(z,\Om_1)
\vf_2(z,\Om_2)\vf_3(z,\Om_3)}{\vf_a(z,\Om_a)}\,.
 }
 \end{array}
 \eq

\paragraph{The functions \texorpdfstring{$\vf_\al$}{varphi(alpha)}.}
For $\al=(\al_1,\al_2)\in\mZ_N\times\mZ_N$, define the following
functions\footnote{The functions (\ref{a32}) form a basis of sections of
${\rm End}(V)$, where $V$ is a holomorphic vector bundle of degree $1$
over the elliptic curve.}:
\beq\begin{array}{c} \displaystyle{
\label{a32}
 \displaystyle{
    \vf_\al (z, \om_\al + u) = \exp (2 \pi i \frac{\al_2}{N}\, z) \phi (z, \om_\al + u),
    \quad \om_\al = \frac{\al_1 + \al_2 \tau}{N}\,,
    }
}\end{array}\eq
\beq\begin{array}{c} \displaystyle{
\label{a33}
 \displaystyle{
    f_\al (z, \om_\al + u) = \exp (2 \pi i \frac{\al_2}{N}\, z) f (z, \om_\al + u)\,,
    }
}\end{array}\eq
\beq\begin{array}{c} \displaystyle{
    \label{a34}
     \displaystyle{
    f_\al (z, \om_\al + u) = \partial_u \vf_\al (z, \om_\al + u)
    = \vf_\al (z, \om_\al +u) (E_1(z + \om_\al + u) - E_1(\om_\al + u))\,.}
}\end{array}\eq
The addition formulae take the form
\beq\begin{array}{c}
\label{a35}
 \displaystyle{
        \vf_\al (z_1, \om_\al + u_1) \vf_\be (z_2, \om_\be + u_2)
        = \vf_\al (z_1 - z_2,\om_\al + u_1) \vf_{\al + \be} (z_2, \om_{\al+\be} + u_1 + u_2) +}
         \\ \ \\
          \displaystyle{
        + \vf_\be (z_2 - z_1, \om_\be + u_2) \vf_{\al + \be} (z_1, \om_{\al + \be} + u_1 +u_2)\,.
        }
\end{array}\eq
In particular, for $\al,\be,\al+\be\neq0$ modulo $\mZ_N^2$,
\beq\begin{array}{c}
\label{a36}
 \displaystyle{
    \vf_\al(z-z_a, \om_\al) \vf_\be(z-z_b, \om_\be) =
  }
  \\ \ \\
   \displaystyle{
    =\vf_\al(z_{b}-z_a, \om_\al) \vf_{\al + \be}(z - z_b, \om_{\al + \be})
    +\vf_\be(z_{a}-z_b, \om_\be) \vf_{\al + \be}(z-z_a, \om_{\al + \be})
    }\,,
\end{array}\eq
and
\beq\begin{array}{c}
\label{a37}
\displaystyle{
        \vf_\al (z, \om_\al + u_1) \vf_\be (z, \om_\be + u_2)
        = \vf_{\al + \be}(z,\om_{\al + \be} + u_1 + u_2)  \times
         }
         \\ \ \\
         \displaystyle{
         \times\Big( E_1(z)+ E_1(\om_\al + u_1) + E_1(\om_\be + u_2)
         - E_1(z + \om_{\al +\be} + u_1 + u_2)\Big)\,,
         }
\end{array}\eq
and
\beq\begin{array}{c}
\label{a38}
\displaystyle{
        \vf_\al (z_1, \om_\al + u) \vf_\al (z_2, \om_\al + u) =
        }
        \\ \ \\
         \displaystyle{
         = \vf_{\al}(z_1 + z_2, \om_\al + u)  ( E_1(z_1) + E_1(z_2))
         - f_\al (z_1 + z_2,\om_\al + u)\,.
         }
\end{array}\eq
In these identities, we assume $\om_{\al+\be}=\om_\al+\om_\be$.

\paragraph{Discrete Fourier transform.}
Writing $\phi(z,u;\tau)$ for the Kronecker function (\ref{a01}), one has,
for every $m\in\mZ_N$,
\beq\label{a39}
\begin{array}{c}
\displaystyle{
\frac1N\,\sum_{k\in\mZ_N}\exp\!\Big(\frac{2\pi\imath\,mk}{N}\Big)
\phi\!\Big(z+\frac{k}{N},\,u;\,\tau\Big)=\exp\!\big(-2\pi\imath\,mz\big)\,
\phi(Nz,\,u-m\tau;\,N\tau)\,.
}
\end{array}
\eq
The companion sum along the $\tau$-direction is, for every $m\in\mZ_N$,
\beq\label{a40}
\begin{array}{c}
\displaystyle{
\sum_{k\in\mZ_N}\exp\!\Big(\frac{2\pi\imath\,k(u+m)}{N}\Big)
\phi\!\Big(z+\frac{k\tau}{N},\,u;\,\tau\Big)=\phi\!\Big(z,\,\frac{u+m}{N};\,\frac{\tau}{N}\Big)\,.
}
\end{array}
\eq
The Fourier symmetry (\ref{t221}) is based on the identities
\beq\label{a41}
 \begin{array}{c}
  \displaystyle{
\frac{1}{{ N}}\,\sum\limits_{\al\in\,\mZ_N\times\mZ_N} \kappa_{\al,\ga}^2\,
\vf_\al({N}\hbar,\om_\al+\frac{z}{{ N}})=\vf_\ga(z,\om_\ga+\hbar)\,,
\quad\forall\ga\in\mZ_N\times\mZ_N\,.
  }
 \end{array}
 \eq
Degenerations of these identities yield
\begin{equation}
\label{a42}
\begin{array}{c}
    \displaystyle{
    \frac{1}{N}\, E_1 (z) + \frac{1}{N}\, \sum_{\substack{\al\neq(0,0)\\\al\in\,\mZ_N\times\mZ_N}}
    \kappa^2_{\al, \be} \varphi_\al (z, \om_\al) = E_1 (\om_\be +\frac{z}{N})
    + 2 \pi i \partial_\tau \om_\be
    }\,.
\end{array}
\end{equation}
\begin{equation}\label{a43}
\begin{array}{c}
    \displaystyle{
    \frac{1}{N}\, \sum_{\substack{\al\neq(0,0)\\ \al\in\,\mZ_N\times\mZ_N}}
    \kappa^2_{\al,\be} \Big(E_1 (\om_\al) + 2 \pi i \partial_\tau \om_\al \Big) = E_1(\om_\be)
    + 2 \pi i\partial_\tau \om_\be, \quad \be \neq 0
    }\,.
\end{array}
\end{equation}
\begin{equation}
\begin{array}{c} \label{a44}
    \displaystyle{
    \frac{1}{N}\, \sum_{\substack{\al\neq(0,0)\\ \al\in\,\mZ_N\times\mZ_N}} \Big(E_1(\om_\al)
    + 2 \pi i \partial_\tau \om_\al \Big)
    = 0
    }\,.
\end{array}
\end{equation}
See \cite{Z2} for details.

\paragraph{The function \texorpdfstring{$v(z,u)$}{v(z,u)}.}
Using the functions (\ref{a18})--(\ref{a19}), define
  \beq\label{a45}
  \begin{array}{c}
  \displaystyle{
v(z,u)\equiv v(z,u|\nu)=\sum\limits_{a=0}^3\nu_a\vf_a(2z,u+\Om_a)=\sum\limits_{a=0}^3
\nu_a \exp(4\pi \imath z\p_\tau\Om_a)\phi(2z,u+\Om_a)\,,
 }
 \end{array}
 \eq
and its partial derivative
  \beq\label{a46}
  \begin{array}{c}
  \displaystyle{
v'(z,u)=\p_uv(z,u)=\sum\limits_{a=0}^3\nu_a \exp(4\pi \imath z\p_\tau\Om_a)f(2z,u+\Om_a)\,.
 }
 \end{array}
 \eq
Identities for $v(z,u)$ and $v'(z,u)$ are given in \cite{KH2,Z04}.
In particular,
\beq\label{a47}
v(z,u)v(-z,u)=\sum\limits_{a=0}^3\Big(\nu_a^2\wp(u+\Om_a)-\bar\nu_a^2\wp(z+\Om_a)\Big)
=-v(z,u)v(z,-u)\,.
\eq
where, as in (\ref{a21}),
  \beq\label{a48}
 %
 \eq
We also have
\beq\label{a50}
v(-z,-u)=-v(z,u)\,.
\eq
Differentiating (\ref{a47}) with respect to $u$ gives
  \beq\label{a51}
  \begin{array}{c}
  \displaystyle{
v(z,u)v'(z,-u)-v(z,-u)v'(z,u)=\sum\limits_{a=0}^3\nu_a^2\wp'(u+\Om_a)\,.
 }
 \end{array}
 \eq
In \cite{MMZ}, the following identities were also used:
\beq\label{a52}
  \begin{array}{c}
  \displaystyle{
v(x,u)\phi(x+y,w-u)+v(x,w)\phi(x-y,u-w)+v(y,-u)\phi(x+y,u+w)=
 }
 \\ \ \\
   \displaystyle{
=v(y,w)\phi(x-y,u+w)\,,
}
\end{array}
\eq
and
\beq\label{a53}
\begin{array}{c}
\phi(z,u-w)(v'(0,w)-v'(0,u))=
\\ \ \\
= 2v(-z,w)f(z,u+w)+2v(z,u)f(-z,u+w)+
\\ \ \\
+v'(-z,w)\phi(z,u+w)+v'(z,u)\phi(-z,u+w)\,,
\end{array}
\eq
where
  \beq\label{a54}
  \begin{array}{c}
  \displaystyle{
v'(0,u)=\sum\limits_{a=0}^3\nu_a f(0,u+\Om_a)=-\sum\limits_{a=0}^3\nu_a E_2(u+\Om_a)=
-\sum\limits_{a=0}^3\nu_a\wp(u+\Om_a)+\frac{\vth'''(0)}{3\vth'(0)}\,\sum\limits_{a=0}^3\nu_a\,.
 }
 \end{array}
 \eq
%

\section{Appendix B: proof of Theorem \ref{th1}}\label{secB}
\def\theequation{B.\arabic{equation}}
\setcounter{equation}{0}

We first write the Lax pair (\ref{tt402})--(\ref{tt410}) and the
equations of motion (\ref{tt417})--(\ref{tt421}) before simplifying
them using the constraints (\ref{t416}), the variables
${\mathcal A}$ and ${\mathcal B}$, and the symmetry properties
(\ref{t222}).

\underline{Lax pair.}
\beq\label{t402}
\begin{array}{c}
  \displaystyle{
    \mL(z) =\sum^{n}_{i,j=1}E_{ij}\otimes \mL^{ij}(z)+\sum^{n}_{i,j=1}E_{\bar{i}j}\otimes
    \mL^{\bar{i}j}(z)+\sum^{n}_{i,j=1}E_{i\bar{j}}\otimes\mL^{i\bar{j}}(z)
    +\sum^{n}_{i,j=1}E_{\bar{i}\bar{j}}\otimes \mL^{\bar{i}\bar{j}}(z)\,,
  }
\end{array}
\eq
\beq\label{t403}
\begin{array}{c}
  \displaystyle{
    \bar{i}=i+n,~ \qquad\bar{\bar{i}}=i, \qquad \qquad \mL^{ij}(z)\in \text{Mat}_N
    \qquad\mL(z) \in \text{Mat}_{2Nn}\,,
  }
\end{array}
\eq
\beq\label{t404}
\begin{array}{c}
  \displaystyle{
    \mL^{ij}(z)=\delta_{ij}\left(p_{i}1_N+\tr_{2}\Big(\mS^{ii}_{2}R^{z,(0)}_{12}P_{12}\Big)\right)
    +(1-\de_{ij})\tr_{2}\Big(\mS^{ij}_{2}R^{z}_{12}(q_{ij})P_{12}\Big)\,,
  }
\end{array}
\eq
\beq\label{t405}
\begin{array}{c}
  \displaystyle{
    \mL^{i\bar{j}}(z)=\de_{ij}\left( \ti{K}^{z}_{1}(q_{i})+\tr_{2}\Big(\mS^{i\bar{i}}_{2}
    \ti{R}^{z}_{12}(2q_{i})P_{12}\Big)\right) +(1-\de_{ij})\tr_{2}\Big(\mS^{i\bar{j}}_2
    \ti{R}^{z}_{12}(q_{ij}^{+})P_{12}\Big)\,,
  }
\end{array}
\eq
\beq\label{t406}
\begin{array}{c}
  \displaystyle{
    \mL^{\bar{i}\bar{j}}(z)=-\de_{ij}\left(p_{i}1_{N}
    +\tr_{2}\Big(\mS^{\bar{i}\bar{i}}_{2}R^{-z,(0)}_{12}P_{12}\Big)\right)-(1-\de_{ij})
    \tr_{2}\Big(\mS^{\bar{i}\bar{j}}_{2}R^{-z}_{12}(q_{ij})P_{12}\Big)\,,
  }
\end{array}
\eq
\beq\label{t407}
\begin{array}{c}
  \displaystyle{
    \mL^{\bar{i}j}(z)=-\de_{ij}\left( \ti{K}^{-z}_{1}(q_{i})+\tr_{2}\Big(\mS^{\bar{i}i}_{2}
    \ti{R}^{-z}_{12}(2q_{i})P_{12}\Big)\right)-(1-\de_{ij})\tr_{2}\Big(\mS^{\bar{i}j}_{2}
    \ti{R}^{-z}_{12}(q_{ij}^{+})P_{12}\Big)\,.
  }
\end{array}
\eq
Similarly, for $\mM(z)\in\text{Mat}_{2Nn}$, we have
\beq\label{t408}
\begin{array}{c}
  \displaystyle{
    \mM^{ij}(z)=\de_{ij}\tr_{2}\Big(\mS^{ii}_{2}R^{z,(1)}_{12}P_{12}\Big)+(1-\de_{ij})
    \tr_{2}\Big(\mS^{ij}_{2}F^{z}_{12}(q_{ij})P_{12}\Big)\,,
  }
\end{array}
\eq
\beq\label{t409}
\begin{array}{c}
  \displaystyle{
    \mM^{i\bar{j}}(z)=\de_{ij}\left(\frac{1}{2}\ti{Y}^{z}_{1}(q_{i})+\tr_{2}\Big(\mS^{i\bar{i}}_2
    \ti{F}^{z}_{12}(2q_{i})P_{12}\Big)\right) +(1-\de_{ij})\tr_{2}\Big(\mS^{i\bar{j}}_{2}
    \ti{F}^{z}_{12}(q_{ij}^{+})P_{12}\Big)\,,
  }
\end{array}
\eq
\beq\label{t410}
\begin{array}{c}
  \displaystyle{
    \mM^{\bar{i}\bar{j}}(z)=\de_{ij}\tr_{2}\Big(\mS^{\bar{i}\bar{i}}_2R_{12}^{-z,(1)}P_{12}\Big)
    +(1-\de_{ij})\tr_{2}\Big(\mS^{\bar{i}\bar{j}}_{2}F^{-z}_{12}(q_{ij})P_{12}\Big)\,,
  }
\end{array}
\eq
\beq\label{t411}
\begin{array}{c}
  \displaystyle{
    \mM^{\bar{i}j}(z)=\de_{ij}\left(\frac{1}{2}\ti{Y}^{-z}_{1}(q_{i})+\tr_{2}\Big(\mS^{\bar{i}i}_{2}
    \ti{F}^{-z}_{12}(2q_{i})P_{12}\Big)\right)+(1-\de_{ij})\tr_{2}\Big(\mS^{\bar{i}j}_2
    \ti{F}^{-z}_{12}(q_{ij}^{+})P_{12}\Big)\,.
  }
\end{array}
\eq

\underline{Equations of motion.}
For the off-diagonal ($i\neq j$) $N\times N$ blocks of $\mS$,
\beq\label{t417}
\begin{array}{c}
  \displaystyle{
    \dot{\mS}^{ij} = {} \sum_{k:k\neq i,j}^{n}\tr_2\Big(P_{12}\mS_{2}^{ik}
    \mS_1^{kj}F_{21}^{0}(q_{kj})\Big) - \sum_{k:k\neq i,j}^{n}
    \tr_2\Big(F_{12}^{0}(q_{ik})P_{12}\mS_2^{ik}\mS_1^{kj}\Big) +
  }
  \\ \ \\
  \displaystyle{
    + \sum_{k:k\neq i,j}^{n}\tr_{2}\Big(P_{12}\mS_{2}^{i\bar{k}}\mS_{1}^{\bar{k}j}
    \ti{F}_{21}^{0}(q_{kj}^{+})\Big) - \sum_{k:k\neq i,j}^{n}
    \tr_2\Big(\ti{F}_{12}^{0}(q_{ik}^{+})P_{12}\mS_{2}^{i\bar{k}}\mS_{1}^{\bar{k}j}\Big) +
  }
\end{array}
\eq
$$
\begin{array}{c}
  \displaystyle{
    + \tr_2\Big(P_{12}\mS_{2}^{ii}\mS_{1}^{ij}F_{21}^{0}(q_{ij})\Big)
    -\tr_2\Big(F_{12}^{0}(q_{ij})P_{12}\mS_{2}^{ij}\mS_{1}^{jj}\Big)+\frac{1}{2}\,
    \mS^{i\bar{j}}_{1}\ti{Y}^{0}_{1}(q_{j}) +
  }
  \\ \ \\
  \displaystyle{
    + \tr_2\Big(P_{12}\mS_2^{ij}\mS_1^{jj}r_{21}^{(1)}\Big)
    -\tr_2\Big(r_{12}^{(1)}P_{12}\mS_2^{ii}\mS_1^{ij}\Big)-\frac{1}{2}\,
    \ti{Y}^{0}_{1}(q_{i})\mS^{i\bar{j}}_{1} +
  }
  \\ \ \\
  \displaystyle{
    + \tr_{2}\Big(P_{21}\mS_{2}^{i\bar{j}}\mS_{1}^{\bar{j}j}\ti{F}^{0}_{21}(2q_{j})\Big)
    -\tr_{2}\Big(\ti{F}^{0}_{12}(q_{ij}^{+})P_{12}\mS_{2}^{i\bar{j}}\mS_{1}^{\bar{j}j}\Big)+
  }
  \\ \ \\
  \displaystyle{
    +\tr_{2}\Big(P_{12}\mS_{2}^{i\bar{i}}\mS_{1}^{\bar{i}j}\ti{F}^{0}_{21}(q_{ij}^{+})\Big)
    -\tr_{2}\Big(\ti{F}^{0}_{12}(2q_{i})P_{12}\mS_{2}^{i\bar{i}}\mS_{1}^{\bar{i}j}\Big)\,,
  }
\end{array}
$$

\beq\label{t418}
\begin{array}{c}
  \displaystyle{
    \dot{\mS}^{i\bar{j}} = \sum_{k:k\neq i,j}\tr_2\Big(P_{12}\mS_{2}^{ik}\mS_1^{k\bar{j}}
    \ti{F}_{21}^{0}(q_{kj}^{+})\Big) -\sum_{k:k\neq i,j}\tr_2\Big(F_{12}^{0}(q_{ik})P_{12}
    \mS_2^{ik}\mS_1^{k\bar{j}}\Big) +
  }
  \\ \ \\
  \displaystyle{
    + \sum_{k:k\neq i,j}\tr_{2}\Big(P_{12}\mS_{2}^{i\bar{k}}\mS_{1}^{kj}F_{21}^{0}(q_{kj})\Big)
    -\sum_{k:k\neq i,j}\tr_2\Big(\ti{F}_{12}^{0}(q_{ik}^{+})P_{12}\mS_2^{i\bar{k}}\mS_1^{kj}\Big) +
  }
  \\ \ \\
  \displaystyle{
    + \tr_2\Big(P_{12}\mS_2^{ii}\mS_1^{i\bar{j}}\ti{F}_{21}^{0}(q_{ij}^{+})\Big)
    -\tr_2\Big(\ti{F}_{12}^{0}(q_{ij}^{+})P_{12}\mS_2^{i\bar{j}}\mS_1^{\bar{j}\bar{j}}\Big)
    +\frac{1}{2}\,\mS^{ij}_{1}\ti{Y}^{0}_{1}(q_{j}) +
  }
  \\ \ \\
  \displaystyle{
    + \tr_2\Big(P_{12}\mS_2^{i\bar{j}}\mS_1^{\bar{j}\bar{j}}r_{21}^{(1)}\Big)
    -\tr_2\Big(r_{12}^{(1)}P_{12}\mS_2^{ii}\mS_1^{i\bar{j}}\Big)-\frac{1}{2}\,
    \ti{Y}^{0}_{1}(q_{i})\mS^{ij}_{1} +
  }
  \\ \ \\
  \displaystyle{
    + \tr_2\Big(P_{12}\mS_{2}^{ij}\mS_1^{j\bar{j}}\ti{F}_{21}^{0}(2q_{j})\Big)
    -\tr_2\Big(F_{12}^{0}(q_{ij})P_{12}\mS_2^{ij}\mS_1^{j\bar{j}}\Big) +
  }
  \\ \ \\
  \displaystyle{
    + \tr_{2}\Big(P_{12}\mS_{2}^{i\bar{i}}\mS_{1}^{\bar{i}\bar{j}}F_{21}^{0}(q_{ij})\Big)
    -\tr_2\Big(\ti{F}_{12}^{0}(2q_{i})P_{12}\mS_2^{i\bar{i}}\mS_1^{\bar{i}\bar{j}}\Big)\,.
  }
\end{array}
\eq
for the diagonal $N\times N$ blocks of $\mS$,
\beq\label{t419}
\begin{array}{c}
  \displaystyle{
    \dot{\mS}^{ii} = [\mS^{ii},\tr_2\Big(m_{12}(0)\mS_2^{ii}\Big)] +\frac{1}{2}\,
    \mS^{i\bar{i}}_{1}\ti{Y}^{0}_{1}(q_{i})-\frac{1}{2}\,\ti{Y}^{0}_{1}(q_{i})\mS^{i\bar{i}}_{1} +
  }
  \\ \ \\
  \displaystyle{
    + \sum_{k:k\neq i}^n\left(\tr_2\Big(P_{12}\mS_2^{ik}\mS_1^{ki}F_{12}^{0}(q_{ik})\Big)
    -\tr_2\Big(F_{12}^{0}(q_{ik})P_{12}\mS_2^{ik}\mS_1^{ki}\Big)\right) +
  }
  \\ \ \\
  \displaystyle{
    + \sum_{k:k\neq i}^n\left(\tr_2\Big(P_{12}\mS_2^{i\bar{k}}\mS_1^{\bar{k}i}
    \ti{F}_{12}^{0}(q_{ik}^{+})\Big)-\tr_2\Big(\ti{F}_{12}^{0}(q_{ik}^{+})P_{12}
    \mS_2^{i\bar{k}}\mS_1^{\bar{k}i}\Big)\right) +
  }
  \\ \ \\
  \displaystyle{
    + \tr_{2}\Big(P_{12}\mS_{2}^{i\bar{i}}\mS_{1}^{\bar{i}i}\ti{F}^{0}_{21}(2q_{i})\Big)
    -\tr_{2}\Big(\ti{F}^{0}_{12}(2q_{i})P_{12}\mS_{2}^{i\bar{i}}\mS_{1}^{\bar{i}i}\Big)\,,
  }
\end{array}
\eq
\beq\label{t420}
\begin{array}{c}
  \displaystyle{
    \dot{\mS}^{i\bar{i}} = \sum_{k:k\neq i}^n\tr_2\Big(P_{21}\mS_2^{ik}\mS_1^{k\bar{i}}
    \ti{F}_{21}^{0}(q_{ik}^{+})\Big)-\sum_{k:k\neq i}^n \tr_2\Big(F_{12}^{0}(q_{ik})P_{12}
    \mS_2^{ik}\mS_1^{k\bar{i}}\Big) +
  }
  \\ \ \\
  \displaystyle{
    + \sum_{k:k\neq i}^n\tr_2\Big(P_{12}\mS_2^{i\bar{k}}
    \mS_1^{\bar{k}\bar{i}}F_{21}^{0}(q_{ki})\Big) -\sum_{k:k\neq i}^n
    \tr_2\Big(\ti{F}_{12}^{0}(q_{ik}^{+})P_{12}\mS_2^{i\bar{k}}\mS_1^{\bar{k}\bar{i}}\Big)+
  }
  \\ \ \\
  \displaystyle{
    + \tr_2\Big(P_{12}\mS_2^{ii}\mS_1^{i\bar{i}}\ti{F}_{21}^{0}(2q_i)\Big)
    -\tr_2\Big(\ti{F}_{12}^{0}(2q_i)P_{12}\mS_2^{i\bar{i}}\mS_1^{\bar{i}\bar{i}}\Big)
    -\frac{1}{2}\,\ti{Y}^{0}_{1}(q_{i})\mS^{ii}_{1} +
  }
  \\ \ \\
  \displaystyle{
    + \tr_2\Big(P_{12}\mS_2^{i\bar{i}}\mS_1^{\bar{i}\bar{i}}r_{21}^{(1)}\Big)
    -\tr_2\Big(r_{12}^{(1)}P_{12}\mS_{2}^{ii}\mS_{1}^{i\bar{i}}\Big) +\frac{1}{2}\,
    \mS^{ii}_{1}\ti{Y}^{0}_{1}(q_{i})\,,
  }
\end{array}
\eq
and for the momenta,
\beq\label{t421}
\begin{array}{c}
  \displaystyle{
    \dot{p}_{i} = - \sum_{k:k \neq i}^{n} \tr_{23} \Big(\partial_{q_{ik}}F_{32}^{0}(q_{ik}) P_{23}
    \mS_2^{ik}\mS_3^{ki}\Big)-\tr_{23}\Big(\partial_{2q_{i}}\ti{F}^{0}_{32}(2q_{i})P_{23}
    \mS_{2}^{i\bar{i}}\mS_{3}^{\bar{i}i}\Big)-
  }
  \\ \ \\
  \displaystyle{
    - \sum_{k:k \neq i} \tr_{23} \Big(\partial_{q_{ik}^{+}} \ti{F}_{32}^{0} (q_{ik}^{+})P_{23}
    \mS_2^{i\bar{k}} \mS_3^{\bar{k}i}\Big)+ \frac{1}{2}\, \p_{q_i}u_N(q_i|\nu)
    -\frac{1}{2}\,\tr_{2}\Big(\partial_{q_{i}}\ti{Y}^{0}_{2}(q_{i})\mS^{i\bar{i}}_{2}\Big)\,.
  }
\end{array}
\eq
\begin{proof}
Throughout the calculation, we use $p_i=\dot q_i$, as follows from
the Hamiltonian equations.

\textbf{1.} We begin with the off-diagonal block $i\bar j$,
where $i\neq j$. The left-hand side of the Lax equation is
\beq\label{b01}
\begin{array}{c}
  \displaystyle{
    \text{l.h.s.} =\tr_{2}\Big(\dot{\mS}^{i\bar{j}}_{2}\ti{R}_{12}^{z}(q_{ij}^{+})P_{12}\Big)
    +(\dot{q}_{i}+\dot{q}_{j})\tr_{2}\Big(\mS^{i\bar{j}}_{2}
    \ti{F}^{z}_{12}(q_{ij}^{+})P_{12}\Big) \,.
  }
\end{array}
\eq
The right-hand side is
\beq\label{b02}
\begin{array}{c}
  \displaystyle{
    \text{r.h.s.}=\mL^{i\bar{j}}(z)\mM^{\bar{j}\bar{j}}(z)-\mM^{ii}(z)\mL^{i\bar{j}}(z)
    +\mL^{ii}(z)\mM^{i\bar{j}}(z)-\mM^{i\bar{j}}(z)\mL^{\bar{j}\bar{j}}(z) +
  }
  \\ \ \\
  \displaystyle{
    +\mL^{ij}(z)\mM^{j\bar{j}}(z)-\mM^{i\bar{i}}(z)\mL^{\bar{i}\bar{j}}(z)+\mL^{i\bar{i}}(z)
    \mM^{\bar{i}\bar{j}}(z)-\mM^{ij}(z)\mL^{j\bar{j}}(z) +
  }
  \\ [3mm]
  \displaystyle{
    + \sum_{k:k\neq i,j}^{n}\left(\mL^{ik}(z)\mM^{k\bar{j}}(z)-\mM^{ik}(z)\mL^{k\bar{j}}(z)\right)
    + \sum_{k:k\neq i,j}^{n}\left(\mL^{i\bar{k}}(z)\mM^{\bar{k}\bar{j}}(z)-\mM^{i\bar{k}}(z)
    \mL^{\bar{k}\bar{j}}(z)\right)\,.
  }
\end{array}
\eq
The last two sums are evaluated using (\ref{b03})--(\ref{b04}).

For $k\neq i,j$ we have
\beq\label{b05}
\begin{array}{c}
  \displaystyle{
    \mL^{ik}(z)\mM^{k\bar{j}}(z)-\mM^{ik}(z)\mL^{k\bar{j}}(z) =
  }
  \\ \ \\
  \displaystyle{
    =\tr_{23}\Big(\mS_{2}^{ik}R_{12}^{z}(q_{ik})P_{12}\mS_{3}^{k\bar{j}}
    \ti{F}_{13}^{z}(q_{kj}^{+})P_{13}\Big) -\tr_{23}\Big(\mS_{2}^{ik}F_{12}^{z}(q_{ik})P_{12}
    \mS_{3}^{k\bar{j}}\ti{R}_{13}^{z}(q_{kj}^{+})P_{13}\Big) =
  }
  \\ \ \\
  \displaystyle{
    =\tr_{23}\Big(\left(R_{12}^{z}(q_{ik})\ti{F}_{23}^{z}(q_{kj}^{+})-F_{12}^{z}(q_{ik})
    \ti{R}_{23}^{z}(q_{kj}^{+})\right)P_{12}P_{13}\mS_{2}^{ik}\mS_{3}^{k\bar{j}}\Big)
    \stackrel{(\ref{b03})}{=}
  }
  \\ \ \\
  \displaystyle{
    =\tr_{23}\Big(\left(\ti{F}^{0}_{23}(q_{kj}^{+})\ti{R}_{13}^{z}(q_{ij}^{+})
    -\ti{R}_{13}^{z}(q_{ij}^{+})F^{0}_{12}(q_{ik})\right)P_{12}P_{13}\mS_{2}^{ik}
    \mS_{3}^{k\bar{j}}\Big) =
  }
  \\ \ \\
  \displaystyle{
    =\tr_{23}\Big(\ti{R}_{13}^{z}(q_{ij}^{+})P_{13}\left(P_{23}\mS_{2}^{ik}\mS_{3}^{k\bar{j}}
    \ti{F}^{0}_{23}(q_{kj}^{+})-F^{0}_{32}(q_{ik})P_{23}\mS^{ik}_{2}
    \mS^{k\bar{j}}_{3}\right)\Big)\,,
  }
\end{array}
\eq
and
\beq\label{b06}
\begin{array}{c}
  \displaystyle{
    \mL^{i\bar{k}}(z)\mM^{\bar{k}\bar{j}}(z)-\mM^{i\bar{k}}(z)\mL^{\bar{k}\bar{j}}(z) =
  }
  \\ \ \\
  \displaystyle{
    =\tr_{23}\Big(\mS_{2}^{i\bar{k}}\ti{R}_{12}^{z}(q_{ik}^{+})P_{12}
    \mS_{3}^{\bar{k}\bar{j}}F_{13}^{-z}(q_{kj})P_{13}\Big) +\tr_{23}\Big(\mS_{2}^{i\bar{k}}
    \ti{F}_{12}^{z}(q_{ik}^{+})P_{12}\mS_{3}^{\bar{k}\bar{j}}R_{13}^{-z}(q_{kj})P_{13}\Big) =
  }
  \\ \ \\
  \displaystyle{
    =\tr_{23}\Big(\left(\ti{R}_{12}^{z}(q_{ik}^{+})F_{23}^{-z}(q_{kj})
    +\ti{F}_{12}^{z}(q_{ik}^{+})R_{23}^{-z}(q_{kj})\right)P_{12}P_{13}\mS_{2}^{i\bar{k}}
    \mS_{3}^{\bar{k}\bar{j}}\Big) \stackrel{(\ref{b04})}{=}
  }
  \\ \ \\
  \displaystyle{
    =\tr_{23}\Big(\left(F^{0}_{23}(q_{kj})\ti{R}_{13}^{z}(q_{ij}^{+})-\ti{R}_{13}^{z}(q_{ij}^{+})
    \ti{F}^{0}_{12}(q_{ik}^{+})\right)P_{12}P_{13}\mS_{2}^{i\bar{k}}\mS_{3}^{\bar{k}\bar{j}}\Big) =
  }
  \\ \ \\
  \displaystyle{
    =\tr_{23}\Big(\ti{R}_{13}^{z}(q_{ij}^{+})P_{13}\left(P_{23}\mS_{2}^{i\bar{k}}
    \mS_{3}^{\bar{k}\bar{j}}F^{0}_{23}(q_{kj})-\ti{F}^{0}_{32}(q_{ik}^{+})P_{23}
    \mS^{i\bar{k}}_{2}\mS^{\bar{k}\bar{j}}_{3}\right)\Big)\,.
  }
\end{array}
\eq
These expressions give the first two lines of (\ref{t418}).
The momentum-dependent term $(p_i+p_j)\mM^{i\bar j}(z)$ on the
right-hand side of (\ref{b02}) is canceled by the last term on the
left-hand side of (\ref{b01}). We next evaluate the parts of the first
and fourth terms in (\ref{b02}) that are independent of the momenta:
\beq\label{b09}
\begin{array}{c}
  \displaystyle{
    \mL^{i\bar{j}}(z)\mM^{\bar{j}\bar{j}}(z)-\mM^{i\bar{j}}(z)(\mL^{\bar{j}\bar{j}}(z)+p_{j}1_{N}) =
  }
  \\ \ \\
  \displaystyle{
    =\tr_{23}\Big(\left(\ti{R}^{z}_{12}(q_{ij}^{+})R^{-z,(1)}_{23}
    +\ti{F}^{z}_{12}(q_{ij}^{+})R^{-z,(0)}_{23}\right)P_{12}P_{13}\mS^{i\bar{j}}_{2}
    \mS^{\bar{j}\bar{j}}_{3}\Big)\stackrel{(\ref{b07})}{=}
  }
  \\ \ \\
  \displaystyle{
    =\tr_{23}\Big(\left(r^{(1)}_{23}\ti{R}^{z}_{13}(q_{ij}^{+})-\ti{R}^{z}_{13}(q_{ij}^{+})
    \ti{F}^{0}_{12}(q_{ij}^{+})-\frac{1}{2}\,\partial^{2}_{q_{ij}^{+}}
    \ti{R}^{z}_{12}(q_{ij}^{+})P_{23}\right)P_{12}P_{13}\mS^{i\bar{j}}_{2}
    \mS^{\bar{j}\bar{j}}_{3}\Big)\,,
  }
\end{array}
\eq
The first two terms in this expression give two contributions to the
third and fourth lines of (\ref{t418}). The last term in (\ref{b09})
is an additional term that must cancel.

Similarly, (\ref{b08}) gives
\beq\label{b10}
\begin{array}{c}
  \displaystyle{
    (\mL^{ii}(z)-p_{i}1_{N})\mM^{i\bar{j}}(z)-\mM^{ii}(z)\mL^{i\bar{j}}(z) =
  }
  \\ \ \\
  \displaystyle{
    = \tr_{23}\Big(\left(R_{12}^{(0),z}\ti{F}_{23}^{z}(q_{ij}^{+})- R_{12}^{z,(1)}
    \ti{R}_{23}^{z} (q_{ij}^{+})\right)P_{12}P_{13}\mS^{ii}_{2}\mS^{i\bar{j}}_{3}\Big)
    \stackrel{(\ref{b08})}{=}
  }
  \\ \ \\
  \displaystyle{
    =\tr_{23}\Big(\left(\ti{F}^{0}_{23}(q_{ij}^{+})\ti{R}^{z}_{13}(q_{ij}^{+})
    -\ti{R}^{z}_{13}(q_{ij}^{+})r^{(1)}_{12}+\frac{1}{2}\,\partial_{q_{ij}^{+}}^{2}
    \ti{R}^{z}_{13}(q_{ij}^{+})P_{12}\right)P_{12}P_{13}\mS^{ii}_{2}\mS^{i\bar{j}}_{3}\Big)\,.
  }
\end{array}
\eq
The first two terms give the other two contributions to the third and
fourth lines of (\ref{t418}). The last term cancels the last term in
(\ref{b09}) after taking the partial traces and imposing
$\tr\Big(\mS^{ii}\Big)=\tr\Big(\mS^{\bar{j}\bar{j}}\Big)$,
as required by (\ref{t416}).

The terms involving $\ti K$-matrices are evaluated using (\ref{t229}):
\beq\label{b12}
\begin{array}{c}
  \displaystyle{
    \frac{1}{2}\,\mL^{ij}(z)\ti{Y}^{z}_{1}(q_{j})- \mM^{ij}(z) \ti{K}^{z}_{1}(q_{j})
    +\ti{K}^{z}_{1}(q_{i})\mM^{\bar{i}\bar{j}}(z)-\frac{1}{2}\,\ti{Y}^{z}_{1}(q_{i})
    \mL^{\bar{i}\bar{j}}(z) \stackrel{(\ref{t4012})}{=}
  }
  \\ \ \\
  \displaystyle{
    =\tr_{2}\Big(\Bigl(\frac{1}{2}\,R_{12}^{z}(q_{ij})\ti{Y}^{z}_{2}(q_{j})
    -F_{12}^{z}(q_{ij})\ti{K}^{z}_{2}(q_{j}) +
  }
  \\ \ \\
  \displaystyle{
    +\ti{K}^{z}_{1}(q_{i})F_{12}^{-z}(q_{ij})+\frac{1}{2}\,
    \ti{Y}^{z}_{1}(q_{i})R_{12}^{-z}(q_{ij})\Bigr)P_{12}\mS_{2}^{ij}\Big) =
  }
  \\ \ \\
  \displaystyle{
    \stackrel{(\ref{t229})}{=}\frac{1}{2}\,\tr_{2}\Big(\left(\ti{Y}^{(0)}_{2}(q_{j})
    \ti{R}_{12}^{z}(q_{ij}^{+})-\ti{R}^{z}_{12}(q_{ij}^{+})\ti{Y}_{1}^{(0)}(q_{i})\right)P_{12}
    \mS_{2}^{ij}\Big)\,.
  }
\end{array}
\eq
The remaining terms are
\beq\label{b13}
\begin{array}{c}
  \displaystyle{
    \mL^{ij}(z)(\mM^{j\bar{j}}(z)-\frac{1}{2}\,\ti{Y}^{z}_{1}(q_{j}))
    -\mM^{ij}(z)(\mL^{j\bar{j}}(z)- \ti{K}^{z}_{1}(q_{j})) =
  }
  \\ \ \\
  \displaystyle{
    =\tr_{23}\Big(\left(R_{12}^{z}(q_{ij})\ti{F}_{23}^{z}(2q_{j})-F_{12}^{z}(q_{ij})
    \ti{R}_{23}^{z}(2q_{j})\right)P_{12}P_{13}\mS_{2}^{ij}\mS_{3}^{j\bar{j}}\Big)
    \stackrel{(\ref{b03})}{=}
  }
  \\ \ \\
  \displaystyle{
    =\tr_{23}\Big(\left(\ti{F}^{0}_{23}(2q_{j})\ti{R}_{13}^{z}(q_{ij}^{+})
    -\ti{R}_{13}^{z}(q_{ij}^{+})F^{0}_{12}(q_{ij})\right)P_{12}P_{13}\mS_{2}^{ij}
    \mS_{3}^{j\bar{j}}\Big)\,,
  }
\end{array}
\eq
and
\beq\label{b14}
\begin{array}{c}
  \displaystyle{
    (\mL^{i\bar{i}}(z)-\ti{K}^{z}_{1}(q_{i}))\mM^{\bar{i}\bar{j}}(z)-(\mM^{i\bar{i}}(z)
    -\frac{1}{2}\,\ti{Y}^{z}_{1}(q_{i}))\mL^{\bar{i}\bar{j}}(z) =
  }
  \\ \ \\
  \displaystyle{
    =\tr_{23}\Big(\left(\ti{R}_{12}^{z}(2q_{i})F_{23}^{-z}(q_{ij})
    +\ti{F}_{12}^{z}(2q_{i})R_{23}^{-z}(q_{ij})\right)P_{12}P_{13}\mS_{2}^{i\bar{i}}
    \mS_{3}^{\bar{i}\bar{j}}\Big)\stackrel{(\ref{b04})}{=}
  }
  \\ \ \\
  \displaystyle{
    =\tr_{23}\Big(\left(F^{0}_{23}(q_{ij})\ti{R}_{13}^{z}(q_{ij}^{+})-\ti{R}_{13}^{z}(q_{ij}^{+})
    \ti{F}^{0}_{12}(2q_{i})\right)P_{12}P_{13}\mS_{2}^{i\bar{i}}\mS_{3}^{\bar{i}\bar{j}}\Big)\,.
  }
\end{array}
\eq
\textbf{2.} We next consider the $i\bar i$ block of the Lax equation.
Its left-hand side is
\beq\label{b15}
\begin{array}{c}
  \displaystyle{
    \text{l.h.s.} = \dot{\mL}^{i\bar{i}}(z)=\tr_{2}\Big(\dot{\mS}^{i\bar{i}}_{2}
    \ti{R}^{z}_{12}(2q_{i})P_{12}\Big) + 2\dot{q}_{i}\tr_{2}\left(\mS^{i\bar{i}}_{2}
    \ti{F}^{z}_{12}(2q_{i})P_{12}\right)+\dot{q}_{i}\ti{Y}_{1}^{z}(q_{i})\,.
  }
\end{array}
\eq
The right-hand side is
\beq\label{b16}
\begin{array}{c}
  \displaystyle{
    \text{r.h.s.}=\mL^{i\bar{i}}(z)\mM^{\bar{i}\bar{i}}(z)-\mM^{ii}(z)\mL^{i\bar{i}}(z)
    +\mL^{ii}(z)\mM^{i\bar{i}}(z)-\mM^{i\bar{i}}(z)\mL^{\bar{i}\bar{i}}(z) +
  }
  \\ \ \\
  \displaystyle{
    + \sum_{k:k\neq i}^{n}\left(\mL^{ik}(z)\mM^{k\bar{i}}(z)-\mM^{ik}(z)\mL^{k\bar{i}}(z)\right)
    + \sum_{k:k\neq i}^{n}\left(\mL^{i\bar{k}}(z)\mM^{\bar{k}\bar{i}}(z)-\mM^{i\bar{k}}(z)
    \mL^{\bar{k}\bar{i}}(z)\right)\,.
  }
\end{array}
\eq
Again, the last two sums are evaluated using (\ref{b03}) and
(\ref{b04}). For $k\neq i$, we obtain
\beq\label{b17}
\begin{array}{c}
  \displaystyle{
    \mL^{ik}(z)\mM^{k\bar{i}}(z)-\mM^{ik}(z)\mL^{k\bar{i}}(z)=
  }
  \\ \ \\
  \displaystyle{
    \stackrel{(\ref{b03})}{=}\tr_{23}\Big(\ti{R}_{13}^{z}(2q_{i})P_{13}\left(P_{23}
    \mS_{2}^{ik}\mS_{3}^{k\bar{i}}\ti{F}^{0}_{23}(q_{ki}^{+})-F^{0}_{32}(q_{ik})P_{23}
    \mS^{ik}_{2}\mS^{k\bar{i}}_{3}\right)\Big)\,,
  }
\end{array}
\eq
$$
\begin{array}{c}
  \displaystyle{
    \mL^{i\bar{k}}(z)\mM^{\bar{k}\bar{i}}(z)-\mM^{i\bar{k}}(z)\mL^{\bar{k}\bar{i}}(z) =
  }
  \\ \ \\
  \displaystyle{
    \stackrel{(\ref{b04})}{=}\tr_{23}\Big(\ti{R}_{13}^{z}(2q_{i})P_{13}\left(P_{23}
    \mS_{2}^{i\bar{k}}\mS_{3}^{\bar{k}\bar{i}}F^{0}_{23}(q_{ki})-\ti{F}^{0}_{32}(q_{ik}^{+})P_{23}
    \mS^{i\bar{k}}_{2}\mS^{\bar{k}\bar{i}}_{3}\right)\Big)\,.
  }
\end{array}
$$
The full momentum-dependent contribution on the right-hand side of
(\ref{b16}) is $2p_i\mM^{i\bar i}(z)$. Its part involving $\ti Y$ is
$p_i\ti Y^z_1(q_i)$. Since $p_i=\dot q_i$, this full contribution
equals the sum of the second and third terms on the left-hand side of
(\ref{b15}). We now consider the parts of the third and fourth terms in
(\ref{b16}) that contain neither momenta nor $\ti K$- or $\ti Y$-matrices:
\beq\label{b18}
\begin{array}{c}
  \displaystyle{
    (\mL^{ii}(z)-p_{i}1_{N})(\mM^{i\bar{i}}(z)-\frac{1}{2}\,\ti{Y}^{z}_{1}(q_{i}))
    -(\mM^{i\bar{i}}(z)-\frac{1}{2}\,\ti{Y}^{z}_{1}(q_{i}))(\mL^{\bar{i}\bar{i}}(z)+p_{i}1_{N}) +
  }
  \\ \ \\
  \displaystyle{
    + (\mL^{i\bar{i}}(z)-\ti{K}^{z}_{1}(q_{i}))\mM^{\bar{i}\bar{i}}(z)
    -\mM^{ii}(z)(\mL^{i\bar{i}}(z)-\ti{K}^{z}_{1}(q_{i})) =
  }
  \end{array}
\eq
$$
\begin{array}{c}
  \displaystyle{
    =\tr_{23}\Big(R_{12}^{(0),z}\ti{F}_{23}^{z}(2q_{i})P_{12}P_{13}\mS^{ii}_{2}
    \mS^{i\bar{i}}_{3}\Big) +\tr_{23}\Big(\ti{F}^{z}_{12}(2q_{i})R_{23}^{(0),-z}P_{12}P_{13}
    \mS^{i\bar{i}}_{2}\mS^{\bar{i}\bar{i}}_{3}\Big) +
  }
  \\ \ \\
  \displaystyle{
    + \tr_{23}\Big( \ti{R}_{12}^{z} (2q_{i}) R_{23}^{-z,(1)}P_{12}P_{13}\mS_{2}^{i\bar{i}}
    \mS_{3}^{\bar{i}\bar{i}}\Big) - \tr_{23}\Big(R_{12}^{z,(1)}\ti{R}_{23}^{z} (2q_{i})P_{12} P_{13}
    \mS_{2}^{ii}\mS_{3}^{i\bar{i}}\Big)\stackrel{(\ref{b07}),(\ref{b08})}{=}
  }
  \\ \ \\
  \displaystyle{
    =\tr_{23}\Big(\ti{F}^{0}_{23}(2q_{i})\ti{R}_{13}^{z}(2q_{i})P_{12}P_{13}\mS_{2}^{ii}
    \mS_{3}^{i\bar{i}}\Big)-\tr_{23}\Big(\ti{R}_{13}^{z}(2q_{i})\ti{F}^{0}_{12}(2q_{i})P_{12}P_{13}
    \mS_{2}^{i\bar{i}}\mS_{3}^{\bar{i}\bar{i}}\Big) -
  }
\end{array}
$$
$$
\begin{array}{c}
  \displaystyle{
    -\tr_{23}\Big(\ti{R}_{13}^{z}(2q_{i})r^{(1)}_{12}P_{12}P_{13}\mS_{2}^{ii}
    \mS_{3}^{i\bar{i}}\Big) +\frac{1}{2}\,\tr_{23}\Big(\partial_{2q_{i}}^{2}
    \ti{R}_{13}^{z}(2q_{i})P_{12}P_{12}P_{13}\mS_{2}^{ii}\mS_{3}^{i\bar{i}}\Big) +
  }
  \\ \ \\
  \displaystyle{
    +\tr_{23}\Big(r^{(1)}_{23}\ti{R}_{13}^{z}(2q_{i})P_{12}P_{13}\mS_{2}^{i\bar{i}}
    \mS_{3}^{\bar{i}\bar{i}}\Big) -\frac{1}{2}\,\tr_{23}\Big(P_{23}\partial_{2q_{i}}^{2}
    \ti{R}_{13}^{z}(2q_{i})P_{12}P_{13}\mS_{2}^{i\bar{i}}\mS_{3}^{\bar{i}\bar{i}}\Big)\,.
  }
\end{array}
$$
The terms involving $\ti K$- and $\ti Y$-matrices but no momenta are
\beq\label{b19}
\begin{array}{c}
  \displaystyle{
    \frac{1}{2}\,(\mL^{ii}(z)-p_{i}1_{N})\ti{Y}^{z}_{1}(q_{i})-\frac{1}{2}\,
    \ti{Y}^{z}_{1}(q_{i})(\mL^{\bar{i}\bar{i}}(z)+p_{i}1_{N}) +\ti{K}^{z}_{1}(q_{i})
    \mM^{\bar{i}\bar{i}}(z)- \mM^{ii}(z) \ti{K}^{z}_{1}(q_{i}) =
  }
  \\ \ \\
  \displaystyle{
    = \tr_{2}\Big(\left(\frac{1}{2}\,R^{(0),z}_{12}\ti{Y}^{z}_{2}(q_{i})+\frac{1}{2}\,
    \ti{Y}^{z}_{1}(q_{i})R^{(0),-z}_{12}+\ti{K}^{z}_{1}(q_{i})R^{-z,(1)}_{12}-R^{z,(1)}_{12}
    \ti{K}^{z}_{2}(q_{i})\right)P_{12}\mS^{ii}_{2}\Big) =
  }
  \\ \ \\
  \displaystyle{
    = \frac{1}{2}\,\tr_{2}\Big(\left(\ti{Y}^{0}_{2}(q_{i})\ti{R}^{z}_{12}(2q_{i})
    -\ti{R}^{z}_{12}(2q_{i})\ti{Y}^{0}_{1}(q_{i})\right)P_{12}\mS^{ii}_{2}\Big)\,.
  }
\end{array}
\eq
Here we have used (\ref{b20}). The additional terms in (\ref{b18})
cancel, and (\ref{b17})--(\ref{b19}) yield the equations of motion
(\ref{t420}).

\textbf{3.} Now consider the off-diagonal block $ij$, where $i\neq j$.
The left-hand side of the Lax equation is
\beq\label{b21}
\begin{array}{c}
  \displaystyle{
    \text{l.h.s.} =\tr_{2}\Big(\dot{\mS}^{ij}_{2}R_{12}^{z}(q_{ij})P_{12}\Big)
    +(\dot{q}_{i}-\dot{q}_{j})\tr_{2}\Big(\mS^{ij}_{2}F^{z}_{12}(q_{ij})P_{12}\Big)\,.
  }
\end{array}
\eq
The right-hand side is
\beq\label{b22}
\begin{array}{c}
  \displaystyle{
    \text{r.h.s.}=\mL^{i\bar{j}}(z)\mM^{\bar{j}j}(z)-\mM^{ii}(z)\mL^{ij}(z)+\mL^{ii}(z)
    \mM^{ij}(z)-\mM^{i\bar{j}}(z)\mL^{\bar{j}j}(z) +
  }
  \\ \ \\
  \displaystyle{
    +\mL^{ij}(z)\mM^{jj}(z)-\mM^{i\bar{i}}(z)\mL^{\bar{i}j}(z)+\mL^{i\bar{i}}(z)\mM^{\bar{i}j}(z)
    -\mM^{ij}(z)\mL^{jj}(z) +
  }
  \\ \ \\
  \displaystyle{
    + \sum_{k:k\neq i,j}^{n} \left(\mL^{ik}(z)\mM^{kj}(z)-\mM^{ik}(z)\mL^{kj}(z)\right)
    +\sum_{k:k\neq i,j}^{n}\left(\mL^{i\bar{k}}(z)\mM^{\bar{k}j}(z)-\mM^{i\bar{k}}(z)
    \mL^{\bar{k}j}(z)\right)\,.
  }
\end{array}
\eq
The last two sums are evaluated using (\ref{b23}) and (\ref{b24}).
 For $k\neq i,j$ we have
\beq\label{b25}
\begin{array}{c}
  \displaystyle{
    \mL^{ik}(z)\mM^{kj}(z)-\mM^{ik}(z)\mL^{kj}(z) =
  }
  \\ \ \\
  \displaystyle{
    = \tr_{23}\Big(\left(R_{12}^{z}(q_{ik})F_{23}^{z}(q_{kj})
    -F_{12}^{z}(q_{ik})R_{23}^{z}(q_{kj})\right)P_{12}P_{13}\mS_{2}^{ik}\mS_{3}^{kj}\Big)
    \stackrel{(\ref{b23})}{=}
  }
  \\ \ \\
  \displaystyle{
    = \tr_{23}\Big(\left(F^{0}_{23}(q_{kj})R_{13}^{z}(q_{ij})
    -R_{13}^{z}(q_{ij})F^{0}_{12}(q_{ik})\right)P_{12}P_{13}\mS_{2}^{ik}\mS_{3}^{kj}\Big) =
  }
  \\ \ \\
  \displaystyle{
    =\tr_{23}\Big(R_{13}^{z}(q_{ij})P_{13}\left(P_{23}\mS_{2}^{ik}\mS_{3}^{kj}F^{0}_{23}(q_{kj})
    -F^{0}_{32}(q_{ik})P_{23}\mS^{ik}_{2}\mS^{kj}_{3}\right)\Big)\,,
  }
\end{array}
\eq
and
\beq\label{b26}
\begin{array}{c}
  \displaystyle{
    \mL^{i\bar{k}}(z)\mM^{\bar{k}j}(z)-\mM^{i\bar{k}}(z)\mL^{\bar{k}j}(z) =
  }
  \\ \ \\
  \displaystyle{
    = \tr_{23}\Big(\left(\ti{R}_{12}^{z}(q_{ik}^{+})\ti{F}_{23}^{-z}(q_{kj}^{+})
    +\ti{F}_{12}^{z}(q_{ik}^{+})\ti{R}_{23}^{-z}(q_{kj}^{+})\right)P_{12}P_{13}
    \mS_{2}^{i\bar{k}}\mS_{3}^{\bar{k}j}\Big) \stackrel{(\ref{b24})}{=}
  }
  \\ \ \\
  \displaystyle{
    = \tr_{23}\Big(\left(\ti{F}^{0}_{23}(q_{kj}^{+})R_{13}^{z}(q_{ij})-R_{13}^{z}(q_{ij})
    \ti{F}^{0}_{12}(q_{ik}^{+})\right)P_{12}P_{13}\mS_{2}^{i\bar{k}}\mS_{3}^{\bar{k}j}\Big) =
  }
  \\ \ \\
  \displaystyle{
    =\tr_{23}\Big(R_{13}^{z}(q_{ij})P_{13}\left(P_{23}\mS_{2}^{i\bar{k}}\mS_{3}^{\bar{k}j}
    \ti{F}^{0}_{23}(q_{kj}^{+})-\ti{F}^{0}_{32}(q_{ik}^{+})P_{23}\mS^{i\bar{k}}_{2}
    \mS^{\bar{k}j}_{3}\right)\Big)\,.
  }
\end{array}
\eq
These expressions give the first two lines of (\ref{t417}).
The momentum-dependent term $(p_i-p_j)\mM^{ij}(z)$ on the right-hand
side of (\ref{b22}) is canceled by the last term on the left-hand side
of (\ref{b21}). We next evaluate the momentum-independent parts of the
first and fourth terms in the second line of (\ref{b22}):
\beq\label{b29}
\begin{array}{c}
  \displaystyle{
    \mL^{ij}(z)\mM^{jj}(z)-\mM^{ij}(z)(\mL^{jj}(z)-p_{j}1_{N}) =
  }
  \\ \ \\
  \displaystyle{
    =\tr_{23}\Big(\left(R^{z}_{12}(q_{ij})R^{z,(1)}_{23}
    -F^{z}_{12}(q_{ij})R^{z,(0)}_{23}\right)P_{12}P_{13}\mS^{ij}_{2}\mS^{jj}_{3}\Big)
    \stackrel{(\ref{b28})}{=}
  }
  \\ \ \\
  \displaystyle{
    = \tr_{23}\Big(\left(r^{(1)}_{23}R^{z}_{13}(q_{ij})-R^{z}_{13}(q_{ij})F^{0}_{12}(q_{ij})
    -\frac{1}{2}\,\partial^{2}_{q_{ij}}R^{z}_{12}(q_{ij})P_{23}\right)P_{12}P_{13}
    \mS^{ij}_{2}\mS^{jj}_{3}\Big)\,,
  }
\end{array}
\eq
The first two terms give two contributions to the third and fourth
lines of (\ref{t417}). The last term in (\ref{b29}) is an additional
term that must cancel.

Similarly, (\ref{b27}) gives
\beq\label{b30}
\begin{array}{c}
  \displaystyle{
    (\mL^{ii}(z)-p_{i}1_{N})\mM^{ij}(z)-\mM^{ii}(z)\mL^{ij}(z) =
  }
  \\ \ \\
  \displaystyle{
    = \tr_{23}\Big(\left(R_{12}^{(0),z}F_{23}^{z}(q_{ij})
    -R_{12}^{z,(1)} R_{23}^{z} (q_{ij})\right)P_{12}P_{13}\mS^{ii}_{2}\mS^{ij}_{3}\Big)
    \stackrel{(\ref{b27})}{=}
  }
  \\ \ \\
  \displaystyle{
    =\tr_{23}\Big(\left(F^{0}_{23}(q_{ij})R^{z}_{13}(q_{ij})-R^{z}_{13}(q_{ij})r^{(1)}_{12}
    +\frac{1}{2}\,\partial_{q_{ij}}^{2}R^{z}_{13}(q_{ij})P_{12}\right)P_{12}P_{13}
    \mS^{ii}_{2}\mS^{ij}_{3}\Big)\,.
  }
\end{array}
\eq
The first two terms give the other two contributions to the third and
fourth lines of (\ref{t417}). The last term cancels the last term in
(\ref{b29}) after taking the partial traces and imposing
$\tr\Big(\mS^{ii}\Big)=\tr\Big(\mS^{jj}\Big)$,
as required by (\ref{t416}).

The terms involving $\ti K$-matrices are evaluated using (\ref{t230}):
\beq\label{b32}
\begin{array}{c}
  \displaystyle{
    \frac{1}{2}\,\mL^{i\bar{j}}(z)\ti{Y}^{-z}_{1}(q_{j})+ \mM^{i\bar{j}}(z)\ti{K}^{-z}_{1}(q_{j})
    +\ti{K}^{z}_{1}(q_{i})\mM^{\bar{i}j}(z)-\frac{1}{2}\,\ti{Y}^{z}_{1}(q_{i})\mL^{\bar{i}j}(z)
    \stackrel{(\ref{t4012})}{=}
  }
  \\ \ \\
  \displaystyle{
    =\tr_{2}\Big(\Bigl(\frac{1}{2}\,\ti{R}_{12}^{z}(q_{ij}^{+})\ti{Y}^{-z}_{2}(q_{j})
    +\ti{F}_{12}^{z}(q_{ij}^{+})\ti{K}^{-z}_{2}(q_{j}) +
  }
  \\ \ \\
  \displaystyle{
    +\ti{K}^{z}_{1}(q_{i})\ti{F}_{12}^{-z}(q_{ij}^{+})+\frac{1}{2}\,\ti{Y}^{z}_{1}(q_{i})
    \ti{R}_{12}^{-z}(q_{ij}^{+})\Bigr)P_{12}\mS_{2}^{i\bar{j}}\Big) =
  }
  \\ \ \\
  \displaystyle{
    \stackrel{(\ref{t230})}{=}\frac{1}{2}\,\tr_{2}\Big(\left(
    \ti{Y}^{(0)}_{2}(q_{j})R_{12}^{z}(q_{ij})-R^{z}_{12}(q_{ij})\ti{Y}_{1}^{(0)}(q_{i})\right)P_{12}
    \mS_{2}^{i\bar{j}}\Big) \,.
  }
\end{array}
\eq
The remaining terms are
\beq\label{b33}
\begin{array}{c}
  \displaystyle{
    \mL^{i\bar{j}}(z)(\mM^{\bar{j}j}(z)-\frac{1}{2}\,\ti{Y}^{-z}_{1}(q_{j}))
    -\mM^{i\bar{j}}(z)(\mL^{\bar{j}j}(z)+ \ti{K}^{-z}_{1}(q_{j})) =
  }
  \\ \ \\
  \displaystyle{
    = \tr_{23}\Big(\left(\ti{R}_{12}^{z}(q_{ij}^{+})\ti{F}_{23}^{-z}(2q_{j})
    +\ti{F}_{12}^{z}(q_{ij}^{+})\ti{R}_{23}^{-z}(2q_{j})\right)P_{12}P_{13}\mS_{2}^{i\bar{j}}
    \mS_{3}^{\bar{j}j}\Big) \stackrel{(\ref{b24})}{=}
  }
  \\ \ \\
  \displaystyle{
    = \tr_{23}\Big(\left(\ti{F}^{0}_{23}(2q_{j})R_{13}^{z}(q_{ij})-R_{13}^{z}(q_{ij})
    \ti{F}^{0}_{12}(q_{ij}^{+})\right)P_{12}P_{13}\mS_{2}^{i\bar{j}}\mS_{3}^{\bar{j}j}\Big) =
  }
  \\ \ \\
  \displaystyle{
    = \tr_{23}\Big(R_{13}^{z}(q_{ij})P_{13}\left(P_{23}\mS_{2}^{i\bar{j}}\mS_{3}^{\bar{j}j}
    \ti{F}^{0}_{23}(2q_{j})-\ti{F}^{0}_{32}(q_{ij}^{+})P_{23}\mS^{i\bar{j}}_{2}
    \mS^{\bar{j}j}_{3}\right)\Big)\,,
  }
\end{array}%
\eq
\textbf{4.} Finally, consider the diagonal block $ii$ of the Lax
equation. Its left-hand side is
\beq\label{b35}
\begin{array}{c}
  \displaystyle{
    \text{l.h.s.} = \dot{\mL}^{ii}(z)=\dot{p}_{i}1_{N}
    +\tr_{2}\left(\dot{\mS}^{ii}_{2}R^{(0),z}_{12}P_{12}\right)\,.
  }
\end{array}
\eq
The right-hand side is
\beq\label{b36}
\begin{array}{c}
  \displaystyle{
    \text{r.h.s.}= \mL^{i\bar{i}}(z)\mM^{\bar{i}i}(z)-\mM^{ii}(z)\mL^{ii}(z)+\mL^{ii}(z)
    \mM^{ii}(z)-\mM^{i\bar{i}}(z)\mL^{\bar{i}i}(z) +
  }
  \\ \ \\
  \displaystyle{
    + \sum_{k:k\neq i}^{n} \left(\mL^{ik}(z)\mM^{ki}(z)-\mM^{ik}(z)\mL^{ki}(z)\right)
    + \sum_{k:k\neq i}^{n}\left(\mL^{i\bar{k}}(z)\mM^{\bar{k}i}(z)-\mM^{i\bar{k}}(z)
    \mL^{\bar{k}i}(z)\right)\,.
  }
\end{array}
\eq
The commutator $[\mL^{ii}(z),\mM^{ii}(z)]$ in (\ref{b36}) gives the
commutator term in (\ref{t419}). This is the contribution from the
internal dynamics of the $ii$-th top, derived in \cite{LOZ161};
see also \cite{GSZ}.

We evaluate
$\mL^{i\bar{i}}(z)\mM^{\bar{i}i}(z)-\mM^{i\bar{i}}(z)\mL^{\bar{i}i}(z)$
in three parts. First, the terms involving only $\ti K$- and
$\ti Y$-matrices give
\beq\label{b37}
%
\eq
where we have used a degeneration of (\ref{t220}), obtained by
differentiating that identity with respect to $q$.

The terms involving neither $\ti K$ nor $\ti Y$ give
\beq\label{b41}
%
\eq
The remaining terms sum to
\beq\label{b42}
%
\eq
Finally, we evaluate the two sums in (\ref{b36}):
\beq\label{b43}
%
\eq
Terms of the form appearing at the end of (\ref{b43}) and (\ref{b44})
contribute to the momentum equations (\ref{t421}). Indeed,
\beq\label{b45}
%
\eq
These terms, together with those already independent of the spectral
parameter $z$, yield (\ref{t421}). The terms containing $R^{z,(0)}$
yield the equations of motion (\ref{t419}).

The equations for the remaining blocks,
$\dot{\mL}^{\bar{i}\bar{j}}(z)$ and $\dot{\mL}^{\bar{i}j}(z)$,
are derived in the same way as (\ref{t417})--(\ref{t421}), using the
same identities for $\ti R$, $R$ and $\ti K$ with appropriate changes
of signs or arguments.

Without the constraints
$\mS^{\bar{i}\bar{j}}=\mS^{ij},~\mS^{i\bar{j}}=\mS^{\bar{i}j}$,
the equations for the blocks $ii$ and ${\bar i}{\bar i}$ are incompatible.
These constraints also ensure that
$\dot{\mS}^{ij}=\dot{\mS}^{\bar{i}\bar{j}},~\dot{\mS}^{i\bar{j}}=\dot{\mS}^{\bar{i}j}$.
\end{proof}

\section{Appendix C: proof of Theorem \ref{th2}}\label{secC}
\def\theequation{C.\arabic{equation}}
\setcounter{equation}{0}

We prove the identities
\beq\label{t422}
\begin{array}{c}
  \displaystyle{
    H = \frac{1}{2}\,\sum_{i=1}^{n}p^{2}_{i} +\frac{1}{4}\,\sum_{i=1}^{2n}
    \tr_{23}\Big(m_{23}(0)\mS^{ii}_{2}\mS^{ii}_{3}\Big) + \frac{1}{4}\, \sum_{i\neq j}^{n}\tr_{23}\Big(F^{0}_{32}(q_{ij})P_{23}
    \mS^{ij}_{2}\mS^{ji}_{3}\Big) +
  }
  \\ \ \\
  \displaystyle{
    + \frac{1}{4}\, \sum_{i\neq j}^{n}\tr_{23}\Big(F^{0}_{32}(q_{ij})P_{23}\mS^{\bar{i}\bar{j}}_{2}
    \mS^{\bar{j}\bar{i}}_{3}\Big) +\frac{1}{2}\,\sum_{i\neq j}^{n}
    \tr_{23}\Big(\ti{F}^{0}_{32}(q_{ij}^{+})P_{23}\mS^{i\bar{j}}_{2}\mS^{\bar{j}i}_{3}\Big) +
  }
  \\ \ \\
  \displaystyle{
    +\frac{1}{2}\,\sum_{i=1}^{n}\tr_{23}\Big(\ti{F}^{0}_{32}(2q_{i})P_{23}\mS^{i\bar{i}}_{2}
    \mS^{\bar{i}i}_{3}\Big)-\frac{1}{2}\,\sum_{i=1}^{n}u_N(q_i|\nu) +
 \frac{1}{2}\,\sum_{i=1}^{n}\tr_{2}\Big(\mS^{i\bar{i}}_{2}\partial_{q_{i}}
    \ti{K}^{(0)}_{2}(q_{i}) \Big)
  }
\end{array}
\eq
and
\beq\label{t423}
\begin{array}{c}
  \displaystyle{
    \frac{1}{4N}\,\tr\Big(\mL^{2}(z)\Big)=
      }
  \\ \ \\
   \displaystyle{
   =H-\frac{\partial_{z}\widetilde{E}_{1}(z)}{2N}\,\Big[2\varsigma_N\sum_{i=1}^{n}
   \tr\Big(\mS^{i\bar{i}} \Big)+\tr\Big(\sum\limits_{i,j=1}^n \mS^{ij}\mS^{ji}\Big)
   +\tr\Big(\sum\limits_{i,{j}=1}^n \mS^{i{\bar j}}\mS^{{\bar j}i}\Big)\Big]+
  }
  \\ \ \\
   \displaystyle{
   +\frac{{\widetilde E}_1^2(z)-\widetilde\wp(z)}{2N^2}\,\sum
   \limits_{i=1}^n\Big(\tr\Big(\mS^{ii}\Big)\Big)^2+\frac{n}{2}\,w_N(z|\nu)\,.
   }
\end{array}
\eq
\begin{proof}
We evaluate the expression
\beq\label{c01}
\begin{array}{c}
  \displaystyle{
    \frac{1}{4N}\,\tr \Big(\mL^2(z)\Big)= \frac{1}{4N}\,\sum_{i=1}^{n}\left(\tr\Big((\mL^{ii}(z))^2\Big)
    +\tr\Big((\mL^{\bar i\bar i}(z))^2\Big)\right)+\frac{1}{2N}\,\sum_{i=1}^{n}
    \tr\Big(\mL^{i\bar i}(z)\mL^{\bar i i}(z)\Big) +
  }
  \\ \ \\
  \displaystyle{
    +\frac{1}{4N}\,\sum_{\substack{i\neq j}}^{n} \left(\tr\Big(\mL^{ij}(z)\mL^{ji}(z)\Big)
    +\tr\Big(\mL^{\bar i\bar j}(z)\mL^{\bar j\bar i}(z)\Big)\right)+\frac{1}{2N}\,
    \sum_{\substack{i\neq j}}^{n} \tr\Big(\mL^{i\bar j}(z)\mL^{\bar ji}(z)\Big)\,.
  }
\end{array}
\eq
The normalization differs from the factor $1/(2N)$ in the
$A_{n-1}$ model. The present Lax matrix contains the two momentum
blocks $p_i1_N$ and $-p_i1_N$. Their squares give $2Np_i^2$ in
$\tr \mL^2(z)$, and the factor $1/(4N)$ therefore produces the canonical
kinetic term $p_i^2/2$ and the equation $\dot q_i=p_i$.

We begin with the diagonal blocks. As in the $A_{n-1}$ case,
we use (\ref{c02}) and its limiting form (\ref{c03}).
We use the scalar partial-trace conditions
(\ref{t424})--(\ref{t4241}) assumed in Theorem \ref{th2}.
For the elliptic matrix (\ref{t239}), these are contained in the
identities below; see (\ref{t242})--(\ref{t243}) and (\ref{a172}):
\beq\label{c04}
\begin{array}{c}
  \displaystyle{
    \tr_{1}R_{12}^{q}(z)=\tr_{2}R_{12}^{q}(z) =\widetilde\phi(z,\frac{q}{N})1_N,
    \qquad\tr_{1}r_{12}(z)=\widetilde E_{1}(z)1_N\,,
    }
    \\ \ \\
  \displaystyle{
     \tr_{1}m_{12}(z)=\frac{{\widetilde E}^2_{1}(z)-{\widetilde \wp}(z)}{2N}1_N\,,
  }
\end{array}
\eq
Only the identities for $r_{12}(z)$ and $m_{12}(z)$ and their
derivatives are needed below. We set
$\widetilde E_{2}(z)1_N=-\tr_{3}F_{13}^{0}(z)
=-\partial_z\widetilde E_{1}(z)1_N$.
Substituting (\ref{c03}), we find that the two
commutator-type terms cancel after taking the trace over the first
tensor component. By (\ref{c04}), the terms containing $m_{12}(z)$
and $m_{13}(z)$ reduce to spectral functions multiplied by squares
of the fixed diagonal traces. The terms proportional to
$\widetilde E_2(z)\tr((\mA^{ii})^2)$ must be retained: individual
diagonal-block squares are not Casimirs of the full spin algebra.
They combine with the off-diagonal contributions into a global
quadratic Casimir only after all blocks have been summed. We obtain
\beq\label{c05}
\begin{array}{c}
  \displaystyle{
    \left(\tr\Big((\mL^{ii}(z))^2\Big) +\tr\Big((\mL^{\bar i\bar i}(z))^2\Big)\right) =
  }
  \\ \ \\
  \displaystyle{
    = 2Np^{2}_{i} + 2p_{i}~\tr_{12}\Big(r_{12}(z)\mS_{2}^{ii}\Big)+2p_{i}~
    \tr_{12}\Big(r_{12}(-z)\mS_{2}^{\bar{i}\bar{i}}\Big) +
  }
\end{array}
\eq
$$
\begin{array}{c}
  \displaystyle{
    + \tr_{123}\Big(r_{12}(z)r_{13}(z)\mS^{ii}_{3}\mS_{2}^{ii}\Big)
    +\tr_{123}\Big(r_{12}(-z)r_{13}(-z)\mS_{2}^{\bar{i}\bar{i}}\mS^{\bar{i}\bar{i}}_{3}\Big)
    \stackrel{(\ref{c03}),(\ref{c04})}{=}
  }
  \\ \ \\
  \displaystyle{
    =2Np_{i}^{2}+N\tr_{12}\Big(m_{12}(0)\mS^{ii}_{1}\mS^{ii}_{2}\Big)+N\tr_{12}\Big(m_{12}(0)
    \mS^{\bar{i}\bar{i}}_{1}\mS^{\bar{i}\bar{i}}_{2}\Big) +
  }
  \\ \ \\
  \displaystyle{
    +\widetilde{E}_{2}(z)\tr\Big(\mS^{ii}\mS^{ii}\Big)+\widetilde{E}_{2}(z)
    \tr\Big(\mS^{\bar{i}\bar{i}}\mS^{\bar{i}\bar{i}}\Big)
    +\frac1N\Big({\widetilde E}^2_{1}(z)-{\widetilde \wp}(z)\Big)\Big[
    \Big(\tr\Big(\mS^{ii}\Big)\Big)^2
    +\Big(\tr\Big(\mS^{{\bar i}{\bar i}}\Big)\Big)^2 \Big]\,.
  }
\end{array}
$$
The terms linear in $p_i$ cancel between the $ii$ and
$\bar i\bar i$ blocks because of the constraints
(\ref{t4012}) and the oddness of $\widetilde E_1(z)$.

Next, consider
\beq\label{c06}
\begin{array}{c}
  \displaystyle{
    \tr\Big(\mL^{ij}(z)\mL^{ji}(z)\Big)+\tr\Big(\mL^{\bar{i}\bar{j}}(z)
    \mL^{\bar{j}\bar{i}}(z)\Big) =
  }
  \\ \ \\
  \displaystyle{
    =\tr_{123}\Big(\Big(R^{z}_{12}(q_{ij})R^{z}_{23}(q_{ji})
    +R^{-z}_{12}(q_{ij})R^{-z}_{23}(q_{ji})\Big)P_{12}P_{13}\mS^{ij}_{2}\mS^{ji}_{3}\Big) =
  }
\end{array}
\eq
$$
\begin{array}{c}
  \displaystyle{
    = \tr_{123}\Big(\Big([r_{13}(z),r_{32}(q_{ij})]-[r_{31}(z),r_{32}(q_{ij})] -
F^{0}_{13}(z)-F^{0}_{31}(z)+2F^{0}_{32}(q_{ij})\Big)P_{23}\mS^{ij}_{2}\mS^{ji}_{3}\Big)
    \stackrel{(\ref{c04})}{=}
  }
  \\ \ \\
  \displaystyle{
    =2N\tr_{12}\Big(F^{0}_{21}(q_{ij})P_{12}\mS_{1}^{ij}\mS_{2}^{ji}\Big)+2\widetilde{E}_{2}(z)
    \tr\Big(\mS^{ij}\mS^{ji}\Big)\,.
  }
\end{array}
$$
The last term in (\ref{c01}) takes the form
\beq\label{c09}
\begin{array}{c}
  \displaystyle{
    \tr\Big(\mL^{i\bar j}(z)\mL^{\bar j i}(z)\Big) =-\tr_{123}\Big(\ti{R}^{z}_{12}(q_{ij}^{+})
    \ti{R}^{-z}_{23}(q_{ij}^{+})P_{12}P_{13}\mS^{i\bar{j}}_{2}\mS^{\bar{j}i}_{3}\Big)
    \stackrel{(\ref{c08})}{=}
  }
  \\ \ \\
  \displaystyle{
    =-\tr_{123}\Big(\ti{r}_{23}(q_{ij}^{+})r_{13}(z)P_{23}\mS^{i\bar{j}}_{2}\mS^{\bar{j}i}_{3}\Big)
    +\tr_{123}\Big(r_{13}(z)\ti{r}_{32}(q_{ij}^{+})P_{23}\mS^{i\bar{j}}_{2}\mS^{\bar{j}i}_{3}\Big) +
  }
  \\ \ \\
  \displaystyle{
    +\tr_{123}\Big(\ti{F}^{0}_{32}(q_{ij}^{+})P_{23}\mS^{i\bar{j}}_{2}\mS^{\bar{j}i}_{3}\Big)
    + \widetilde{E}_{2}(z)\tr\Big(\mS^{i\bar{j}}\mS^{\bar{j}i}\Big)\,.
  }
\end{array}
\eq
After taking the trace over the first component, the first two terms
cancel.

It remains to evaluate
$\tr\Big(\mL^{i\bar{i}}(z)\mL^{\bar{i}i}(z)\Big)$.
In addition to the unitarity conditions (\ref{t219}) and (\ref{t220}),
we use the degeneration (\ref{c10}) of the four-term relation.

We divide the calculation into three parts. First, consider the
contribution to $\tr\Big(\mL^{i\bar{i}}(z)\mL^{\bar{i}i}(z)\Big)$
that does not involve $\ti K$:
\beq\label{c11}
\begin{array}{c}
  \displaystyle{
    \tr\Big((\mL^{i\bar{i}}(z)- \ti{K}^{z}_{1}(q_{i}))(\mL^{\bar i i}(z)+\ti{K}^{-z}_{1}(q_{i}))\Big)
    =-\tr_{123}\Big(\ti{R}^{z}_{12}(2q_{i})\ti{R}^{-z}_{23}(2q_{i})P_{12}P_{13}
    \mS^{i\bar{i}}_{2}\mS^{\bar{i}i}_{3}\Big) \stackrel{(\ref{c08})}{=}
  }
  \\ \ \\
  \displaystyle{
    =-\tr_{123}\Big(\ti{r}_{23}(2q_{i})r_{13}(z)P_{23}\mS^{i\bar{i}}_{2}\mS^{\bar{i}i}_{3}\Big)
    +\tr_{123}\Big(r_{13}(z)\ti{r}_{32}(2q_{i})P_{23}\mS^{i\bar{i}}_{2}\mS^{\bar{i}i}_{3}\Big)+
  }
  \\ \ \\
  \displaystyle{
    +\tr_{123}\Big(\ti{F}^{0}_{32}(2q_{i})P_{23}\mS^{i\bar{i}}_{2}\mS^{\bar{i}i}_{3}\Big)
    +\tr\Big(\mS^{i\bar{i}}\mS^{\bar{i}i}\Big){\widetilde E}_2(z)\,.
  }
\end{array}
\eq
The terms in the first line cancel. The terms quadratic in
$\ti K$-matrices give
\beq\label{c12}
\begin{array}{c}
  \displaystyle{
    - \tr\Big( \tilde{K}_{1}^{z}(q_{i}) \tilde{K}_{1}^{-z}(q_{i})\Big)\stackrel{(\ref{t220})}{=}
    -N(u_N(q_i|\nu)-w_N(z|\nu))\,.
  }
\end{array}
\eq
The last term depends only on the spectral parameter and the
fixed coupling constants.

Finally, the remaining terms in
$\tr\Big(\mL^{i\bar{i}}(z)\mL^{\bar{i}i}(z)\Big)$ give
\beq\label{c13}
\begin{array}{c}
  \displaystyle{
    \tr\Big(\ti{K}^{z}_{1}(q_{i})(\mL^{\bar i i}(z)+\ti{K}^{-z}_{1}(q_{i}))\Big)
    -\tr\Big((\mL^{i\bar{i}}(z)-\ti{K}^{z}_{1}(q_{i}))\ti{K}^{-z}_{1}(q_{i})\Big) =
  }
  \\ \ \\
  \displaystyle{
    = -\tr_{12}\Big(\mS^{i\bar{i}}_{2}\ti{R}^{z}_{12}(2q_{i})\ti{K}^{-z}_{2}(q_{i})P_{12}\Big)
    -\tr_{12}\Big(\mS^{\bar{i}i}_{2}\ti{K}^{z}_{1}(q_{i})\ti{R}^{-z}_{12}(2q_{i})P_{12}\Big)
    \stackrel{(\ref{t4012}),(\ref{c10})}{=}
  }
\end{array}
\eq
$$
\begin{array}{c}
  \displaystyle{
    = -\tr_{12}\Big(\mS^{i\bar{i}}_{2}\left(r_{21}(-z)\ti{K}^{(0)}_{2}(q_{i})
    -\ti{K}^{(0)}_{2}(q_{i})r_{21}(-z)\right)\Big) -
  }
  \\ \ \\
  \displaystyle{
    -\tr_{12}\Big(\mS^{i\bar{i}}_{2}\left(\partial_{-z}r_{21}(-z)\ti{K}^{(-1)}_{2}
    +\ti{K}^{(-1)}_{2}\partial_{-z}r_{21}(-z)-\partial_{q_{i}}
    \ti{K}^{(0)}_{2}(q_{i})\right)\Big) \stackrel{(\ref{c04})}{=}
  }
  \\ \ \\
  \displaystyle{
    = - 2\partial_{z}\widetilde{E}_{1}(z)\tr_{2}\Big(\mS^{i\bar{i}}_{2}
    \tilde{K}^{(-1)}_{2}\Big)+ N\tr_{2}\Big(\mS^{i\bar{i}}_{2}\partial_{q_{i}}
    \tilde{K}^{(0)}_{2}(q_{i})\Big)\,.
  }
\end{array}
$$
After summing the blocks with the weights in (\ref{c01}), the
quadratic terms proportional to $\widetilde E_2(z)$ in
(\ref{c05}), (\ref{c06}), (\ref{c09}) and (\ref{c11}) combine into
$\tr(\mA^2+\mB^2)$, the Casimir in (\ref{t429}). Together with
(\ref{c13}), their contribution is
\[
 \frac{\widetilde E_2(z)}{2N}
 \left(\tr(\mA^2+\mB^2)+2\varsigma_N\tr\mB\right),
\]
where $\tr\mB$ is a Casimir by (\ref{t438}). The remaining diagonal
trace squares are constant by the moment constraints (\ref{t428});
these constraints are not themselves Casimirs of the full spin
algebra. Combining (\ref{c05}), (\ref{c06}), (\ref{c09}),
(\ref{c11}), (\ref{c12}) and (\ref{c13}) yields the Hamiltonian
(\ref{t422}).
\end{proof}

\subsection*{Acknowledgments}
%
%
The work of A. Zotov was performed at the Steklov International Mathematical
Center and supported by the Ministry of Science and Higher Education
of the Russian Federation (agreement no. 075-15-2025-303).

\noindent The work of D. Domanevsky was supported by the Foundation for the Advancement of Theoretical Physics and Mathematics ``BASIS".


\begin{footnotesize}

\end{footnotesize}

\end{document}